\theoremstyle{plain}
\newtheorem{theorem}{Theorem}[section]
\newtheorem{question}[theorem]{Question}
\newtheorem{lemma}[theorem]{Lemma}
\newtheorem{corollary}[theorem]{Corollary}
\theoremstyle{definition}
\newtheorem{definition}{Definition}[section]
\def\E{\mathrm{E}}
\def\Var{\mathrm{Var}}
\def\Pr{\mathrm{Pr}}
\def\eps{\varepsilon}
\def\vol{{\sf vol}}
\renewcommand{\epsilon}{\varepsilon}
\renewcommand{\tilde}{\widetilde}
\renewcommand{\hat}{\widehat}
\newcommand{\sumsign}{\mathcal{B}}
\newcommand{\sign}{\mathrm{sign}}
\newcommand{\bado}{\mathsf{Bad}}
\newcommand{\spa}{\mathrm{span}}
\newcommand{\ceil}[1]{\left\lceil #1 \right\rceil}
\newcommand{\rank}{{\operatorname{rank}}}
\newcommand{\argmin}{{\operatorname{argmin}}}
\newcommand{\wt}{\widetilde}
\def\Rbb{\mathbb{R}}
\def\R{\Rbb}
\DeclareMathOperator{\poly}{poly}
\DeclareMathOperator{\polylog}{polylog}
\title{The Communication Complexity of Optimization\thanks{Santosh S. Vempala was supported in part by NSF awards CCF-1717349 and DMS-1839323. Ruosong Wang and David P. Woodruff were supported in part by Office of Naval Research (ONR) grant N00014-18-1-2562. Part of this work was done while the authors were visiting the Simons Institute for the Theory of Computing.}}
\author{Santosh S. Vempala \\ Georgia Tech \\ \footnotesize \texttt{vempala@cc.gatech.edu} \and
Ruosong Wang\\ Carnegie Mellon University \\ \footnotesize \texttt{ruosongw@andrew.cmu.edu}\and
David P. Woodruff\\ Carnegie Mellon University \\ \footnotesize \texttt{dwoodruf@cs.cmu.edu}
}
\date{}
\begin{document}
\begin{titlepage}
\maketitle
\thispagestyle{empty}
\begin{abstract}
We consider the communication complexity of a number of distributed optimization problems. We start with the problem of solving a linear system. Suppose there is a coordinator together with $s$ servers $P_1, \ldots, P_s$, the $i$-th of which holds a subset $A^{(i)} x = b^{(i)}$ of $n_i$ constraints of a linear system in $d$ variables, and the coordinator would like to output an $x \in \mathbb{R}^d$ for which $A^{(i)} x = b^{(i)}$ for $i = 1, \ldots, s$. We assume each coefficient of each constraint is specified using $L$ bits. We first resolve the randomized and deterministic communication complexity in the point-to-point model of communication, showing it is $\tilde{\Theta}(d^2L + sd)$ and $\tilde{\Theta}(sd^2L)$, respectively. We obtain similar results for the blackboard communication model. As a result of independent interest, we show the probability a random matrix with integer entries in $\{-2^L, \ldots, 2^L\}$ is invertible is $1-2^{-\Theta(dL)}$, whereas previously only $1-2^{-\Theta(d)}$ was known.

When there is no solution to the linear system, a natural alternative is to find the solution minimizing the $\ell_p$ loss, which is the $\ell_p$ regression problem. While this problem has been studied, we give improved upper or lower bounds for every value of $p \geq 1$. One takeaway message is that sampling and sketching techniques, which are commonly used in earlier work on distributed optimization, are neither optimal in the dependence on $d$ nor on the dependence on the approximation $\epsilon$, thus motivating new techniques from optimization to solve these problems. 

Towards this end, we consider the communication complexity of optimization tasks which generalize linear systems, such as linear, semidefinite, and convex programming. For linear programming, we first resolve the communication complexity when $d$ is constant, showing it is $\tilde{\Theta}(sL)$ in the point-to-point model. For general $d$ and in the point-to-point model, we show an $\tilde{O}(sd^3 L)$ upper bound and an $\tilde{\Omega}(d^2 L + sd)$ lower bound. In fact, we show if one perturbs the coefficients randomly by numbers as small as $2^{-\Theta(L)}$, then the upper bound is $\tilde{O}(sd^2 L) + \textrm{poly}(dL)$, and so this bound holds for almost all linear programs. Our study motivates understanding the bit complexity of linear programming, which is related to the running time in the unit cost RAM model with words of $O(\log(nd))$ bits, and we give the fastest known algorithms for linear programming in this model. 
\end{abstract}
\end{titlepage}
\section{Introduction}\label{sec:intro}
Large-scale optimization problems often cannot fit into a single machine, and so they are distributed across a number $s$ of machines. That is, each of servers $P_1, \ldots, P_s$ may hold a subset of constraints that it is given locally as input, and the goal of the servers is to communicate with each other to find a solution satisfying all constraints. Since communication is often a bottleneck in distributed computation, the goal of the servers is to communicate as little as possible. 

There are several different standard communication models, including the point-to-point model and the blackboard model. In the point-to-point model, each pair of servers can talk directly with each other. This is often more conveniently modeled by looking at the {\it coordinator model}, for which there is an extra server called the coordinator, and all communication must pass through the coordinator. This is easily seen to be equivalent, from a total communication perspective, to the point-to-point model up to a factor of $2$, for forwarding messages from server $P_i$ to server $P_j$, and a term of $\log s$ per message to indicate which server the message should be forwarded to. Another model of computation is the {\it blackboard model}, in which there is a shared broadcast channel among all the $s$ servers. When a server sends a message, it is visible to each of the other $s-1$ servers and determines who speaks next, based upon an agreed upon protocol. We mostly consider randomized communication, in which for every input, we require the coordinator to output the solution to the optimization problem with high probability. For linear systems we also consider deterministic communication complexity. 

A number of recent works in the theory community have looked at studying specific optimization problems in such communication models, such as principal component analysis \cite{kvw14,LBKW14,BWZ16} and kernel \cite{BLSW016} and robust variants \cite{wz16,cdw18}, computing higher correlations \cite{kvw14}, $\ell_p$ regression \cite{wz13b,cdw18} and sparse regression \cite{BGMNW16}, estimating the mean of a Gaussian \cite{ZDW13,GMN14,BGMNW16}, database problems \cite{gwwz15,wz18}, clustering \cite{CSWZ16}, statistical \cite{WZ13}, graph problems \cite{phillips2012lower,WZ13} and many, many more. 

There are also a large number of distributed learning and optimization papers, for example \cite{bbfm12,ZDW13,ZWSL10,ACDL14,BPCPE11,m13,y13,cw09,r16,d12,c11,d12b,j14,s14,s14b,z15,as15,kane2017communication}. 
With a few exceptions, these 
works do not study general communication complexity, but 
rather consider specific classes of algorithms. Namely, a number of these works only allow
gradient and Hessian computations in each round, and do not allow arbitrary communication. Another aspect of these
works is that they typically do not count total bit complexity, but rather only count number
of rounds, whereas we are interested in total communication. 
In a number of optimization problems, the bit complexity of storing a single
number in an intermediate computation may be as large as storing the entire original
optimization problem. It is therefore infeasible to transmit such a number. 
While one could round this number,
the effect of rounding is often unclear, and could destroy the desired
approximation guarantee. One exception to the above is the work of \cite{t87}, which studies
the problem in which there are two servers, each holding a convex function, who would like
to find a solution so as to {\it minimize the sum} of the two functions. The upper bounds
are in a different communication model than ours, where the functions are added together,  
while the lower bounds only apply to a restricted class of protocols. 

Noticeably absent from previous work is the communication complexity of {\it solving linear systems}, which is a fundamental primitive in many optimization tasks. 
Formally, suppose there is a coordinator together with $s$ servers $P_1, \ldots, P_s$, the $i$-th of which holds a subset $A^{(i)} x = b^{(i)}$ of $n_i$ constraints of a $d$-dimensional linear system, and the coordinator would like to output an $x \in \mathbb{R}^d$ for which $A^{(i)} x = b^{(i)}$ for $i = 1, \ldots, s$. We further assume each coefficient of each constraint is specified using $L$ bits. The first question we ask is the following.

\begin{question}\label{question:linSystem}
What is the communication complexity of solving a linear system?
\end{question}

When there is no solution to the linear system, a natural alternative is to find the solution minimizing the $\ell_p$ loss, which is the $\ell_p$ regression problem  $\min_{x \in \mathbb{R}^d}\|Ax-b\|_p$, where for an $n$-dimensional vector $y$, $\|y\|_p = \left (\sum_{i=1}^n |y_i|^p \right )^{1/p}$ is its $\ell_p$ norm. 

In the distributed $\ell_p$ regression problem, each server
has a matrix $A^{(i)} \in \mathbb{R}^{n_i \times d}$ and a vector 
$b^{(i)} \in \mathbb{R}^{n_i}$, and the coordinator would like
to output an $x \in \mathbb{R}^d$ so that 
$\|Ax-b\|_p$ is 
approximately minimized, namely, that $\|Ax-b\|_p \leq (1+\epsilon) \min_{x'} \|Ax'-b\|_p$. 
Note that here $A \in \mathbb{R}^{n \times d}$ 
is the matrix obtained by stacking the matrices
$A^{(1)}, \ldots, A^{(s)}$ on top of each other, where $n = \sum_{i=1}^s n_i$. 
Also, $b \in \mathbb{R}^{n}$ is the vector obtained by stacking
the vectors $b^{(1)}, \ldots, b^{(s)}$ on top of each other. 
We assume that each entry of $A$ and $b$
is an $L$-bit integer, and we are interested in the randomized communication
complexity of this problem.

While previous work \cite{mm13,wz13b} has looked at the distributed 
$\ell_p$ regression problem, 
such work is based on two main ideas: sampling and sketching. 
Such techniques reduce a large optimization problem to a much smaller one, 
thereby allowing servers to send succinct synopses of their constraints in order 
to solve a global optimization problem. 

Sampling and sketching are the key techniques of recent work on distributed low rank 
approximation \cite{wz13b,kvw14} and regression algorithms. 
A natural question, which will motivate our study of more complex optimization problems
below, is whether other techniques in optimization can be used to obtain
more communication-efficient algorithms for these problems. 

\begin{question}\label{question:other}
Are there tractable optimization problems for which sampling and sketching 
techniques are suboptimal in terms of total communication? 
\end{question}

To answer Question \ref{question:other} it is useful to study optimization problems generalizing both
linear systems and $\ell_p$ regression for certain values of $p$. Towards this end, we 
consider the communication complexity of linear, semidefinite, and convex programming.
Formally, in the linear programming problem, 
suppose there is a coordinator together with $s$ servers $P_1, \ldots, P_s$, the $i$-th of 
which holds a subset $A^{(i)} x \leq b^{(i)}$ of $n_i$ constraints of a $d$-dimensional linear system, and the 
coordinator, who holds a vector $c \in \mathbb{R}^d$, 
would like to output an $x \in \mathbb{R}^d$ for which $c^Tx$ is maximized subject to
$A^{(i)} x \leq b^{(i)}$ for $i = 1, \ldots, s$. 
We further assume each coefficient of each constraint, as well as the objective function $c$, 
is specified using $L$ bits. 

\begin{question}\label{question:linSystem}
What is the communication complexity of solving a linear program?
\end{question}

One could try to implement known linear programming algorithms as distributed protocols. The main challenge here is that known linear programming algorithms operate in the {\it real RAM model of computation}, meaning that basic arithmetic operations on real numbers can be performed in constant time. This is problematic in the distributed setting, since it might mean real numbers need to be communicated among the servers, resulting in protocols that could have infinite communication. Thus, controlling the bit complexity of the underlying algorithm is essential, and this motivates the study of linear programming algorithms in the {\it unit cost RAM model of computation}, meaning that a word is $O(\log(nd))$ bits, and only basic arithmetic operations on words can be performed in constant time. Such a model is arguably more natural than the real RAM model. If one were to analyze the fastest linear programming algorithms in the unit cost RAM model, their time complexity would blow up by poly$(dL)$ factors, since the intermediate computations require manipulating numbers that grow exponentially large or small. Surprisingly, we are not aware of any work that has addressed this question:

\begin{question}\label{question:bitLP}
What is the best possible running time of an algorithm for linear programming in the unit cost RAM model?
\end{question}

As far as time complexity is concerned, it is not even known if linear programming is inherently more difficult than just solving a linear system. Indeed, a long line of work on interior point methods, with the current most recent work of \cite{cts18}, suggests that solving a linear program may not be substantially harder than solving a linear system. One could ask the same question for communication.

\begin{question}\label{question:equations}
Is solving a linear program inherently harder than solving a linear system? What about just checking
the feasibility of a linear program versus that of a linear system?
\end{question}

\paragraph{Recent Independent Work.}
A recent independent work \cite{assadi2019distributed} also studies solving linear programs in the distributed setting, although their focus is to study the tradeoff between round complexity and communication complexity in low dimensions, while our focus is to study the communication complexity in arbitrary dimensions. Note, however, that we also provide nearly optimal bounds for constant dimensions for linear programming in both coordinator and blackboard models.

\subsection{Our Contributions}
We make progress on answering the above questions, with nearly tight bounds in many cases. 
For a function $f$, we let 
$\tilde{O}(f) = f \polylog(sndL/\epsilon)$ and similarly
define $\tilde{\Theta}$ and $\tilde{\Omega}$. 

\subsubsection{Linear Systems}
We begin with linear systems, for which we obtain a complete answer for both randomized and
deterministic communication, in both coordinator and blackboard models of communication. 

\begin{theorem}\label{thm:linSystem}
In the coordinator model, the randomized communication complexity of solving a linear system
is $\tilde{\Theta}(d^2L + sd)$, while the deterministic communication complexity is
$\tilde{\Theta}(sd^2L)$. In the blackboard model, both the randomized communication complexity
and the deterministic communication complexity are $\tilde{\Theta}(d^2L + s)$. 
\end{theorem}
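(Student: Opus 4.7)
The plan is to prove the four bounds separately, combining randomized protocols based on mod-$p$ fingerprints with direct-sum lower bounds from two-party communication.

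For the randomized $\tilde{O}(d^2L + sd)$ upper bound in the coordinator model, I will design the following protocol. The coordinator uses public randomness to sample a prime $p$ of magnitude $\poly(sdL)$ and maintains an affine solution space $V_t \subseteq \mathbb{F}_p^d$ reflecting all constraints processed so far, starting from $V_0 = \mathbb{F}_p^d$. To process server $P_t$, the coordinator samples a uniformly random $x^\ast \in V_{t-1}$---a vector of $\tilde{O}(d)$ bits---and forwards it to $P_t$; the server checks whether $A^{(t)} x^\ast \equiv b^{(t)} \pmod p$. If the check passes, the coordinator moves to $P_{t+1}$, incurring only $\tilde{O}(d)$ bits of confirmation traffic for $P_t$. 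If some row $(a_j, b_j)$ of $P_t$'s system violates the check, $P_t$ returns it to the coordinator over the integers ($\tilde{O}(dL)$ bits), the coordinator updates $V_t \leftarrow V_{t-1} \cap \{x : a_j^\top x \equiv b_j \pmod p\}$, draws a new $x^\ast$, and loops with the same server until it signs off. Over the full execution, confirmations cost $\tilde{O}(sd)$, and at most $d$ updates occur because each strictly lowers the dimension of $V_t$, contributing $\tilde{O}(d^2L)$. Soundness reduces to (i) a Schwartz--Zippel-style argument showing that every unimplied constraint fails the fingerprint with probability at most $1/p$, and (ii) a prime-avoidance estimate: by Cramer's rule the relevant certificates are integers of magnitude at most $2^{O(dL)}$, so a random prime of magnitude $\poly(sdL)$ divides none of them with high probability. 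This last fact also lets the coordinator lift the final mod-$p$ solution to the exact rational solution.

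For the remaining upper bounds, the deterministic $\tilde{O}(sd^2L)$ coordinator bound is obtained by having each server locally reduce its rows to a basis of size $\le d$ and forward them to the coordinator in $\tilde{O}(d^2L)$ bits, which then solves the combined system. Both $\tilde{O}(d^2L + s)$ blackboard bounds follow a single template: servers are processed in order, and at each step the next server reads the current affine space $V$ from the board. If $V$ already lies inside its constraint set, the server posts a single bit; otherwise it posts one offending row, which becomes a new constraint for subsequent servers. Since $\dim V$ drops at most $d$ times over the entire execution, the total broadcast is $\tilde{O}(d^2L + s)$ in both deterministic and randomized variants, without needing randomness at all.

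The lower bounds combine information-theoretic and direct-sum arguments. The $\tilde{\Omega}(d^2L)$ term---present in all four bounds---follows from a two-party reduction: placing a random $d \times d$ matrix $A$ with entries in $\{-2^L, \ldots, 2^L\}$ at one server and a generic $b$ at another, the invertibility result advertised in the abstract implies that the unique solution $x = A^{-1}b$ carries $\Theta(d^2L)$ bits of entropy that must reach the coordinator. The $\tilde{\Omega}(sd)$ randomized coordinator bound comes from embedding $s$ independent $d$-bit hard instances (for example, set-disjointness) at the $s$ servers and invoking a direct-sum theorem: in the point-to-point model the messages associated with different servers cannot share bits, forcing $\Omega(d)$ per server. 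The $\tilde{\Omega}(sd^2L)$ deterministic coordinator bound follows from the same direct-sum construction combined with the $\Omega(d^2L)$ deterministic two-party equality lower bound. Finally, the blackboard $\tilde{\Omega}(s)$ floor reflects the fact that each server must participate with at least one bit for the coordinator to certify global feasibility. The main obstacle will be making the soundness of the randomized coordinator protocol stack cleanly across the up-to-$d$ updates and the $s$ servers without inflating the failure probability: this needs a careful union bound against bad primes (those dividing one of the polynomially many relevant sub-determinants encountered during the protocol) together with fresh independent fingerprints at each round so that Schwartz--Zippel applies uniformly, and a separate argument to lift the final mod-$p$ solution to an exact rational solution via the Cramer-rule bound on numerators and denominators.
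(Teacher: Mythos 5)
Your upper-bound protocol for the randomized coordinator model is genuinely different from the paper's (the paper has each server ship random $\pm1$ combinations of its own equations exactly, with mod-$p$ fingerprints used only to decide when to transmit in full; you instead ship a random point of the current mod-$p$ solution space and have servers return violated rows), and its communication accounting is right — but its soundness argument has a real gap. Every constraint that is never collected is only ever certified \emph{modulo the single prime $p$}: at the end you know (w.h.p.) that the mod-$p$ solution set of the collected rows lies inside the mod-$p$ solution set of every row, but you must output an exact real solution, so you need mod-$p$ implication to entail implication over $\mathbb{Q}$ for the realized pairs (collected set, row). Your proposed ``union bound against bad primes (those dividing one of the polynomially many relevant sub-determinants encountered during the protocol)'' is circular: which sub-determinants are encountered depends on $p$ itself, since the sequence of collected rows is adaptive in $p$. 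A non-adaptive union bound must range over all $\binom{n}{\le d}$ candidate collected sets times all $n$ rows, i.e.\ $n^{\Theta(d)}$ determinants, forcing $\log p = \Omega(d\log n)$ and inflating the per-probe cost from $\tilde{O}(d)$ to $\Theta(d^2\log n)$, which destroys the $\tilde{O}(sd)$ term. (Relatedly, you cannot ``lift the final mod-$p$ solution to the exact rational solution'' from one prime of size $\poly(sdL)$, since the exact solution has $\Theta(dL)$-bit numerators and denominators; the right move is to solve the collected integer rows exactly, but then the only guarantee about uncollected rows is the mod-$p$ one, which is precisely the problem.) The paper avoids this because its span-coverage guarantee is proved over the reals — if a server holds an equation outside the current span, a random sign combination escapes the span with probability at least $1/2$ — and only $O(d)$ critical events need a union bound; your protocol could likely be repaired (fresh prime per probe drawn from a $\poly(nsdL)$ range, conditioning on the history so the collected set is fixed when the prime is drawn, plus an argument that collected rows stay $\mathbb{Q}$-independent so only $O(d)$ are ever collected), but none of that is in the proposal.

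On the lower bounds, the $\Omega(d^2L)$ step also has a gap: invertibility of a random $\pm\{0,\dots,2^L\}$ matrix does \emph{not} by itself imply that $x=A^{-1}b$ ``carries $\Theta(d^2L)$ bits of entropy.'' For instance $\Pr[A^{-1}e_d=e_d]=\Pr[Ae_d=e_d]=2^{-\Theta(dL)}$, so the min-entropy of the solution is only $O(dL)$; what the information-theoretic argument actually needs is a family of $2^{\Omega(d^2L)}$ instances with pairwise distinct solutions, and constructing it is exactly the paper's probabilistic-method step (its Lemmas on $\mathcal{T}$ and $\mathcal{H}$), which in turn needs the $1-2^{-\Theta(dL)}$ singularity bound to survive a union bound over $2^{\Theta(d^2L)}$ pairs. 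Similarly, ``messages associated with different servers cannot share bits, forcing $\Omega(d)$ per server'' is not a valid direct-sum proof — the per-server conversations must be lower-bounded conditioned on the other players' inputs, which is what the symmetrization/information-cost results for OR-of-set-disjointness and multiparty Equality provide — and you also need to specify how a disjointness instance is encoded as a linear system (the paper pins $x=\hat{x}$ at one server and gives each remaining server a single equation $\langle a_i,x\rangle=1$). Your deterministic coordinator and blackboard upper bounds, by contrast, are fine and essentially match the paper's.
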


Theorem \ref{thm:linSystem} shows that randomization provably helps for solving linear systems.
The theorem also shows that in the blackboard model the problem becomes substantially easier. 

\subsubsection{Approximate Linear Systems, i.e., $\ell_p$ Regression}
We next study the $\ell_p$ regression problem in both the coordinator and blackboard models
of communication. Finding a solution to a linear system is a special case
of $\ell_p$ regression; indeed in the case that there is an $x$ for which $Ax = b$ we
must return such an $x$ to achieve $(1+\epsilon)$ relative error in objective function value for
$\ell_p$ regression. Consequently, our lower bounds for linear systems apply also
to $\ell_p$ regression for any $\epsilon > 0$. 

We first summarize our results in Table~\ref{tab:coordinator} and Table~\ref{tab:blackboard} for constant $\epsilon$. We state our results primarily for randomized communication. However, in the case of $\ell_2$ regression, we also discuss deterministic communication complexity. 
\begin{table}[htb!]

	\renewcommand{\arraystretch}{1.2}%

	\centering

	\begin{tabular*}{\textwidth}{c @{\extracolsep{\fill}}cccc}

		\hline\hline
		Error Measure & Upper Bound & Lower Bound & Theorem\\
		\hline
		$\ell_1$ (randomized) & $\tilde{O}(sd^2L)$ & $\tilde{\Omega}(d^2L + sd)$ &  Theorem \ref{thm:l1alg1}, \ref{thm:rand_eq_lb}\\
                $\ell_1$ (deterministic) & $\tilde{O}(sd^2L)$ & $\tilde{\Omega}(sd^2L)$ &Theorem \ref{thm:l1alg1}, \ref{thm:eq_lb}\\
		$\ell_2$ (randomized) & $\tilde{O}(sd^2L)$ & $\tilde{\Omega}(d^2L + sd)$ &  Theorem \ref{thm:l2alg1}, \ref{thm:rand_eq_lb}\\
                $\ell_2$ (deterministic) & $\tilde{O}(sd^2L)$ & $\tilde{\Omega}(sd^2L)$ &Theorem \ref{thm:l2alg1}, \ref{thm:eq_lb}\\
		$\ell_p$ for constant $p > 2$ & $\tilde{O}(s d^3 L)$ & $\tilde{\Omega}(d^2L + sd)$ &Theorem \ref{thm:l_p}, \ref{thm:rand_eq_lb}\\
		$\ell_{\infty}$ & $\tilde{O}(sd^3L)$ & $\tilde{\Omega}(d^2L + sd)$ &Theorem \ref{thm:l_infty}, \ref{thm:rand_eq_lb}\\
		\hline
	\end{tabular*}
	\caption{Summary of our results for $\ell_p$ regression in the coordinator model for constant
$\epsilon$.
}\label{tab:coordinator}
\end{table}

One of the main takeaway messages from Table \ref{tab:coordinator} is that sampling-based approaches,
namely those based upon the so-called Lewis weights \cite{cohen2015p}, would require $\Omega(d^{p/2})$ samples for $\ell_p$ regression when $p > 2$, and thus communication. Another way for solving $\ell_p$ regression
for $p > 2$ is via {\it sketching}, as done in \cite{wz13b}, but then the communication is $\Omega(n^{1-2/p})$.
Our method, which is deeply tied to linear programming, discussed more below, solves this problem
in $\tilde{O}(s d^3 L)$ communication. Thus, 
this gives a new method, departing from sampling and sketching techniques, which achieves much
better communication. Our method involves embedding $\ell_p$ into $\ell_{\infty}$, and then using
distributed algorithms for linear programming to solve $\ell_{\infty}$ regression. 

\begin{table}[htb!]

	\renewcommand{\arraystretch}{1.2}%

	\centering

	\begin{tabular*}{\textwidth}{c @{\extracolsep{\fill}}cccc}

		\hline\hline
		Error Measure & Upper Bound & Lower Bound & Theorem\\
		\hline
		$\ell_1$ & $\tilde{O}(s+d^2L)$ & $\tilde{\Omega}(s + d^2L)$ & Theorem \ref{thm:l1alg2}, \ref{thm:rand_eq_lb}\\
		$\ell_2$ & $\tilde{O}(s+d^2L)$ & $\tilde{\Omega}(s + d^2L)$ &  Theorem \ref{thm:l2alg2}, \ref{thm:rand_eq_lb}\\
		$\ell_p$ for constant $p > 2$ &$O(\min\{sd + d^4L, sd^3L\})$ & $\tilde{\Omega}(s+d^2L)$ & Theorem \ref{thm:l_p}, \ref{thm:rand_eq_lb}\\
		$\ell_{\infty}$ & $O(\min\{sd + d^4L, sd^3L\})$ & $\tilde{\Omega}(s+d^2L)$ & Theorem \ref{thm:l_infty}, \ref{thm:rand_eq_lb}\\
		\hline
	\end{tabular*}
	\caption{Summary of our results for $\ell_p$ regression in the blackboard model for constant 
$\epsilon$.}\label{tab:blackboard}
\end{table}

As with linear systems, one takeaway message from the results in Table \ref{tab:blackboard}
is that the problems have significantly more communication-efficient upper bounds in the 
blackboard model than in the coordinator model. Indeed, here we obtain tight bounds for
$\ell_1$ and $\ell_2$ regression, matching those that are known for the easier problem of linear
systems.

We next describe our results for non-constant $\epsilon$ in both the coordinator model and the 
blackboard model. Here
we focus on $\ell_1$ and $\ell_2$, which illustrate several surprises. 
\begin{table}[htb!]

	\renewcommand{\arraystretch}{1.2}%

	\centering

	\begin{tabular*}{\textwidth}{c @{\extracolsep{\fill}}cccc}

		\hline\hline
		Error Measure & Upper Bound & Lower Bound & Theorem\\
		\hline
	$\ell_1$ & $\tilde{O}(\min(sd^2L + \frac{d^2L}{\epsilon^2}, \frac{sd^3L}{\epsilon})$ & $\tilde{\Omega}(d^2L + sd)$ & Theorem \ref{thm:l1alg2}, \ref{thm:gd}, \ref{thm:rand_eq_lb} \\
	$\ell_2$ (randomized) & $\tilde{O}(sd^2L)$ & $\tilde{\Omega}(d^2L + sd)$ &  Theorem \ref{thm:l2alg1}, \ref{thm:rand_eq_lb}\\
        $\ell_2$ (deterministic) & $\tilde{O}(sd^2L)$ & $\tilde{\Omega}(sd^2L)$ & Theorem \ref{thm:l2alg1}, \ref{thm:eq_lb}\\
		\hline
	\end{tabular*}
	\caption{Summary of our results for $\ell_1$ and $\ell_2$ regression in the coordinator model for general 
$\epsilon$.}\label{tab:coordinatorEps}
\end{table}

One of the most interesting aspects of the results in Table \ref{tab:coordinatorEps} is our
dependence on $\epsilon$ for $\ell_1$ regression, where for small enough $\epsilon$ relative to $sd$,
we achieve a $1/\epsilon$ instead of a $1/\epsilon^2$ dependence. We note that all sampling 
\cite{cohen2015p}
and sketching-based solutions \cite{wz13b} to $\ell_1$ regression have a $1/\epsilon^2$ dependence. Indeed,
this dependence on $\epsilon$ comes from basic concentration inequalities. In contrast, our approach
is based on preconditioned first-order methods described in more detail below.

\begin{table}[htb!]

	\renewcommand{\arraystretch}{1.2}%

	\centering

	\begin{tabular*}{\textwidth}{c @{\extracolsep{\fill}}cccc}

		\hline\hline
		Error Measure & Upper Bound & Lower Bound & Theorem\\
		\hline
	$\ell_1$ & $\tilde{O}(s + \frac{d^2L}{\epsilon^2})$ & $\tilde{\Omega}(s + \frac{d}{\epsilon} + d^2L)$ \textrm{ for } $s > \Omega(1 / \eps)$ & Theorem \ref{thm:l1alg2}, \ref{thm:rand_eq_lb}, \ref{thm:l1_bb_lb}\\
	$\ell_2$ & $\tilde{O}(s + \frac{d^2L}{\epsilon})$ & $\tilde{\Omega}(s + \frac{d}{\epsilon^{1/2}} + d^2L)$ \textrm{ for } $s > \Omega(1 / \sqrt{\eps})$ & Theorem \ref{thm:l2alg2}, \ref{thm:rand_eq_lb}, \ref{thm:l2_bb_lb}\\
		\hline
	\end{tabular*}
	\caption{Summary of our results for $\ell_1$ and $\ell_2$ regression in the blackboard model for
general $\epsilon$.}\label{tab:blackboardEps}
\end{table}

A takeaway message from Table \ref{tab:blackboardEps} is that our lower bound shows some dependence
on $\epsilon$ is necessary both for $\ell_1$ and $\ell_2$ regression, provided $\epsilon$ is not too
small. This shows that in the blackboard model, one cannot obtain the same $\tilde{O}(d^2L + s)$ upper
bound for these problems as for linear systems, thereby separating their complexity from that of 
solving a linear system.  

\subsubsection{Linear Programming}
One of our main technical ingredients is to recast $\ell_p$ regression problems
as linear programming problems and develop the first communication-efficient solutions for distributed
linear programming. Despite this problem being one of the most important problems that we know how to solve
in polynomial time, we are not aware of any previous work considering its communication complexity in generality besides a recent independent work \cite{assadi2019distributed}

First, when the dimension $d$ is constant, 
we obtain nearly optimal upper and lower bounds. 

\begin{theorem}\label{thm:LP_fixed_d}
In constant dimensions, the randomized communication complexity of linear programming is $\widetilde{\Theta}(sL)$ 
in the coordinator model and $\widetilde{\Omega}(s+L)$ in the blackboard model. Our upper bounds allow the
coordinator to output the solution vector $x \in \mathbb{R}^d$, while the lower
bounds hold already for testing if the linear program is feasible. 
Here the $\widetilde{\Theta}(\cdot)$ notation and the $\widetilde{\Omega}(\cdot)$ notation suppress only $\polylog(sL)$ factors. 
\end{theorem}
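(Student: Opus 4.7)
The plan is to establish matching upper and lower bounds in both the coordinator and blackboard models, exploiting the fact that in constant dimension $d$ an LP optimum is witnessed by a basis of at most $d$ constraints. For the coordinator-model upper bound of $\tilde{O}(sL)$, I would distribute a Clarkson-style multiplicative-weights algorithm for low-dimensional LP. Each server $P_i$ keeps integer weights on its local constraints (initially all $1$) and, in each round, first reports the sum of its weights to the coordinator ($\tilde{O}(s)$ bits). The coordinator draws a weighted sample of $O(d^2)=O(1)$ constraints (two-stage: pick a server proportional to its weight-mass, then a constraint inside that server), solves the sampled LP locally, and broadcasts the resulting candidate basis to all servers, at cost $\tilde{O}(sL)$ in the coordinator model. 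Each server then returns a single number giving the total weight of its own violated constraints; if the global violation weight is at most a $1/(d+1)$ fraction of the total the coordinator halts, otherwise each server locally doubles the weights of its violated constraints. Clarkson's potential argument gives $O(d\log(ns)) = \tilde{O}(1)$ iterations, so the total communication is $\tilde{O}(sL)$, and the coordinator outputs the solution to the final sampled LP which it already knows.

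For the blackboard-model upper bound of $\tilde{O}(s+L)$, I would run exactly the same procedure, but now broadcasting the candidate basis (and pulling sampled constraints) is chargeable only once on the shared channel, costing $\mathrm{poly}(L)$ per round instead of $s\cdot\mathrm{poly}(L)$. The per-iteration cost drops to $\tilde{O}(s+L)$, and since there are still $\tilde{O}(1)$ iterations the overall cost is $\tilde{O}(s+L)$. Weighted sampling can be realized with public coins by standard two-level hashing, so the sampling overhead is absorbed into the $\tilde{O}(s+L)$ budget.

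For the lower bounds I would show they already hold for feasibility testing. In the blackboard model, the $\Omega(L)$ term comes from a two-party reduction embedding a numerical gap-equality instance on $L$-bit inputs into a single LP constraint in $O(1)$ dimensions, while the $\Omega(s)$ term follows by forcing each server to signal whether it contributes a feasibility-restricting constraint, i.e., a distributed AND gadget. For the coordinator model, I would amplify this to $\tilde{\Omega}(sL)$ by constructing $s$ independent instances of a hard primitive whose randomized information complexity is $\Omega(L)$ per instance (for example a gap equality / small-set-disjointness variant), each encoded as one server's contribution to a single global LP over $O(1)$ variables using a numerical gadget, and then invoke a distributional direct-sum theorem in the coordinator model so that the coordinator cannot avoid $\Omega(L)$ information cost with each of the $s$ servers.

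The main obstacle will be the coordinator-model $\tilde{\Omega}(sL)$ lower bound. A naive direct sum over $s$ randomized equality instances only yields $\tilde{\Omega}(s)$ because public-coin equality costs $O(\log L)$ per instance, so the reduction must start from a primitive whose information complexity is genuinely $\Omega(L)$ per copy, and the LP gadget in constant dimension must faithfully embed $s$ such copies in parallel while preserving the product structure needed for the direct-sum argument. The delicate part is designing the $O(1)$-dimensional LP so that the per-server contributions are essentially decoupled, so that each server must reveal $\Omega(L)$ bits to the coordinator under the hard distribution.
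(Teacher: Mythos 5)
Your coordinator-model upper bound takes essentially the paper's route: the paper also runs a distributed Clarkson algorithm (sample $O(d^2)$ constraints, solve, have each server count violated constraints, reweight/duplicate the violators), with per-round cost dominated by shipping the $O(L)$-bit candidate solution to all $s$ servers. Two caveats, though. Clarkson's round count is $O(d\log n)$, and in exactly the hard regime of this theorem one has $n=2^{\Omega(L)}$, so $\log n=\Theta(L)$ is \emph{not} a $\polylog(sL)$ factor; your step ``$O(d\log(ns))=\tilde O(1)$ iterations'' is therefore false as stated (the paper itself flags this $\log n$ dependence). Relatedly, the theorem claims only $\tilde\Omega(s+L)$ in the blackboard model -- the paper explicitly does \emph{not} have a matching unconditional $\tilde O(s+L)$ upper bound there (it gets one only under random placement of constraints, via Seidel's algorithm) -- so your claimed blackboard upper bound is both beyond the statement and unsupported, since it rests on the same erroneous iteration count.

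The genuine gap is the lower bound, which is the heart of the theorem. You correctly diagnose that equality-type primitives cannot give $\Omega(L)$ per server (public-coin equality is cheap) and that one needs a primitive of cost $\Omega(L)$ per copy plus a constant-dimensional LP gadget that decouples the $s$ servers -- but these two ingredients are precisely what you leave unspecified, and they are the content of the proof. The paper's construction: for $U=2^{L/100}$ take the planar points $m_i=\left(\frac{i}{2^L},\,1-\frac{i^2}{2\cdot 4^L}\right)$, which satisfy $\|m_i\|_2^2\ge 1+2^{-(4L+2)}$ while $\langle m_i,m_j\rangle\le 1$ for $i\neq j$. A distinguished server holding $u\in[U]$ pins $x=m_u$; every other server, holding a set $S_i\subseteq[U]$, adds the constraints $\langle m_v,x\rangle\le 1$ for $v\in S_i$. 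The resulting $2$-dimensional LP is feasible iff $u\notin\bigcup_i S_i$, so feasibility testing encodes $s$-player lopsided set disjointness over a universe of size $2^{\Theta(L)}$. The per-copy hardness is the two-party membership bound of \cite{andoni2006optimality} ($\Omega(\log U)=\Omega(L)$ under a suitable distribution, for both the ``yes'' and ``no'' conditional distributions), and the factor $s$ comes not from a generic direct-sum statement but from the OR-composition/symmetrization theorem of \cite{WZ13} applied to protocols succeeding with probability $1-1/s^3$, yielding $\Omega(s\log U)$ and hence $\tilde\Omega(sL)$ after the standard $\log s$ repetition loss; the blackboard bound $\Omega(s+L)$ then follows from the trivial $\Omega(s)$ term together with the $s=2$ case, where blackboard and coordinator coincide. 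Without exhibiting such a gadget and invoking (or reproving) a composition theorem of this kind, your outline does not establish the $\tilde\Omega(sL)$ coordinator bound or the $\Omega(L)$ blackboard term.
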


Despite the fact that we do not have tight upper bounds matching the $\widetilde{\Omega}(s+L)$ lower bounds in the blackboard model, under the additional assumption that each constraint in the linear program is placed on a random server, we develop an algorithm with a matching $\widetilde{O}(s+L)$ communication cost. Partitioning constraints randomly across servers instead is common in distributed computation, see, e.g., \cite{ak17}. Neverthelss we leave it as an open problem in the blackboard model in constant dimensions, to remove this requirement.

For solving a linear system in constant dimensions, the randomized communication complexity is $\tilde{\Theta}(s + L)$ in both models.
Again, the $\widetilde{\Theta}(\cdot)$ notation suppresses only $\polylog(sL)$ factors. 
Thus, in the coordinator model, we separate the communication complexity of these problems. 
We can also separate the complexities in the blackboard model if we instead look at the feasibility problem. Here instead
of requiring the coordinator to output the solution vector, we just want to see if the linear system or linear program
is feasible. We have the following theorem for this.

\begin{theorem}\label{thm:Axb}
In constant dimensions, the randomized communication complexity of checking whether a system of linear equations is feasible is $O(s\log L)$ in either the coordinator or blackboard model of communication.
\end{theorem}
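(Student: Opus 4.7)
The plan is to reduce the question to feasibility over $\mathbb{F}_p$ for a random small prime $p$ and to aggregate per-server information using $\mathbb{F}_p$-row bases. Concretely, the coordinator samples a uniformly random prime $p$ from $\{2,\ldots,N\}$ with $N=\Theta(L\log L)$ (free with public randomness, or $O(\log L)$ bits on the blackboard, or $O(s\log L)$ bits in the coordinator model for explicit broadcasting). Each server $P_i$ reduces its augmented matrix $[A^{(i)}\mid b^{(i)}]$ modulo $p$ and transmits a basis of its $\mathbb{F}_p$-row span, i.e.\ at most $d+1$ vectors in $\mathbb{F}_p^{d+1}$. The coordinator stacks the $s$ received bases into a combined matrix $[\bar A\mid\bar b]$ and, by Gaussian elimination over $\mathbb{F}_p$, tests whether $\rank_p(\bar A)=\rank_p([\bar A\mid\bar b])$, declaring ``feasible'' iff this equality holds. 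Each server sends $O(d^2\log p)=O(\log L)$ bits since $d$ is constant, for a total of $O(s\log L)$ in either model.

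Correctness rests on the observation that every $t\times t$ minor of an $L$-bit integer matrix has absolute value at most $t!\cdot 2^{tL}=2^{O(L)}$ once $t\leq d+1=O(1)$. If $Ax=b$ is feasible over $\mathbb{Q}$, set $r:=\rank_{\mathbb{Q}}(A)=\rank_{\mathbb{Q}}([A\mid b])$ and fix any nonzero $r\times r$ minor $D_0$ of $A$; every $(r+1)\times(r+1)$ minor of $[A\mid b]$ vanishes over $\mathbb{Q}$ and hence modulo $p$, so $\rank_p([A\mid b])\le r$, while $p\nmid D_0$ forces $\rank_p(A)\ge r$, so the reduced system is feasible. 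Dually, if $Ax=b$ is infeasible over $\mathbb{Q}$, some $(r+1)\times(r+1)$ minor $D_1$ of $[A\mid b]$ is nonzero, and $p\nmid D_1$ forces $\rank_p([A\mid b])>\rank_p(A)$ and hence infeasibility modulo $p$. In both branches the protocol errs only when $p$ divides a fixed nonzero integer of magnitude $2^{O(L)}$, which has at most $O(L)$ distinct prime divisors. Since $\{2,\ldots,N\}$ contains $\Theta(N/\log N)=\Theta(L)$ primes for $N=\Theta(L\log L)$, choosing a sufficiently large constant bounds the single-trial failure probability by, say, $1/10$, and $O(1)$ independent repetitions with majority vote push it below $1/3$.

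The main conceptual obstacle is the temptation to worry about pathological reductions like $p\,x=1$, which is feasible over $\mathbb{Q}$ but infeasible mod $p$: the argument sidesteps this by isolating one specific determinantal witness, guaranteed to be a nonzero integer of magnitude $2^{O(L)}$, whose survival modulo $p$ drives both branches of correctness. The constant-$d$ hypothesis is exactly what keeps this witness small enough that an $O(\log L)$-bit prime suffices, and hence what yields the $O(s\log L)$ total bound in both communication models.
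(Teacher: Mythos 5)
Your proposal is correct and follows essentially the same route as the paper: pick a random prime $p$ with $O(\log L)$ bits, test feasibility over $\mathbb{F}_p$, and argue correctness from the fact that the relevant nonzero minor is an integer of magnitude $2^{O(L)}$ (constant $d$) and hence has only $O(L)$ prime divisors — exactly the content of the paper's Lemma~\ref{lem:mod_rank}. The only difference is organizational: you have each server ship a basis of its local $\mathbb{F}_p$-row span in a single round, whereas the paper's protocol (Figure~\ref{alg:rand_feasibility}) adaptively maintains a shared independent set across servers; for constant $d$ both yield the claimed $O(s\log L)$ bound in either model.
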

Combining Theorem \ref{thm:LP_fixed_d} and Theorem \ref{thm:Axb}, we see that for feasibility in the blackboard model, linear programming requires $\tilde{\Omega}(s + L)$ bits, while linear system feasibility takes $\tilde{O}(s)$ bits, and thus we separate
these problems in the blackboard model as well. 

Returning to linear programs, we next consider the complexity in arbitrary dimensions.

\begin{theorem}\label{thm:LP}
In the coordinator model, 
the randomized communication complexity of exactly solving a linear program $\max \{c^Tx \, : \, Ax \le b\}$ with $n$ constraints in dimension $d$ and all coefficients specified by $L$-bit numbers is $\tilde{O}(sd^3L)$. Moreover it is lower bounded by $\tilde{\Omega}(d^2L + sd)$. Here the upper and lower bounds require the coordinator to output 
the solution vector $x \in \mathbb{R}^d$. 
\end{theorem}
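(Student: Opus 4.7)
My plan is to establish the upper and lower bounds separately.

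For the $\tilde{O}(s d^3 L)$ upper bound, I would simulate a cutting-plane method for LP in the distributed setting. I use a method (such as Vaidya's, or a more recent variant such as Lee--Sidford--Wong) that solves $\max\{c^\top x : Ax \le b\}$ using $\tilde{O}(d)$ calls to a separation oracle at query points whose bit complexity per coordinate is $\tilde{O}(dL)$. The coordinator runs the method locally and emulates the oracle: at iteration $t$, it broadcasts the current query point $x_t \in \Rbb^d$ to all $s$ servers; each server $P_i$ tests whether $A^{(i)} x_t \le b^{(i)}$ and, if not, returns a single violated row, encoded as an $O(dL)$-bit pair $(a,\beta)$. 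The coordinator selects one violated constraint and feeds it back to the method, using the objective-cut trick $c^\top x \ge c^\top x_t + \eta$ to drive progress when the current $x_t$ is feasible but suboptimal. After $\tilde{O}(d)$ iterations the coordinator holds an approximate optimum within $2^{-\Omega(dL)}$ of the exact one; a standard rounding step (snap each coordinate to the unique rational with denominator at most $2^{O(dL)}$) then recovers the exact optimal vertex with no further communication.

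Counting bits per iteration: the coordinator broadcasts a $d$-vector with $\tilde{O}(dL)$-bit coordinates, costing $\tilde{O}(sd^2L)$ over all $s$ servers; replies sum to $\tilde{O}(sdL)$. Summing over the $\tilde{O}(d)$ iterations gives the claimed $\tilde{O}(sd^3L)$ bound, with the polylogarithmic factors hidden in $\tilde{O}$.

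The lower bound $\tilde{\Omega}(d^2L + sd)$ is inherited directly from Theorem \ref{thm:linSystem}: any linear system $Ax=b$ is equivalent to the linear program $\max\{0^\top x : Ax \le b,\ -Ax \le -b\}$, and outputting the solution vector to this LP solves the original system. Both the $\tilde{\Omega}(d^2L)$ and $\tilde{\Omega}(sd)$ halves of the randomized coordinator-model linear-system lower bound therefore transfer verbatim to LP.

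The main obstacle is controlling bit complexity inside the cutting-plane method: its internal query points arise from real-valued computations on the accumulated cuts and could, if implemented naively, blow up to exponentially many bits. I would handle this by representing each query to $\tilde{O}(dL)$ bits of precision and invoking the standard perturbation analysis of the chosen cutting-plane method to show that the iteration count remains $\tilde{O}(d)$ and that the final candidate is within $2^{-\Omega(dL)}$ of the optimum, which is close enough for the rational-rounding step to recover the exact optimal vertex.
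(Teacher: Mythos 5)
Your lower-bound argument is fine and matches the paper: the LP $\max\{0^\top x : Ax\le b,\ -Ax\le -b\}$ inherits both the $\tilde{\Omega}(d^2L)$ and $\tilde{\Omega}(sd)$ randomized coordinator-model bounds for solving linear systems. The problem is the upper bound, and it is a genuine gap, not a presentation issue: no separation-oracle cutting-plane method makes only $\tilde{O}(d)$ oracle calls at the accuracy you need. Vaidya's method and Lee--Sidford--Wong use $O(d\log(dR/\epsilon))$ calls, and since exact recovery by rational rounding requires $\epsilon = 2^{-\Theta(dL)}$ (vertex coordinates have denominators up to $d!\,2^{dL}$), the iteration count is $\tilde{\Theta}(d^2L)$, not $\tilde{O}(d)$; this is also information-theoretically necessary in the oracle model, and it is exactly what the paper's own volume argument gives (volume shrinks from $d^{O(d^2L)}$ to $d^{-O(d^2L)}$ by a constant factor per cut, hence $O(d^2L\log d)$ rounds). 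With the corrected iteration count, your protocol broadcasts a query point of $\tilde{O}(d^2L)$ bits to $s$ servers in every round, so it totals $\tilde{O}(sd^4L^2)$, which misses the claimed $\tilde{O}(sd^3L)$ by roughly a $dL$ factor.

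The paper gets $\tilde{O}(sd^3L)$ precisely by not shipping the query point at full precision every round. One route (Section \ref{sec:clarkson}) is Clarkson's algorithm: in each of $O(d\log n)$ rounds the coordinator gathers $O(d^2)$ sampled constraints ($O(d^3L)$ bits) and broadcasts the solution of that small LP ($\tilde{O}(d^2L)$ bits per server), giving $\tilde{O}(sd^3L + d^4L)$; the servers only use the broadcast point to count violated constraints, which updates the (implicit) sampling weights. The other route (Section \ref{sec:cog}) is a center-of-gravity protocol in which every server maintains the same polytope and computes the centroid and covariance locally, so the query point is never communicated at all; the only message per round is a violated constraint rounded to $O(d\log d)$ bits in the isotropic coordinate frame, shifted outward by $\eps d^{3/2}$ so it stays valid while Gr\"unbaum's lemma still guarantees a constant-factor volume drop. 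That gives $O(d^2L\log d)$ rounds at $O(sd\log d)$ bits each, i.e., $O(sd^3L\log^2 d)$. To repair your proposal you would need one of these ideas (or something equivalent): either compress the per-round communication to $\tilde{O}(sd)$ bits despite the $\tilde{\Theta}(d^2L)$ rounds, or cut the round count by a $dL$ factor, which sampling-style arguments like Clarkson's provide but generic cutting planes do not.
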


The lower bound in Theorem \ref{thm:LP} just follows from our lower bound for linear systems. The upper bound is based on an optimized distributed cutting-plane algorithm. We describe the idea below.

While the upper bound is
$\tilde{O}(sd^3L)$, one can further improve it as follows. We show that if
the coefficients of $A$ in the input to the linear program
are perturbed independently by i.i.d. discrete Gaussians with variance as
small as $2^{-\Theta(L)}$, then we can improve the upper bound for solving
this perturbed problem to
$\tilde{O}(sd^2L + d^4L)$, where now the success probability of the algorithm
is taken over both the randomness of the algorithm and the random input
instance, which is formed by a random perturbation of a worst-case instance. Note that
this is an improvement for sufficiently large $s$. 
Our model coincides with the well-studied {\it smooth complexity}
model of linear programming \cite{spielman2001smoothed,b02,s03}. However, a major
difference is that the variance of the perturbation
needs to be at least inverse polynomial in their works, 
whereas we allow our variance to be as small as $2^{-\Theta(L)}$. 

\begin{theorem}
In the smoothed complexity model with discrete Gaussians of variance $2^{-\Theta(L)}$, the communication complexity of exactly solving a linear program $\max \{c^Tx \, : \, Ax \le b\}$ with $n$ constraints in dimension $d$ and all coefficients specified by $L$-bit numbers, with probability at least $9/10$ over the input distribution and randomness of the protocol, is $\tilde{O}(sd^2L + d^4 L)$ in the coordinator model. 
\end{theorem}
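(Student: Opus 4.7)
The plan is to modify the distributed cutting-plane algorithm underlying Theorem~\ref{thm:LP} by replacing its worst-case bit-precision analysis with a smoothed one. Recall that the baseline algorithm runs $\tilde{O}(d)$ rounds of a cutting-plane method in which, per round, the coordinator broadcasts its current iterate to all $s$ servers (at $\tilde{O}(d^2L)$ bits, the precision needed to guarantee that any returned cut is truly violated) and some server replies with a violating constraint. Reducing the per-round broadcast precision from $\tilde{O}(d^2L)$ down to $\tilde{O}(dL)$ bits saves a factor of $d$ in the $s$-dependent part of the cost, yielding the desired $\tilde{O}(sd^2L)$ leading contribution.

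First I would establish the core smoothed statement: after perturbing each entry of $A$ by an independent discrete Gaussian of variance $2^{-\Theta(L)}$, every $d\times d$ submatrix $B$ of the perturbed matrix satisfies $\|B^{-1}\|\le 2^{O(L)}$ with probability $1-2^{-\Omega(L)}$. This is the quantitative companion to the smoothed invertibility lemma foreshadowed in the abstract, and its proof reduces to a least-singular-value bound for a fixed matrix shifted by a discrete Gaussian, in the spirit of Rudelson--Vershynin style arguments. A union bound over all choices of $d$ rows, absorbed into the $1/10$ failure budget, lets me assume this conditioning holds simultaneously for every submatrix that will arise during the protocol.

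Next I would translate the conditioning bound into a precision bound for the cutting-plane iterates. Because every basic feasible solution is obtained by inverting some $d \times d$ submatrix of $A$, the optimal vertex has rational coordinates whose numerators and denominators have bit length only $\tilde{O}(dL)$, rather than the generic $\tilde{O}(d^2L)$. The same holds, up to constants, for any intermediate center produced by the cutting-plane method, since each such center can be written using $O(1)$ inversions of constraint-like submatrices with the same conditioning. The coordinator therefore broadcasts its iterate in $\tilde{O}(dL)$ total bits without changing which constraints any server flags as violated. Over $\tilde{O}(d)$ iterations, each with a $\tilde{O}(dL)$-bit reply, the interactive phase now costs only $\tilde{O}(sd^2L)$.

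Finally I would handle the additive $\tilde{O}(d^4L)$ term, which comes from an exact-vertex certification phase appended after the cutting-plane loop. Here the coordinator pulls in $O(d)$ candidate tight constraints from each of the $O(d)$ relevant servers at full $\tilde{O}(dL)$ precision and runs a local refinement whose communication footprint is $\tilde{O}(d^3L)$ per probing round over $\tilde{O}(d)$ rounds, followed by a union bound that closes the $9/10$ success probability. The hardest step, I expect, is the middle one: justifying that the cutting-plane centers really do inherit the $2^{O(L)}$ conditioning of the raw constraint submatrices, because a center is a more complicated function of the constraints than a single matrix inverse, and a naive argument could reintroduce a factor of $d$ in the precision. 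Controlling this will require coupling each iterate to a specific sequence of submatrix inversions and invoking the simultaneous conditioning guarantee from the first step.
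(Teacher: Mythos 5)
Your high-level plan (keep the $\tilde{O}(d)$-round, sample-and-broadcast structure, cut the per-round broadcast of the iterate from $\tilde{O}(d^2L)$ to roughly $\tilde{O}(dL)$ bits, and keep the $\tilde{O}(d^3L)$-per-round constraint-gathering term) matches the paper's plan, which rounds the sampled-subproblem optimum $x_R$ before broadcasting it inside Clarkson's algorithm (Figure \ref{alg:smooth_clarkson}). However, the central step of your argument is incorrect as stated, and the step the paper actually relies on is missing. Perturbing $A$ does not shorten the exact rational representation of a vertex: the solution of $Bx=c$ for a perturbed $d\times d$ basis $B$ with $\Theta(L)$-bit entries still has, by Cramer's rule, numerators and denominators of $\tilde{\Theta}(dL)$ bits per coordinate, exactly as in the worst case, no matter how well conditioned $B$ is. What smoothing buys is the ability to \emph{round} the broadcast iterate to $\tilde{O}(L)$ bits per coordinate without changing any server's satisfied/violated decisions, and proving that requires more than a condition-number bound on the basis: one must show that every constraint $h$ outside the sampled set $R$ has slack at $x_R$ exceeding the rounding error, and nothing about the conditioning of the basis prevents some other constraint from being nearly tight at $x_R$. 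The paper proves this margin bound by exploiting that the Gaussian noise $g_h$ in row $h$ is independent of $x_R$ (since $h\notin R$), so $\langle g_h, x_R\rangle$ anti-concentrates at scale $\sigma\|x_R\|_2 \ge 2^{-O(L)}/\poly(nd)$ (with $\|x_R\|_2$ controlled via the smoothed condition-number bounds of Lemmas \ref{lem:slp_cond} and \ref{lem:x_norm}), which dominates the $\poly(d)\delta$ rounding error; a union bound over constraints and the $O(d\log n)$ rounds then shows the rounded protocol follows the computation path of exact Clarkson, which is also what preserves the round bound. This independence-plus-anti-concentration argument is the crux of the proof and does not appear in your proposal; the step you flag as hardest (that the iterates inherit $2^{O(L)}$ conditioning) would not close the gap even if established.

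Two smaller issues: your opening conditioning step union-bounds over all $\binom{n}{d}$ submatrices with per-matrix failure probability $2^{-\Omega(L)}$, which fits in the $1/10$ budget only when $L=\Omega(d\log n)$; driving the per-matrix failure down to $n^{-d}$ forces thresholds of order $\|B^{-1}\|\le 2^{O(L)}n^{d}$ and hence $\Omega(L+d\log n)$ bits of precision per coordinate, reintroducing an $sd^{3}\log n$-type term not absorbed by $\tilde{O}(sd^2L+d^4L)$. The paper needs the bound only for the $O(d\log n)$ bases that actually arise, each with failure probability $1/\poly(nd)$. Also, the $\tilde{O}(sd^3L)$ baseline being modified is Clarkson's algorithm with $O(d\log n)$ rounds, not the center-of-gravity cutting-plane method, which runs $\Theta(d^2L\log d)$ rounds; your stated round count is correct only for the former.
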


While our focus in this paper is on communication, our upper bounds 
also give a new technique for improving the
time complexity in the unit cost RAM model of linear programming,
where arithmetic operations on words of size $O(\log(nd))$ can be performed
in constant time. For this fundamental problem
we obtain the fastest known algorithm even in the {\it non-smoothed
setting} of linear programming.

\begin{theorem}\label{thm:LPbit}
The time complexity of solving an $n \times d$ linear program with $L$-bit coefficients is $\tilde{O}(nd^\omega L + \poly(dL))$ in the unit cost RAM model. 
\end{theorem}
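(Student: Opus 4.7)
The plan is to adapt the distributed cutting-plane algorithm that underlies Theorem~\ref{thm:LP} to the single-machine unit cost RAM model, paying close attention to word-length blow-up. At the top level I would reduce LP optimization to LP feasibility by binary-searching on the value $c^Tx$; standard bit-complexity bounds say $O(dL)$ such feasibility queries suffice, contributing only a $\polylog$ factor.

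For each feasibility call I would run a cutting-plane method (Vaidya's volumetric-center method, or a Lee--Sidford--Wong variant) that needs $\tilde O(d)$ iterations. Each iteration has two parts: (i) an internal linear-algebra update that maintains the volumetric center, its Hessian, and the leverage-score weights of the current cuts, all of which involve only $O(d)\times O(d)$ matrices, and (ii) a separation-oracle call on the full constraint matrix $A\in\mathbb{R}^{n\times d}$. Part (i) is carried out using exact rational arithmetic with periodic truncation to $\poly(dL)$ bits, which is correct because the combinatorial analysis of the cutting plane only needs inverse-polynomial-in-$2^L$ precision; summed over all iterations and all binary-search rounds, part (i) costs $\poly(dL)$ in the unit-cost RAM model.

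The dominant cost is the oracle, and the key to getting $d^\omega$ rather than $d^2$ is to batch it. Naively, each oracle call is a matrix--vector product $A\bar x$, taking $\tilde O(ndL)$ bit operations and $\tilde O(nd^2 L)$ over all iterations. Instead I would group the iterations into batches of $d$: within a batch, the cutting-plane method (in an LSW-style accelerated form, or via a ``proximal'' randomized walk on centers) can produce $d$ tentative query points $\bar x_1,\ldots,\bar x_d$ without knowing the oracle's responses, so the $d$ queries can be resolved by a single $n\times d$ by $d\times d$ matrix product $A X$, costing $O(nd^{\omega-1})$ arithmetic operations on $O(L)$-bit integers, hence $\tilde O(nd^{\omega-1}L)$ bit operations. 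Over $\tilde O(1)$ such batches per feasibility call and $O(dL)$ feasibility calls, the oracle cost telescopes to $\tilde O(nd^\omega L)$.

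The main obstacle I expect is justifying the batching: standard cutting-plane analyses are fully adaptive, so I need to argue that one can generate $d$ useful centers in parallel without losing more than a $\polylog$ factor in iteration count. I would handle this either by (a) invoking a parallel/oblivious variant of the center-of-gravity or volumetric method in which $d$ cuts are added per round, with an amortized volume-reduction argument replacing the per-iteration one, or (b) using the Dunagan--Vempala style random-walk center together with leverage-score sampling so that the next batch of centers depends only on a tiny $\poly(dL)$-size sketch of the constraint matrix. Either route produces $\tilde O(d)$ total iterations, and once this is established, combining the $\tilde O(nd^\omega L)$ oracle cost with the $\poly(dL)$ internal cost gives the claimed $\tilde O(nd^\omega L + \poly(dL))$ running time.
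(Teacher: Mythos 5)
There is a genuine gap, and it is exactly the one you flag yourself: the batching of $d$ adaptive separation-oracle calls into a single matrix product is not something you can currently justify. Cutting-plane methods are inherently sequential --- the $(i+1)$-st center depends on which constraint was violated at the $i$-th center --- and neither of your proposed fixes is an established result: an ``oblivious'' variant that commits to $d$ centers before seeing any cuts has no known volume-reduction guarantee, and determining future centers from a $\poly(dL)$-size sketch of $A$ defeats the purpose, since the whole point of the oracle call is to find which of the $n$ constraints is violated. On top of this missing idea, the accounting does not close even if batching worked: exact feasibility via a cutting-plane method needs $\Theta(d\log(R/r)) = \tilde{\Theta}(d^2L)$ iterations (the required precision is $2^{-\Theta(dL)}$, which is not a polylog factor), and your outer binary search contributes another $O(dL)$ factor, so even under your optimistic count of $\tilde{O}(1)$ batches per feasibility call the total oracle cost is $\tilde{O}(nd^{\omega}L^2)$, not $\tilde{O}(nd^{\omega}L)$.

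The paper avoids all of this by using Clarkson's algorithm rather than a cutting-plane scheme. Clarkson's method runs for only $O(d\log n)$ rounds in expectation; in each round one solves a small $O(d^2)\times d$ LP exactly in $\poly(dL)$ time, and the only interaction with the full constraint matrix is checking a \emph{single} candidate point $x_R$ against all $n$ constraints, i.e., one product $Ax_R$. The $d^{\omega}$ saving then comes not from batching iterations but from batching the \emph{bits} of $x_R$: each entry of $x_R$ has bit complexity $\tilde{O}(dL)$ (by Cramer's rule), so one splits it into $d$ chunks of $\tilde{O}(L)$ bits, forming a $d\times d$ matrix $X$ with $\tilde{O}(L)$-bit entries, computes $AX$ with fast matrix multiplication in $\tilde{O}(nd^{\omega-1}L)$ time, and reassembles $Ax_R$ in $\tilde{O}(ndL)$ time. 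Over $O(d\log n)$ rounds this gives $\tilde{O}(nd^{\omega}L + \poly(dL))$ with no adaptivity issue to resolve. If you want to salvage your write-up, the cleanest path is to replace the cutting-plane core with a Clarkson-style sampling scheme (or any method whose rounds each query only one point against $A$) and redirect the fast-matrix-multiplication idea from ``many centers at once'' to ``many bit-chunks of one high-precision center at once.''
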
 

We note that this is for solving an LP {\it exactly} in the RAM model with words of size $O(\log (nd))$ bits. The current fastest linear programming algorithms \cite{ls14,ls15,cts18} state the bounds in terms of additive error $\epsilon$, which incurs a multiplicative factor of at least $\Omega(dL)$ to solve the problem exactly. Also such algorithms manipulate large numbers at intermediate points in the algorithm, which are at least $L$ bits, which could take $\Omega(L)$ time to perform a single operation on. It seems that transferring such results
to the unit cost RAM model with $O(\log(nd))$ bit words incurs
time at least $\Omega(nd^{2.5}L^2 + d^{w+1.5}L^2)$. This holds true even
of the recent work \cite{cts18}, which focuses on the setting $n=O(d)$ and does not improve the leading $nd^{2.5} L^2$ term. Even such a bit-complexity bound needs careful checking of the number of bits required as recent improvements use sophisticated inverse maintenance methods to save on the number of operations (an exercise that was carried out thoroughly for the Ellipsoid method in \cite{GLS}).


\subsubsection{Implications for Convex Optimization and Semidefinite Programming}
Our upper bounds also extend to more general convex optimization problems. For these, we must modify the problem statement to finding an $\eps$-additive approximation rather than the exact solution. We obtain the following upper bound for a convex program in $\R^d$. 
\begin{theorem}\label{thm:convex}
The communication complexity of solving the convex optimization problem 
$
\min  \{c^T x\, :\, x \in \bigcap_i K_i \}
$
for convex sets $K_i \subseteq RB^n$, one per server, to within an additive error $\epsilon$, i.e., finding a point $y$ s.t. $c^Ty \le OPT+\eps$ and $y \in \bigcap_i K_i + \eps B^n$ is $O(sd^2\log(Rd/\eps)\log d)$. 
\end{theorem}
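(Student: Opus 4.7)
The plan is to simulate a centralized cutting-plane method in the coordinator model, where the coordinator maintains a convex polyhedron $P^{(t)}$ containing the feasible set $K=\bigcap_i K_i$, and the separation oracle for $K_i$ is implemented by server $i$. Initialize $P^{(0)}=RB^d$. At iteration $t$ the coordinator (i) computes a ``center'' $x^{(t)}$ of $P^{(t)}$ prescribed by the cutting-plane method (for example a volumetric center as in Vaidya, or the approximate center maintained by a modern near-optimal cutting-plane algorithm such as Lee--Sidford--Wong); (ii) broadcasts $x^{(t)}$, truncated to $O(\log(Rd/\epsilon))$ bits per coordinate, to all $s$ servers; (iii) each server checks $x^{(t)}\in K_i$ using its local separation oracle and replies either with an acknowledgement bit or with a separating hyperplane $a_i^{\top}y\le b_i$ of bit-complexity $O(d\log(Rd/\epsilon))$; (iv) if all servers acknowledge, $x^{(t)}$ is feasible and the coordinator updates its incumbent optimum and inserts an \emph{objective cut} $c^{\top}x\ge c^{\top}x^{(t)}$ into $P^{(t)}$; otherwise it inserts one (or a few) of the received separating hyperplanes. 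The method terminates once the cutting-plane algorithm certifies that $P^{(t)}$ has diameter at most $\epsilon$, at which point the incumbent yields an $\epsilon$-additive approximation to the optimum, belonging to $K+\epsilon B^d$.

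To bound the communication, we count rounds and per-round cost separately. A cutting-plane method of the Vaidya/LSW type requires $T=O(d\log(Rd/\epsilon))$ rounds to shrink the containing body from $RB^d$ down to diameter $\epsilon$, which suffices for an $\epsilon$-additive solution. Within a round the coordinator sends its $O(d\log(Rd/\epsilon))$-bit center to each of the $s$ servers ($O(sd\log(Rd/\epsilon))$ bits), receives $s$ acknowledgement bits, and at most one $O(d\log(Rd/\epsilon))$-bit separating hyperplane. A more refined accounting, exploiting the fact that the cutting-plane method only requires a constant-factor decrease in a potential that depends on the center with logarithmically many bits of precision (one can round the broadcast center to $O(\log d)$ relative bits without destroying the volumetric-shrinkage step, and use the full precision only when folding in a cut), replaces one of the $\log(Rd/\epsilon)$ factors by $\log d$, yielding the claimed $O(sd^{2}\log(Rd/\epsilon)\log d)$ bound.

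The main obstacle is the bit-complexity analysis: a priori, the centers produced by a volumetric-center step or by an approximate John-ellipsoid computation are high-precision numbers, and it is not obvious that truncating the broadcast center to $\mathrm{poly}(d)$ bits (rather than to $\mathrm{poly}(d)\log(R/\epsilon)$ bits) preserves both the per-iteration volume reduction and the final $\epsilon$-additive correctness. We handle this by showing that the rounding error in the broadcast center only shifts each separating hyperplane by an amount absorbed into a small safety-margin adjustment of $P^{(t)}$, so that the overall geometric progress is unaffected. Handling the objective reduces to the feasibility analysis above, either by treating $c^{\top}x\ge\tau$ as an additional constraint that the coordinator refines as new feasible points are discovered, or by a binary search on the optimum value, both of which add only logarithmic factors already absorbed in the bound.
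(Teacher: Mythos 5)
Your overall architecture (cutting planes with a per-server separation oracle, $O(d\log(Rd/\eps))$ rounds, constraint-shift to absorb rounding) is in the right spirit, but the step on which your claimed bound hinges is a genuine gap. With the center broadcast at full precision your accounting gives $O\bigl(sd\log(Rd/\eps)\bigr)$ bits per round over $O(d\log(Rd/\eps))$ rounds, i.e.\ $O\bigl(sd^2\log^2(Rd/\eps)\bigr)$, which overshoots the stated bound. To repair this you assert that the broadcast center can be rounded to ``$O(\log d)$ relative bits'' per coordinate, but you never specify relative to what, and in your architecture there is no frame in which this works: only the coordinator maintains $P^{(t)}$, so the servers cannot interpret coordinates relative to the current polytope's ellipsoid of inertia, and rounding in \emph{absolute} coordinates to $O(\log d)$ bits incurs an error of order $R/\poly(d)$. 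Once the polytope has shrunk far below that scale (which necessarily happens, since you must reach scale $\eps \ll R/\poly(d)$), the queried point can lie far outside $P^{(t)}$; the returned halfspace, even after your ``safety-margin adjustment,'' then need not pass near the centroid of $P^{(t)}$ (it may miss $P^{(t)}$ entirely), so the constant-factor volume drop fails, and the margin you must add to keep the cut valid is of order $R/\poly(d)\gg\eps$, which also destroys the final guarantee $y\in\bigcap_i K_i+\eps B^n$. So the refined accounting is not a bookkeeping matter; it is exactly the missing idea.

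The paper resolves this by inverting the information flow: every server mirrors the polytope. All cuts are broadcast, so each server locally recomputes the centroid $z$ and the covariance $C$ of $P$ exactly, and \emph{no center is ever communicated}. The only transmitted object per round is the violated constraint, and it is rounded in the isotropic coordinates given by $C^{1/2}$ to a grid of resolution $\eps_0<0.1/d^{3/2}$, which costs only $O(d\log d)$ bits; the halfspace is then shifted by $\eps_0 d^{3/2}$ in that norm so it remains valid for the feasible region, and Gr\"unbaum's lemma (cutting within distance $0.1$ of the centroid in isotropic position) still yields a constant-factor volume drop. This gives $O(sd\log d)$ bits per round and $O(d\log(Rd/\eps))$ rounds, hence $O(sd^2\log d\log(Rd/\eps))$. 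If you want to keep your coordinator-centric design, you would have to broadcast the cuts to all servers anyway (so they can track $C$), at which point broadcasting the center becomes unnecessary and you have essentially rederived the paper's protocol; the treatment of the objective (as an extra constraint or via binary search) is fine either way.
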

If the objective function is not known to all servers, we incur an additional $O(sdL)$ communication. For semidefinite programs with $d \times d$ symmetric matrices and $n$ linear constraints this gives a bound of $\tilde{O}(sd^4\log(1/\epsilon))$. Note that we can simply send all the constraints to one server in $O(nd^2L)$ communication, so this is always an upper bound.



\subsection{Our Techniques}
\subsubsection{Linear Systems}
To solve linear systems in the distributed setting, the coordinator can go through the servers one by one.
The coordinator and all servers maintain the same set $C$ of linearly independent linear equations. 
For each server $P_i$, if there is a linear equation stored by $P_i$ that is linearly independent with linear equations in $C$, then $P_i$ sends that linear equation to all other servers and adds that linear equation into $C$.
In the end, $C$ will be a maximal set of linearly independent equations, and thus the coordinator can simply solve the linear equations in $C$.
This protocol is deterministic and has communication complexity $O(sd^2L)$ in the coordinator model and $O(s + d^2L)$ in the blackboard model, since at most $d$ linear equations will be added into the set $C$.

In fact, the preceding protocol is optimal for deterministic protocols, even just for testing the feasibility of linear systems.
To prove lower bounds, we first prove the following new theorem about random matrices which may be of independent interest.
\begin{theorem}[Informal version of Theorem \ref{thm:singularity_main}]\label{theorem:random_inverse}
Let $R$ be a $d \times d$ matrix with i.i.d. random integer entries in $\{-2^L, \ldots, 2^L\}$. 
The probability that $R$ is invertible is $1-2^{-\Theta(dL)}$.
\end{theorem}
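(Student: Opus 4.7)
The plan is to prove matching upper and lower bounds of $2^{-\Theta(dL)}$ on $\Pr[R \text{ singular}]$. Writing $N = 2^L$, the lower bound is immediate: the event $\{R_1 = R_2\}$ forces singularity and has probability exactly $(2N+1)^{-d} = 2^{-\Theta(dL)}$ by independence of entries. The main work is the upper bound $\Pr[\text{singular}] \le 2^{-\Omega(dL)}$. Note that naive arguments---Schwartz-Zippel, or row-by-row exposure combined with the Odlyzko bound $|V \cap \{-N,\ldots,N\}^d| \le (2N+1)^k$ for a $k$-dimensional subspace $V$---yield only $\Pr[\text{singular}] = O(1/N)$, bottlenecked by the probability that the $d$-th row lands in the $(d{-}1)$-dimensional span of its predecessors. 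A finer analysis is therefore essential.

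My plan is cofactor expansion along the last row: write $\det R = \langle R_d, v\rangle$ where $v \in \mathbb{Z}^d$ has $v_i = (-1)^{d+i}\det(M_i)$ and $M_i$ is the $(d{-}1)\times(d{-}1)$ submatrix of the first $d-1$ rows with column $i$ deleted. Since $v$ is independent of $R_d$, this gives
\[
\Pr[\det R = 0] \;\le\; \Pr[v = 0] \;+\; \mathbb{E}\!\left[\mathbf{1}[v \neq 0]\cdot \Pr_{R_d}\!\left[\langle R_d, v\rangle = 0 \mid v\right]\right].
\]
The event $\{v = 0\}$ coincides with the first $d-1$ rows having rank at most $d-2$, which I would control inductively. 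For the second term I plan to invoke an Esseen-type anti-concentration inequality derived from Fourier analysis of $\mathrm{Unif}(\{-N,\ldots,N\})$: for a primitive $v \in \mathbb{Z}^d$, $\Pr_{R}[\langle R, v\rangle = 0] \le C/(N \|v\|_2)$ under the mild condition that the support of $\langle R, v\rangle$ is a dense enough integer lattice.

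To exploit this I need $\|v\|_2$ to be of order $N^{d-1}$ with overwhelming probability. A direct second-moment calculation---using multilinearity of the determinant and the identity $\mathbb{E}[\det(M)^2] = k!\,\sigma^{2k}$ for a random $k \times k$ integer matrix with i.i.d.\ entries of variance $\sigma^2 = N(N+1)/3$---gives $\mathbb{E}[\|v\|_2^2] = \Theta(d!\,N^{2(d-1)})$, so heuristically $\|v\|_2$ concentrates around $\sqrt{d!}\,N^{d-1}$. On the event $\{\|v\|_2 \ge c\sqrt{d!}\,N^{d-1}\}$ the Esseen bound then yields $\Pr[\langle R_d, v\rangle = 0 \mid v] = O(1/N^d) = 2^{-\Theta(dL)}$, matching the target.

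The hard part will be the small-ball estimate: showing $\Pr[0 < \|v\|_2 < c\sqrt{d!}\,N^{d-1}] \le 2^{-\Omega(dL)}$ so that atypical $v$ do not dominate the bound. I envision two routes. The first is a Paley-Zygmund approach driven by a direct fourth-moment estimate $\mathbb{E}[\|v\|_2^4] = O(\mathbb{E}[\|v\|_2^2]^2)$, which requires controlling correlations between minors that share rows via Cauchy-Binet-style identities; but this gives only polynomial small-ball concentration, so it must be bootstrapped to an exponential bound. The second route is an inductive reduction: $\Pr[\|v\|_2 < t] \le \Pr[|\det(M_1)| < t]$ reduces the small-ball question to anti-concentration of a single $(d{-}1)$-dimensional determinant, which is a problem of the same flavor and can be attacked by strengthening the inductive hypothesis from singularity probability to full determinantal anti-concentration at every scale. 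The main technical tension is that many near-degenerate configurations of $R_1, \ldots, R_{d-1}$ all contribute, and the final bound must be assembled carefully across multiple scales of $\|v\|_2$ to preserve the $2^{-\Theta(dL)}$ rate.
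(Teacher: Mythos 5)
Your lower bound and the cofactor-expansion setup are fine, but the two lemmas carrying the upper bound are not available in the form you use them, and a third step quietly loses the exponent. Write $N=2^L$. (i) The Esseen-type bound $\Pr[\langle R_d, v\rangle = 0] \le C/(N\|v\|_2)$ for primitive $v$ is false as stated: take $v = (1, K, 0, \ldots, 0)$ with $K > 2N$; then $\langle R_d, v\rangle = 0$ forces the first two entries of $R_d$ to vanish, so the probability is $(2N+1)^{-2}$ no matter how large $\|v\|_2 \approx K$ is. Your caveat that the support be a ``dense enough integer lattice'' is not a mild side condition---it is exactly the inverse Littlewood--Offord issue: the concentration probability depends on the arithmetic structure of the (gcd-reduced) normal vector, not on its norm, and showing that the cofactor vector of $d-1$ random rows avoids every structured configuration except with probability $2^{-\Omega(dL)}$ is the entire difficulty. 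The paper's proof in Section \ref{sec:singularity} is built precisely for this, following \cite{kahn1995probability,tao2006random}: Hal\'asz's inequality (Theorem \ref{thm:halasz}) compares each row with a lazy copy $\sumsign_t^{(\lambda e^{-\lambda})}$ to gain a factor $t^{-1/5}$ per swapped row, and the notions of discrete codimension, degenerate hyperplanes, and an entropy/union count over spanned hyperplanes control how many structured normals there are at each concentration level. Your plan has no substitute for this machinery; note also that any Esseen bound sees only $v/\gcd(v)$, whose norm your second-moment computation of $\|v\|_2$ does not control (and, in passing, since the entry variance is about $N^2/3$, the typical $\|v\|_2$ is $\sqrt{d!}\,(N/\sqrt{3})^{d-1}$, which sits \emph{below} your threshold $c\sqrt{d!}\,N^{d-1}$ by an exponential factor).

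(ii) The small-ball estimate $\Pr[0 < \|v\|_2 < \mathrm{threshold}] \le 2^{-\Omega(dL)}$ is exponential determinant anti-concentration, which is at least as hard as the theorem; as you concede, Paley--Zygmund/fourth moments give only polynomial bounds, and the ``strengthened induction'' route assumes a statement stronger than the one being proved. (iii) The term $\Pr[v=0] = \Pr[\rank(R_1,\ldots,R_{d-1}) \le d-2]$ cannot be handled by the naive induction: rank $\le d-2$ only implies that a fixed $(d-1)\times(d-1)$ minor is singular, giving $\Pr[v=0]\le p_{d-1}$, and the recursion $p_d \le p_{d-1} + 2^{-\Omega(dL)}$ does not close, since $2^{-c(d-1)L} \gg 2^{-cdL}$; unrolled, it terminates at the base case and yields only $p_d = O(2^{-2L}\,\mathrm{polylog})$, not $2^{-\Omega(dL)}$. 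To keep the exponent you need a corank-$\ge 2$ bound of order $2^{-\Omega(dL)}$ for the rectangular matrix of the first $d-1$ rows, which is again a KKS-type statement; this is the role played in the paper by the reduction to rows spanning a hyperplane (Lemma 5.1 of \cite{tao2006random}), the Odlyzko-type Lemma \ref{lem:vector_in_hyperplane}, and Lemma \ref{lem:high_codimension}. So the proposal is a plausible-looking outline, but the three places where it defers to ``standard'' facts are precisely where Theorem \ref{thm:singularity_main} requires its technical work.
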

The previous best known probability bound was only $1-2^{-\Theta(d)}$
\cite{tao2006random,bourgain2010singularity}; we stress that the results of
\cite{bourgain2010singularity} are not sufficient
\footnote{We have verified this with Philip Matchett Wood, who is an
  author of \cite{bourgain2010singularity}. The issue is that in their Corollary
  1.2, they have an explicit constraint on the cardinality of the set $S$, i.e., $|S| = O(1)$.
  In their Theorem 2.2, it is assumed that
  $|S| = n^{o(n)}$. Thus, as far as we are aware, there are
  no known results sufficient to prove
  our singularity probability bound.}
to prove our stronger
bound with the extra factor of $L$ in the exponent, which is crucial for our
lower bound. 


With Theorem \ref{theorem:random_inverse}, in Lemma \ref{lem:hard_instance}, we use the probabilistic method to construct a set of $|\mathcal{H}| = 2^{\Omega(d^2L)}$ matrices $\mathcal{H} \subseteq \mathbb{R}^{d \times d}$ with integral entries in $[-2^L, 2^L]$, such that for any $S, T \in \mathcal{H}$, $S^{-1} e_d \neq T^{-1} e_d$, where $e_d$ is the $d$-th standard basis vector. 

Now consider any deterministic protocol for testing the feasibility of linear systems. 
Suppose the linear system on the $i$-th server is $H_i x = e_d$ for some $H_i \in \mathcal{H}$, then the entire linear system is feasible if and only if $H_1 = H_2 = \ldots = H_s$.
This is equivalent to the problem in which each server receives a binary string of length $\log(|\mathcal{H}|)$, and the goal is to test whether all strings are the same or not.
In the coordinator model, a deterministic lower bound of $\Omega(s\log(|\mathcal{H}|))$ for this problem can be proved using the symmetrization technique in \cite{phillips2012lower, woodruff2014optimal}, which gives an optimal $\Omega(sd^2L)$ lower bound. 
An optimal $\Omega(s + d^2L)$ deterministic lower bound can also be proved in the blackboard model. 
The formal analysis is given in Section \ref{sec:det_lb_ls}.

For solving linear systems, an $\Omega(d^2L)$ lower bound holds even for randomized algorithms in the coordinator model.
When there is only a single server which holds a linear system $Hx = e_d$ for some $H \in \mathcal{H}$, 
in order for the coordinator to know the solution $x = H^{-1}e_d$, standard information-theoretic argument shows that $\log(|\mathcal{H}|)$ bits of communication is necessary, which gives an $\Omega(d^2L)$ lower bound.  
This idea is formalized in Section \ref{sec:ls_rand_lb}.
A natural question is whether the $O(sd^2L)$ upper bound is optimal for randomized protocols. 

We first show that in order to test feasibility, it is {\em possible} to achieve a communication complexity of $O(sd^2\log(dL))$, which can be exponentially better than the bound for deterministic protocols. 
The idea is to use hashing. 
With randomness, the servers can first agree on a random prime number $p$, and test the feasibility over the finite field $\mathbb{F}_p$.
It suffices to have the prime number $p$ randomly generated from the range $[2, \poly(dL)]$, and thus the $L$ factor in the communicataion complexity of deterministic protocols can be improved to $\log p = \log(dL)$.
However, it is still unclear if {\it solving linear systems} in the coordinator model will require $\Omega(sd^2L)$ bits of communication for randomized protocols. 

Quite surprisingly, we show that $O(sd^2L)$ is not the optimal bound for randomized protocols, and the optimal bound is $\widetilde{\Theta}(d^2L + sd)$.
In the deterministic protocol with communication complexity $O(sd^2L)$, most communication is wasted on synchronizing the set $C$, which requires the servers to send linear equations to all other servers. 
In our new protocol, {\em only the coordinator} maintains the set $C$.
The issue now, however, is that the servers no longer know which linear equation they own is linearly independent with those equations in $C$.
On the other hand, each server can simply generate a random linear combination of all linear equations it owns. 
We can show that if a server does have a linear equation that is linearly independent with those in $C$, with constant probability, the random linear combination is also linearly independent with those in $C$, and thus the coordinator can add the random linear combination into $C$.
Notice that taking random linear combinations to preserve the rank of a matrix is a special case of dimensionality reduction or {\it sketching}, which comes up in a number of applications, see, for example compressed sensing \cite{candes2005decoding,d06}, data streams \cite{andoni2013eigenvalues}, and randomized numerical linear algebra \cite{woodruff2014sketching}.
Here though, a crucial difference is that we just need the fact that if a set of vectors $S$ is not contained in the span of another set of vectors $V$, then a random linear combination of the vectors in $S$ is also not in the span of $V$ with high probability. This allows us to adaptively take as few linear combinations as possible to solve the linear system, enabling us to achieve much lower communication than would be possible by just sketching the linear systems at each server and non-adaptively combining them.

If we implement this protocol na\"ively, then the communication complexity will be $\wt{O}(d^2L + sdL)$, since at most $d$ linear equations will be added into $C$, and there is an $\widetilde{O}(dL)$ communication complexity associated with each of them. Furthermore, even if a server does not have any linear equation that is linearly independent with $C$, it still needs to send random linear combinations to the coordinator, which would require $\widetilde{O}(sdL)$ communication.
To improve this further to $\widetilde{O}(sd)$, we can still use the hashing trick mentioned before.
If a server generates a random linear combination, it can first test whether the linear combination is linearly independent with $C$ over the finite field $p$, for a random prime $p$ chosen in $[2, \poly(dL)]$. 
This will reduce the communication complexity to $\widetilde{O}(d)$ for each test.
If the linear equation is indeed linearly independent with $C$, then the server sends the original linear equation (without taking the residual modulo $p$) to the coordinator. 
Again the total communication complexity for sending the original linear equations is upper bounded by $O(d^2L)$.
Thus, the total communication complexity is upper bounded by $\widetilde{O}(d^2L + sd)$.
See Section \ref{sec:rand_ls_solve} for the formal analysis. 

By a reduction from the OR of $s - 1$ copies of the two-server set-disjointness problem to solving linear systems, we can prove an extra $\widetilde{\Omega}(sd)$ lower bound, which holds even for testing feasibility of linear systems. 
Here the idea is to interpret vectors in $\{0, 1\}^d$ as characteristic vectors of subsets of $[d]$. 
One of the servers will fix the solution of the linear system to be a predefined vector $x$.
Each server $P_i$ has a single linear equation $a_i^T x = 1$.
By interpreting vectors as sets, $a_i^T x = 1$ implies the set represented by $a_i$ and $x$ are intersecting. 
Thus, the servers are actually solving the OR of $s - 1$ copies of the two-server set-disjointness problem, which is known to have $\widetilde{\Omega}(sd)$ communication complexity \cite{phillips2012lower, WZ13}.
This lower bound is formally given in Section \ref{sec:ls_rand_lb}.

\subsubsection{Linear Regression}
For an $\ell_2$ regression instance $\min_x \|Ax - b\|_2$, the optimal solution can be calculated using the {\em normal equations}, i.e., the optimal solution $x$ satisfies $A^TA x = A^T b$.
This already gives a simple yet nearly optimal deterministic protocol for $\ell_2$ regression in the coordinator model: the coordinator calculates $A^T A$ and $A^T b$ using only $\widetilde{O}(sd^2L)$ bits of communication by collecting the covariance matrices from each server and summing them up.
 The $\widetilde{O}(sd^2L)$ communication complexity matches our lower bound for solving linear systems for deterministic protocols in the coordinator model. 
However, when implemented in the blackboard model, the communication complexity of this protocol is still $\widetilde{O}(sd^2L)$. 
To improve this bound, we first show how to efficiently obtain approximations to leverage scores in both models. 
Our protocol is built upon the algorithm in \cite{cohen2015uniform}, but implemented in a distributed manner. 
The resulting algorithm has $\widetilde{O}(sd^2L)$ communication complexity in the coordinator model but only $\widetilde{O}(s + d^2L)$ communication complexity in the blackboard model. 
With approximate leverage scores, the coordinator can then sample $\widetilde{O}(d / \varepsilon^2)$ rows of the matrix $A$ to obtain a {\em subspace embeeding}, at which point it will be easy to calculate a $(1 + \varepsilon)$-approximate solution to the $\ell_2$ regression problem. 
The number of sampled rows can be further improved to $\widetilde{O}(d / \varepsilon)$ using S\'arlos's argument \cite{sarlos2006improved} since solving $\ell_2$ regression does not necessarily require a full $(1 + \varepsilon)$ subspace embedding, which results in a protocol with communication complexity $\widetilde{O}(s + d^2L / \varepsilon)$ in the blackboard model. 
Full details can be found in Section \ref{sec:l2regression}.

One may wonder if the dependence on $1 / \varepsilon$  is necessary for solving $\ell_2$ regression in the blackboard model. 
In Section \ref{sec:bb_reg_lb}, we show that some dependence on $1 / \varepsilon$ is actually necessary.
We show an $\Omega(d / \sqrt{\varepsilon})$ lower bound whenever $s > \Omega(1 / \sqrt{\varepsilon})$.
The hardness follows from the fact that if the matrix $A$ satisfies $A^{(i)} = I$ for all $i \in [s]$, then the optimal solution is just the {\em average} of $b^{(1)}, b^{(2)}, \ldots, b^{(s)}$.
Thus, if we can get sufficiently good approximation to the $\ell_2$ regression problem, then we can actually recover the {\em sum} of $b^{(1)}, b^{(2)}, \ldots, b^{(s)}$, at which point we can resort to known communication complexity lower bound in the blackboard model \cite{phillips2012lower}.
This argument will also give an $\Omega(d / \varepsilon)$ lower bound for $(1 + \varepsilon)$-approximate $\ell_1$ regression in the blackboard model, whenever $s > \Omega(1 / \varepsilon)$.
The formal analysis can be found in Section \ref{sec:bb_reg_lb}.

For $\ell_1$ regression, we can no longer use the normal equations. 
However, we can obtain approximations to $\ell_1$ Lewis weights by using approximations to leverage scores, as shown in \cite{cohen2015p}. 
With approximate $\ell_1$ Lewis weights of the $A$ matrix, the coordinator can then obtain a $(1 + \varepsilon)$ $\ell_1$ subspace embedding by sampling $\widetilde{O}(d / \varepsilon^2)$ rows.
This will give an $O(sd^2L + d^2L / \varepsilon^2)$ upper bound for $(1 + \varepsilon)$-approximate $\ell_1$ regression in the coordinator model, and an $O(s + d^2L / \varepsilon^2)$ upper bound in the blackboard model. 
It is unclear if the number of sampled rows can be further reduced since there is no known $\ell_1$ version of S\'arlos's argument.
A natural question is whether the $1 / \varepsilon^2$ dependence is optimal. 
We show that the dependence on $\varepsilon$ can be further improved to $1 / \varepsilon$, by using optimization techniques, or more specifically, first-order methods. 
Despite the fact that the objective function of $\ell_1$ regression is neither smooth nor strongly-convex, it is known that by using Nesterov's Accelerated Gradient Descent and smoothing reductions \cite{nesterov2005smooth}, one can solve $\ell_1$ regression using only $O(1 / \varepsilon)$ full gradient calculations. 
On the other hand, the complexity of first-order methods usually has dependences on various parameters of the input matrix $A$, which can be unbounded in the worst case.
Fortunately, recent developments in $\ell_1$ regression \cite{durfee2017ell_1} show how to {\em precondition} the matrix $A$ by simply doing an $\ell_1$ Lewis weights sampling, and then rotating the matrix appropriately. 
By carefully combining this preconditioning procedure with Accelerated Gradient Descent, we obtain an algorithm for $(1 + \varepsilon)$-approximate $\ell_1$ regression with communication complexity $\widetilde{O}(sd^3L / \varepsilon)$ in the coordinator model, which shows it is indeed possible to improve the $\varepsilon$ dependence for $\ell_1$ regression. 
A formal analysis is given in Section \ref{sec:l1regression}.

For general $\ell_p$ regression, if we still use Lewis weights sampling, then the number of sampled rows and thus the communication complexity will be $\Omega(d^{p / 2})$.
Even worse, when $p = \infty$, Lewis weights sampling will require an unbounded number of samples. 
However, $\ell_{\infty}$ regression can be easily formulated as a linear program, which we show how to solve exactly in the distributed setting.
Inspired by this approach, we further develop a general reduction from $\ell_p$ regression to linear programming.
Our idea is to use the max-stability of exponential random variables \cite{andoni2017high} to embed $\ell_p$ into $\ell_{\infty}$, write the optimization problem in $\ell_{\infty}$ as a linear program and then solve the problem using linear program solvers. 
However, such embeddings based on exponential random variables usually produce heavy-tailed random variables and makes the dilation bound hard to analyze. 
Here, since our goal is just to solve a linear regression problem, we only need the dilation bound for the optimal solution of the regression problem. 
The formal analysis in Section \ref{sec:reg_lp} shows that $(1 + \varepsilon)$-approximate $\ell_p$ regression can be reduced to solving a linear program with $\widetilde{O}(d / \varepsilon^2)$ variables, which implies a communication protocol for $\ell_p$ regression without the $\Omega(d^{p / 2})$  dependence. 

\subsubsection{Linear and Convex Programs}
We adapt two different algorithms from the literature for efficient communication and implement them in the distributed setting.
The first is Clarkson's algorithm, which works by sampling $O(d^2)$ constraints in each iteration and finds an optimal solution to this subset; the sampling weights are maintained implicitly. In each iteration the total communication is $O(d^3L)$ for gathering the constraints and an additional $\tilde{O}(sd^2L)$ per round to send the solution to this subset of constraints to all servers. This solution is used to update the sampling weights. Clarkson's algorithm has the nice guarantee that it needs only $O(d\log n)$ rounds with high probability. A careful examination of this algorithm shows that the bit complexity of the computation (not the communication) is dominated by checking whether a proposed solution satisfies all constraints, i.e., computing $Ax$ for a given $x$. We show this can be done with time complexity $\tilde{O}(nd^\omega L)$ in the unit cost RAM model and this is the leading term of the claimed time bound. 

Notice that the $\widetilde{O}(sd^3L)$ term in the communication complexity of Clarkson's algorithm comes from the fact that the protocol needs to send an optimal solution $x^*$ of a linear program with size $O(d^2) \times d$ for a total of $O(d \log n)$ times. 
However, when each server $P_i$ receives $x^*$, all $P_i$ will do is to check whether $x^*$ satisfies the constraints stored on $P_i$ or not. 
Notice that here entries in the constraints have bit complexity $L$, whereas the solution vector $x^*$ has bit complexity $\widetilde{O}(dL)$ for each entry. 
Intuitively, for most linear programs, we don't need such a high precision for the solution vector $x^*$.
This leads to the idea of smoothed analysis. 
We show that if
the coefficients of $A$ in the input to the linear program
are perturbed independently by i.i.d. discrete Gaussians with variance as
small as $2^{-\Theta(L)}$, then we can improve the upper bound for solving
this perturbed problem to
$\tilde{O}(sd^2L + d^4L)$.
The reason here is that with Gaussian noise, we can round each entry of the solution vector $x^*$ to have bit complexity $\widetilde{O}(L)$, which would suffice for verifying whether $x^*$ satisfies the constraints or not, for most linear programs. 
Full details regarding Clarkson's algorithm and the smoothed analysis model can be found in Section \ref{sec:clarkson}. 

One minor drawback of Clarkson's algorithm is it has a dependence on $\log n$.
In constant dimensions, our $\widetilde{\Omega}(s + L)$ lower bound in the blackboard model holds only when $n = 2^{\Omega(L)}$, in which case the communication complexity of Clarkson's algorithm will be $\widetilde{O}(sL + L^2)$.

Under the additional assumption that each constraint in the linear program is placed on a random server, we develop an algorithm with communication complexity $\widetilde{O}(s+L)$ in the blackboard model. 
To achieve this goal, we modify Seidel's classical algorithm and implement it in the distributed setting.
Seidel's algorithm benefits from the additional assumption from two aspects.
On the one hand, Seidel's classical algorithm needs to go through all the constraints in a random order, which can be easily achieved now since all constraints are placed on a random server.
On the other hand, Seidel's classical algorithm needs to make a recursive call each time it finds one of $d$ constraints that determines the optimal solution, and will make $\sum_{i = 1}^n d / i = \Theta(d \log n)$ recursive calls in expectation. 
To implement Seidel's algorithm in the distributed setting, each time we find one of the $d$ constraints that determines the optimal solution, the current server also needs to broadcast that constraint.
Thus, na\"ively we need to broadcast $O(d \log n)$ constraints during the execution, which would result in $O(s + L \log n)$ communication. 
Under the additional assumption, with good probability, the first server $P_1$ stores at least $\Omega(n / s)$ constraints.
Since the first server $P_1$ does not need to make any recursive calls or broadcasts, the total number of recursive calls (and thus broadcasts) will be $\sum_{i = \Omega(n / s)}^n d / i = \Theta(d \log s)$.
The formal analysis is given in Section \ref{sec:seidel}.


For convex programming, we have to use a more general algorithm. We use a refined version of the classical center-of-gravity method.
The basic idea is to round violated constraints that are used as cutting planes to $O(d\log d)$ bits. We 
optimize over the ellipsoid method in the following two ways.
First, we round the violated constraint sent in each iteration by
locally maintaining an ellipsoid to ensure the rounding error does not affect the algorithm. 
Roughly speaking, each server maintains a well-rounded current feasible set, and the number of bits needed in each round is thus only $\tilde{O}(d)$.
Secondly, we use the center of gravity method to make sure the volume is cut by a constant
factor rather than a $(1-1/d)$ factor in each iteration, even when constraints are rounded. 
See Section \ref{sec:cog} for the formal analysis. 

\section{Preliminaries}
\subsection{Notation}
For $m$ matrices $A^{(1)} \in \mathbb{R}^{d \times n_1}, A^{(2)} \in \mathbb{R}^{d \times n_2}, \ldots, A^{(m)} \in \mathbb{R}^{d \times n_m}$, 
we use $[A^{(1)} ~ A^{(2)} ~ \cdots ~ A^{(m)}]$ to denote the matrix in $\mathbb{R}^{d \times (n_1 + n _2 + \cdots + n_m)}$ 
whose first $n_1$ columns are the same as $A^{(1)} $, 
the next $n_2$ columns are the same as $A^{(2)} $,
\ldots,
and the last $n_m$ columns are the same as $A^{(m)}$.

For a matrix $A \in \mathbb{R}^{n \times d}$, we use $\spa(A) = \{Ax \mid x \in \mathbb{R}^d\}$ to denote the subspace spanned by the columns of the matrix $A$.
For a set of vectors $S \subseteq R^d$, we use $\spa(S)$ to denote the subspace spanned by the vectors in $S$.
For a set of linear equations $C$, we also $\spa(C)$ to denote all linear combinations of linear equations in $C$.
We use $A_i$ to denote the $i$-th column of $A$ and $A^i$ to denote the $i$-th row of $A$.
We use $A^{\dagger}$ to denote the Moore-Penrose inverse of $A$.
We use $\rank(A)$ to denote the rank of $A$ over the real numbers and $\rank_p(A)$ to denote the rank of $A$ over the finite field $\mathbb{F}_p$.

For a vector $x \in \mathbb{R}^d$, we use $\|x\|_p = \left(\sum_{i=1}^d |x_i|^p \right)^{1 / p}$ to denote its $\ell_p$ norm.
For two vectors $x$ and $y$, we use $\langle x, y \rangle$ to denote their inner product. 

For matrices $A$ and $B$, we say $A \approx_{\kappa} B$ if and only if 
\[
\frac{1}{\kappa} B \preceq A \preceq \kappa B,
\]
where $\preceq$ refers to the L\"owner partial ordering of matrices, i.e., $A \preceq B$ if $B - A$ is positive semi-definite. 

\subsection{Models of Computation and Problem Settings}
We study the distributed linear regression problem in two distributed models: the coordinator model (a.k.a. the message passing model) and the blackboard model. 
The coordinator model represents distributed computation systems with point-to-point communication, while the blackboard model represents those where messages can be broadcasted to each party.

In the {\em coordinator model}, there are $s \ge 2$ servers $P_1, P_2, \ldots, P_s$, and one coordinator.
These $s$ servers can directly send messages to the coordinator through a two-way private channel. 
The computation is in terms of rounds: at the beginning of each round, the coordinator sends a message to some of the $s$ servers, and then each of those servers that have been contacted by the coordinator sends a message back to the coordinator. 

In the alternative {\em blackboard model}, the coordinator is simply a blackboard where the $s$ servers $P_1, P_2, \ldots, P_s$ can share information; in other words, if one server sends a message to the coordinator/blackboard then the other $s-1$ servers can see this information without further communication. 
The order for the servers to send messages is decided by the contents of the blackboard.

For both models we measure the {\em communication cost} which is defined to be the total number of bits sent through the channels.

In the {\em distributed linear system problem}, there is a data matrix $A \in \mathbb{R}^{n \times d}$ and a vector $b$ of observed values.
All entries in $A$ and $b$ are integers between $[-2^L, 2^L]$, where $L$ is the {\em bit complexity}.
The matrix $[A~b]$ is distributed row-wise among the $s$ servers $P_1, P_2, \ldots, P_s$.
More specifically, for each server $P_i$, there is a matrix $[A^{(i)}~b^{(i)}]$ stored on $P_i$, which is a subset of rows of $[A ~ b]$.
Here we assume $\{[A^{(1)}~b^{(1)}], [A^{(2)}~b^{(2)}], \ldots, [A^{(s)}~b^{(s)}]\}$ is a {\em partition} of all rows in $[A~b]$.
The goal of the {\em feasibility testing} problem is to design a protocol, such that upon termination of the protocol, the coordinator reports whether the linear system $Ax = b$ is feasible or not.
The goal of the {\em linear system solving} problem is to design a protocol, such that upon termination of the protocol, either the coordinator outputs a vector $x^* \in \mathbb{R}^d$, such that $Ax^* = b$, or the coordinator reports the linear system $Ax = b$ is infeasible. 
It can be seen that the linear system solving problem is strictly harder than the feasibility testing problem. 

In the {\em distributed linear regression problem}, there is a data matrix $A \in \mathbb{R}^{n \times d}$ and a vector $b$ of observed values, which is distributed in the same way as in the distributed linear system problem. 
The goal of the distributed $\ell_p$ regression problem is to design a protocol, such that upon termination of the protocol, the coordinator outputs a vector $x^* \in \mathbb{R}^d$ to minimize $\|Ax - b\|_p$.

In the {\em distributed linear programming problem}, there is a matrix $A \in \mathbb{R}^{n \times d}$ and a vector $b$, which is distributed in the same way as in the distributed linear system problem. 
The goal of the {\em feasibility testing} problem is to design a protocol, such that upon termination of the protocol, the coordinator reports whether the linear program $Ax \le b$ is feasible or not.
In the {\em linear programming solving} problem, the goal is to design a protocol, such that upon termination of the protocol, the coordinator outputs a vector $x^* \in \mathbb{R}^d$ such that $Ax^* \le b$ is satisfied.
There can also be a vector $c \in \R^d$ which is known to all servers, and in this case the goal is to minimize (or maximize) $\langle c, x \rangle$ under the constraint that $Ax \le b$.

\subsection{Row Sampling Algorithms}
\begin{definition}[\cite{cohen2015uniform}]\label{def:leverage_score}
Given a matrix $A \in \mathbb{R}^{n \times d}$. The leverage score of a row $A^i$ is defined to be
\[
\tau_i(A) = A^i (A^TA)^{\dagger} (A^i)^T. 
\]
Given another matrix $B \in \mathbb{R}^{n' \times d}$, the generalized leverage score of a row $A^i$ w.r.t. $B$ is defined to be
\[
\tau^B_i(A) = 
\begin{cases}
A^i (B^TB)^{\dagger} (A^i)^T & \text{if } A^i \perp \ker(B),\\
\infty & \text{otherwise}.
\end{cases}
\]
\end{definition}
\begin{definition}[\cite{cohen2015p}]\label{def:lewis_weight}
Given a matrix $A \in \mathbb{R}^{n \times d}$. The $\ell_1$ Lewis weights $\{\overline{w}_i\}_{i = 1}^n$ are the unique weights such that for each $i \in [n]$ we have
\[
\overline{w}_i = \tau_i\left(\overline{W}^{-1/2} A\right),
\]
where $\overline{W}$ is the diagonal matrix formed by putting $\{\overline{w}_i\}_{i = 1}^n$ on the diagonal.
\end{definition}
\begin{theorem}[$\ell_2$ Matrix Concentration Bound, Lemma 4 in \cite{cohen2015uniform}]\label{thm:l2chernoff}
There exists an absolute constant $C$ such that for any matrix $A \in \mathbb{R}^{n \times d}$ and any set of sampling values $p_i$ satisfying
$$
p_i \ge C \tau_i(A) \log d \varepsilon^{-2},
$$
if we generate a matrix $S$ with $N = \sum_{i=1}^n p_i$ rows, each chosen independently as the $i$-th basis vector, times $p_i^{-1/2}$ with probability $p_i / N$, then with probability at least $0.99$, for all vector $x \in \mathbb{R}^d$,
\[
(1 - \varepsilon) \|Ax\|_2 \le \|SAx\|_2 \le (1 + \varepsilon)\|Ax\|_2.
\]
\end{theorem}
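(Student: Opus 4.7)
The statement is a standard leverage-score matrix Chernoff bound, and the plan is to reduce it to a statement about approximating the identity in the Loewner order, then apply a matrix concentration inequality.

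First, I would reduce the claim to a statement purely about $A$'s row space. Write the SVD $A = V\Sigma W^T$ and set $U = AW\Sigma^{\dagger}$, so $U$ has orthonormal columns spanning the column space of $A$ and $U^TU = I_r$ where $r = \rank(A)$. A key identity is that $\|U^i\|_2^2 = A^i(A^TA)^{\dagger}(A^i)^T = \tau_i(A)$, so leverage scores are unchanged. For any $x \in \R^d$, writing $y = \Sigma W^Tx$ gives $\|Ax\|_2 = \|Uy\|_2 = \|y\|_2$ and $\|SAx\|_2 = \|SUy\|_2$. Hence it suffices to prove that, with probability at least $0.99$, $(1-\varepsilon)^2 I \preceq U^TS^TSU \preceq (1+\varepsilon)^2 I$ on the span of $U$, which I will phrase as $\|U^TS^TSU - I_r\|_{\mathrm{op}} \leq O(\varepsilon)$ and rescale $\varepsilon$ at the end.

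Next, I would expand $U^TS^TSU$ as a sum of independent rank-one PSD matrices. By the sampling rule, each of the $N$ rows of $S$ is $p_i^{-1/2}e_i^T$ with probability $p_i/N$, so
$$U^TS^TSU = \sum_{j=1}^N X_j, \qquad X_j = \frac{1}{p_{i_j}} (U^{i_j})^T U^{i_j},$$
where $i_j$ is the index sampled in round $j$. A direct computation gives $\E[X_j] = \frac{1}{N}\sum_i (U^i)^TU^i = \frac{1}{N}I_r$, so $\E[\sum_j X_j] = I_r$, i.e., the estimator is unbiased. The sampling assumption $p_i \ge C\tau_i(A)\log d/\varepsilon^2$ immediately yields the per-sample operator-norm bound
$$\|X_j\|_{\mathrm{op}} \;=\; \frac{\|U^{i_j}\|_2^2}{p_{i_j}} \;=\; \frac{\tau_{i_j}(A)}{p_{i_j}} \;\leq\; \frac{\varepsilon^2}{C\log d}.$$

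Finally, I would apply the matrix Chernoff inequality (for sums of independent PSD matrices with bounded operator norm) to $\sum_j X_j$ with maximum eigenvalue of the mean equal to $1$ and per-term bound $R = \varepsilon^2/(C\log d)$. The standard bound gives
$$\Pr\!\bigl[\lambda_{\max}\bigl(\textstyle\sum_j X_j\bigr) \geq 1+\varepsilon\bigr] \;+\; \Pr\!\bigl[\lambda_{\min}\bigl(\textstyle\sum_j X_j\bigr) \leq 1-\varepsilon\bigr] \;\leq\; 2r\,e^{-\varepsilon^2/(3R)} \;=\; 2r\,d^{-C/3},$$
which is at most $0.01$ once $C$ is a sufficiently large absolute constant, since $r\leq d$. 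Combining with the reduction above and adjusting $C$ to absorb the $(1\pm\varepsilon)^2$ versus $(1\pm\varepsilon)$ discrepancy yields the claimed subspace embedding guarantee.

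The only real subtlety is making sure the argument goes through when $A$ is rank-deficient, which is handled cleanly by working with $U$ on the column space (so that the pseudoinverse in the definition of $\tau_i$ and the projection onto $\ker(A)^\perp$ match up). The rest is routine once the matrix Chernoff inequality is invoked; no new ideas beyond those of \cite{cohen2015uniform} are required.
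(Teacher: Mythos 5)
Your proposal is correct: it is the standard leverage-score matrix Chernoff argument (reduce to an orthonormal basis $U$ of the column space, write $U^TS^TSU$ as an unbiased sum of independent rank-one terms with operator norm at most $\varepsilon^2/(C\log d)$, and invoke the matrix Chernoff bound), and the paper itself offers no proof of this statement—it imports it verbatim as Lemma 4 of \cite{cohen2015uniform}, whose proof follows essentially the same route you describe. The only cosmetic caveat, inherited from the statement as quoted, is that the $\log d$ factor degenerates when $d$ is a small constant, which one fixes by reading $\log d$ as $\log(d)+O(1)$ or by folding the failure probability into the constant $C$.
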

\begin{theorem}[$\ell_1$ Matrix Concentration Bound, Theorem 7.1 in \cite{cohen2015p}]\label{thm:l1chernoff}
There exists an absolute constant $C$ such that for any matrix $A \in \mathbb{R}^{n \times d}$ and any set of sampling values $p_i$ satisfying
$$
p_i \ge C \overline{w}_i \log d \varepsilon^{-2},
$$
if we generate a matrix $S$ with $N = \sum_{i=1}^n p_i$ rows, each chosen independently as the $i$-th basis vector, times $p_i^{-1}$ with probability $p_i / N$, then with probability at least $0.99$, for all vectors $x \in \mathbb{R}^d$,
\[
(1 - \varepsilon)  \|Ax\|_1 \le \|SAx\|_1 \le (1 + \varepsilon)\|Ax\|_1.
\]
Here $\{\overline{w}_i\}_{i = 1}^n$ are the $\ell_1$ Lewis weights of the matrix $A$.
\end{theorem}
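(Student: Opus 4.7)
The plan is to reduce the $\ell_1$ subspace embedding problem to a collection of $\ell_2$ subspace embedding problems, leveraging the defining property of Lewis weights, and then to appeal to Theorem~\ref{thm:l2chernoff}. Set $\overline{W} = \mathrm{diag}(\overline{w}_1,\ldots,\overline{w}_n)$ and let $B = \overline{W}^{-1/2}A$. By Definition~\ref{def:lewis_weight}, the leverage scores of $B$ are exactly $\tau_i(B) = \overline{w}_i$, so $\sum_i \tau_i(B) = \rank(B) \le d$. This is the bridge between the $\ell_1$ geometry of $A$ and the $\ell_2$ geometry of $B$: for any $x$, we can write $\|Ax\|_1 = \sum_{i=1}^n \overline{w}_i^{1/2} \cdot |(Bx)_i|$, and the sampled quantity $\|SAx\|_1$ is a reweighted sum of $|(Bx)_{i_k}|$ with the $1/p_{i_k}$ rescaling.

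First I would perform a dyadic decomposition of the coordinates according to magnitude. For a fixed unit vector $x$, partition $[n]$ into level sets $T_j(x) = \{ i : 2^{-j-1} < \overline{w}_i^{-1/2} |(Bx)_i| \le 2^{-j}\}$ for $j \ge 0$ (plus a single "heavy" bucket capturing all entries above threshold $1$). On each level $T_j(x)$ all entries have comparable size, so their $\ell_1$ contribution is essentially counted by $|T_j(x)|$; the task of the sketch is then to preserve $|T_j(x)|$ up to a $(1\pm\varepsilon)$ factor via a Bernoulli sampling argument. The point of using Lewis weights (rather than leverage scores) is that summing $\overline{w}_i$ over $T_j(x)$ is automatically bounded by a quantity comparable to the $\ell_2$ mass of $Bx$ on $T_j(x)$, which is controlled by the leverage-score property.

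Next I would invoke Theorem~\ref{thm:l2chernoff} applied to the matrix $B$ restricted to each level, together with a matrix Bernstein-type tail bound on the indicator random variables $\mathbb{1}[i_k \in T_j(x)]$. Up to logarithmic factors, the number of samples $p_i \ge C \overline{w}_i \log d \, \varepsilon^{-2}$ is precisely what is needed for each level's count to concentrate within a $(1\pm\varepsilon)$ factor, and summing the contributions with weights $2^{-j}$ reconstructs $\|Ax\|_1$. Finally I would extend the pointwise concentration to a uniform statement over all $x$ by a standard $\varepsilon$-net argument on the unit $\ell_1$ ball of $\spa(A)$, with the net size being $2^{O(d\log(1/\varepsilon))}$ and absorbed into the $\log d$ factor in the sample complexity.

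The main obstacle is the tail of the decomposition: heavy coordinates (those where $\overline{w}_i^{-1/2}|(Bx)_i|$ is large) occur rarely but contribute significantly to $\|Ax\|_1$, and Bernoulli sampling can miss them entirely. Controlling this requires the key identity $\overline{w}_i = \tau_i(\overline{W}^{-1/2}A)$, which guarantees that no single entry $|(Bx)_i|$ can exceed $\overline{w}_i^{1/2}\|Bx\|_2$, thereby capping the contribution of any one coordinate to $\|Ax\|_1$ by $\overline{w}_i$ times a universal factor. Without this fixed-point property of Lewis weights, one would need to oversample by an $n$-dependent factor; with it, the heavy bucket is essentially empty after scaling, and the chaining closes with the stated $O(d \varepsilon^{-2}\log d)$ sample bound.
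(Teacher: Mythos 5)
First, a point of orientation: the paper does not prove this statement at all --- it is imported verbatim as Theorem 7.1 of \cite{cohen2015p} and used as a black box, so the relevant comparison is with Cohen and Peng's proof, which is a symmetrization-plus-chaining argument in the spirit of Talagrand's work on embedding subspaces of $L_1$ into $\ell_1^N$, not a net argument.

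The genuine gap in your sketch is the final step, where you claim pointwise concentration can be upgraded to all $x$ by a union bound over an $\varepsilon$-net of size $2^{O(d\log(1/\varepsilon))}$ that is \emph{absorbed into the $\log d$ factor}. Run the numbers: for a fixed $x$, the Lewis-weight inequality you correctly identify gives $|(Ax)_i| \le \overline{w}_i\,\|\overline{W}^{-1/2}Ax\|_2 \le \overline{w}_i\,\|Ax\|_1$, so with $p_i \ge C\overline{w}_i \log d\,\varepsilon^{-2}$ each sampled summand $|(Ax)_{i_k}|/p_{i_k}$ is at most $\varepsilon^2\|Ax\|_1/(C\log d)$, and Bernstein then yields a deviation probability of only $\exp(-\Theta(C\log d)) = d^{-\Theta(C)}$ for that single $x$. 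This is merely polynomially small in $d$, while your net has $2^{\Theta(d\log(1/\varepsilon))}$ points, so the union bound fails outright. To survive a single-scale net you would need per-point failure probability $2^{-\Omega(d)}$, i.e., oversampling $p_i \gtrsim \overline{w}_i\, d\,\varepsilon^{-2}$, which (since $\sum_i \overline{w}_i \le d$) only proves an $O(d^2\varepsilon^{-2})$ row bound --- strictly weaker than the stated $O(d\log d\,\varepsilon^{-2})$. The same objection hits the per-level count concentration in your dyadic decomposition: levels containing few coordinates do not concentrate at all, and the supremum over $x$ of the accumulated error across levels cannot be controlled by one union bound. Obtaining $\log d$ rather than $d$ in the oversampling is precisely what forces Cohen and Peng to use symmetrization and a multi-scale chaining bound (via a lemma of Talagrand, following Ledoux--Talagrand) for the process $x \mapsto \|SAx\|_1 - \|Ax\|_1$ over $\{x : \|Ax\|_1 \le 1\}$, with the Lewis-weight position supplying the Euclidean comparison $\|\overline{W}^{-1/2}Ax\|_2 \le \|Ax\|_1$ at every scale of the chain. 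Your identification of that comparison inequality is correct and is indeed the heart of the matter, but the concentration machinery built on top of it must be chaining, not an $\varepsilon$-net.
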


\section{Communication Complexity Lower Bound for Linear Systems}
\subsection{The Hard Instance}
In this section, we construct a family of matrices, which will be used to prove a communication complexity lower bound in the subsequent section. 

We first introduce {\em generalized binomial distributions}.
\begin{definition}
For any $0 \le \mu \le 1$, let $\sumsign^{(\mu)} \in \{-1, 0, 1\}$ be a random variable which takes $+1$ or $-1$ with probability $\mu / 2$, and $0$ with probability $1 - \mu$.
Let $\sumsign_t^{(\mu)}$ be a random variable with the same distribution as the sum of $t$ i.i.d. copies of $\sumsign^{(\mu)}$.
For simplicity we use $\sumsign$ and $\sumsign_t$ to denote $\sumsign^{(1)}$ and $\sumsign_t^{(1)}$, respectively.
\end{definition}

We need the following theorem on the singularity probability of discrete random matrices.
\begin{theorem}\label{thm:singularity_main}
Let $M_n \in \mathbb{R}^{n \times n}$ be a matrix whose entries are i.i.d. random variables with the same distribution as $\sumsign_t$, for sufficiently large $t$,
$$
\Pr\left[M_n \text{ is singular}\right] \le t^{-C n},
$$
where $C > 0$ is an absolute constant. 
\end{theorem}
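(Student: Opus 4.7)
The plan is to prove the singularity bound by combining (i) an Erdős--Littlewood--Offord style anti-concentration inequality tailored to $\sumsign_t$, which gains a factor of $\sqrt{t}$ per row relative to the $\pm 1$ case, and (ii) an extension of the Kahn--Komlós--Szemerédi / Bourgain--Vu--Wood framework to handle alphabets of size $\Theta(t)$, where $t$ may be as large as $2^{\Theta(nL)}$ in our application.

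\textbf{Step 1 (Anti-concentration per row).} First I would establish that for any nonzero $\mathbf{a} \in \mathbb{R}^n$ and any $v \in \mathbb{R}$,
\[
\Pr\bigl[\mathbf{a} \cdot X = v\bigr] \le \frac{C}{\sqrt{t \cdot \|\mathbf{a}\|_0}},
\]
where $X$ has i.i.d.\ $\sumsign_t$ entries. This follows from the decomposition $\sumsign_t = \xi_1 + \cdots + \xi_t$ into i.i.d.\ Rademachers: the inner product becomes $\sum_{i=1}^n \sum_{j=1}^t a_i\, \xi_{i,j}$, a Rademacher sum in $t\,\|\mathbf{a}\|_0$ nonzero coefficients, whose concentration function is controlled by the classical Erdős--Littlewood--Offord inequality. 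The extra $\sqrt{t}$ gain compared to a $\pm 1$ row reflects the replication of each $a_i$ with multiplicity $t$.

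\textbf{Step 2 (Row-by-row exposure).} Let $p_n = \Pr[M_n \text{ singular}]$. Reveal the rows $R_1,\ldots,R_n$ sequentially; $M_n$ is singular iff some $R_k$ lies in the span of the preceding rows. A standard conditioning argument gives, schematically,
\[
p_n \le p_{n-1} + \sup_{V} \Pr\bigl[R_n \in V\bigr],
\]
where the supremum is over $(n-1)$-dimensional subspaces $V \subseteq \mathbb{R}^n$ realizable as spans of i.i.d.\ copies of rows. Such a $V$ has a random normal direction $\mathbf{a}\in\mathbb{R}^n$, and the event $R_n \in V$ becomes $\mathbf{a}\cdot R_n = 0$. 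Since $M_n$ has integer entries bounded by $t$, Cramer's rule and Hadamard's inequality let me restrict to primitive integer normal vectors with $\ell_\infty$ norm at most $n^{n/2} t^{n-1}$, making the subsequent net arguments effective.

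\textbf{Step 3 (Compressible/incompressible dichotomy for the normal).} Applying Step 1 directly to $\mathbf{a}$ yields only a bound of $O(1/\sqrt{tn})$, which is very far from the target $t^{-Cn}$. The exponential gain comes from decomposing candidate normals by structure. For \emph{compressible} $\mathbf{a}$ (supported on $o(n)$ coordinates, or approximable by such), a sparse net of small cardinality covers all cases, and for each fixed $\mathbf{a}$ in the net the events $\{R_i\cdot \mathbf{a}=0\}$ across the $n$ rows are independent, so the product bound $(C/\sqrt{t})^n = t^{-\Omega(n)}$ survives the union bound. For \emph{incompressible} (well-spread) $\mathbf{a}$, a Halász-type inverse Littlewood--Offord estimate provides a per-row anti-concentration $t^{-\Omega(1)}$ that compounds across the $n$ independent rows to $t^{-\Omega(n)}$, with a fine net controlling the union bound.

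\textbf{Main obstacle.} The principal difficulty is Step 3. As the footnote to \cite{bourgain2010singularity} notes, existing singularity results restrict to entry alphabets of size $O(1)$ or at most $n^{o(n)}$, while $\sumsign_t$ has support size $t+1$, which is allowed to be exponential in $n$. The resolution is to re-derive the inverse Littlewood--Offord and net counting steps with explicit quantitative dependence on $t$, exploiting the fact that all randomness in $\sumsign_t$ reduces to $t$ Rademacher bits per entry, so the anti-concentration estimates can always be lifted from Rademacher sums of length $nt$ with a structured coefficient vector. Verifying that the $\sqrt{t}$ gain per row from Step 1 compounds correctly across all $n$ rows, and that the resulting factor $t^{\Theta(n)}$ dominates the net size in both the compressible and incompressible regimes, is where essentially all of the technical work lies.
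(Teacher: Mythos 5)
Your Step 1 is correct, and your plan (compressible/incompressible dichotomy plus an inverse Littlewood--Offord theorem with explicit $t$-dependence) is a genuinely different route from the paper, which instead follows the Kahn--Koml\'os--Szemer\'edi/Tao--Vu ``swapping'' method. But as written there is a genuine gap: the step that would produce the exponential bound is exactly the step you defer. Your Step 2 recursion $p_n \le p_{n-1} + \sup_V \Pr[R_n \in V]$ is already too lossy, since coordinate hyperplanes are realizable spans and give $\sup_V \Pr[R_n \in V] = \Theta(t^{-1/2})$, so everything hinges on Step 3. There, the union-bound accounting you assert does not close with the nets you describe: by Cramer's rule a candidate integer normal has entries as large as $n^{n/2}t^{n-1}$, so even the ``sparse'' net in the compressible case has cardinality $t^{\omega(n)}$ (support size $n/\log n$ already gives roughly $t^{n^2/\log n}$ candidates), which overwhelms the per-vector product bound, which is never better than $(C/\sqrt{tn})^n$. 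For incompressible normals you appeal to a ``Hal\'asz-type inverse Littlewood--Offord estimate'' with quantitative dependence on $t$; but, as the paper's own footnote about \cite{bourgain2010singularity} stresses, no existing statement of this kind covers an alphabet of size $t+1$ with $t$ allowed to be exponential in $n$, and you do not derive one. So the claim that ``the resulting factor $t^{\Theta(n)}$ dominates the net size in both regimes'' is asserted rather than proved, and for the naive nets you sketch it is false.

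For contrast, the paper's proof never enumerates or nets over normal vectors. It reduces to the event that the rows span a hyperplane $V$, classifies $V$ by a discrete codimension $d(V)$ and by whether the normal has at least $\lceil \log\log n/\log t\rceil$ nonzero coordinates, disposes of high-codimension and degenerate $V$ by Odlyzko-type bounds and crude counting, and for the remaining $V$ uses Hal\'asz's inequality (Theorem 2 of Kahn--Koml\'os--Szemer\'edi) to compare $\Pr[X\in V]$ for a $\sumsign_t$-row with the same probability for a lazy row distributed as $\sumsign_t^{(\lambda e^{-\lambda})}$, $\lambda = t^{-1/2}$, gaining a factor $t^{1/5}$ per swapped row; an entropy count over which original rows are retained then yields $\sum_{V} \Pr[\text{rows span } V] \le t^{-\Omega(n)}$. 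To make your route work you would need to actually prove a $t$-dependent inverse Littlewood--Offord/counting theorem for integer normals with entries up to $n^{n/2}t^{n-1}$ (or find another way to beat the net size), and that missing ingredient is the heart of the matter, not a routine verification.
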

The proof of Theorem \ref{thm:singularity_main} closely follows previous approaches for bounding the singularity probability of random $\pm 1$ matrices (see, e.g., \cite{kahn1995probability, tao2006random, tao2007singularity, bourgain2010singularity}.).
For completeness, we include a proof of Theorem \ref{thm:singularity_main} in Section~\ref{sec:singularity}.

\begin{lemma}\label{lem:prob_method}
For any $d > 0$ and sufficiently large $t$, there exists a set of matrices $\mathcal{T} \subseteq \mathbb{R}^{d \times (d - 1)}$ with integral entries in $[-t, t]$ for which $|\mathcal{T}| = t^{\Omega(d^2)}$ and
\begin{enumerate}
\item For any $T \in \mathcal{T}$, $\rank(T) = d - 1$;
\item For any $S, T \in \mathcal{T}$ such that $S \neq T$, $\spa([S~T]) = \mathbb{R}^d$.
\end{enumerate}
\end{lemma}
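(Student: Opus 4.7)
The plan is to build $\mathcal{T}$ greedily by the probabilistic method. At each step I draw a random matrix $M \in \mathbb{Z}^{d \times (d-1)}$ with i.i.d.\ entries from $\sumsign_t$, and accept $M$ into $\mathcal{T}$ iff three conditions hold: (i) $\rank(M) = d-1$; (ii) the primitive integer normal $v_M$ to $\spa(M)$ satisfies $\|v_M\|_2 \ge c\,t^{(d-1)/2}$ for a suitable absolute constant $c$; and (iii) $\spa(M) \ne \spa(T)$ for every $T$ already in $\mathcal{T}$. Condition (iii) is equivalent to $\spa([M~T]) = \mathbb{R}^d$ once both matrices have rank $d-1$, so Properties 1 and 2 of the lemma hold by construction.

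I will drive the analysis with three probabilistic ingredients. First, Theorem \ref{thm:singularity_main} applied to any $(d-1) \times (d-1)$ submatrix of $M$ gives $\Pr[\rank(M) < d-1] \le t^{-C(d-1)} = o(1)$. Second, a local CLT / direct Fourier estimate: for every fixed nonzero $v \in \mathbb{Z}^d$ whose entries have greatest common divisor one, $\alpha(v) := \Pr_x[v^T x = 0] \le O(1/(\sqrt{t}\,\|v\|_2))$ when $x$ has i.i.d.\ $\sumsign_t$ entries, since $v^T x$ is an integer-valued random variable with variance $t\|v\|_2^2$. Third, a Paley--Zygmund estimate for $\|v_M\|_2$: writing $v_M$ as the vector of signed $(d-1) \times (d-1)$ minors $c_1, \ldots, c_d$, the standard identity for random matrices with i.i.d.\ mean-zero variance-$t$ entries gives $\Ebb[c_k^2] = (d-1)!\,t^{d-1}$, hence $\Ebb[\|v_M\|_2^2] = d!\,t^{d-1}$, and a fourth-moment bound via Cauchy--Schwarz then yields $\Pr[\|v_M\|_2 \ge c\,t^{(d-1)/2}] \ge \rho$ for some absolute constants $c, \rho > 0$.

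Combining these, events (i) and (ii) hold jointly with probability at least $\rho - o(1)$. Conditional on (i), (ii), and the current $\mathcal{T}$, the probability of colliding with a specific $T \in \mathcal{T}$ equals $\Pr[v_T^{T} M = 0] = \alpha(v_T)^{d-1} \le (O(1)/(\sqrt{t}\cdot c\,t^{(d-1)/2}))^{d-1} = t^{-d(d-1)/2 + O(d)}$, using independence of the $d-1$ columns of $M$. A union bound over $T \in \mathcal{T}$ keeps the total collision probability below $\rho/2$ as long as $|\mathcal{T}| \le t^{d(d-1)/2 - A d}$ for a sufficiently large constant $A$, so every fresh sample is accepted with probability at least $\rho/2$, and iterating establishes the existence of a $\mathcal{T}$ of size $t^{\Omega(d^2)}$. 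The main obstacle is the Paley--Zygmund step: merely knowing $v_M \ne 0$ yields only $\|v_M\|_2 \ge 1$ and produces just $t^{\Omega(d)}$ matrices, because then the collision probability is dominated by coordinate-type hyperplanes whose normals have tiny $\ell_2$ norm; restricting to $M$ with $\|v_M\|_2$ of the typical order $t^{(d-1)/2}$ is precisely what drives the collision probability down to $t^{-\Omega(d^2)}$ and produces the claimed exponent.
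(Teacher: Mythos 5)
There is a genuine gap, and it sits exactly at the step you yourself identify as the crux. Your collision bound relies on the claimed ``local CLT / direct Fourier estimate'' that for every primitive nonzero $v \in \mathbb{Z}^d$, $\Pr[v^T x = 0] \le O(1/(\sqrt{t}\,\|v\|_2))$ when the entries of $x$ are i.i.d.\ $\sumsign_t$. This is false: anti-concentration of $v^Tx$ is governed by the arithmetic structure of $v$, not by its norm. Take $v = (N,1,0,\ldots,0)$ with $N > t$ (primitive, $\|v\|_2 \approx N$ arbitrarily large); then $v^Tx = Nx_1 + x_2 = 0$ forces $x_1 = x_2 = 0$, so $\Pr[v^Tx=0] = \Theta(1/t)$ no matter how large $N$ is, contradicting the claimed $O(1/(\sqrt{t}N))$ bound precisely in the regime $\|v\|_2 \gg \sqrt{t}$ that you need (your accepted normals have $\|v_M\|_2 \ge c\,t^{(d-1)/2}$). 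Knowing only that $\|v_T\|_2$ is large therefore does not make $\Pr[v_T^T M_j = 0]$ small; with the structure-free bounds that are actually available, the per-column hitting probability can be as large as $t^{-O(1)}$, the per-pair collision probability is only $t^{-\Theta(d)}$, and your greedy process again stalls at $|\mathcal{T}| = t^{\Omega(d)}$ --- the very shortfall your norm condition was introduced to overcome. Getting a $t^{-\Omega(d)}$ hitting probability for the hyperplane spanned by a typical random matrix is essentially the content of inverse Littlewood--Offord-type arguments, i.e.\ it is of the same depth as Theorem \ref{thm:singularity_main} itself, not a routine Fourier computation. (Minor additional issues: conditioning on your acceptance events (i)--(ii) perturbs the law of $M$, so the collision probability is not exactly $\alpha(v_T)^{d-1}$, though this only costs a constant factor; and the fourth-moment bound for the minors under $\sumsign_t$ entries is asserted without proof.)

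The fix is to select matrices not by the norm of their normal vector but directly by the quantity you actually need: call $B$ bad if $\rank(B) < d-1$ or $\Pr[X \in \spa(B)] \ge t^{-Cd/2}$ for a fresh $\sumsign_t$ vector $X$. This is what the paper does. A short contradiction argument shows bad matrices are rare: if a random $d \times (d-1)$ matrix were bad with probability exceeding $t^{-Cd/2}$, then appending one more random column would produce a singular (rank-deficient) $d \times d$ matrix with probability exceeding $t^{-Cd}$, contradicting Theorem \ref{thm:singularity_main}. For any fixed good $A$, independence of the $d-1$ columns of a fresh $B$ gives $\Pr[\spa([A~B]) \ne \mathbb{R}^d] \le t^{-Cd(d-1)/2}$, and then sampling $t^{Cd(d-1)/6}$ matrices, discarding the (few, by Markov) bad ones, and union bounding over pairs yields the desired $\mathcal{T}$ of size $t^{\Omega(d^2)}$. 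Your overall architecture (probabilistic construction plus a pairwise collision bound of order $t^{-\Omega(d^2)}$) matches this, but the norm-based acceptance rule and the universal local CLT estimate it leans on do not work; the hitting probability itself must be the selection criterion, with its rarity extracted from Theorem \ref{thm:singularity_main} rather than from $\|v_M\|_2$.
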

\begin{proof}
We use the probabilisitic method to prove existence. 
We use $\bado \subset \mathbb{R}^{d \times (d - 1)}$ to denote the set
$$
\bado = \{B \in \mathbb{R}^{d \times (d - 1)} \mid \Pr[X \in \mathrm{span}(B)] \ge t^{-Cd / 2}\text{ or } \rank(B) < d - 1\},
$$
where $X \in \mathbb{R}^d$ is a vector whose entries are i.i.d. random variables with the same distribution as $\sumsign_t$
and 
$C$ is the constant in Theorem \ref{thm:singularity_main}.

Consider a random matrix $A \in \mathbb{R}^{d \times (d - 1)}$ whose entries are i.i.d. random variables with the same distribution as $\sumsign_t$,
we have
\begin{equation}\label{pr:bad}
\Pr[A \in \bado] \le t^{-C d / 2},
\end{equation}
since otherwise, if we use $X \in \mathbb{R}^d$ to denote a vector whose entries are i.i.d. random variables with the same distribution as $\sumsign_t$, we have
\begin{align*}
&\Pr[\rank([A~X]) < d] \\
\ge& \Pr[\rank([A~X]) < d \mid  A \in \bado] \cdot \Pr[ A \in \bado]  \\
> &t^{-C d},
\end{align*}
which violates Theorem \ref{thm:singularity_main}.

For any fixed $A \in \mathbb{R}^{d \times (d - 1)} \setminus \bado$, 
consider a random matrix $B \in \mathbb{R}^{d \times (d - 1)}$ whose entries are i.i.d. random variables with the same distribution as $\sumsign_t$. We have, 
\begin{equation}\label{equ:span}
\Pr[\spa([A~B]) = \mathbb{R}^d] \ge 1 -  \Pr\left[\bigcap_{i = 1}^{d - 1} B_i \in \spa(A)\right] \ge 1 - t^{-C d(d -1 ) / 2},
\end{equation}
which follows from the definition of $\bado$ and the independence of columns of $B$.

Now we construct a multiset $\mathcal{S}$ of $|\mathcal{S}| = t^{Cd(d - 1) / 6}$ matrices chosen with replacement, each of dimension $d \times (d - 1)$ and with i.i.d. entries having the same distribution as $\sumsign_t$.
By (\ref{pr:bad}) and linearity of expectation, we have 
$$\E[|\mathcal{S} \cap \bado|] \le t^{Cd(d - 1) / 6} \cdot t^{-Cd / 2}.$$
We use $\mathcal{E}_1$ to denote the even that 
$$|\mathcal{S} \cap \bado| \le 4\E[|\mathcal{S} \cap \bado|] \le4 t^{Cd(d - 1) / 6} \cdot t^{-C d / 2},$$ 
which holds with probability at least $3 / 4$ by using Markov's inequality.

We use $\mathcal{E}_2$ to denote the event that 
$$
\forall S \in \mathcal{S} \setminus \bado, \forall T \in \mathcal{S} \setminus \{S\}, \spa([S~T]) = \mathbb{R}^d.
$$
Using a union bound and (\ref{equ:span}), the probability that $\mathcal{E}_2$ holds is at least 
$$
1 - |\mathcal{S}|^2 t^{-Cd(d - 1) / 2} = 1 - t^{-\Omega(d^2)}.
$$

Thus by a union bound, the probability that both $\mathcal{E}_1$ and $\mathcal{E}_2$ hold is strictly larger than zero, which implies there exists a set $\mathcal{S}$ such that $\mathcal{E}_1$ and $\mathcal{E}_2$ hold simultaneously. 
Now we consider $\mathcal{T} = \mathcal{S} \setminus \bado$. 
Since $\mathcal{E}_1$ holds, we have $|\mathcal{T}| \ge t^{\Omega(d^2)}$.
$\mathcal{E}_2$ implies that all elements in $\mathcal{T}$ are distinct, and furthermore for any $S, T \in \mathcal{T}$ such that $S \neq T$, we have
$$
\spa([S~T]) = \mathbb{R}^d.
$$
\end{proof}

\begin{lemma}\label{lem:hard_instance}
For any $d > 0$ and sufficiently large $t$, there exists a set of matrices $\mathcal{H} \subseteq \mathbb{R}^{d \times d}$ with integral entries in $[-t, t]$
for which $|\mathcal{H}| = t^{\Omega(d^2)}$ and
\begin{enumerate}
\item For any $T \in \mathcal{H}$, $T$ is non-singular;
\item For any $S, T \in \mathcal{H}$, $S^{-1} e_d \neq T^{-1} e_d$, where $e_d$ is the $d$-th standard basis vector. 
\end{enumerate}
\end{lemma}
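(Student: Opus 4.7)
The plan is to augment each matrix from Lemma~\ref{lem:prob_method} with a single extra row so that the last column of $H_T^{-1}$ encodes the normal direction of the hyperplane $\spa(T)$. Concretely, I would first strengthen the construction of $\mathcal{T}$ to guarantee the additional property $e_1 \notin \spa(T)$ for every $T \in \mathcal{T}$. This is enforced by enlarging the bad set $\bado$ in the proof of Lemma~\ref{lem:prob_method} to also contain matrices $B$ of rank $d-1$ with $e_1 \in \spa(B)$. For such $B$, $e_1 \in \spa(B)$ is equivalent to the $(d-1)\times(d-1)$ submatrix $B^{(1)}$ (obtained by deleting the first row) being singular, which by Theorem~\ref{thm:singularity_main} happens with probability at most $t^{-C(d-1)}$. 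This extra contribution is negligible compared with the $t^{Cd(d-1)/6}$ matrices drawn in the proof, so the bound $|\mathcal{T}| = t^{\Omega(d^2)}$ is preserved.

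Given this enhanced $\mathcal{T}$, for each $T \in \mathcal{T}$ I would define the $d \times d$ integer matrix
\[
H_T \;:=\; \begin{pmatrix} T^T \\ e_1^T \end{pmatrix},
\]
whose entries lie in $[-t,t]$. Property~1 is then immediate: $\rank(T^T)=d-1$ and the extra condition ensures $e_1$ is not in the row-span of $T^T$, so $H_T$ is invertible.

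For property~2, let $u := H_T^{-1} e_d$. The equation $H_T u = e_d$ decomposes as $T^T u = 0$ (from the top $d-1$ rows) and $u_1 = 1$ (from the last row). Letting $n_T$ span the one-dimensional kernel of $T^T$, i.e., a normal vector to the hyperplane $\spa(T)$, the first condition forces $u = \alpha n_T$ for some scalar $\alpha$, and the second gives $\alpha = 1/(n_T)_1$, which is well defined since $(n_T)_1 \neq 0$ by construction. Hence $H_T^{-1} e_d = n_T/(n_T)_1$. For distinct $S, T \in \mathcal{T}$, Lemma~\ref{lem:prob_method} gives $\spa([S~T]) = \mathbb{R}^d$, which together with $\rank(S) = \rank(T) = d-1$ forces $\spa(S) \neq \spa(T)$; hence $n_S$ and $n_T$ are linearly independent and $n_S/(n_S)_1 \neq n_T/(n_T)_1$. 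Setting $\mathcal{H} := \{H_T : T \in \mathcal{T}\}$ then satisfies all requirements.

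The main obstacle is ensuring the extra condition $(n_T)_1 \neq 0$ simultaneously for all $T \in \mathcal{T}$ without eroding the $t^{\Omega(d^2)}$ cardinality bound, since this condition is not guaranteed by the original statement of Lemma~\ref{lem:prob_method}. The fix is exactly the small enlargement of the bad set described above, whose validity rests on applying Theorem~\ref{thm:singularity_main} to a $(d-1)\times(d-1)$ random submatrix; everything else is a direct, essentially computational, consequence of the structure of $H_T$.
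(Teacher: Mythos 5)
Your proof is correct and follows essentially the same route as the paper: for each $T\in\mathcal{T}$ from Lemma~\ref{lem:prob_method}, append a standard basis vector as an extra row of $T^T$ to get an invertible integer matrix, and use property~2 of Lemma~\ref{lem:prob_method} (distinct column spans) to conclude the solutions $H^{-1}e_d$ are pairwise distinct. The only difference is that the paper sidesteps your strengthening of Lemma~\ref{lem:prob_method}: it simply chooses, for each $T$, some $e_i\notin\spa(T)$ (which always exists since $\rank(T)=d-1$) rather than fixing $e_1$ globally, so your enlargement of the bad set via Theorem~\ref{thm:singularity_main} applied to the $(d-1)\times(d-1)$ submatrix, while valid, is unnecessary.
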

\begin{proof}
Consider the matrix set $\mathcal{T}$ constructed in Lemma \ref{lem:prob_method}.
For each $T \in \mathcal{T}$, we add $[T~e_i]^T$ into $\mathcal{H}$, where $e_i$ is the $i$-th standard basis vector such that $e_i \notin \spa(T)$.
Clearly $[T~e_i]^T$ is non-singular since $\rank(T) = d - 1$ and $e_i \notin \spa(T)$.

Now suppose there exists $S, T \in \mathcal{H}$ such that $S^{-1} e_d = T^{-1} e_d$, which means there exists some $x \in \mathbb{R}^d$ such that $Sx = e_d$ and $Tx = e_d$.
This implies there exists some $S', T' \in \mathcal{T}$ such that $(S')^T x = 0$ and $(T')^T x= 0$.
However, $x$ must be $0^d$ since $\spa([S' ~ T']) = \mathbb{R}^d$, which implies $Sx = Tx = 0 \neq e_d$.
Thus for any $S, T \in \mathcal{H}$, $S^{-1} e_d \neq T^{-1} e_d$.
\end{proof}
\subsection{Deterministic Lower Bound for the \textsf{Equality} Problem} \label{sec:eq}
In this section, we prove our deterministic communication complexity lower bound for the \textsf{Equality} problem in the coordinator model, which will be used as an intermediate problem in Section \ref{sec:det_lb_ls}.
In the \textsf{Equality} problem, each server $P_i$ receives a binary string $t_i \in \{0, 1\}^n$. 
The goal is to test whether $t_1 = t_2 = \ldots = t_s$.
We will prove an $\Omega(sn)$ lower bound for deterministic communication protocols. 

The case $s = 2$ has a well-known $\Omega(n)$ lower bound.

\begin{lemma} [See, e.g., {\cite[p11]{CC}}] \label{lem:eq_2}
Any deterministic protocol for solving the \textsf{Equality} problem with $s = 2$ requires $\Omega(n)$ bits of communication. 
\end{lemma}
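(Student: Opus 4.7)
The plan is to use the standard fooling set argument for two-party deterministic communication complexity. Recall that any deterministic protocol $\Pi$ between two parties induces a partition of the input space $\{0,1\}^n \times \{0,1\}^n$ into monochromatic combinatorial rectangles, one per transcript. If $\Pi$ uses at most $c$ bits of communication, then it produces at most $2^c$ distinct transcripts and hence at most $2^c$ rectangles.

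First, I would exhibit a fooling set of size $2^n$ for $\textsf{Equality}$. Take $F = \{(x,x) : x \in \{0,1\}^n\}$. Every input in $F$ is a YES-instance. For any two distinct $x \neq y$ in $\{0,1\}^n$, at least one of the ``cross'' pairs $(x,y)$ or $(y,x)$ is a NO-instance (in fact both are). This means the pairs $(x,x)$ and $(y,y)$ cannot lie in the same monochromatic rectangle: if they did, by the rectangle property the pair $(x,y)$ would also lie in that rectangle and receive the same (YES) answer, contradicting $x \neq y$.

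Since the $2^n$ YES-instances in $F$ must each belong to a distinct rectangle, we get $2^c \geq 2^n$, hence $c \geq n$. This yields the claimed $\Omega(n)$ lower bound.

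No substantial obstacles arise; the only thing to be careful about is invoking the rectangle property correctly (that the image of the protocol on a product set is determined by the transcript, so any ``mixed'' input $(x,y)$ with $x$ from one rectangle's row-set and $y$ from its column-set receives the same output). This is a textbook fact and the rest is immediate.
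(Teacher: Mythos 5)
Your fooling-set argument is correct and complete: the diagonal $\{(x,x)\}$ is indeed a fooling set of size $2^n$, and the rectangle property forces $2^c \ge 2^n$, giving the $\Omega(n)$ (in fact $\ge n$) bound. The paper does not prove this lemma itself but cites it as a textbook fact, and your proof is exactly the standard argument behind that citation, so there is nothing to add.
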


Our plan is to reduce the case $s = 2$ to the case $s > 2$, using the symmetrization technique 
\cite{phillips2012lower, woodruff2014optimal}.
Suppose there exists a deterministic communication protocol $\mathcal{P}$ for the \textsf{Equality} problem with $s$ servers, and the communication complexity of $\mathcal{P}$ is $C_{\mathcal{P}}(n)$ where $n$ is the length of the binary strings received by the servers.
We show how to solve the case $s = 2$ using $\mathcal{P}$.

Suppose Alice receives a binary string $x \in \{0, 1\}^{n - 1}$ and Bob receives a binary string $y \in \{0, 1\}^{n - 1}$.
We show that by using the protocol $\mathcal{P}$, they can judge whether $x = y$ or not using $O(C_{\mathcal{P}}(n) / s)$ communication.
Thus by Lemma \ref{lem:eq_2}, we must have
$$
O(C_{\mathcal{P}}(n) / s) \ge \Omega(n),
$$
which implies $C_{\mathcal{P}}(n) = \Omega(sn)$. 

Since $\mathcal{P}$ is deterministic, by averaging, there exists a fixed server $P_i$ and a fixed set $S \subseteq \{0, 1\}^n$ with size $|S| = 2^{n- 1}$, such that for any $t \in S$, when all servers have the same input $t$, the total communication complexity beteen $P_i$ and the coordinator is upper bounded by $2 C_{\mathcal{P}}(n) / s$.
Now we fix a bijection $g : S \to \{0, 1\}^{n - 1}$.
Alice plays the role of server $P_i$ in $\mathcal{P}$, and sets the input of $P_i$ to be $g(x)$.
Bob plays the role of the coordinator and all servers $P_j$ for $i \neq j$, and sets the input of $P_j$ to be $g(y)$ for all $i \neq j$.
To simulate the protocol $\mathcal{P}$, Alice and Bob need to communicate if and only if server $P_i$ needs to communicate with the coordinator.
If the total amount of communication between Alice and Bob exceeds $2C_{\mathcal{P}}(n) / s$ then they terminate and return $x \neq y$. 
Alice and Bob return $x = y$ if and only if the protocol $\mathcal{P}$ returns $g(x)= g(y)$.

Now we analyze the correctness and the efficiency of the reduction. 
When $x = y$, we have $g(x) = g(y) \in S$, and by definition of $P_i$ and $S$, we must have the total communication complexity between $P_i$ and the coordinator, and thus that between Alice and Bob, is upper bounded by $2 C_{\mathcal{P}}(n) / s$.
Also the protocol must return $g(x) = g(y)$ due to the correctness of $\mathcal{P}$.
When $x \neq y$, either the total amount of communication between Alice and Bob exceeds $2C_{\mathcal{P}}(n) / s$, in which case they will return $x \neq y$.
Otherwise $\mathcal{P}$ returns $g(x) \neq g(y)$ due to its correctness. 

Formally, we have proved the following theorem.

\begin{theorem}\label{thm:eq}
Any deterministic protocol for solving the \textsf{Equality} problem with $s$ servers in the coordinator model requires $\Omega(sn)$ bits of communication. 
\end{theorem}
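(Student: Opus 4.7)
The plan is to prove this by symmetrization, reducing the two-server Equality problem (with input length $n-1$) to the $s$-server problem with input length $n$. The two-server case has the classical $\Omega(n)$ deterministic lower bound (Lemma \ref{lem:eq_2}), so if a protocol $\mathcal{P}$ for the $s$-server problem uses $C_\mathcal{P}(n)$ bits, I will engineer a two-server simulation that costs $O(C_\mathcal{P}(n)/s)$ bits, yielding $C_\mathcal{P}(n) = \Omega(sn)$.

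First I would apply an averaging argument to the accepting transcript of $\mathcal{P}$ on ``all servers have the same string $t$'' inputs. On such an input the total communication is at most $C_\mathcal{P}(n)$ bits, spread across the $s$ servers and the coordinator. Since $\mathcal{P}$ is deterministic, for each $t \in \{0,1\}^n$ the communication is a deterministic function of $t$. A double-counting/pigeonhole argument then produces a fixed server index $i$ and a set $S \subseteq \{0,1\}^n$ of size at least $2^{n-1}$ such that for every $t \in S$, the communication between $P_i$ and the coordinator on the all-$t$ input is at most $2C_\mathcal{P}(n)/s$ bits. Fix an arbitrary bijection $g : S \to \{0,1\}^{n-1}$.

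Next I describe the two-party reduction. Given inputs $x, y \in \{0,1\}^{n-1}$, Alice plays server $P_i$ with input $g^{-1}(x)$, and Bob plays the coordinator together with all other servers $P_j$ ($j \ne i$), giving each of them input $g^{-1}(y)$. Simulating $\mathcal{P}$ requires real communication only across the $P_i$--coordinator channel, which is simulated by sending bits between Alice and Bob. They maintain a counter and abort, outputting $x \ne y$, if this simulated communication ever exceeds $2C_\mathcal{P}(n)/s$ bits; otherwise they output whatever $\mathcal{P}$ outputs at termination. Correctness follows in two cases: if $x = y$ then $g^{-1}(x) = g^{-1}(y) \in S$, so by choice of $i$ and $S$ the simulation stays within budget and $\mathcal{P}$ correctly declares equality; if $x \ne y$ then either the budget is exceeded (and Alice/Bob correctly output inequality) or $\mathcal{P}$ terminates and, by its correctness on inputs that are not all equal, outputs inequality as well.

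The total two-party communication is at most $2C_\mathcal{P}(n)/s$. Combining with Lemma \ref{lem:eq_2} yields $2C_\mathcal{P}(n)/s = \Omega(n-1)$, hence $C_\mathcal{P}(n) = \Omega(sn)$, as claimed. The main subtlety I would double-check is the pigeonhole step that extracts $(i, S)$: one needs to ensure the bound on communication on the specific $P_i$--coordinator channel, not on an averaged channel, which is why we first average over $t$ to fix the server and then restrict to a constant fraction of inputs on which that server's load is within twice the average. Everything else is bookkeeping.
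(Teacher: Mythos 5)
Your proposal is correct and follows essentially the same symmetrization argument as the paper: average over servers and inputs to fix a server $P_i$ and a set $S$ of $2^{n-1}$ ``all-equal'' inputs whose $P_i$--coordinator load is at most $2C_{\mathcal{P}}(n)/s$, then have Alice simulate $P_i$ and Bob simulate everyone else, aborting with ``not equal'' if the budget is exceeded, and invoke the two-party $\Omega(n)$ bound. The only difference is cosmetic: you correctly write $g^{-1}(x)$ for the encoding into $S$, where the paper's text uses $g(x)$ with its bijection oriented the other way.
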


\subsection{Deterministic Lower Bound for Testing Feasibility of Linear Systems} \label{sec:det_lb_ls}
In this section, we prove our deterministic communication complexity lower bound for testing the feasibility of linear systems, in the coordinator model and the blackboard model. 

\begin{theorem}\label{thm:eq_lb}
For any deterministic protocol $\mathcal{P}$,
\begin{itemize}
\item If $\mathcal{P}$ can test whether $Ax = b$ is feasible or not in the coordinator model, then
the communication complexity of $\mathcal{P}$ is $\Omega(sd^2L)$;
\item If $\mathcal{P}$ can test whether $Ax = b$ is feasible or not in the blackboard model, then
the communication complexity of $\mathcal{P}$ is $\Omega(s + d^2L)$;
\end{itemize}
\end{theorem}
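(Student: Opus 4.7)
The plan is to reduce from the \textsf{Equality} problem to testing feasibility of a linear system, using the hard family from Lemma \ref{lem:hard_instance}. Instantiating that lemma with $t = 2^L$ produces a set $\mathcal{H} \subseteq \mathbb{Z}^{d \times d}$ of size $|\mathcal{H}| = 2^{\Omega(d^2 L)}$, with integer entries in $[-2^L, 2^L]$ (meeting the $L$-bit requirement), each $T \in \mathcal{H}$ invertible, and such that the map $T \mapsto T^{-1} e_d$ is injective on $\mathcal{H}$.

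First I would fix an injection $\phi : \{0,1\}^N \to \mathcal{H}$ where $N = \log|\mathcal{H}| = \Omega(d^2 L)$, and turn any \textsf{Equality} instance $(t_1, \ldots, t_s) \in (\{0,1\}^N)^s$ into a distributed linear system by giving server $P_i$ the block $\phi(t_i)\, x = e_d$ of $d$ equations. Because every $\phi(t_i)$ is invertible, the $i$-th block forces $x = \phi(t_i)^{-1} e_d$; the stacked system is therefore feasible iff all the vectors $\phi(t_i)^{-1} e_d$ coincide, which by the second property of $\mathcal{H}$ and injectivity of $\phi$ is equivalent to $t_1 = \cdots = t_s$. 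Any feasibility-testing protocol thus solves \textsf{Equality} on $N$-bit strings with no communication overhead.

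For the coordinator model, Theorem \ref{thm:eq} delivers an $\Omega(sN) = \Omega(sd^2L)$ lower bound directly. For the blackboard model I would prove the two summands $\Omega(d^2 L)$ and $\Omega(s)$ separately. The $\Omega(d^2L)$ bound reduces to the two-player case of \textsf{Equality}: any $s$-server blackboard protocol can be simulated by two players, one playing server $P_1$ and the other simultaneously playing $P_2, \ldots, P_s$ all holding a common second string, so Lemma \ref{lem:eq_2} applies to the $N$-bit inputs. The $\Omega(s)$ bound is the standard silent-server adversary argument: if some server $P_j$ never writes to the blackboard then the protocol's output is independent of $t_j$, but changing $t_j$ from $t_1$ to something different flips the correct answer, a contradiction; hence each of the $s$ servers must post at least one bit.

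The main obstacle is concentrated in Lemma \ref{lem:hard_instance} and the underlying singularity bound of Theorem \ref{thm:singularity_main}: without the sharpened $t^{-\Theta(d)}$-type bound after raising to the $L$-power, so without the $L$-factor in the exponent of $|\mathcal{H}|$, the reductions above would only yield $\Omega(sd^2)$ and $\Omega(s + d^2)$. Once that hard family is in hand, the reduction and the two blackboard arguments are routine, and I do not foresee further technical difficulty.
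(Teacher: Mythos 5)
Your proposal is correct and follows essentially the same route as the paper: instantiate the hard family $\mathcal{H}$ of Lemma \ref{lem:hard_instance} with $t=2^L$, give server $P_i$ the block $H_i x = e_d$ so that feasibility is equivalent to \textsf{Equality} on $\Omega(d^2L)$-bit strings, and invoke Theorem \ref{thm:eq} for the coordinator bound, with the blackboard bound split into the two-player ($s=2$) reduction for $\Omega(d^2L)$ plus the trivial $\Omega(s)$ term. Your explicit two-player simulation and silent-server argument are just slightly more spelled-out versions of the paper's remarks, not a different method.
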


\begin{proof}
Consider the set $\mathcal{H}$ constructed in Lemma \ref{lem:hard_instance} with $t = 2^L$.
In the hard instance, each server $P_i$ receives a matrix $H_i \in \mathcal{H}$.
The linear system stored on each server is just $H_i x = e_d$. 
Due to Lemma \ref{lem:hard_instance}, the entire linear system is feasible if and only if $H_1 = H_2 = \ldots = H_s$.
Since $|\mathcal{H}| = 2^{\Omega(d^2L)}$, we can reduce the \textsf{Equality} problem in Section \ref{sec:eq} to solving a linear system, with $n = \Theta(d^2L)$.
By Theorem \ref{thm:eq_lb}, this implies an $\Omega(sd^2L)$ lower bound in the coordinator model. 

In the blackboard model, the $\Omega(d^2L)$ bound follows from the case when $s = 2$. When $s = 2$, the blackboard model is essentially the same as the coordinator model, up to constants in the communication complexity. The $\Omega(s)$ lower bound follows from the fact that each server needs to communicate at least $1$ bit. 
\end{proof}

\subsection{Randomized Lower Bound for Solving Linear Systems}\label{sec:ls_rand_lb}
In this section, we prove randomized communication complexity lower bounds for solving linear systems. 
We first prove an $\Omega(d^2L)$ lower bound, which already holds for the case $s = 2$. 
When $s = 2$ the coordinator model and the blackboard model are equivalent in terms of communication complexity, and thus we shall not distinguish these two models in the remaining part of this proof. 

Consider the set $\mathcal{H}$ constructed in Lemma \ref{lem:hard_instance} with $t = 2^L$.
In the hard instance, only server $P_1$ receives a matrix $H \in \mathcal{H}$, and the goal is to let the coordinator output the solution to the linear system $Hx = e_d$.
For any two $H_1, H_2 \in \mathcal{H}$ and $H_1 \neq H_2$, we must have $H_1^{-1}e_d \neq H_2^{-1} e_d$.
Thus, by standard information-theoretic arguments, in order for the coordinator to output the solution to $Hx = e_d$, the communication complexity is at least $\Omega(\log(|\mathcal{H}|)) = \Omega(d^2L)$.

Formally, we have proved the following theorem. 
\begin{theorem}\label{thm:rand_eq_lb_s1}
Any randomized protocol that succeeds with probability at least $0.99$ for solving linear systems requires $\Omega(d^2L)$ bits of communication in the coordinator model and the blackboard model. The lower bound holds even when $s = 2$.
\end{theorem}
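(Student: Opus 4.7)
The plan is to follow exactly the sketch given in the paragraph preceding the theorem statement, making precise the information-theoretic argument for randomized protocols. I take the hard family $\mathcal{H}$ from Lemma~\ref{lem:hard_instance} with $t=2^L$, which has size $|\mathcal{H}|=2^{\Omega(d^2L)}$ and consists of integer matrices with entries in $[-2^L,2^L]$ such that the vectors $\{H^{-1}e_d : H\in\mathcal{H}\}$ are pairwise distinct. The hard distribution sets $s=2$, gives server $P_1$ a uniformly random $H\in\mathcal{H}$ (encoding the linear system $Hx=e_d$), and gives $P_2$ an empty list of constraints, so that the unique feasible solution is $x=H^{-1}e_d$.

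Next I apply Yao's minimax principle to pass from the randomized setting to a deterministic protocol $\mathcal{P}$ that succeeds on this distribution with probability at least $0.99$. Let $\Pi$ be the random variable denoting the transcript of $\mathcal{P}$ on input $H\sim\mathrm{Unif}(\mathcal{H})$, and let $\hat H$ denote the matrix whose inverse times $e_d$ equals the coordinator's output; because the map $H\mapsto H^{-1}e_d$ is injective on $\mathcal{H}$, the coordinator's output determines $\hat H$, and correctness says $\Pr[\hat H=H]\ge 0.99$. The output is a deterministic function of $\Pi$, hence by the data-processing inequality $I(H;\hat H)\le I(H;\Pi)\le H(\Pi)\le \mathrm{CC}(\mathcal{P})$, where $\mathrm{CC}(\mathcal{P})$ is the worst-case communication cost.

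By Fano's inequality applied to the uniform prior on $\mathcal{H}$,
\begin{equation*}
H(H\mid\hat H)\le H_2(0.01)+0.01\cdot\log(|\mathcal{H}|-1),
\end{equation*}
so $I(H;\hat H)=\log|\mathcal{H}|-H(H\mid\hat H)\ge 0.99\log|\mathcal{H}|-1$. Combining with the previous display yields $\mathrm{CC}(\mathcal{P})\ge 0.99\log|\mathcal{H}|-1=\Omega(d^2L)$, which is exactly the claimed bound. Finally, for $s=2$ the blackboard model and the coordinator model differ by at most a constant factor (every message written on the blackboard is equivalent to a message through the coordinator, which just forwards it), so the same bound holds in both models.

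I do not anticipate any serious obstacle: the hard family is handed to us by Lemma~\ref{lem:hard_instance}, and the only nontrivial ingredient is the standard Yao/Fano reduction from randomized communication to a distributional lower bound on mutual information. The one place to be careful is making sure the coordinator's output truly determines a unique $\hat H\in\mathcal{H}$ --- this is immediate from property~(2) of Lemma~\ref{lem:hard_instance}, which guarantees $S^{-1}e_d\ne T^{-1}e_d$ for distinct $S,T\in\mathcal{H}$.
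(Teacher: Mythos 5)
Your proposal is correct and follows essentially the same route as the paper: the paper uses the same hard family $\mathcal{H}$ from Lemma~\ref{lem:hard_instance} with a single server holding $Hx=e_d$ and invokes ``standard information-theoretic arguments,'' which your Yao--Fano--data-processing chain simply makes explicit. The only ingredient you spell out beyond the paper is the injectivity of $H\mapsto H^{-1}e_d$ guaranteeing that the output determines $\hat H$, which is exactly the property the paper relies on.
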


Now we prove another lower bound of $\widetilde{\Omega}(sd)$ for solving linear systems in the coordinator model. 
In the hard instance, the last server $P_s$ receives a vector $\hat{x} \in \{0, 1\}^d$, and the linear equations stored on server $P_s$ are simply $x = \hat{x}$, i.e., the solution vector $x$ should be exactly $\hat{x}$. This forces the solution vector to be some predefined binary vector $\hat{x}$.
The remaining $s - 1$ servers each receive a vector $a_i \in \{0, 1\}^d$, and the linear equation stored on $P_i$ is 
$$
\sum_{j = 1}^d a_j x_j = 1.
$$
Also, it is guaranteed that for each $i \in [s]$, $\langle a_i, \hat{x}\rangle = 0$ or $1$.

Here we interpret the vector $\hat{x}$ as the characteristic vector of a set $S_{\hat{x}} \subseteq [d]$, and interpret each vector $a_i$ also as the characteristic vector a set $S_{a_i}$.
Thus, testing the feasibility of the linear system is equivalent to testing whether the set $S_{\hat{x}}$ owned by the server $P_s$ is disjoint with the set owned by any other player,
which is the OR of $s - 1$ copies of the two-player set-disjointness problem. 
The communication complexity for the latter problem has been studied in \cite{phillips2012lower, WZ13}.
Combining Lemma 2.2 in \cite{phillips2012lower} with Theorem 1 in \cite{WZ13}, for any communication protocol that succeeds with probability $1 - 1 / s^3$, the communication complexity is lower bounded by $\Omega(sd)$.
By standard repetition arguments, this implies for any randomized communication protocol that succeeds with probability at least $0.99$, the communication complexity is lower bounded by $\Omega(sd / \log s)$.

Combining this lower bound and the trivial $\Omega(s)$ lower bound in the blackboard model with Theorem \ref{thm:rand_eq_lb_s1}, we have the following theorem.
\begin{theorem}\label{thm:rand_eq_lb}
Any randomized protocol that succeeds with probability at least $0.99$ for solving linear systems requires $\wt{\Omega}(sd + d^2L)$ bits of communication in the coordinator model and $\Omega(s + d^2L)$ bits of communication in the blackboard model. 
\end{theorem}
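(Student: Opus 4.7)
The plan is to establish the two summands of each bound separately and combine them additively, since the maximum of two quantities is within a factor of two of their sum.

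First, for the $\Omega(d^2L)$ term, I would invoke Theorem~\ref{thm:rand_eq_lb_s1}, which already gives this bound even in the special case $s=2$. Because any protocol for general $s$ restricts to a protocol for $s=2$ (by setting the remaining $s-2$ servers to hold trivially satisfiable equations), the $\Omega(d^2L)$ lower bound transfers to arbitrary $s$ in both the coordinator and blackboard models. This handles the $d^2L$ contribution in both bounds simultaneously, so no separate work is needed per model.

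Next, for the $\widetilde{\Omega}(sd)$ term in the coordinator model, I would formalize the reduction already sketched in the surrounding text: designate server $P_s$ to hold the $d$ constraint equations $x = \hat x$ for a binary vector $\hat x \in \{0,1\}^d$, and give each remaining server $P_i$ the single constraint $\langle a_i, x\rangle = 1$ where $a_i \in \{0,1\}^d$ and $\langle a_i, \hat x\rangle \in \{0,1\}$. Interpreting $\hat x$ and each $a_i$ as characteristic vectors of subsets of $[d]$, testing feasibility of the linear system is exactly the OR of $s-1$ instances of two-party set-disjointness between $P_s$ and each $P_i$. I would then cite the multiparty lower bound of \cite{phillips2012lower,WZ13}: any protocol solving this OR with error $\le 1/s^3$ requires $\Omega(sd)$ bits. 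A standard success-amplification argument (run $O(\log s)$ independent copies and take a majority vote) converts a protocol with constant error $0.01$ into one with error $1/s^3$ at the cost of a $\log s$ factor in communication, yielding the $\widetilde{\Omega}(sd)$ bound. Since solving the linear system is at least as hard as testing feasibility, the bound carries over.

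Finally, for the blackboard model, the $\Omega(s)$ term is immediate: in any nontrivial protocol every server must contribute at least one bit to the blackboard (otherwise its input can be changed without the coordinator noticing, contradicting correctness when the inputs arranged adversarially). Combining this with the $\Omega(d^2L)$ bound from the first step completes the blackboard bound. The main subtlety I anticipate is checking that the set-disjointness reduction is genuinely lossless, in particular that a protocol outputting a candidate solution vector $x^*$ can be used to decide feasibility with the same success probability by simply checking $Ax^* = b$ via one final round of verification (which does not affect the asymptotic lower bound); the amplification step from error $0.01$ to error $1/s^3$ is routine but must be invoked explicitly to match the form of the cited multiparty disjointness bound.
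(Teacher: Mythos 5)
Your proposal is correct and matches the paper's own proof: the $\Omega(d^2L)$ term comes from Theorem~\ref{thm:rand_eq_lb_s1} (the hard family $\mathcal{H}$), the $\widetilde{\Omega}(sd)$ term from the same reduction to the OR of $s-1$ two-party set-disjointness instances with the lower bound of \cite{phillips2012lower,WZ13} and a standard amplification losing a $\log s$ factor, and the blackboard bound combines the trivial $\Omega(s)$ term with $\Omega(d^2L)$. The only cosmetic difference is your extra remark about verifying $Ax^*=b$, which is unnecessary since the solving problem as defined already subsumes feasibility testing.
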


\section{Communication Protocols for Linear Systems}
\subsection{Testing Feasibility of Linear Systems}\label{sec:rand_feasible}
In this section we present a randomized communication protocol for testing feasibility of linear systems, which has communication complexity $O(sd^2\log(dL))$ in the coordinator model and $O(s + d^2\log(dL))$ in the blackboard model.
The protocol is described in Figure \ref{alg:rand_feasibility}.

\begin{figure}[!htb]
\begin{framed}
\begin{enumerate}
\item\label{step:gen_p} The coordinator generates a random prime number $p$ in $[2, \poly(dL)]$ and sends $p$ to all servers. 
\item Each server $P_i$ tests the feasibility of its {\em own} linear system. If the linear system is infeasible then $P_i$ terminates the protocol. 
\item \label{step:rand_ls_init} Each server maintains the same set of linear equations $C$. At the beginning $C$ is the empty set.
\item \label{step:rand_feas_for} For $i = 1, 2, \ldots, s$
\begin{enumerate}[(a)]
\item \label{step:check_fea}$P_i$ checks whether the linear system formed by all linear equations in $C$ and all linear equations stored on $P_i$ is feasible over the finite field $\mathbb{F}_p$. $P_i$ terminates the protocol if it is infeasible. 
\item \label{step:send_mod_p}For each linear equation stored on $P_i$ that is linearly independent with linear equations in $C$ over the finite field $\mathbb{F}_p$, $P_i$ sends that linear equation to all servers, after taking the residual of each entry modulo $p$. All servers add that linear equation into $C$.
\end{enumerate}
\end{enumerate}
\end{framed}
\caption{Randomized protocol for testing feasibility}
\label{alg:rand_feasibility}
\end{figure}

We first bound the communication complexity of the protocol in Figure \ref{alg:rand_feasibility}.
Clearly, Step \ref{step:gen_p} has communication complexity at most $O(s \log(dL))$.
During the execution of the whole protocol, at most $d$ linear equations will be added into $C$.
The communication complexity for sending each linear equation is $O(sdL \log p)$ in the coordinator model, and $O(dL \log p)$ in the blackboard model.
Thus, the total communication complexity is $O(sd^2 \log (dL))$ in the coordinator model, and $O(s + d^2\log(dL))$ in the blackboard model. 

To prove the correctness of this protocol, we need the following lemma. 
\begin{lemma}\label{lem:mod_rank}
Given a matrix $A \in \mathbb{R}^{m \times n}$ where each entry is an integer in $[-2^L, 2^L]$ and $\rank(A) = r$. 
Suppose $p$ is chose uniformly at random from all primes numbers in $[2, \poly(rL)]$.
\begin{enumerate}[(i)]
\item $\rank_p(A) \le \rank(A)$;
\item With probability at least $0.99$, $\rank_p(A) = \rank(A)$.
\end{enumerate}
\end{lemma}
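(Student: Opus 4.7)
}
The plan is straightforward: part (i) follows because vanishing integer minors vanish modulo $p$, and part (ii) reduces to showing that a random small prime is unlikely to divide a bounded nonzero integer.

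For (i), I would argue via minors. Since $\rank(A)=r$, every $(r+1)\times(r+1)$ submatrix of $A$ has determinant $0$ as an integer, hence also $0$ modulo $p$. Therefore every $(r+1)\times(r+1)$ minor of $A$ viewed over $\mathbb{F}_p$ vanishes, which gives $\rank_p(A)\le r = \rank(A)$.

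For (ii), I would fix a concrete certificate of rank $r$. Since $\rank(A)=r$, there exists an $r\times r$ submatrix $M$ of $A$ with $\det(M)\neq 0$ as an integer. By Hadamard's inequality applied to an integer matrix with entries in $[-2^L,2^L]$,
\[
|\det(M)| \;\le\; r^{r/2}\,(2^L)^r \;=\; 2^{rL+(r/2)\log r} \;\le\; 2^{O(rL\log r)}.
\]
Since every prime factor is at least $2$, the number of distinct primes dividing $\det(M)$ is at most $O(rL\log r)$. By the prime number theorem, the number of primes in $[2,N]$ is $\Theta(N/\log N)$, so choosing the polynomial $\poly(rL)$ in the statement of the lemma to be $N=(rL)^C$ for a sufficiently large absolute constant $C$ makes the number of available primes at least $100\cdot O(rL\log r)$. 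Hence, for $p$ drawn uniformly from the primes in $[2,\poly(rL)]$,
\[
\Pr\bigl[p \mid \det(M)\bigr] \;\le\; \frac{1}{100}.
\]
When $p \nmid \det(M)$, the image of $M$ in $\mathbb{F}_p^{r\times r}$ is invertible, so $\rank_p(A)\ge r$; combined with (i), we obtain $\rank_p(A)=r$ with probability at least $0.99$.

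There is no real obstacle here beyond choosing the $\poly(rL)$ bound on the prime range large enough to beat the Hadamard bound $O(rL\log r)$ on the number of bad primes; I would simply make this choice of constant $C$ explicit and fold the resulting $\log(rL)$ overhead into the $\polylog$ factor that is already hidden by the $\widetilde{O}(\cdot)$ notation used by the surrounding communication protocol in Figure~\ref{alg:rand_feasibility}. Note that the success probability $0.99$ can be amplified to any constant by increasing $C$.
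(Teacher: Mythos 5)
Your proposal is correct and matches the paper's own argument: the paper likewise fixes a nonsingular $r\times r$ submatrix, bounds its determinant (by $r!\,2^{rL}$ rather than Hadamard, an immaterial difference), notes it has only $\poly(rL)$ prime factors, and invokes the prime number theorem to conclude a random prime in $[2,\poly(rL)]$ avoids them with probability $0.99$. The only cosmetic difference is that the paper declares part (i) immediate, whereas you spell out the vanishing-minors justification.
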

\begin{proof}
The point (i) is immediate.
For point (ii), there exists a square submatrix $A'$ of $A$ with size $r \times r$ which is non-singular over real numbers, which implies the determinant of $A'$ is non-zero as a real number.
Since all entries of $A'$ are integers in $[-2^L, 2^L]$, the determinant of $A'$ as a real number is an integer in $[-r! 2^{rL}, r! 2^{rL}]$.
Thus, the determinant of $A'$ has at most $\poly(rL)$ prime factors. 
According to the Prime Number Theorem, there are at least $n$ distinct prime numbers in the range $[2, n^2]$, for sufficiently large $n$.
Thus, by adjusting constants, $p$ is not a prime factor of the determinant of $A'$ with probability at least $0.99$, in which case $\rank(A) = \rank_p(A)$.
\end{proof}

Notice that the protocol in Figure \ref{alg:rand_feasibility} is basically testing the feasibility of the linear system over the finite field $\mathbb{F}_p$, for a randomly chosen prime number $p$.
Before the execution of the $i$-th loop of Step \ref{step:rand_feas_for}, the set $C$ is a maximal set of linearly independent equations for all linear equations stored on the first $i - 1$ servers $P_1, P_2, \ldots, P_{i - 1}$.
Here the linear independence is defined over the finite field $\mathbb{F}_p$. 
During the execution of the $i$-th loop of Step \ref{step:rand_feas_for}, server $P_i$ considers each linear equation stored on itself one by one, sends the linear equation to all other servers and adds the linear equation to set $C$ if that linear equation is linearly independent with all existing linear equations in $C$.
If server $P_i$ finds that the set $C$ becomes infeasible after adding linear equations stored on $P_i$, then $P_i$ terminates the protocol. 

Consider a linear system $Ax = b$ where $A \in \mathbb{R}^{n \times d}$ and $b \in \mathbb{R}^n$ and all entries of $A$ and $b$ are integers in the range $[-2^L, 2^L]$.
If $Ax = b$ is feasible over the real numbers, then it will also be feasible over the finite field $\mathbb{F}_p$.
If $Ax = b$ is infeasible, then we have $\rank([A~b]) > \rank(A)$.
By Lemma \ref{lem:mod_rank}, $\rank_p(A) \le \rank(A)$, and since $\rank([A~b]) \le d + 1$, with probability at least $0.99$, $\rank_p([A~b]) = \rank([A~b])$, which implies with probability $0.99$, $Ax = b$ is still infeasible over the finite field $\mathbb{F}_p$.
Since the protocol in Figure \ref{alg:rand_feasibility} tests the feasibility of the linear system over the finite field $\mathbb{F}_p$, the correctness follows. 

Formally, we have proved the following theorem.

\begin{theorem}\label{thm:rand_feasibility}
The protocol in Figure \ref{alg:rand_feasibility} is a randomized protocol for testing feasibility of linear systems and has communication complexity $O(sd^2\log(dL))$ in the coordinator model and $O(s + d^2\log(dL))$ in the blackboard model. The protocol succeeds with probability at least $0.99$.
\end{theorem}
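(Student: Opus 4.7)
The plan is to prove the theorem in two halves: first a communication-cost accounting, and then a correctness argument that reduces feasibility over $\mathbb{R}$ to feasibility over $\mathbb{F}_p$ for a random prime $p$, invoking Lemma \ref{lem:mod_rank}.

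For the communication bound, I would first note that Step~\ref{step:gen_p} costs $O(s\log(dL))$ bits in the coordinator model and only $O(\log(dL))$ broadcast bits in the blackboard model (plus an additional $O(s)$ for the implicit one-bit ``done'' signals each server must contribute). The main contribution comes from Step~\ref{step:send_mod_p}. Since the set $C$ is maintained to be a linearly independent set of equations over $\mathbb{F}_p$, and all equations have $d$ variables, at most $d$ equations are ever added throughout the whole execution across all $s$ iterations of the outer loop. Each such equation, after reduction modulo $p$, has $d+1$ coefficients in $[0,p-1]$, so transmitting it costs $O(d\log p) = O(d\log(dL))$ bits. Broadcasting it in the coordinator model costs a further factor of $s$, while in the blackboard model the broadcast is free. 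Summing gives $O(sd^2\log(dL))$ and $O(s+d^2\log(dL))$ respectively, matching the claim.

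For correctness, the key observation is that the protocol is exactly the natural deterministic greedy rank-building procedure, but executed over $\mathbb{F}_p$ instead of over $\mathbb{R}$. I would argue feasibility preservation in both directions. If $Ax=b$ is feasible over the reals, then the same witness (reduced mod $p$) certifies feasibility over $\mathbb{F}_p$, and no server will ever incorrectly terminate. If $Ax=b$ is infeasible over the reals, then $\rank([A\ b]) > \rank(A)$. Applying Lemma \ref{lem:mod_rank} to both $A$ and $[A\ b]$ (each of which has $r \le d+1$ and entries in $[-2^L,2^L]$), we conclude that with probability at least $0.99$ both ranks are preserved modulo $p$, hence $\rank_p([A\ b]) > \rank_p(A)$ and the system remains infeasible over $\mathbb{F}_p$. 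In that case, the incremental rank-building over $\mathbb{F}_p$ is guaranteed to detect the inconsistency at the Step~\ref{step:check_fea} check on some server $P_i$.

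The only mildly subtle point, and the step I would be most careful about, is verifying that the random prime $p$ can be chosen once at the start and still handle both the ``rank of $A$'' and ``rank of $[A\ b]$'' events with the claimed success probability: this follows by adjusting the polynomial in $\poly(dL)$ inside Lemma \ref{lem:mod_rank} and taking a union bound over the two $r \le d+1$ submatrices whose determinants must remain nonzero modulo $p$. Once that is in place, combining the communication accounting with the correctness conclusion yields both bounds stated in Theorem \ref{thm:rand_feasibility}, with overall success probability $\ge 0.99$.
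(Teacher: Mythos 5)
Your proposal is correct and follows essentially the same route as the paper: the identical communication accounting (at most $d$ equations ever enter $C$, each costing $O(d\log p)$ bits, times $s$ only in the coordinator model) and the same reduction of real feasibility to feasibility over $\mathbb{F}_p$ via Lemma \ref{lem:mod_rank}. The only cosmetic difference is that the paper applies the probabilistic rank-preservation only to $[A~b]$ and uses the deterministic inequality $\rank_p(A) \le \rank(A)$ for $A$, avoiding your union bound over two events, but your constant-adjustment handles this just as well.
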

\subsection{Solving Linear Systems}\label{sec:rand_ls_solve}
In this section we present communication protocols for solving linear systems.
We start with deterministic protocols, in which case we can get a protocol with communication complexity $O(sd^2L)$ in the coordinator model and $O(s + d^2L)$ in the blackboard model. 

In order to solve linear systems, we can still use the protocol in Figure \ref{alg:rand_feasibility}, but we don't use the prime number $p$ any more.
In Step \ref{step:check_fea} of the protocol, we no longer check the feasibility over the finite field. 
In Step \ref{step:send_mod_p} of the protocol, we no longer takes the residual modulo $p$ before sending the linear equations. 
At the end of the protocol, each server can use the set of linear equations $C$, which is a maximal set of linear equations of the original linear system, to solve the linear system.
The communication complexity is $O(sd^2L)$ in the coordinator model and $O(s + d^2L)$ in the blackboard model since at most $d$ linear equations will be added into the set $C$, and each linear equation requires $O(dL)$ bits to describe. 

Formally, we have proved the following theorem.
\begin{theorem}
There exists a deterministic protocol for solving linear systems which has communication complexity $O(sd^2L)$ in the coordinator model and $O(s + d^2L)$ in the blackboard model.
\end{theorem}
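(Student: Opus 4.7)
The plan is to derandomize the feasibility protocol of Figure \ref{alg:rand_feasibility} by replacing all computations modulo the random prime $p$ with exact computations over $\mathbb{Q}$, and then augment the protocol so that at termination every server holds a maximal linearly independent subset of the equations, from which the solution (if one exists) can be recovered locally.

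Concretely, I would have the servers and coordinator collectively maintain a set $C$ of linear equations, processed one server at a time. When it is server $P_i$'s turn, it considers each of its equations $(a_j^T x = b_j)$ in sequence. Let $A_C$ denote the matrix of left-hand sides of equations already in $C$ and $b_C$ the corresponding right-hand sides. The server performs two exact rank tests: if $a_j \notin \mathrm{span}(A_C)$, then the equation is linearly independent of $C$ and $P_i$ broadcasts it to all other servers (or to the blackboard), and all parties append it to $C$; otherwise $a_j = A_C^T y$ for some $y$, and $P_i$ checks whether $y^T b_C = b_j$. If not, the combined system is infeasible and $P_i$ terminates the protocol with the answer ``infeasible.'' After all $s$ servers are processed, $C$ contains a maximal linearly independent subset of the full system (and is consistent), so every server simply solves the square, full-rank system extracted from $C$ locally to obtain the solution $x^\ast$.

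For correctness, since $C$ consists of equations drawn from the true system, any solution must satisfy them; conversely, any equation not added is by construction a linear combination of equations in $C$ whose right-hand side matches (or else the infeasibility check would have fired), so the unique solution of $C$ satisfies all equations. For communication, the key observation is that $|C|\le d$ throughout, because $C$ is always linearly independent in $\mathbb{R}^d$. Each added equation has $d+1$ coefficients, each an $L$-bit integer, so broadcasting it costs $O(dL)$ bits on the blackboard and $O(sdL)$ bits in the coordinator model (forwarding to the other $s-1$ servers). Summing over at most $d$ additions gives the claimed $O(d^2L)$ blackboard cost and $O(sd^2L)$ coordinator cost, plus an $O(s)$ additive term in the blackboard model for each server to announce when its turn is done.

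There is essentially no hard step here; the only point worth checking carefully is that no randomness is required to do the independence and consistency tests, which is true because all arithmetic is exact over $\mathbb{Q}$ using rationals whose numerators and denominators have bit complexity $\poly(d,L)$ (by Cramer's rule on the submatrices of $C$). Hence the protocol is deterministic, always correct, and achieves the stated $O(sd^2L)$ and $O(s + d^2L)$ bounds in the coordinator and blackboard models respectively.
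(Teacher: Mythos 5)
Your proposal is correct and matches the paper's own proof, which likewise takes the feasibility protocol of Figure~\ref{alg:rand_feasibility}, drops the modulus $p$ in favor of exact rank/consistency checks, and lets each party solve the maximal independent consistent set $C$ at the end, with the same $|C|\le d$ and $O(dL)$-bits-per-equation accounting. The only nitpick is your phrase ``the unique solution of $C$'': when $\rank(C)<d$ the solution is not unique, but any particular solution of $C$ still satisfies every remaining equation since each is a consistent linear combination of equations in $C$, so the argument goes through unchanged.
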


Now we turn to randomized protocols. We describe a protocol for solving linear systems with communication complexity $\widetilde{O}(d^2L + sd)$ in the coordinator model.
The description is given in Figure \ref{alg:rand_ls}.

\begin{figure}[!htb]
\begin{framed}
\begin{enumerate}
\item Each server $P_i$ tests the feasibility of their {\em own} linear system. If the linear system is infeasible then $P_i$ terminates the protocol. Otherwise, each server $P_i$ finds a maximal set of linearly independent linear equations, say $S_i$. 
\item The coordinator maintains a set of linear equations $C$. At the beginning $C$ is the empty set.
\item \label{step:rand_ls_for} For $i = 1, 2, \ldots, s$
\begin{enumerate}[(a)]
\item \label{step:rand_ls_rep} Repeat the followings for $O(\log d)$ times
\begin{enumerate}
\item \label{step:rand_lin_comb} Server $P_i$ calculates a linear equation $c = \sum_{t \in S_i} r_t \cdot t$, here $\{r_t\}_{t \in S_i}$ is a set of i.i.d. random signs. $P_i$ sends the linear equation $c$ to the coordinator.
\item \label{step:rand_add_or_term} The coordinator terminates the protocol if $C \cup \{c\}$ is infeasible. Otherwise if $c$ is not a linear combination of those linear equations in $C$, then the coordinator adds $c$ into $C$, and then goes to Step \ref{step:rand_ls_rep}, i.e., repeats another $O(\log d)$ times. 
\end{enumerate}
\end{enumerate}
\item The coordinator obtains the solution by solving all equations in $C$.
\end{enumerate}
\end{framed}
\caption{Randomized protocol for solving linear systems in the coordinator model}
\label{alg:rand_ls}
\end{figure}

Now we prove the correctness of the protocol. 
We first note a few simple properties of the protocol.
For each $i \in [s]$, after executing the $i$-th loop of Step \ref{step:rand_ls_for}, we have $C \subseteq \spa \left(\bigcup_{j \le i} S_j\right)$.
Furthermore, at Step \ref{step:rand_add_or_term}, we must have $c \in \spa(S_i)$.
This means if $C \cup \{c\}$ is infeasible, then the original linear system must be infeasible. 

Thus, it suffices to show that for each $i \in [s]$, if there exists $s \in S_i$ such that $\left(\bigcup_{j < i} S_j \right) \cup\{s\}$ is infeasible, then the protocol is terminated, and otherwise after executing the $i$-th loop of Step \ref{step:rand_ls_for}, we have $\spa(C) = \spa\left(\bigcup_{j \le i} S_j\right)$.

Suppose before the execution of the $i$-th loop of Step \ref{step:rand_ls_for} we have $\spa(C) = \spa\left(\bigcup_{j < i} S_j\right)$ and $\spa\left(\bigcup_{j < i} S_j\right)$ is feasible, and the protocol is executing the $i$-th loop of Step \ref{step:rand_ls_for}. There are two cases here.
\begin{description}
\item[Case 1: $\spa\left(\bigcup_{j\le i} S_j \right) = \spa\left(\bigcup_{j < i} S_j\right) = \spa(C) $.] In this case, $C$ will remain unchanged during the $i$-th loop of  Step \ref{step:rand_ls_for}. 
\item[Case 2: There exists $s \in S_i$, $s \notin \spa(C)$.] In this case, we claim that with probability at least $1/2$, the linear equation $c$ calculated at Step \ref{step:rand_lin_comb} satisfies $c \notin \spa(C)$. This can be seen since for any linear combination $c = \sum_{t \in S_i} r_t \cdot t$, if we flip the sign of $r_s$ and obtain $\hat{c}$, then either $c \notin \spa(C)$ or $\hat{c} \notin \spa(C)$, since $c - \hat{c} = \pm 2s$ and $s \notin \spa(C)$.
\end{description}
Thus, if there exists $s \in S_i$ such that $s \notin \spa(C)$, then with probability $1 - 1 / \poly(d)$, at least one of the linear equations $c$ calculated at Step \ref{step:rand_lin_comb} satisfies $c \notin \spa(C)$, in which case the protocol terminates if $C \cup \{c\}$ is infeasible, or $c$ is added into $C$ otherwise. 
Thus, if $ \spa\left(\bigcup_{j < i} S_j\right) \neq  \spa\left(\bigcup_{j \le i} S_j\right)$, then after the execution of the $i$-th loop of Step \ref{step:rand_ls_for}, with probability at least $1 - 1 / \poly(d)$, either the protocol (correctly) terminates, or we have $\spa(C) =  \spa\left(\bigcup_{j \le i} S_j\right)$.
The correctness of the protocol just follows by applying a union bound over all $i$ such that $ \spa\left(\bigcup_{j < i} S_j\right) \neq  \spa\left(\bigcup_{j \le i} S_j\right)$.
Notice that there are at most $d$ such $i$ we need to apply a union bound over.

Now we analyze the communication complexity of the protocol. 
Notice that at most $d$ linear equations will be added into $C$, and thus the total communication complexity associated with sending $c$ when $c$ is added into $C$ is upper bounded by $O(d^2L)$.
Furthermore, if $C \cup \{c\}$ is infeasible at Step \ref{step:rand_add_or_term}, then the protocol terminates and thus the communication complexity for sending $c$ associated such that case is upper bounded by $O(dL)$.
Furthermore, for each $i$, server $P_i$ will send $O(\log d)$ different linear equations $c$ to the coordinator, and if we implement the protocol na\"ively, then the total communication complexity is upper bounded by $\widetilde{O}(sdL)$.
Thus, the total communication complexity of the whole protocol is upper bounded by $\widetilde{O}(sdL + d^2L)$.

However, using Lemma \ref{lem:mod_rank} and the same argument as in Section \ref{sec:rand_feasible}, to implement Step \ref{step:rand_add_or_term}, it suffices to check if $C \cup \{c\}$ is feasible and if $c$ is a linear combination of existing linear equations in $C$, over the finite field $\mathbb{F}_p$, for a random prime number $p \in [2, \poly(dL)]$.
The correctness still follows since this check fails with probability at most $0.01$.
After this modification, the communication complexity is now upper bounded by $\widetilde{O}(sd + d^2L)$.

Formally, we have proved the following theorem.

\begin{theorem}\label{thm:rand_solve}
The protocol described in Figure \ref{alg:rand_ls} is a randomized protocol for solving linear systems which has communication complexity $\widetilde{O}(sd + d^2L)$ in the coordinator model. Here the $\widetilde{O}(\cdot)$ notation hides only $\polylog(dL)$ factors. 
The protocol succeeds with probability at least $0.99$.
\end{theorem}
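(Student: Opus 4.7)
The plan is to split the analysis into three parts: (a) an invariant-based correctness argument, (b) a check that infeasibility is detected, and (c) the communication accounting.

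For correctness, I would prove by induction on $i$ that, after the $i$-th iteration of Step~\ref{step:rand_ls_for}, with high probability either the protocol has already (correctly) terminated or $\spa(C)=\spa\left(\bigcup_{j\le i}S_j\right)$. If $S_i\subseteq\spa(C)$ the invariant is preserved trivially. Otherwise, fix any $s\in S_i\setminus\spa(C)$ and consider a random $\pm1$ combination $c=\sum_{t\in S_i}r_t\,t$. The sign-flip trick---exchanging $r_s\mapsto -r_s$ produces $\hat c$ with $c-\hat c=\pm 2s\notin\spa(C)$---shows that at least one of $c,\hat c$ lies outside $\spa(C)$, so $\Pr[c\notin\spa(C)]\ge 1/2$. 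Repeating $O(\log d)$ times drives the failure probability to $1/\poly(d)$, and union-bounding over the at most $d$ iterations at which $\spa(C)$ must strictly grow gives overall success probability $\ge 0.99$. For infeasibility detection, every vector added to $C$ lies in $\spa(\bigcup_j S_j)$, so if $C\cup\{c\}$ is infeasible then the original system is infeasible; conversely, if $Ax=b$ is infeasible then some $S_i$ first introduces infeasibility relative to $\bigcup_{j<i}S_j$, and the same sign-flip argument applied to a witness certificate ensures the protocol catches this while processing $P_i$.

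For communication, each of the at most $d$ equations that is actually added to $C$ costs $\tilde{O}(dL)$ bits to transmit, contributing $\tilde{O}(d^2L)$ in total. The main concern is the $O(\log d)$ probes issued per server: sending the full combination $c\in\R^d$ costs $\tilde{O}(dL)$ bits, yielding $\tilde{O}(sdL)$, which is too large. The fix, already used in Section~\ref{sec:rand_feasible}, is for the coordinator to publish a random prime $p\in[2,\poly(dL)]$ and for each server to send $c\bmod p$ when the coordinator only needs to test linear independence and feasibility in Step~\ref{step:rand_add_or_term}. By Lemma~\ref{lem:mod_rank} each such test is correct with probability $1-1/\poly(dL)$, and each probe now costs $\tilde{O}(d)$ bits rather than $\tilde{O}(dL)$. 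Only when a probe certifies a new linearly independent equation does the server broadcast the original $c$ at the full $\tilde{O}(dL)$ cost. Summing yields the claimed $\tilde{O}(sd+d^2L)$ bound.

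The main obstacle I anticipate is bookkeeping the error probabilities: one must union-bound the failure of Lemma~\ref{lem:mod_rank} across every probe and every rank/feasibility test executed during the protocol. Since there are $O(s\log d)$ probes plus $O(d)$ insertions into $C$, choosing $p$ from a sufficiently enlarged range $[2,\poly(sdL)]$---which only affects the hidden $\polylog$ factors in $\tilde{O}(\cdot)$---keeps the aggregate error below a small constant and preserves the stated $0.99$ success probability. Combining the correctness invariant, the infeasibility detection, and this error budget with the communication accounting completes the proof.
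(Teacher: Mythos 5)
Your proposal follows the paper's proof essentially step for step: the same inductive invariant that $\spa(C)$ equals the span of the equations seen so far, the same sign-flip argument giving probability at least $1/2$ per random combination with $O(\log d)$ repetitions and a union bound over the at most $d$ span-growth events, and the same mod-$p$ hashing trick so that probes cost $\widetilde{O}(d)$ bits while the at most $d$ full equations added to $C$ cost $\widetilde{O}(d^2L)$ in total. Your added care in union-bounding the failures of Lemma~\ref{lem:mod_rank} over all probes (e.g., by enlarging the prime range) is a reasonable tightening of a point the paper treats only briefly.
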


\section{Communication Complexity Lower Bounds for Linear Regressions in the Blackboard Model} \label{sec:bb_reg_lb}
In this section, we prove communication complexity lower bounds for linear regression in the blackboard model. 

We first define the $k$-XOR problem and the $k$-MAJ problem.
In the blackboard model, each server $P_i$ receives a binary string $x_i \in \{0, 1\}^d$.
In the $k$-XOR problem, at the end of a communication protocol, the coordinator correctly outputs the coordinate-wise XOR of these vectors, for at least $0.99d$ coordinates.
In the $k$-MAJ problem, at the end of a communication protocol, the coordinator correctly outputs the coordinate-wise majority of these vectors, for at least $0.99d$ coordinates.

We need the following lemma for our lower bound proof.
\begin{lemma}
Any randomized communication protocol that solves the $k$-XOR problem or the $k$-MAJ problem and succeeds with probability at least $0.99$ has communication complexity $\Omega(dk)$.
\end{lemma}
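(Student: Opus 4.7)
The plan is to prove both lower bounds via an information-theoretic per-server argument: under an appropriate hard input distribution, each server $P_i$ must send $\Omega(d)$ bits in expectation, summing to $\Omega(dk)$ total communication. For $k$-XOR, take $X_1,\ldots,X_k$ independent and uniform over $\{0,1\}^d$ and let $R$ denote the public randomness, $T$ the transcript, and $T^{(i)}$ the subsequence of bits sent by $P_i$. In the blackboard model every message sent by any server is a function of that server's input, the public randomness, and the preceding transcript, so given $(X_{-i},R,T^{(i)})$ one can deterministically replay the full transcript; hence
\[
I(X_i;\,T\mid X_{-i},R)\;\le\;H\!\bigl(T^{(i)}\mid X_{-i},R\bigr)\;\le\;\Ebb\bigl[|T^{(i)}|\bigr].
\]
On the other hand, the coordinator's output $Y$ agrees with $\bigoplus_j X_j$ on at least $0.99 d$ coordinates with probability $\ge 0.99$; XOR-ing out the known $X_{-i}$ reconstructs $X_i$ to within Hamming distance $0.01 d$ with the same probability, and Fano's inequality for Hamming-ball decoding gives $H(X_i\mid T,X_{-i}) \le 1 + 0.01 d + d\cdot H_b(0.01) < 0.1 d$. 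Hence $I(X_i;T\mid X_{-i}) \ge 0.9 d = \Omega(d)$, and summing over $i\in[k]$ yields total expected communication $\Omega(dk)$.

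For $k$-MAJ the template is identical; only the hard distribution changes. I would draw $(X_1[\ell],\ldots,X_k[\ell])$ independently across $\ell \in [d]$, each uniformly from the $k$-bit strings of Hamming weight in $\{\lfloor k/2\rfloor,\lceil k/2\rceil\}$ (assume $k$ odd; the even case is analogous). Under this distribution every single bit is pivotal for the coordinate-wise majority: flipping any $X_i[\ell]$ flips the majority at $\ell$, and given $X_{-i}[\ell]$ together with the majority at $\ell$ we recover $X_i[\ell]$ exactly from the identity $X_i[\ell] = \mathrm{weight}(X[\ell]) - \sum_{j\ne i}X_j[\ell]$. A direct calculation shows $H(X_i\mid X_{-i}) = (\tfrac12+O(1/k))\,d$, and the same Fano step gives $H(X_i\mid T,X_{-i}) < 0.1 d$, so $I(X_i;T\mid X_{-i}) = \Omega(d)$ per server and the same summation produces $\Omega(dk)$ total.

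The main obstacle is the $k$-MAJ case. Under the naive product-uniform distribution, the majority of $k$ uniform bits is only $\Theta(1/\sqrt{k})$-correlated with any individual bit, so the Fano step yields only $\Omega(d/\sqrt{k})$ per server (and hence the weaker $\Omega(d\sqrt{k})$ total bound). The pivotal-weight distribution above resolves this by making every bit simultaneously pivotal; the technical work is verifying (a) that the transcript-reconstruction identity $I(X_i;T\mid X_{-i},R)\le \Ebb|T^{(i)}|$ remains valid under this correlated distribution, which it does since that step relies only on the causal message structure of blackboard protocols and not on input independence, and (b) that the conditional marginal entropy $H(X_i\mid X_{-i})$ stays $\Omega(d)$ despite the pivotality constraint, which is the content of the direct calculation above.
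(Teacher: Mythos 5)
Your proof is correct, but it follows a genuinely different route from the paper's. For $k$-XOR the paper simply cites Theorem~1.1 of \cite{phillips2012lower}, and for $k$-MAJ it uses a symmetrization reduction: a two-player problem in which Alice holds $x$ and shared strings $z_1,\ldots,z_{2k-1}$ with balanced columns, Bob holds $y$, Alice must output $y$, and an $\Omega(d)$ two-player bound is lifted to $\Omega(dk)$ by having Alice and Bob embed themselves as two random servers of a $(2k+1)$-MAJ protocol (so their expected cost is $O(C_{\mathcal{P}}/k)$). You instead run a self-contained, per-server information argument: bound $I(X_i;T\mid X_{-i},R)\le \E|T^{(i)}|$ via transcript replay, lower bound the same quantity by $\Omega(d)$ via Fano with Hamming-ball decoding, and sum over servers; your balanced-column distribution for MAJ plays exactly the role that the paper's ``$k$ or $k+1$ zeros per coordinate'' distribution plays, namely making each player's bit recoverable from the majority together with the other players' bits while keeping $H(X_i\mid X_{-i})\ge d/2$. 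What your route buys is a uniform treatment of XOR and MAJ with explicit constants and no reliance on external two-player lemmas or symmetrization; what the paper's route buys is brevity, reuse of the symmetrization machinery it already employs for the \textsf{Equality} bound, and avoidance of the (standard but real) bookkeeping your argument needs, e.g.\ making private coins public and assuming messages are self-delimiting so that $T$ can be replayed from $(X_{-i},R,T^{(i)})$ and $H(T^{(i)})\le \E|T^{(i)}|$ holds up to constants. One small imprecision to fix in your write-up: it is not true that flipping any $X_i[\ell]$ flips the majority (a flip that leaves the support of your distribution need not change it); the correct and sufficient statement, which your weight identity already encodes, is that within the support the majority determines the column weight, so $(X_{-i}[\ell], M[\ell])$ determines $X_i[\ell]$, and $H(X_i[\ell]\mid X_{-i}[\ell])=\tfrac{k+1}{2k}>\tfrac12$.
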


\begin{proof}
The lower bound for $k$-XOR directly follows from \cite[Theorem 1.1]{phillips2012lower}.
Now we prove the lower bound for $k$-MAJ. 

First, consider a communication problem with two players.
Alice receives a binary string $x \in \{0, 1\}^d$ and $2k - 1$ binary strings $z_1, z_2, \ldots, z_{2k - 1} \in \{0, 1\}^d$.
Bob receives a binary string $y \in \{0, 1\}^d$ and the same $2k - 1$ binary strings $z_1, z_2, \ldots, z_{2k - 1} \in \{0, 1\}^d$.
These $2k + 1$ binary strings $x, y, z_1, z_2, \ldots, z_{2k - 1}$ are generated uniformly at random conditioned on the following constraint: for each coordinate $i \in [d]$, the $i$-th coordinate of $x, y, z_1, z_2, \ldots, z_{2k - 1}$ contains either $k$ zeros or $k + 1$ zeros.  
For each coordiante $i \in [d]$, whether the $i$-th coordinate of $x, y, z_1, z_2, \ldots, z_{2k - 1}$ contains $k$ zeros or $k + 1$ zeros is also chosen uniformly at random. 
In this communication problem, the goal of Alice is to output the vector $y$.

Now we prove a lower bound the communication problem defined above. 
Notice that for each coordinate $i \in [d]$, if $x, z_1, z_2, \ldots, z_{2k - 1}$ contains exactly $k$ zeros and $k$ ones at the $i$-th coordinate, then the $i$-th coordinate of $y$ will be uniformly at random.
By a Chernoff bound, with high probability, there exists a set $S \subseteq [d]$ with size $|S| \ge d / 10$, such that for each $i \in S$, the $i$-th coordinate of $x, z_1, z_2, \ldots, z_{2k - 1}$ contains exactly $k$ zeros and $k$ ones. 
By standard information-theoretic arguments, if at the end of the communication protocol, with constant probability, Alice correctly outputs the value of $y_i$ for at least $9/10$ fraction of $i \in S$, then the expected communication complexity is lower bounded by $\Omega(d)$, even with public randomness. See, e.g., Lemma 2.1 in \cite{phillips2012lower} for a formal proof. 

Now we reduce the problem mentioned above to $(2k + 1)$-MAJ and prove an $\Omega(dk)$ lower bound. 
Given any protocol $\mathcal{P}$ for the $(2k + 1)$-MAJ problem with expected communication complexity $C_{\mathcal{P}}$ on the distribution $x, y, z_1, z_2, \ldots, z_{2k - 1}$ mentioned above,
Alice and Bob first use public randomness to choose two distinct servers $s_1$ and $s_2$ uniformly at random, and then Alice and Bob simulate the protocol $\mathcal{P}$.
To simulate $\mathcal{P}$, Alice plays the role of server $s_1$ and Bob plays the role of server $s_2$. They both play the roles of all other players. 
Alice sets the input of $s_1$ to be $x$, and Bob sets the input of $s_2$ to be $y$.
The inputs of the other $2k - 1$ servers are set to be  $z_1, z_2, \ldots, z_{2k - 1}$.
To simulate $\mathcal{P}$, Alice and Bob need to communicate if and only if server $s_1$ or server $s_2$ needs to communicate with the coordinator since all other communication can be simulated by Alice and Bob themselves. 

By symmetry, the expected communication complexity between Alice and Bob is upper bounded by $2 C_{\mathcal{P}} / k$.
Furthermore, at the end of the protocol, Alice and Bob have the coordinate-wise majority of $x, y, z_1, z_2, \ldots, z_{2k - 1}$, for at least $0.99d$ coordinates. 
Thus, for at least $9/10$ fraction of $i \in S$, Alice knows the majority of the $i$-th coordinates of $x, y, z_1, z_2, \ldots, z_{2k - 1}$.
However, by definition of $S$, the majority of the $i$-th coordinates of $x, y, z_1, z_2, \ldots, z_{2k - 1}$ is exactly $y_i$.
Thus, by the $\Omega(d)$ lower bound mentioned above, we must have $2 C_{\mathcal{P}} / k = \Omega(d)$, which implies an $\Omega(dk)$ lower bound. 
\end{proof}

Now we give a reduction from $k$-MAJ to $(1 + \varepsilon)$-approximate $\ell_1$ regression in the blackboard model, and prove an $\Omega(d / \varepsilon)$ lower bound when $s > \Omega(1 / \eps)$.
In the hard case we assume $s = \Theta(1 / \eps)$, and we simply ignore all other servers if $s > \Theta(1 / \eps)$.
For each server $P_i$, its matrix $A^{(i)}$ is set to be the identity matrix $I_d \in \mathbb{R}^{d \times d}$, and $b^{(i)} \in \{0, 1\}^d$.
Notice that in such case, we can calculate the $\ell_1$ regression value separately for each coordinate $x_j$.
The optimal solution can be achieved by taking $x_j$ to be $m_j$, where $m_j$ is the {\em majority} of $b^{(1)}_j, b^{(2)}_j, \ldots, b^{(s)}_j$.
Notice that the $\ell_1$ regression value associated with the $j$-th coordinate in the optimal solution is upper bounded by $s = \Theta(1 / \eps)$, and thus the total $\ell_1$ regression value is upper bounded by $sd = \Theta(d / \eps)$ in the optimal solution.
Furthermore, if $|x_j - m_j| \ge 0.1$, then the $\ell_1$ regression value associated with the $j$-th coordinate by using $x_j$ will be at least $0.1$ larger than the $\ell_1$ regression value associated with the $j$-th coordinate in the optimal solution.

Now consider a $(1 + \varepsilon)$-approximate solution $x$. 
We claim that for at least $0.99d$ coordinates $x_i$ of $x$, we have $|x_i - m_i| \le 0.1$.
The claim follows since otherwise, the total $\ell_1$ regression value of $x$ would be at least $0.099d$ larger than the optimal $\ell_1$ regression value, which would again be larger than $1 + \varepsilon$ times the optimal $\ell_1$ regression value, by adjusting the constant in $s = \Theta(1 / \eps)$.
Thus, from a $(1 + \varepsilon)$-approximate solution $x$ to the $\ell_1$ regression problem, we can solve the $k$-MAJ problem with $k = \Theta(1 / \varepsilon)$, which implies an $\Omega(d / \eps)$ lower bound.

Formally, we have proved the following theorem.
\begin{theorem}\label{thm:l1_bb_lb}
When $s > \Omega(1 / \eps)$, any randomized protocol that succeeds with probability at least $0.99$ for solving $(1 + \varepsilon)$-approximate $\ell_1$ regression requires $\Omega(d / \eps)$ bits of communication in the blackboard model.
\end{theorem}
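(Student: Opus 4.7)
The plan is to reduce the $k$-MAJ problem with $k = \Theta(1/\eps)$ to $(1+\eps)$-approximate $\ell_1$ regression in the blackboard model, and invoke the $\Omega(dk) = \Omega(d/\eps)$ communication lower bound for $k$-MAJ established in the preceding lemma. Since we are free to ignore extra servers, we may assume $s = \Theta(1/\eps)$ with the hidden constant chosen appropriately below.

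Given an instance of $k$-MAJ in which each server $P_i$ receives $b^{(i)} \in \{0,1\}^d$, I would set $A^{(i)} = I_d$ at every server, so that the global regression problem is $\min_{x \in \R^d}\sum_{i=1}^s \|x - b^{(i)}\|_1$. This objective decouples across coordinates: for each $j \in [d]$, the coordinate problem $\min_{x_j}\sum_i |x_j - b^{(i)}_j|$ is minimized at $x_j = m_j$, the majority of $b^{(1)}_j,\dots,b^{(s)}_j$, with optimum $\mathrm{OPT}_j$ equal to the number of minority votes, which is at most $s/2$. Thus $\mathrm{OPT} = \sum_j \mathrm{OPT}_j \le sd/2 = \Theta(d/\eps)$. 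Crucially, in the $0/1$ case the coordinate objective is piecewise linear with slopes bounded away from $0$ on either side of $m_j$: for any $x_j \in \R$ with $|x_j - m_j| \ge 0.1$, one has $\sum_i |x_j - b^{(i)}_j| - \mathrm{OPT}_j \ge 0.1$ (since moving away from $m_j$ by $\delta$ costs at least $\delta$ per coordinate, using that there are at least $s/2$ votes equal to $m_j$ and at most $s/2$ opposite).

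Now consider any $(1+\eps)$-approximate solution $x \in \R^d$ output by a protocol, so $\sum_j \sum_i |x_j - b^{(i)}_j| \le (1+\eps)\,\mathrm{OPT} \le \mathrm{OPT} + \eps \cdot sd/2 = \mathrm{OPT} + \Theta(d)$. If we let $B = \{j : |x_j - m_j| \ge 0.1\}$, then by the slope bound the total excess cost is at least $0.1\,|B|$, so $|B| \le O(d)$ with a constant that can be made $\le d/100$ by tuning the constant in $s = \Theta(1/\eps)$. Therefore $x$ (rounded coordinate-wise to the nearer of $0$ or $1$) agrees with the true coordinate-wise majority on at least $0.99d$ coordinates, which is exactly the output required by $(2k{+}1)$-MAJ with $2k{+}1 = s$. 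Invoking the $\Omega(d k) = \Omega(d/\eps)$ lower bound for $k$-MAJ completes the reduction and yields the claimed $\Omega(d/\eps)$ communication lower bound.

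The only delicate point, and the step I would double-check, is the quantitative slope argument relating the $\ell_1$ objective excess to the number of coordinates for which $|x_j - m_j|$ is bounded away from zero: one needs $\mathrm{OPT}$ and the allowed additive slack $\eps \cdot \mathrm{OPT}$ to be of the same order $\Theta(d)$ so that at most a $0.01$ fraction of coordinates can be ``wrong,'' which forces the specific choice $s = \Theta(1/\eps)$ with a small enough constant. Apart from this balancing, the rest is immediate from the preceding $k$-MAJ lower bound and the coordinate-wise structure of the instance.
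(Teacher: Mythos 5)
Your proposal is correct and follows essentially the same route as the paper: the identical hard instance with $A^{(i)} = I_d$ and $b^{(i)} \in \{0,1\}^d$, the coordinate-wise decoupling with majority optima, the $0.1$-slope argument bounding the number of coordinates where the approximate solution strays from the majority, the balancing of $\eps\cdot\mathrm{OPT} = \Theta(d)$ via the constant in $s = \Theta(1/\eps)$, and the final invocation of the $\Omega(dk)$ lower bound for $k$-MAJ. The only implicit point (in both your argument and the paper's) is that the per-coordinate slope away from the majority is at least $1$ rather than $0$, which holds because the hard distribution for $k$-MAJ has strict majorities in every coordinate.
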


Now we give a reduction from $k$-XOR to $(1 + \varepsilon)$-approximate $\ell_2$ regression in the blackboard model, and prove an $\Omega(d / \sqrt{\varepsilon})$ lower bound when $s > \Omega(1 / \sqrt{\eps})$.
In the hard case we assume $s = \Theta(1 / \sqrt{\eps})$, and we simply ignore all other servers if $s >  \Theta(1 / \sqrt{\eps})$.
For each server $P_i$, its matrix $A^{(i)}$ is set to be the identity matrix $I_d \in \mathbb{R}^{d \times d}$, and $b^{(i)} \in \{0, 1\}^d$.
For $\ell_2$ regression, the optimal solution can be achieved by taking $x_j$ to be $a_j$, where $a_j$ is the {\em average} of $b^{(1)}_j, b^{(2)}_j, \ldots, b^{(s)}_j$.
Notice that the squared $\ell_2$ regression value associated with the $j$-th coordinate in the optimal solution is upper bounded by $s = \Theta(1 / \sqrt{\eps})$, and thus the total squared $\ell_2$ regression value is upper bounded by $sd = \Theta(d / \sqrt{\eps})$.
Furthermore, if $|x_j - a_j| \ge \Omega(\sqrt{\eps})$, then the squared $\ell_2$ regression value associated with the $j$-th coordinate by using $x_j$ will be at least $\Theta(s \cdot |x_j - a_j|^2) = \Theta(\sqrt{\eps})$ larger than the squared $\ell_2$ regression value associated with the $j$-th coordinate in the optimal solution. 

Now consider a $(1 + \varepsilon)$-approximate solution $x$. 
We claim that for at least $0.99d$ coordinates $x_i$ of $x$, we have $|x_i - a_i| \le O(\sqrt{\eps})$.
The claim follows since otherwise, the total squared $\ell_2$ regression value of $x$ would be at least $\Omega(d\sqrt{\eps})$ larger than the optimal squared $\ell_2$ regression value, which would again be larger than $1 + \varepsilon$ times the optimal squared $\ell_2$ regression value, by adjusting the constant in $s = \Theta(1 / \sqrt{\eps})$.
Notice that for those coordinates with $|x_i - a_i| \le O(\sqrt{\eps})$, we can exactly recover $a_ i$ from $x_i$, since $a_i$ is the average of $b^{(1)}_i, b^{(2)}_i, \ldots, b^{(s)}_i$ and thus $a_i$  is an integer multiple of $1 / s = \Theta(\sqrt{\varepsilon})$.
This also implies we can recover the XOR of $b^{(1)}_i, b^{(2)}_i, \ldots, b^{(s)}_i$.
Thus, from a $(1 + \varepsilon)$-approximate solution $x$ to the $\ell_2$ regression problem, we can solve the $k$-XOR problem with $k = \Theta(1 / \sqrt{\varepsilon})$, which implies an $\Omega(d / \sqrt{\eps})$ lower bound.

Formally, we have proved the following theorem.
\begin{theorem}\label{thm:l2_bb_lb}
When $s > \Omega(1 / \sqrt{\eps})$, any randomized protocol that succeeds with probability at least $0.99$ for solving $(1 + \varepsilon)$-approximate $\ell_2$ regression requires $\Omega(d / \sqrt{\eps})$ bits of communication in the blackboard model.
\end{theorem}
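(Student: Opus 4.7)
The plan is to reduce the $k$-XOR problem to $(1+\eps)$-approximate $\ell_2$ regression with $k = \Theta(1/\sqrt{\eps})$, mirroring the $\ell_1$ reduction in Theorem \ref{thm:l1_bb_lb} but with parameters tuned so the \emph{squared}-loss geometry produces a $1/\sqrt{\eps}$ rather than a $1/\eps$ dependence. I would take exactly $s = \Theta(1/\sqrt{\eps})$ servers (discarding the rest if more are available), set $A^{(i)} = I_d$ for every $i$, and let $b^{(i)} \in \{0,1\}^d$ encode the XOR instance. Because all $A^{(i)}$ are identical identity matrices, the objective decouples across coordinates: for each $j$,
\[
\sum_{i=1}^s (x_j - b^{(i)}_j)^2 \;=\; s(x_j - a_j)^2 + C_j, \qquad a_j := \tfrac{1}{s}\sum_{i=1}^s b^{(i)}_j,
\]
so the unique optimum is $x_j^* = a_j$ and $\mathrm{OPT} \le sd = O(d/\sqrt{\eps})$.

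Next, I would argue that any $(1+\eps)$-approximate solution $x$ satisfies $|x_j - a_j| \le c\sqrt{\eps}$ on at least $0.99d$ coordinates, for a suitably small constant $c$. Indeed, the excess cost at coordinate $j$ is exactly $s(x_j - a_j)^2$; if more than $0.01d$ coordinates had $|x_j - a_j| > c\sqrt{\eps}$, the total excess would be at least $0.01d \cdot s c^2 \eps = \Omega(c^2 d\sqrt{\eps})$, which for $c$ chosen large enough relative to the hidden constant in $s = \Theta(1/\sqrt{\eps})$ exceeds $\eps \cdot \mathrm{OPT} = O(d\sqrt{\eps})$, violating the approximation guarantee.

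Now observe that $a_j$ lies in the discrete grid $\{0,1/s,2/s,\dots,1\}$ with spacing $1/s = \Theta(\sqrt{\eps})$. Choosing $c < 1/2$ relative to this spacing, rounding $x_j$ to the nearest multiple of $1/s$ recovers $a_j$ exactly on the $0.99d$ good coordinates. From $a_j$ we read off the integer $\sum_i b^{(i)}_j$ and in particular its parity, which is the coordinate-wise XOR of $b^{(1)},\dots,b^{(s)}$. Consequently any $(1+\eps)$-approximate $\ell_2$ regression protocol yields a $k$-XOR protocol on $0.99d$ coordinates with $k = \Theta(1/\sqrt{\eps})$; invoking the $\Omega(dk) = \Omega(d/\sqrt{\eps})$ communication lower bound for $k$-XOR in the blackboard model cited earlier in this section finishes the proof.

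The main delicacy is the simultaneous calibration of several constants: the approximation threshold $c\sqrt{\eps}$ must be strictly below $1/(2s)$ to permit unambiguous rounding, which pins down the hidden constant in $s = \Theta(1/\sqrt{\eps})$; and the ``$0.99d$ good coordinates'' guarantee used here must match the recovery threshold in the $k$-XOR lower bound. Both can be met by tightening $c$ and redefining the bad set, since the ratio of excess to optimum scales cleanly as $\eps$; no nontrivial concentration argument is required because the decoupled structure gives exact per-coordinate control.
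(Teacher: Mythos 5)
Your proposal is correct and follows essentially the same route as the paper's proof: a reduction from the $k$-XOR problem with $k=\Theta(1/\sqrt{\eps})$, identity design matrices so the objective decouples per coordinate, the observation that a $(1+\eps)$-approximate solution must be within $O(\sqrt{\eps})$ of the average $a_j$ on $0.99d$ coordinates, and exact recovery of $a_j$ (hence the parity) by rounding to the grid of spacing $1/s$. The only difference is that you spell out the constant calibration slightly more explicitly than the paper, which simply says the constants in $s=\Theta(1/\sqrt{\eps})$ are adjusted.
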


\section{Communication Protocols for $\ell_2$ Regression}\label{sec:l2regression}
In this section, we design distributed protocols for solving the $\ell_2$ regression problem.
\subsection{A Deterministic Protocol} 
In this section, we design a simple deterministic protocol for $\ell_2$ regression in the distributed setting with communication complexity $\widetilde{O}(sd^2L)$ in the coordinator model. 

According to the normal equations, the optimal solution to the $\ell_2$ regression problem $\min_x \|Ax-b\|_2$ can be attained by setting $x^* =  (A^TA)^{\dagger} A^T b$.
In Figure \ref{alg:l2alg1}, we show how to calculate $A^TA$ and $A^Tb$ in the distributed model.

\begin{figure}[!htb]
\begin{framed}
\begin{enumerate}
\item Each server $P_i$ calculates $(A^{(i)})^TA^{(i)}$ and $(A^{(i)})^T b^{(i)}$, and then sends them to the coordinator.
\item The coordinator calculates $A^TA = \sum_{i=1}^s (A^{(i)})^T A^{(i)}$ and $A^T b = \sum_{i=1}^s (A^{(i)})^T b^{(i)}$, and then calculates $x =  (A^TA)^{\dagger} \cdot A^T b$.
\end{enumerate}
\end{framed}
\caption{Protocol for $\ell_2$ regression in the coordinator model}
\label{alg:l2alg1}
\end{figure}

Notice that the bit complexity of entries in $A^TA$ and $A^T b$ is $O(L + \log n)$ since the bit complexity of entries in $A$ and $b$ is $L$,  which implies the communication complexity of the protocol in Figure \ref{alg:l2alg1} is $O(sd^2(L + \log n))$, in both the coordinator model and the blackboard model. 
\begin{theorem}\label{thm:l2alg1}
The protocol in Figure \ref{alg:l2alg1} is a deterministic protocol which exactly solves $\ell_2$ regression, and the communication complexity is $O(sd^2(L + \log n))$, in both the coordinator model and the blackboard model. 
\end{theorem}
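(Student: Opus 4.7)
The plan is to verify two things: (i) the coordinator's output $x = (A^TA)^\dagger A^Tb$ is an exact minimizer of $\|Ax-b\|_2$, and (ii) the stated bit budget actually suffices to transmit the local summaries.

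For correctness, I would invoke the normal equations: a vector $x$ minimizes $\|Ax-b\|_2$ if and only if $A^TAx = A^Tb$, and among all such minimizers, $x = (A^TA)^\dagger A^Tb$ is the (unique minimum-norm) solution. Since the rows of $A$ are partitioned across the servers, i.e., $A$ is obtained by stacking $A^{(1)},\ldots,A^{(s)}$ and $b$ is obtained by stacking $b^{(1)},\ldots,b^{(s)}$, we have the simple identities
\[
A^TA \;=\; \sum_{i=1}^s (A^{(i)})^T A^{(i)}, \qquad A^Tb \;=\; \sum_{i=1}^s (A^{(i)})^T b^{(i)}.
\]
Thus the coordinator, upon receiving the summands, recovers $A^TA$ and $A^Tb$ exactly, and then computes $(A^TA)^\dagger A^Tb$ locally, so the output is an exact minimizer.

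For communication, I would bound the bit complexity of a single transmitted entry. Each entry of $(A^{(i)})^TA^{(i)}$ is an inner product $\sum_k A^{(i)}_{k,u} A^{(i)}_{k,v}$ of at most $n$ terms, each of absolute value at most $2^{2L}$ since the entries of $A$ are integers in $[-2^L,2^L]$; hence each entry is an integer of absolute value at most $n\cdot 2^{2L}$, requiring $O(L + \log n)$ bits. The same bound holds for entries of $(A^{(i)})^Tb^{(i)}$. Server $P_i$ therefore sends a $d\times d$ matrix plus a $d$-vector, which is $O(d^2(L+\log n))$ bits. Summing over the $s$ servers yields total communication $O(sd^2(L+\log n))$. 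The same count holds in the blackboard model since each server's message is a one-shot broadcast and no further exchanges are required.

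I do not anticipate any real obstacle: the only subtlety is the bit-growth estimate for entries of the Gram matrix, which is a direct consequence of the integrality assumption on $A$ and $b$. Note that the $(A^TA)^\dagger$ computation is local to the coordinator and so does not enter the communication count, even though its intermediate numbers may be large.
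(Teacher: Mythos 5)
Your proposal is correct and follows essentially the same route as the paper: correctness via the normal equations together with the row-partition identities $A^TA=\sum_i (A^{(i)})^TA^{(i)}$, $A^Tb=\sum_i (A^{(i)})^Tb^{(i)}$, and the communication bound via the $O(L+\log n)$ bit complexity of each Gram-matrix entry. The only difference is that you spell out the $n\cdot 2^{2L}$ bit-growth estimate explicitly, which the paper states without detail.
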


\subsection{A Protocol in the Blackboard Model} \label{sec:l2_blackboard}
In this section, we design a recursive protocol for obtaining constant approximations to leverage scores in the distributed setting, which is described in Figure \ref{alg:leverage}.
We then show how to  solve $\ell_2$ regression by using this protocol. 
\begin{figure}[!htb]
\begin{framed}
Input: An $n \times d$ matrix $
A = \left[ 
\begin{array}{c}
A^{(1)} \\
A^{(2)} \\
\vdots \\
A^{(s)} 
\end{array} 
\right]
$, where $A^{(i)}$ is stored on server $P_i$.

Output: An $O(d \log d) \times d$ matrix $\tilde{A}$ such that $$\Omega(1) \|Ax\|_2 \le \|\tilde{A} x\|_2 \le O(1)\|Ax\|_2$$ for all $x \in \mathbb{R}^d$, which is stored on the coordinator and all servers.
\begin{enumerate}
\item \label{step:leverage_base} If $n \le O(d \log d)$, then each server $P_i$ sends $A^{(i)}$ to the coordinator, and then the coordinator sends $A$ to each server, and returns. 
\item \label{step:leverage_sample} Each server $P_i$ locally uniformly samples half of the rows from $A^{(i)}$ to form $\left(A^{(i)}\right)'$. Then each server takes $\left(A^{(i)}\right)'$ as input and invokes the protocol recursively to compute $\tilde{A'}$ for $A' = \left[ 
\begin{array}{c}
\left(A^{(1)}\right)' \\
\left(A^{(2)}\right)' \\
\vdots \\
\left(A^{(s)}\right)'
\end{array} 
\right]$ such that
$\Omega(1) \|A'x\|_2 \le \|\tilde{A'} x\|_2 \le O(1)\|A'x\|_2$ for all $x \in \mathbb{R}^d$.
\item Using $\tilde{A'}$, each server $P_i$ calculates generalized leverage scores (Definition \ref{def:leverage_score}) of all rows in $A^{(i)}$ with respect to $\tilde{A'}$.
\item \label{step:leverage_psample} The coordinator and all servers obtain $\tilde{A}$ by sampling and rescaling $O(d \log d)$ rows using Theorem \ref{thm:l2chernoff}, by setting $p_i \ge C \tilde{\tau}_i \log d $, where
\[
\tilde{\tau}_i = \begin{cases}
\tau_i^{\tilde{A}}(A) & \text{if row $A^i$ is sampled in Step \ref{step:leverage_sample}},\\
\frac{1}{1 + \frac{1}{\tau_i^{\tilde{A}}(A)}} & \text{otherwise}.
\end{cases}
\]
\end{enumerate}
\end{framed}
\caption{Protocol for approximating leverage scores in the distributed setting}
\label{alg:leverage}
\end{figure}

The protocol described in Figure \ref{alg:leverage} is basically Algorithm 2 in \cite{cohen2015uniform} for approximating leverage scores, implemented in the distributed setting. 
Using Lemma 8 in \cite{cohen2015uniform}, the protocol returns an $O(d \log d) \times d$ matrix $\tilde{A}$ such that 
$$\Omega(1) \|Ax\|_2 \le \|\tilde{A} x\|_2 \le O(1)\|Ax\|_2$$ for all $x \in \mathbb{R}^d$, with constant probability. 
Each server $P_i$ can then obtain constant approximations to leverage scores of all rows in $A^{(i)}$ by calculating $\tau_i^{\tilde{A}}(A)$.

Now we analyze the communication complexity of the protocol.
Notice that this recursive algorithm has $O(\log(n / d))$ levels of recursion. 
Step \ref{step:leverage_base} has communication complexity $O(sd^2  L  \log d)$ in the coordinator model and $O(d^2 L \log d )$ in the blackboard model, and will be executed at most once during the whole protocol. 
At Step \ref{step:leverage_psample}, we can assume each $p_i^{-1/2}$ is a power of two between $1$ and $\poly(n)$, since we can discard all rows whose $p_i < 1 / \poly(n)$ and increase each $p_i$ by a constant factor.
In order to implement the sampling process in Theorem \ref{thm:l2chernoff} in the distributed setting, each server $P_i$ sends the summation of $p_i$ for all rows in $A^{(i)}$ to the coordinator.
After receiving all these summations, the coordinator decides the number of rows to be sampled from each $A^{(i)}$ and sends these numbers back to each server. 
The communication complexity of this step is at most $O(s \log n)$.
Each server $P_i$ samples and rescales the rows accordingly, and then sends these sampled rows to the coordinator, and the coordinator sends all sampled rows back to all servers. 
Notice that the bit complexity of all entries in the sampled rows is at most $O(L + \log n)$ since $p_i^{-1/2}$ is an integer between $1$ and $\poly(n)$.
Thus, in the blackboard model, the total communication complexity at each recursive level of the protocol is upper bounded by $O(d^2 \log d(L + \log n) + s \log n)$, which implies the communication complexity of the whole protocol is at most $O((d^2 \log d(L + \log n) + s \log n)\cdot \log(n / d))$ in the blackboard model. 
A similar analysis shows that the communication complexity of the whole protocol is $\widetilde{O}(sd^2L)$ in the coordinator model.

\begin{lemma}\label{lem:leverage}
The protocol described in Figure \ref{alg:leverage} is a randomized protocol with communication complexity $\widetilde{O}(sd^2L)$ in the coordinator model and $\widetilde{O}(s + d^2L)$ in the blackboard model, such that with constant probability, upon termination of the protocol, each server $P_i$ has constant approximations to leverage scores of all rows in $A^{(i)}$. 
\end{lemma}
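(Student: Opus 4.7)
The plan is to separate the two assertions: correctness of the spectral approximation, and the communication bounds in the two models. For correctness I would mirror the inductive analysis of Algorithm~2 in \cite{cohen2015uniform}. Do induction on the depth of the recursion. At the base case ($n \le O(d\log d)$) the matrix $\tilde A = A$ trivially gives a perfect spectral approximation. For the inductive step, assume the recursive call returns $\tilde{A'}$ with $\Omega(1)\|A'x\|_2 \le \|\tilde{A'}x\|_2 \le O(1)\|A'x\|_2$ for all $x$. By \cite[Lemma~8]{cohen2015uniform}, the sampling probabilities $\tilde\tau_i$ defined in Step~\ref{step:leverage_psample} upper bound the true leverage scores of $A$ (up to a constant), so applying Theorem~\ref{thm:l2chernoff} to $A$ with these probabilities yields a matrix $\tilde A$ that is a constant-factor spectral approximation of $A$ with probability at least $0.99$ at each level. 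A union bound over the $O(\log(n/d))$ levels and rescaling of constants gives overall constant success probability.

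For the communication analysis, the key observation is that the recursion has depth $O(\log(n/d))$ and that at each level the protocol only transmits (i) the scalars $\sum_i p_i$ from each server for coordination, and (ii) the $O(d\log d)$ sampled and rescaled rows. First I would argue that after a rescaling trick each $p_i^{-1/2}$ can be taken to be a power of two between $1$ and $\poly(n)$ without loss of generality, so each sampled row has bit complexity $O(L + \log n)$; this ensures sending a single sampled row costs $\widetilde O(dL)$ bits. Coordinating which server produces which sampled rows takes $O(s\log n)$ per level. Hence one level costs $\widetilde O(sd^2L)$ in the coordinator model (where the coordinator must in turn forward the $O(d\log d)$ sampled rows to every server) and $\widetilde O(s + d^2L)$ in the blackboard model (since a single broadcast suffices). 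Multiplying by the $O(\log(n/d))$ recursion depth, and adding the one-time cost of the base case $O(sd^2L\log d)$ or $O(d^2L\log d)$, absorbs into $\widetilde O(sd^2L)$ and $\widetilde O(s + d^2L)$ respectively.

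The main obstacle I anticipate is the distributed implementation of the sampling in Step~\ref{step:leverage_psample}: each server only sees its local rows and local generalized leverage scores, but the sampling in Theorem~\ref{thm:l2chernoff} is defined globally over $N = \sum_i p_i$. I would resolve this by having each server first send the local total $\sum_{i \in P_j} p_i$ (an $O(\log n)$-bit number) to the coordinator; the coordinator then computes $N$ and a per-server sample count via a balls-in-bins allocation, sending back to each server the number of rows it must sample. Each server samples locally (proportional to $p_i$), rescales by $p_i^{-1/2}$, and transmits only the chosen rows. Because $\tilde A$ must end up identical on every party for the next recursion level (and on the coordinator for downstream use), in the blackboard model one broadcast of the sampled rows suffices, whereas in the coordinator model the coordinator must forward them to all $s$ servers, explaining the extra factor of $s$. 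Finally, the claim that each server locally holds constant approximations to leverage scores follows by having every server compute $\tau_i^{\tilde A}(A^i) = A^i(\tilde A^T \tilde A)^{\dagger}(A^i)^T$ using the shared $\tilde A$, which requires no further communication.
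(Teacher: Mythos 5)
Your proposal is correct and follows essentially the same route as the paper: correctness is delegated to Lemma~8 of \cite{cohen2015uniform} (your induction over recursion depth just unfolds that lemma), and the communication analysis uses the identical ingredients — rounding $p_i^{-1/2}$ to powers of two in $[1,\poly(n)]$ so sampled rows have bit complexity $O(L+\log n)$, coordinating per-server sample counts with $O(s\log n)$ bits per level, sending/broadcasting $O(d\log d)$ rows per level, and multiplying by the $O(\log(n/d))$ recursion depth plus the one-time base case. No substantive difference from the paper's argument.
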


Our protocol for solving the $\ell_2$ regression problem in the blackboard model is described in Figure \ref{alg:l2alg2}.
It is guaranteed that the vector $x$ calculated at Step \ref{step:terminate} is a $(1 + \varepsilon)$-approximate solution, with constant probability. 
A na\"ive approach for obtaining $(1 + \varepsilon)$-approximate solution to the $\ell_2$ regression problem will be using Theorem \ref{thm:l2chernoff} to obtain $SA$ and $Sb$ such that with constant probability, for all $x \in \R^d$,
\[
(1 - \varepsilon)\|Ax - b\|_2 \le \|S(Ax - b)\|_2 \le (1 + \varepsilon) \|Ax -  b\|_2.
\]
By doing so, the number of sampled rows should be $O(d \log d / \varepsilon^2)$ according to Theorem \ref{thm:l2chernoff}. 
However, as shown in Theorem 36 of \cite{clarkson2013low}, in order to obtain a $(1 + \varepsilon)$-approximate solution to the $\ell_2$ regression problem (instead of obtaining a  $(1 + \varepsilon)$ subspace embedding), it suffices to sample $O(d \log d + d / \varepsilon)$ rows from $[A ~ b]$.

Now we analyze the communication complexity of the protocol in Figure \ref{alg:l2alg2} in the blackboard model. 
By Lemma \ref{lem:leverage}, the communication complexity of Step \ref{step:approx_leverage} is upper bounded by $\widetilde{O}(d^2L + s)$.
Similar to Step \ref{step:leverage_psample} of the protocol described in Figure \ref{alg:leverage}, the sampling process in Step \ref{step:do_sample} can be implemented with communication complexity $\widetilde{O}(s + d^2L / \varepsilon)$.
Thus, the total communication complexity is $\widetilde{O}(s + d^2L / \varepsilon)$ in the blackboard model.
\begin{figure}[!htb]
\begin{framed}
\begin{enumerate}
\item \label{step:approx_leverage} Use the protocol in Figure \ref{alg:leverage} to approximate leverage scores of $A$. 
\item \label{step:do_sample} The coordinator obtains $SA$ and $Sb$ by sampling and rescaling $O(d / \varepsilon + d \log d)$ rows of $[A~b]$, using the sampling process in Theorem \ref{thm:l2chernoff}.
\item \label{step:terminate} The coordinator calculates $x =  \min_x \|SAx-Sb\|_2$.
\end{enumerate}
\end{framed}
\caption{Protocol for $\ell_2$ regression in the blackboard model}
\label{alg:l2alg2}
\end{figure}

\begin{theorem}\label{thm:l2alg2}
The protocol described in Figure \ref{alg:l2alg2} is a randomized protocol which returns a $(1 + \varepsilon)$-approximate solution to $\ell_2$ regression with constant probability, and the communication complexity is $\widetilde{O}(s + d^2L / \varepsilon)$ in the blackboard model . 
\end{theorem}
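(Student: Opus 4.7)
The plan is to verify the two claims of the theorem separately: first the $(1+\varepsilon)$-approximation guarantee, then the communication bound in the blackboard model.

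For correctness, I would show that the matrix $S$ produced in Step~\ref{step:do_sample} is a good enough sketch for $\ell_2$ regression. First, by Lemma~\ref{lem:leverage}, with constant probability Step~\ref{step:approx_leverage} yields constant-factor approximations $\tilde{\tau}_i$ to the true leverage scores $\tau_i(A)$ on every server. Then, when sampling $O(d/\varepsilon + d\log d)$ rows of $[A \,\, b]$ with probabilities proportional to the approximate leverage scores, I would invoke Sárlos's result (Theorem~36 of \cite{clarkson2013low}) which says that this sample size is sufficient to guarantee $\min_x \|SAx - Sb\|_2 \le (1+\varepsilon)\min_x \|Ax-b\|_2$. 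The point is that full $(1\pm\varepsilon)$ subspace embedding would require $O(d\log d/\varepsilon^2)$ samples via Theorem~\ref{thm:l2chernoff}, but for regression one can replace the $1/\varepsilon^2$ with $1/\varepsilon$ by using the optimality conditions. A short union bound over the two constant-probability events (leverage-score approximation and sampling concentration) preserves constant success probability overall.

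For the communication complexity in the blackboard model, I would break the cost into two pieces corresponding to the two non-trivial steps of Figure~\ref{alg:l2alg2}. Step~\ref{step:approx_leverage} costs $\widetilde{O}(s + d^2L)$ by Lemma~\ref{lem:leverage}. For Step~\ref{step:do_sample}, I need to implement the Bernoulli sampling of $O(d/\varepsilon + d\log d)$ rescaled rows of $[A\,\,b]$ in a distributed manner. After Step~\ref{step:approx_leverage} every server already knows its generalized leverage scores, so each server $P_i$ broadcasts the sum $\sum_{j} p_j$ over its rows (cost $O(s \log n)$), the coordinator writes back the number of rows to sample from each server (another $O(s\log n)$ on the blackboard), and then each selected row is broadcast. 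As in the argument following Lemma~\ref{lem:leverage}, we may assume each rescaling factor $p_i^{-1/2}$ is a power of two in $[1,\poly(n)]$, so each sampled row has $O(d(L+\log n))$ bits; thus the total bits on the blackboard for the sampled rows is $O((d/\varepsilon + d\log d) \cdot d(L+\log n)) = \widetilde{O}(d^2L/\varepsilon)$. Adding the two contributions gives $\widetilde{O}(s + d^2L/\varepsilon)$, as claimed. Once the coordinator has $SA$ and $Sb$, Step~\ref{step:terminate} is a purely local computation requiring no further communication.

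The main obstacle I anticipate is the sample-complexity-for-regression step: justifying $O(d/\varepsilon + d\log d)$ samples (rather than $O(d\log d/\varepsilon^2)$) for $\ell_2$ regression given only approximate leverage scores. This is not quite immediate from Theorem~\ref{thm:l2chernoff} and requires the stronger optimization-aware argument of Sárlos, which proceeds by decomposing the error in the $\text{col}(A)$ and $\text{col}(A)^\perp$ directions and applying a $(1\pm 1/\sqrt{2})$ subspace embedding on the former together with an approximate-matrix-multiplication bound on the cross term. Everything else is bookkeeping: standard broadcasting in the blackboard model, rounding of sampling probabilities to powers of two, and union bounds.
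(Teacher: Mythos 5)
Your proposal is correct and follows essentially the same route as the paper: Lemma~\ref{lem:leverage} for the leverage-score approximations and their $\widetilde{O}(s+d^2L)$ blackboard cost, Theorem~36 of \cite{clarkson2013low} (S\'arlos's optimization-aware argument) to justify that $O(d\log d + d/\varepsilon)$ sampled rows suffice for $(1+\varepsilon)$-approximate regression rather than the $O(d\log d/\varepsilon^2)$ needed for a full subspace embedding, and the same distributed sampling implementation (as in Step~\ref{step:leverage_psample} of Figure~\ref{alg:leverage}) giving $\widetilde{O}(s+d^2L/\varepsilon)$ total. Your added detail on rounding the rescaling factors and broadcasting the per-server sampling counts is just an explicit spelling-out of what the paper references implicitly.
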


\section{Communication Protocols for $\ell_1$ Regression}\label{sec:l1regression}
In this section, we design distributed protocols for solving the $\ell_1$ regression problem.
\subsection{A Simple Protocol} \label{sec:l1alg1}
In this section, we design a simple protocol for obtaining a $(1 + \varepsilon)$-approximate solution to the $\ell_1$ regression problem in the distributed setting.
The protocol is described in Figure \ref{alg:l1alg1}.
\begin{figure}[!htb]
\begin{framed}
\begin{enumerate}
\item \label{step:l1alg1_lewis1} Each server $P_i$ calculates an $O(d \log d / \varepsilon^2) \times n$ matrix $S^{(i)}$ such that for all $x \in \mathbb{R}^d$,
\[
(1 - \varepsilon)\|A^{(i)}x - b^{(i)}\|_1 \le \|S^{(i)}A^{(i)}x - S^{(i)} b^{(i)}\|_1 \le (1 + \varepsilon)\|A^{(i)}x - b^{(i)}\|_1,
\]
and then sends $S^{(i)} A^{(i)}$ and $S^{(i)} b^{(i)}$ to the coordinator.
\item \label{step:l1alg1_sol}
Let 
$
\tilde{A} = \left[ \begin{array}{c} S^{(1)} A^{(1)} \\ S^{(2)} A^{(2)} \\ \vdots \\ S^{(s)} A^{(s)}\end{array} \right], 
\tilde{b} = \left[ 
\begin{array}{c}
S^{(1)} b^{(1)} \\
S^{(2)} b^{(2)} \\
\vdots \\
S^{(s)} b^{(s)}
\end{array}
\right]
$.
The coordinator solves
$
\min_{x} \|\tilde{A} x - \tilde{b}\|_1
$.
\end{enumerate}
\end{framed}
\caption{Protocol for $\ell_1$ regression in the coordinator model}
\label{alg:l1alg1}
\end{figure}

To implement Step \ref{step:l1alg1_lewis1}, each server $P_i$ calculates the $\ell_1$ Lewis weights of $[A^{(i)}~b^{(i)}]$ and uses Theorem \ref{thm:l1chernoff} to randomly generate a matrix $S^{(i)}$.
$P_i$ then checks whether for all $x \in \mathbb{R}^d$
\begin{equation}\label{equ:suc_se}
(1 - \varepsilon)\|A^{(i)}x - b^{(i)}\|_1 \le \|S^{(i)}A^{(i)}x - S^{(i)} b^{(i)}\|_1 \le (1 + \varepsilon)\|A^{(i)}x - b^{(i)}\|_1.
\end{equation}
If not, $P_i$ randomly generates another $S^{(i)}$ until (\ref{equ:suc_se}) is satisfied for all $x \in \mathbb{R}^d$.
Since \eqref{equ:suc_se} is satisfied with constant probability, the number of independent trials is $O(1)$ in expectation. 
Furthermore, each server can locally check whether \eqref{equ:suc_se} holds or not by e.g., verifying on an $\varepsilon$-net. 
Notice that the use of randomness is not critical here, since each server $P_i$ can locally enumerate all possible $S^{(i)}$ up to a specific precision, instead of using Theorem \ref{thm:l1chernoff} to randomly generate a matrix $S^{(i)}$.
Each server $P_i$ will eventually find a matrix $S^{(i)}$ which satisfies \eqref{equ:suc_se}, whose existence is guaranteed by Theorem \ref{thm:l1chernoff}.

Now we prove the correctness of the protocol. 
Notice that it is guaranteed that for any $x \in \mathbb{R}^d$ and any $i \in [s]$,
\[
(1 - \varepsilon) \|A^{(i)}x - b^{(i)}\|_1 \le \|S^{(i)}A^{(i)}x - S^{(i)} b^{(i)}\|_1 \le (1 + \varepsilon)\|A^{(i)}x - b^{(i)}\|_1.
\]
It implies that 
\[
 \|\tilde{A} x - \tilde{b}\|_1 = \sum_{i=1}^s \|S^{(i)}A^{(i)}x - S^{(i)} b^{(i)}\|_1 \ge \sum_{i=1}^s (1 - \varepsilon) \|A^{(i)}x - b^{(i)}\|_1  =  (1 - \varepsilon) \|Ax - b\|_1
\]
and
\[
\|\tilde{A} x - \tilde{b}\|_1  = \sum_{i=1}^s \|S^{(i)}A^{(i)}x - S^{(i)} b^{(i)}\|_1 \le (1 + \varepsilon )\sum_{i=1}^s \|A^{(i)}x - b^{(i)}\|_1  = (1 + \varepsilon )\|Ax - b\|_1.
\]
Thus, the vector $x$ calculated at Step \ref{step:l1alg1_sol} is a $(1 + \varepsilon)$-approximate solution to the $\ell_1$ regression problem.

Finally, we analyze the communication complexity of the protocol. 
Similar to the analysis in Section \ref{sec:l2_blackboard}, we may assume all $1 / p_i$ in the sampling process of Theorem \ref{thm:l1chernoff} are integers between $1$ and $\poly(n)$.
Thus, the bit complexity of all entries in $S^{(i)} A^{(i)}$ and $S^{(i)} b^{(i)}$ is at most $O(L + \log n)$, which implies the communication complexity of Step \ref{step:l1alg1_lewis1} is $O(sd^2 \log d \cdot  (L + \log n) / \varepsilon^2)$ in both the coordinator model and the blackboard model. 

\begin{theorem}\label{thm:l1alg1}
The protocol described in Figure \ref{alg:l1alg1} is a deterministic protocol which returns a $(1 + \varepsilon)$-approximate solution to the $\ell_1$ regression problem, and the communication complexity is $\widetilde{O}(sd^2L / \varepsilon^2)$ in both the coordinator model and the blackboard model. 
\end{theorem}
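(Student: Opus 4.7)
The plan is to break the argument into three parts: correctness of the sketch-and-combine scheme, the fact that the protocol can be made deterministic, and the accounting of bits sent.

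For correctness, I would first note that by the guarantee of Step \ref{step:l1alg1_lewis1}, each sketch $S^{(i)}$ is a $(1\pm\varepsilon)$ $\ell_1$ subspace embedding for the affine subspace $\{A^{(i)}x - b^{(i)} : x \in \Rbb^d\}$. Summing the per-server distortions gives, for every $x \in \Rbb^d$,
\[
(1-\varepsilon)\|Ax-b\|_1 \;\le\; \|\tilde{A}x - \tilde{b}\|_1 \;=\; \sum_{i=1}^s \|S^{(i)}A^{(i)}x - S^{(i)}b^{(i)}\|_1 \;\le\; (1+\varepsilon)\|Ax-b\|_1,
\]
so the minimizer $x$ of $\|\tilde{A}x - \tilde{b}\|_1$ computed at Step \ref{step:l1alg1_sol} is, after rescaling $\varepsilon$ by a constant, a $(1+\varepsilon)$-approximate solution to $\min_x \|Ax-b\|_1$.

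For determinism, Theorem \ref{thm:l1chernoff} (with sampling probabilities driven by the $\ell_1$ Lewis weights of $[A^{(i)} ~ b^{(i)}]$) \emph{guarantees existence} of a valid sampling matrix $S^{(i)}$ of $O(d\log d/\varepsilon^2)$ rows satisfying the per-server distortion bound. Since each server $P_i$ owns $[A^{(i)}~b^{(i)}]$ entirely, it can perform a local, input-only enumeration over all sufficiently fine discretized sampling matrices of this size, and verify the embedding condition on a standard $\varepsilon$-net of the $d$-dimensional subspace spanned by $[A^{(i)}~b^{(i)}]$; the first candidate that passes is taken as $S^{(i)}$. No randomness crosses the channel, and by Theorem \ref{thm:l1chernoff} the search is guaranteed to succeed.

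For communication, I would argue that by standard rounding one may assume every reciprocal sampling weight $1/p_i$ used in the construction of $S^{(i)}$ is an integer of magnitude at most $\poly(n)$, at the cost of a constant factor in the distortion. Hence every entry of $S^{(i)}A^{(i)}$ and $S^{(i)}b^{(i)}$ is an integer of bit complexity $O(L + \log n) = \tilde{O}(L)$. Each server therefore transmits $\tilde{O}(d^2L/\varepsilon^2)$ bits to the coordinator in Step \ref{step:l1alg1_lewis1}, and Step \ref{step:l1alg1_sol} is local at the coordinator. Summing over the $s$ servers yields the claimed $\tilde{O}(sd^2L/\varepsilon^2)$ bound; the same bound applies in the blackboard model since each server is only sending information to the coordinator and not communicating among themselves.

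The main obstacle I anticipate is the ``determinism'' step: one must be careful that the existential guarantee of Theorem \ref{thm:l1chernoff} can be made algorithmic by an input-only enumeration with bounded precision, without any shared randomness between servers. The verification can be carried out on an $\varepsilon$-net of the unit sphere in the $(d{+}1)$-dimensional column span of $[A^{(i)}~b^{(i)}]$ of size $(O(1/\varepsilon))^{O(d)}$, and the precision of the enumeration can be taken polynomial in $n, d, L, 1/\varepsilon$; existence is ensured by the theorem, so termination is immediate. Everything else is a direct composition of per-server embeddings plus a bit-complexity accounting.
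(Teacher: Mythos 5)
Your proposal is correct and follows essentially the same route as the paper: per-server $\ell_1$ Lewis-weight sampling sketches whose distortions are summed for correctness, derandomization by local enumeration with $\varepsilon$-net verification (existence from Theorem \ref{thm:l1chernoff}), and the same bit-complexity accounting via $1/p_i \le \poly(n)$ giving entries of size $O(L+\log n)$. No substantive differences to flag.
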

\subsection{A Protocol Based on $\ell_1$ Lewis Weights Sampling} \label{sec:l1alg2}
In this section, we first design a protocol for obtaining constant approximations to $\ell_1$ Lewis weights in the distributed setting, which is described in Figure \ref{alg:lewis},
and then solves the $\ell_1$ regression problem based on this protocol. 

\begin{figure}[!htb]
\begin{framed}
\begin{enumerate}
\item Each server $P_i$ initializes $w_i = 1$ for all rows in $A^{(i)}$.
\item For $t = 1, 2, \ldots, T$
\begin{enumerate}
\item Each server $P_i$ obtains constant approximations to leverage scores of $W^{-1 / 2}A$, for all rows stored on $P_i$ using the protocol in Lemma \ref{lem:leverage}, where $W$ is the diagonal matrix formed by putting the elements of $w$ on the diagonal.
\item Set $w_i = (w_i \tau_i(W^{-1/2}A))^{1 / 2}$.
\end{enumerate}
\end{enumerate}
\end{framed}
\caption{Protocol for approximating $\ell_1$ Lewis weights}
\label{alg:lewis}
\end{figure}

The protocol described in Figure \ref{alg:lewis} is basically the algorithm in Section 3 of \cite{cohen2015p} for approximating $\ell_1$ Lewis weights, implemented in the distributed setting.
Using the same analysis, by setting $T = O(\log \log n)$,  we can show $w_i$ are constant approximations to the $\ell_1$ Lewis weights of $A$.
Now we show that we can assume all $w_i^{-1/2}$ are integers between $1$ and $2^{\widetilde{O}(L)}$.

Without loss of generality we assume each row of $A$ contains at least one non-zero entry. 
Since our goal here is to calculate constant approximations to the $\ell_1$ Lewis weights, using the analysis in Section 3 in \cite{cohen2015p}, we only need constant approximations to $w_i$ during the execution of the algorithm. 
Furthermore, since leverage scores $\tau_i(W^{-1/2}A)$ are at most $1$ (see, e.g., Section 2.4 in \cite{woodruff2014sketching}), we can prove by induction that $w_i \le 1$ during the execution of the algorithm. 
Thus, we may assume $w_i^{-1/2} \ge 1$ and $w_i^{-1 / 2}$ are integers.

Now we show that $w_i \ge 2^{-\widetilde{O}(L)}$. We prove this claim by induction. At the beginning of the algorithm, $w_i = 1$ for all $1 \le i \le n$.
Assume $w_i \ge 2^{-\widetilde{O}(L)}$ by the induction hypothesis, we know all entries in $W^{-1/2}A$ are integers between $1$ and $2^{\widetilde{O}(L)}$.
Using Lemma 2 in \cite{cohen2015uniform}, we know
$$
\tau_i(W^{-1/2}A) = \min_{(W^{-1/2}A)^T x = (W^{-1/2} A)^i} \|x\|_2^2.
$$
By the Cauchy-Schwarz inequality, in order that $(W^{-1/2}A)^T x = (W^{-1/2} A)^i$, we must have
\[
\|x\|_2 \ge 2^{-\widetilde{O}(L)} / \sqrt{n}
\]
since otherwise all entries in $(W^{-1/2}A)^T x$ are less than $1$, which violates the assumption that all entries in $W^{-1/2}A$ are integers and each row of $A$ contains at least one non-zero entry.
Furthermore, the number of iterations is at most $O(\log \log n)$, which implies $w_i \ge 2^{-\widetilde{O}(L)}$ by induction.

Thus, all entries in $W^{-1/2}A$ have bit complexity $\wt{O}(L)$ during the execution of the algorithm.
Using Lemma \ref{lem:leverage}, the communication complexity is $\widetilde{O}(s + d^2L)$ in the blackboard model, and $\widetilde{O}(sd^2L)$ in the coordinator model.

\begin{lemma}\label{lem:lewis}
The protocol described in Figure \ref{alg:lewis} is a randomized protocol with communication complexity $\widetilde{O}(s + d^2L)$ in the blackboard model and $\widetilde{O}(sd^2L)$ in the blackboard model, such that with constant probability, upon termination of the protocol, each server $P_i$ has constant approximations to the $\ell_1$ Lewis weights of all rows in $A^{(i)}$. 
\end{lemma}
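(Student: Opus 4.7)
The plan is to verify three things: (i) the fixed-point iteration converges in $O(\log\log n)$ rounds to a constant approximation of the $\ell_1$ Lewis weights, (ii) the reweighted matrix $W^{-1/2}A$ remains manipulable with $\tilde{O}(L)$-bit entries throughout, and (iii) the per-round communication is dominated by a single call to the leverage-score protocol of Lemma~\ref{lem:leverage}.

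First, I would invoke the convergence analysis of the iteration $w_i \leftarrow (w_i \tau_i(W^{-1/2}A))^{1/2}$ from Cohen--Peng \cite{cohen2015p}. Their argument shows that after $T=O(\log\log n)$ rounds, starting from $w_i=1$, the iterates are constant-factor approximations to the true Lewis weights, and crucially that it suffices to use constant-factor approximations to the leverage scores at each step. So in round $t$ each server can plug in the approximate scores returned by the protocol of Lemma~\ref{lem:leverage} run on $W^{-1/2}A$ (which is distributed in the obvious way, each server scaling its own rows by the current $w_i^{-1/2}$).

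The main obstacle is the bit-complexity control, because the leverage-score protocol of Lemma~\ref{lem:leverage} is only efficient when the underlying matrix has entries of polynomially bounded bit length, whereas Lewis weights themselves can be exponentially small. I would argue by induction on $t$ that $w_i \in [2^{-\tilde{O}(L)},1]$ throughout the execution. The upper bound $w_i \le 1$ follows from the fact that leverage scores are bounded by $1$ and from the form of the update $w_i \leftarrow (w_i\tau_i)^{1/2}$. For the lower bound, I would use the variational characterization $\tau_i(W^{-1/2}A)=\min\{\|x\|_2^2: (W^{-1/2}A)^T x=(W^{-1/2}A)^i\}$ together with the fact that all entries of $W^{-1/2}A$ are at most $2^{\tilde{O}(L)}$ in magnitude under the inductive hypothesis: any feasible $x$ must have $\|x\|_2 \ge 2^{-\tilde{O}(L)}/\sqrt{n}$ since otherwise $(W^{-1/2}A)^Tx$ would be too small to match a nonzero row of $W^{-1/2}A$. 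Combined with the update rule and $T=O(\log\log n)$, this yields $w_i \ge 2^{-\tilde{O}(L)}$. Rounding each $w_i^{-1/2}$ to the nearest integer preserves the constant-approximation property and ensures that each entry communicated during the leverage-score subroutine has bit complexity $\tilde{O}(L)$.

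Finally, for the communication accounting, I would observe that one outer round costs $\tilde{O}(s+d^2L)$ bits in the blackboard model and $\tilde{O}(sd^2L)$ in the coordinator model, directly by Lemma~\ref{lem:leverage} applied to $W^{-1/2}A$. Multiplying by $T=O(\log\log n)$ rounds only adds a $\polylog$ factor that is absorbed into the $\tilde{O}(\cdot)$ notation. A final union bound over the $O(\log\log n)$ invocations of the leverage-score protocol, each succeeding with constant probability (and boostable by a constant number of independent repetitions), yields overall success with arbitrarily large constant probability, completing the proof.
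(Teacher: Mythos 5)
Your proposal is correct and follows essentially the same route as the paper: the Cohen--Peng iteration run for $T=O(\log\log n)$ rounds using the distributed leverage-score protocol of Lemma~\ref{lem:leverage}, an inductive argument via the variational characterization $\tau_i(W^{-1/2}A)=\min\{\|x\|_2^2 : (W^{-1/2}A)^Tx=(W^{-1/2}A)^i\}$ showing $2^{-\widetilde{O}(L)}\le w_i\le 1$ so all communicated entries have bit complexity $\widetilde{O}(L)$, and per-round costs from Lemma~\ref{lem:leverage} multiplied by $T$. Your explicit note about boosting each invocation before the union bound is a minor refinement the paper leaves implicit.
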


Our protocol for solving the $\ell_1$ regression problem in the blackboard model is described in Figure \ref{alg:l1alg2}.
By Theorem \ref{thm:l1chernoff}, it is guaranteed that the vector $x$ calculated at Step \ref{step:terminate} is a $(1 + \varepsilon)$-approximate solution, with constant probability. 

Now we analyze the communication complexity of the protocol in Figure \ref{alg:l1alg2}.
By Lemma \ref{lem:lewis}, the communication complexity of Step \ref{step:approx_leverage} is upper bounded by $\widetilde{O}(d^2L + s)$ in the blackboard model and $\widetilde{O}(sd^2)$ in the coordinator model.
Similar to Step \ref{step:leverage_psample} of the protocol described in Figure \ref{alg:leverage}, the sampling process in Step \ref{step:do_sample} can be implemented with communication complexity $\widetilde{O}(s + d^2L / \varepsilon^2)$ in both models.
Thus, the total communication complexity is $\widetilde{O}(s + d^2L / \varepsilon^2)$ in the blackboard model.
In the coordinator model, the total communication complexity is $\widetilde{O}(sd^2L + d^2 L / \varepsilon^2)$.
\begin{figure}[!htb]
\begin{framed}
\begin{enumerate}
\item \label{step:approx_leverage} Use the protocol in Figure \ref{alg:lewis} to approximate $\ell_1$ Lewis weights of $[A~b]$. 
\item \label{step:do_sample} The coordinator obtains $SA$ and $Sb$ by sampling and rescaling $O(d \log d / \varepsilon^2)$ rows of $[A~b]$, using the sampling process in Theorem \ref{thm:l1chernoff}.
\item \label{step:terminate} The coordinator calculates $x =  \min_x \|SAx-Sb\|_1$. 
\end{enumerate}
\end{framed}
\caption{Protocol for $\ell_1$ regression in the blackboard model}
\label{alg:l1alg2}
\end{figure}

\begin{theorem}\label{thm:l1alg2}
The protocol described in Figure \ref{alg:l1alg2} is a randomized protocol which returns a $(1 + \varepsilon)$-approximate solution to the $\ell_1$ regression problem with constant probability, and the communication complexity is $\widetilde{O}(s + d^2L / \varepsilon)$ in the blackboard model and $\widetilde{O}(sd^2L + d^2 L / \varepsilon^2)$ in the coordinator model. 
\end{theorem}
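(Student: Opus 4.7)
I first invoke Lemma~\ref{lem:lewis}: after Step~1 of Figure~\ref{alg:l1alg2} every server holds constant-factor approximations $\widetilde w_i$ to the $\ell_1$ Lewis weights of the rows of $[A~b]$, with constant probability. Setting $p_i = C \widetilde w_i \log d /\varepsilon^2$ and invoking Theorem~\ref{thm:l1chernoff} produces, again with constant probability, a sampled-and-rescaled matrix $[SA~Sb]$ of $N = O(d \log d/\varepsilon^2)$ rows for which $(1-\varepsilon)\|Ax-b\|_1 \le \|S(Ax-b)\|_1 \le (1+\varepsilon)\|Ax-b\|_1$ simultaneously for every $x\in\mathbb{R}^d$. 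Minimising the sketched $\ell_1$ objective at Step~3 therefore returns a $(1+\varepsilon)$-approximate minimiser of the original problem by the standard subspace-embedding argument, and a union bound combines the two constant-probability events into one.

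\textbf{Blackboard communication.} The Lewis-weight phase contributes $\widetilde{O}(s + d^2L)$ bits by Lemma~\ref{lem:lewis}. For the sampling phase I reuse the discretisation pattern from Step~\ref{step:leverage_psample} of Figure~\ref{alg:leverage}: discard rows with $p_i < 1/\poly(n)$ and round every remaining $p_i^{-1}$ up to the nearest power of two in $[1,\poly(n)]$, which inflates each $p_i$ by only a constant factor and keeps $N$ of the same order. Each server then posts the sum of its $p_i$'s to the blackboard ($O(s\log n)$ bits total); once $\sum_i p_i$ is visible to all, a deterministic per-server allocation determines how many samples each server must contribute, and each server draws its share locally and broadcasts the scaled rows. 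Because every entry of a rescaled row has bit-length $O(L+\log n)$, posting $O(d\log d/\varepsilon^2)$ rows costs $\widetilde{O}(d^2L/\varepsilon^2)$ bits. Step~3 is local at the coordinator. Summing the two phases yields $\widetilde{O}(s + d^2L/\varepsilon^2)$, matching the bound derived in the paragraph immediately preceding the theorem.

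\textbf{Coordinator model and main obstacle.} In the coordinator model the Lewis-weight phase now costs $\widetilde{O}(sd^2L)$ by the coordinator-model bound of Lemma~\ref{lem:lewis}; the sampling phase proceeds as above except that each server privately sends its scaled rows to the coordinator rather than broadcasting them, and no forwarding is needed because only the coordinator must solve the sketched program at Step~3. This contributes an additional $\widetilde{O}(s + d^2L/\varepsilon^2)$, so the overall bill is $\widetilde{O}(sd^2L + d^2L/\varepsilon^2)$. The only delicate point is implementing the importance-sampling of Theorem~\ref{thm:l1chernoff} in a distributed fashion: independent local sampling from rescaled weights would over- or under-shoot the target count $N$, so I use the two-round ``sum, allocate, draw'' pattern described above, whose cost is absorbed into the $\widetilde{O}(s)$ term. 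Bit-complexity control on the intermediate Lewis weights (needed to keep $p_i^{-1}$ a poly-bounded integer after rescaling) has already been established in the analysis of Figure~\ref{alg:lewis} and is inherited directly.
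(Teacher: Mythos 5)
Your proposal is correct and follows essentially the same route as the paper's own proof: Lemma~\ref{lem:lewis} for the constant-factor Lewis weights, Theorem~\ref{thm:l1chernoff} with $O(d\log d/\varepsilon^2)$ sampled rows of $[A~b]$ for correctness, and the same distributed sampling implementation (per-server weight sums, allocation, local draws) borrowed from Step~\ref{step:leverage_psample} of Figure~\ref{alg:leverage}, giving $\widetilde{O}(s+d^2L/\varepsilon^2)$ in the blackboard model and $\widetilde{O}(sd^2L+d^2L/\varepsilon^2)$ in the coordinator model. Note that your blackboard bound of $\widetilde{O}(s+d^2L/\varepsilon^2)$ is exactly what the paper's own analysis derives as well; the $\widetilde{O}(s+d^2L/\varepsilon)$ appearing in the theorem statement is evidently a typo, since the paper itself observes that no $\ell_1$ analogue of S\'arlos's argument is known that would reduce the sample count below $O(d\log d/\varepsilon^2)$.
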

\subsection{A Protocol Based on Accelerated Gradient Descent}\label{sec:gd}
In this section, we present a protocol for the $\ell_1$ regression problem in the coordinator model, whose communication complexity is $\widetilde{O}(sd^3L / \varepsilon)$.

We need the following definition in \cite{durfee2017ell_1}.
\begin{definition}[\cite{durfee2017ell_1}]\label{irb}
Suppose $n \ge d$. 
A matrix $A \in \mathbb{R}^{n \times d}$ is {\em approximately isotropic row-bounded} if the following hold:
\begin{enumerate}
\item $A^T A \approx_{O(1)} I$;
\item For all rows of $A$, $\|A^i\|_2^2 \le O(d / n)$.
\end{enumerate}
\end{definition}

Before presenting the protocol, we first present a preconditioning procedure in Figure \ref{alg:precondition}, which will later be used in the protocol for $\ell_1$ regression. 
\begin{figure}[!htb]
\begin{framed}
\begin{enumerate}
\item Use the protocol in Figure \ref{alg:lewis} to approximate $\ell_1$ Lewis weights of $[A~b]$. 
\item Obtain $SA$ and $Sb$ by sampling and rescaling $O(d \log n / \varepsilon^2)$ rows of $[A~b]$, using the sampling process in Theorem \ref{thm:l1chernoff}. Here each server only samples the rows but {\em does not} send these rows to the coordinator. 
\item Use the protocol in Figure \ref{alg:leverage} to obtain an $O(d \log d) \times d$ matrix $\widetilde{SA}$ such that $$\Omega(1) \|SAx\|_2 \le \|\widetilde{SA}x\|_2 \le O(1) \|SAx\|_2$$ for all $x \in \mathbb{R}^d$.
\item Each server $P_i$ locally computes a QR-decomposition $\widetilde{SA} = QR$.
\end{enumerate}
\end{framed}
\caption{A preconditioning procedure for $\ell_1$ regression} \label{alg:precondition}
\end{figure}
The communication complexity of the this protocol in the coordinator model is $\widetilde{O}(sd^2L)$ by Lemma \ref{lem:leverage} and Lemma \ref{lem:lewis}.
Similar to the analysis in Section \ref{sec:l2_blackboard}, we can also assume the bit complexity of all entries in $SA$ and $Sb$ is $O(L + \log n)$.
Furthermore, by Theorem \ref{thm:l1chernoff}, a $(1 + \varepsilon)$-approximate solution to the $\ell_1$ regression problem
\begin{equation}\label{equ:preconditioned_l1}
\min_x \|SAR^{-1}x - Sb\|_1
\end{equation}
is a $(1 + O(\varepsilon))$-approximate solution to the original $\ell_1$ regression problem
$$
\min_x \|Ax - b\|_1.
$$
Thus, we will focus on the $\ell_1$ regression problem in \eqref{equ:preconditioned_l1} in the remaining part of this section

Now we show $SAR^{-1}$ is approximately isotropic row-bounded as in Definition \ref{irb}.
We only need to show $(SAR^{-1})^T SAR^{-1} \approx_{O(1)} I$ and all rows of $SAR^{-1}$ satisfy $\|(SAR^{-1})^i\|_2^2 = O(d / N)$ where $N = O(d \log n / \varepsilon^2)$ is the number of rows of $SAR^{-1}$.

To show $(SAR^{-1})^T SAR^{-1} \approx_{O(1)} I$, it is equivalent to show that for all $x \in \mathbb{R}^d$,
\begin{equation}\label{equ:irb}
\Omega(1) \|x\|_2^2 \le \|SAR^{-1}x\|_2^2 \le O(1) \|x\|_2^2.
\end{equation}
Notice that $\widetilde{SA}R^{-1} $ is an orthogonal matrix $Q$, which implies for all $x \in \mathbb{R}^d$
\begin{equation}\label{equ:qr}
\|\widetilde{SA}R^{-1} x\|_2 = \|x\|_2.
\end{equation}
Combining (\ref{equ:qr}) and the fact that 
$$
\Omega(1) \|SAx\|_2 \le \|\widetilde{SA}x\|_2 \le O(1) \|SAx\|_2
$$ for all $x \in \mathbb{R}^d$, we can prove \eqref{equ:irb}, which implies $(SAR^{-1})^T SAR^{-1} \approx_{O(1)} I$.

To show $\|(SA)^i R^{-1}\|_2^2 = O(d / N)$, we use Lemma 29 in \cite{durfee2017ell_1}, which states that with constant probability, the leverage scores of $SA$ satisfy $\tau_i(SA) = O(d / N)$ for all $i$.
Since leverage scores are invariant under change of basis (see, e.g., Section 2.4 in \cite{woodruff2014sketching}), we have for all $i$,
$$
 (SA)^i R^{-1} (SAR^{-1})^T SAR^{-1} ((SA)^i R^{-1})^T  = O(d / N).
$$
Since $(SAR^{-1})^T SAR^{-1} \approx_{O(1)} I$, we have
$$
\|(SA)^i R^{-1} \|_2^2 = O(d / N).
$$
Thus, $SAR^{-1}$ is approximately isotropic row-bounded.

Now we describe our protocol for $\ell_1$ regression in Figure \ref{alg:gd}. 
Our protocol first uses the preconditioning procedure in Figure \ref{alg:precondition} and then uses Nesterov's accelerated gradient descent \cite{nesterov1983method} to solve the $\ell_1$ regression problem 
$$\min_x \|SAR^{-1}x - Sb\|_1.$$
Furthermore, we invoke a smoothing reduction \textsf{JointAdaptRegSmooth} in \cite{allen2016optimal} to obtain better dependence on $\varepsilon$.
\begin{figure}[!htb]
\begin{framed}
\begin{enumerate}
\item Each server invokes the protocol in Figure \ref{alg:precondition} to obtain $R$, and calculates $R^{-1}$.
\item Each server invokes the protocol in Figure \ref{alg:l2alg1} to get an exact solution to the $\ell_2$ regression problem
\[
x_0 = \min_{x} \|SAx - Sb\|_2^2.
\]
\item Each server locally initializes $x = Rx_0$ which is the optimal solution to the the $\ell_2$ regression problem $$ \min_{x} \|SAR^{-1}x - Sb\|_2^2,$$ and then uses Nesterov's accelerated gradient descent \cite{nesterov1983method} and \textsf{JointAdaptRegSmooth} in \cite{allen2016optimal} to solve the $\ell_1$ regression problem
\[
\min_{x} \|SAR^{-1} x - Sb\|_1.
\]
\end{enumerate}
\end{framed}
\caption{Accelerated gradient descent for $\ell_1$ regression} \label{alg:gd}
\end{figure}

In order to implement Nesterov's accelerated gradient descent in the distributed setting, each server $P_i$ maintains the current solution $x$.
In each round, servers communicate to calculate the current gradient vector. 
Once all servers receive the gradient vector, they can update their current solution $x$ locally and proceed to the next round.
Analysis in \cite{allen2016optimal} (Example C.3) shows that 
when \textsf{JointAdaptRegSmooth} is applied to Nesterov's accelerated gradient descent,
after $O(G \sqrt{\Theta} / \delta )$ full gradient calculations, the algorithm will output a vector $x$ such that 
$$
\|Ax-b\|_1 \le \min_x\| Ax - b\|_1 + O(\delta),
$$
where we assume $\|SAR^{-1}x - Sb\|_1$ is $G$-Lipschitz continuous and the initial solution $x$ satisfies $\|x - x^{*}\|_2^2 \le \Theta$. 
Since $SAR^{-1}$ is approximately isotropic row-bounded and the initial vector $x$ is the optimal solution to the $\ell_2$ regression problem $ \min_{x} \|SAR^{-1}x - Sb\|_2^2$, Lemma 19 in \cite{durfee2017ell_1} shows that 
$\|x - x^*\|_2 \le \sqrt{d / n} \|Ax^* - b\|_1$.
Furthermore, Lemma 15 in \cite{durfee2017ell_1} shows that 
$
G \le \sqrt{nd}.
$
By setting $\delta = \varepsilon\|Ax^* - b\|_1$, we can calculate a $(1 + \varepsilon)$-approximate solution to the $\ell_1$ regression problem using $O(d / \varepsilon)$ full gradient calculations. 

Both Nesterov's accelerated gradient descent \cite{nesterov1983method} and \textsf{JointAdaptRegSmooth} in \cite{allen2016optimal} require an estimation (up to a constant factor) of $\|Ax^* - b\|_1$, which be can be obtained by using the algorithm in Section \ref{sec:l1alg1} to obtain an $O(1)$-approximate solution $\hat{x}$ and then calculating $\|A\hat{x} - b\|_1$.

It remains to design an protocol to calculate the gradient vector of the smoothed objective function for $\|SAR^{-1}x - Sb\|_1$, in the distributed setting.
We show this can be done with communication complexity $\wt{O}(sd^2L)$.
By using \textsf{JointAdaptRegSmooth} in \cite{allen2016optimal}, the new objective function will be 
\begin{equation}\label{equ:gradient}
\sum_{i=1}^n f_{\lambda_t} (\langle (SA)^i R^{-1}, x \rangle - (Sb)^i) + \frac{\sigma_t}{2} \|x - x_0\|_2^2
\end{equation}
where 
$$
f_{\lambda_t}(x) = 
\begin{cases}
\frac{x^2}{2 \lambda_t} & \text{if } |x| \le \lambda_t\\
|x| - \frac{\lambda_t}{2} & \text{otherwise}
\end{cases}, 
$$
for some $\lambda_t, \sigma_t \in \mathbb{R}$ and $x_0 \in \mathbb{R}^{d}$ known to each server.

Each server can locally calculate the gradient vector of the $\frac{\sigma_t}{2} \|x - x_0\|_2^2$ term, since $\sigma_t$ and $x_0$ is known to each server.
In the remaining part of this section, we focus on designing an algorithm for calculating the gradient vector of the first term in \eqref{equ:gradient}.

For the first term in (\ref{equ:gradient}), we have
\begin{equation}\label{equ:each_gradient}
\nabla  f_{\lambda_t} (\langle (SA)^i R^{-1}, x \rangle - (Sb)^i) = \begin{cases}
(\langle (SA)^i R^{-1}, x \rangle - (Sb)^i) / \lambda_t \cdot (SA)^{i}R^{-1} & \text{if } |\langle (SA)^i R^{-1}, x \rangle - (Sb)^i| \le \lambda_t\\
\sign(\langle (SA)^i R^{-1}, x \rangle - (Sb)^i) \cdot (SA)^{i} R^{-1} & \text{otherwise}
\end{cases}.
\end{equation}
Notice that we cannot directly let each server $P_i$ calculate the gradient vectors using \eqref{equ:each_gradient}, send the gradient vectors to the coordinator and calculate the summation, since the bit complexity of $R^{-1}$ can be unbounded.
Instead, we deal the two cases in (\ref{equ:gradient}) by using two different approaches. 

When $|\langle (SA)^i R^{-1}, x \rangle - (Sb)^i| > \lambda_t$, notice that although the bit complexity of $\sign(\langle (SA)^i R^{-1}, x \rangle - (Sb)^i) \cdot (SA)^{i} R^{-1}$ can be unbounded, 
all entries in the vector $\sign(\langle (SA)^i R^{-1}, x \rangle - (Sb)^i) \cdot (SA)^{i}$ have bit complexity at most $\widetilde{O}(L)$ and
 $R^{-1}$ is a matrix known to each server. 
Thus, for each server $P$, it sends 
\[
\sum \sign(\langle (SA)^i R^{-1}, x \rangle - (Sb)^i) \cdot (SA)^{i} 
\]
to the coordinator, 
for each row $(SA)^{i}$ which is stored on $P$ and satisfies 
$|\langle (SA)^i R^{-1}, x \rangle - (Sb)^i| > \lambda_t$.
After receiving from each server, the coordinator calculates
\[
\sum \sign(\langle (SA)^i R^{-1}, x \rangle - (Sb)^i) \cdot (SA)^{i} 
\]
for all rows $(SA)^{i}$ that satisfy $|\langle (SA)^i R^{-1}, x \rangle - (Sb)^i| > \lambda_t$,
and sends it to each server.
All servers can then recover the gradient vector.
The total communication for this case is at most $\widetilde{O}(sdL)$.

When $|\langle (SA)^i R^{-1}, x \rangle - (Sb)^i| \le \lambda_t$, 
\begin{align*}
& (\langle (SA)^i R^{-1}, x \rangle - (Sb)^i) / \lambda_t \cdot (SA)^{i} R^{-1} \\
= &  x^T \left(R^{-1}\right)^T (((SA)^i)^T (SA)^i) R^{-1} / \lambda_t - (Sb)^{i} (SA)^i R^{-1} / \lambda_t.
\end{align*}
Thus, for each server $P$, it sends
$$
\sum ((SA)^i)^T (SA)^i
$$
and
$$
\sum (Sb)^{i} (SA)^i
$$
to the coordinator, 
for each row $(SA)^{i}$ which is stored on $P$ and satisfies 
$|\langle (SA)^i R^{-1}, x \rangle - (Sb)^i| \le \lambda_t$.
After receiving from each server, the coordinator calculates
$$
\sum ((SA)^i)^T (SA)^i
$$
and
$$
\sum (Sb)^{i} (SA)^i
$$
for all rows $(SA)^{i}$ that satisfy $|\langle (SA)^i R^{-1}, x \rangle - (Sb)^i| \le \lambda_t$,
and sends it to each server.
All servers can then recover the gradient vector.
The total communication for this case is at most $\widetilde{O}(sd^2L)$.

Thus, the total communication complexity of the protocol in Figure \ref{alg:gd} is $\widetilde{O}(sd^3L / \varepsilon)$.
\begin{theorem}\label{thm:gd}
The protocol described in Figure \ref{alg:gd} is a randomized protocol which returns a $(1 + \varepsilon)$-approximate solution to the $\ell_1$ regression problem with constant probability, and the communication complexity is $\widetilde{O}(sd^3L / \varepsilon)$ in the coordinator model. 
\end{theorem}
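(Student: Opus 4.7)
The plan is to verify correctness and communication bounds by combining the three ingredients that have already been set up in the preceding paragraphs: the $\ell_1$ Lewis weights sampling step, the QR-based preconditioning, and the distributed implementation of Nesterov's accelerated gradient descent with the smoothing reduction of \cite{allen2016optimal}. I would structure the argument as four separable pieces.

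First, I would verify correctness of the reduction to the preconditioned problem. By Theorem~\ref{thm:l1chernoff} applied to a $(1+\varepsilon)$ $\ell_1$ subspace embedding obtained from $O(d\log n/\varepsilon^2)$ Lewis weight samples of $[A~b]$, any $(1+\varepsilon)$-approximate solution $\hat x$ of $\min_x \|SAR^{-1} x - Sb\|_1$ yields a $(1+O(\varepsilon))$-approximate solution $R^{-1}\hat x$ of $\min_x \|Ax-b\|_1$, so it suffices to approximately minimize the preconditioned problem. The correctness of the preconditioning, i.e., that $SAR^{-1}$ is approximately isotropic row-bounded in the sense of Definition~\ref{irb}, was already established through the QR-decomposition identity $\widetilde{SA}R^{-1} = Q$ and the leverage-score bound of \cite[Lemma 29]{durfee2017ell_1}.

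Second, I would invoke the iteration-count analysis of accelerated gradient descent combined with the \textsf{JointAdaptRegSmooth} reduction. Using \cite[Lemma 15]{durfee2017ell_1} to bound the Lipschitz constant $G \le \sqrt{nd}$, \cite[Lemma 19]{durfee2017ell_1} to bound the starting radius $\|x-x^*\|_2 \le \sqrt{d/n}\|Ax^*-b\|_1$, and \cite[Example C.3]{allen2016optimal} for the $O(G\sqrt{\Theta}/\delta)$ full-gradient convergence rate, setting $\delta = \varepsilon\|Ax^*-b\|_1$ yields $O(d/\varepsilon)$ full-gradient computations. The constant-factor estimate of $\|Ax^*-b\|_1$ needed by the smoothing reduction is obtained once by running the protocol from Section~\ref{sec:l1alg1} with constant $\varepsilon$, which costs $\widetilde{O}(sd^2L)$ communication.

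Third, I would bound the communication of each gradient round. Each server holds its own rows of $SA$ (with entries of bit complexity $\widetilde{O}(L+\log n)$) together with the globally-known $R$ and current iterate $x$. Splitting the smoothed gradient according to whether the residual $|\langle (SA)^i R^{-1},x\rangle - (Sb)^i|$ exceeds $\lambda_t$ or not, the ``large residual'' contribution per server reduces to sending a single $d$-dimensional vector $\sum \sign(\cdot)(SA)^i$ ($\widetilde{O}(dL)$ bits per server, $\widetilde{O}(sdL)$ total), and the ``small residual'' contribution reduces to sending two sums $\sum ((SA)^i)^T (SA)^i$ and $\sum (Sb)^i (SA)^i$ so the coordinator can form the required quadratic and linear combinations in $x$ via $R^{-1}$ ($\widetilde{O}(sd^2L)$ total). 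The coordinator then broadcasts the aggregated quantities, the servers locally reconstruct the full gradient using their copy of $R^{-1}$, and update $x$. Thus each iteration costs $\widetilde{O}(sd^2 L)$.

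Finally, I would multiply: $O(d/\varepsilon)$ iterations each at $\widetilde{O}(sd^2L)$, plus $\widetilde{O}(sd^2 L)$ for the one-shot preconditioning (Figure~\ref{alg:precondition}) by Lemmas~\ref{lem:leverage} and \ref{lem:lewis} and $\widetilde{O}(sd^2L)$ for the initial $\ell_2$ solve via Theorem~\ref{thm:l2alg1}, gives the claimed $\widetilde{O}(sd^3L/\varepsilon)$. The main obstacle I expect is the gradient-communication step: the matrix $R^{-1}$ can have unbounded bit complexity, so naively sending $(SA)^i R^{-1}$ entry-wise fails. The resolution is exactly the split above, which decouples the unbounded-precision factor $R^{-1}$ (kept locally by every server) from the low-precision sums of rank-one terms in $(SA)^i$ that need to travel over the channel; making this rigorous, and verifying that the bit complexity of those aggregated sums remains $\widetilde{O}(L)$ across $O(d/\varepsilon)$ rounds, is the only delicate bookkeeping in the proof.
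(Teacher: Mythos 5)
Your proposal matches the paper's proof essentially step for step: the same reduction to the preconditioned problem $\min_x\|SAR^{-1}x-Sb\|_1$ via Lewis-weight sampling and the isotropic row-bounded property, the same $O(d/\varepsilon)$ iteration bound from Lemmas 15 and 19 of \cite{durfee2017ell_1} together with \textsf{JointAdaptRegSmooth} with $\delta=\varepsilon\|Ax^*-b\|_1$ (including the one-shot $O(1)$-approximate estimate of the optimum), and the same large/small-residual split of the smoothed gradient so that only $\widetilde{O}(L)$-bit aggregates $\sum \sign(\cdot)(SA)^i$, $\sum((SA)^i)^T(SA)^i$, and $\sum(Sb)^i(SA)^i$ are communicated while $R^{-1}$ stays local, giving $\widetilde{O}(sd^2L)$ per round and $\widetilde{O}(sd^3L/\varepsilon)$ overall. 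No substantive differences from the paper's argument.
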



\section{Communication Protocols for $\ell_p$ Regression}\label{sec:reg_lp}
In this section, we design distributed protocols for solving the $\ell_p$ regression problem, including $p = \infty$.
\subsection{Communication Protocols for $\ell_{\infty}$ Regression} \label{sec:l_infty}
Any $\ell_{\infty}$ regression instance $\min_x \|Ax-b\|_{\infty}$ can be formulated as the following linear program, 
\begin{align*}
\textnormal{minimize } & v &&  \\
\textnormal{subject to } 
&  \langle A^i, x \rangle - b^i \le v,\\ 
&  \langle A^i, x \rangle - b^i \ge -v,\\
\end{align*}
which has $2n$ constraints and $d + 1$ variables.
Thus, any linear programming protocol implies a protocol for solving the $\ell_{\infty}$ regression problem, with the same communication complexity.
Using the linear program solvers in Section \ref{sec:clarkson} and Section \ref{sec:cog}, we have the following theorem.
\begin{theorem}\label{thm:l_infty}
$\ell_{\infty}$ regression can be solved deterministically and exactly with communication complexity $\widetilde{O}(sd^3L)$ in the coordinator model, and randomly and exactly with communication complexity $\widetilde{O}(\min\{sd + d^4L, sd^3L\})$ in the blackboard model.
\end{theorem}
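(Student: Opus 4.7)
The plan is to reduce $\ell_\infty$ regression directly to a distributed linear program and then invoke the protocols already developed in Section \ref{sec:clarkson} and Section \ref{sec:cog}. Given the LP formulation shown just before the theorem statement, with variables $(x, v) \in \R^{d+1}$ and constraints $\langle A^i, x\rangle - v \le b^i$ and $-\langle A^i, x\rangle - v \le -b^i$ for each $i \in [n]$, the optimal value of $v$ equals $\min_x \|Ax - b\|_\infty$, and from an optimal $(x^*, v^*)$ the coordinator directly outputs $x^*$.

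First I would verify that the distributed structure is preserved. Each server $P_i$ holds rows $A^{(i)}, b^{(i)}$, which correspond naturally to a block of $2 n_i$ LP constraints, so the input partition required by the LP protocols is automatic. The objective vector $c = e_{d+1} \in \R^{d+1}$ is a fixed single bit and is known to all servers without communication. All coefficients in the new constraints are either entries of $A$, entries of $b$, or $\pm 1$, so the per-coefficient bit complexity stays at $O(L)$, and the dimension grows from $d$ to $d+1 = \Theta(d)$.

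Next I would plug the resulting LP into the two solvers. For the coordinator model I invoke the distributed Clarkson-based algorithm from Section \ref{sec:clarkson}, which solves a distributed LP with $L$-bit coefficients in dimension $d+1$ using $\widetilde{O}(s(d+1)^3 L) = \widetilde{O}(sd^3 L)$ bits; since that algorithm is deterministic (Clarkson's method is deterministic modulo its internal sampling, but in the coordinator model its deterministic variant yields the same $\widetilde{O}$ bound) this gives the claimed deterministic upper bound. For the blackboard model I apply both the distributed cutting-plane / center-of-gravity protocol from Section \ref{sec:cog}, giving $\widetilde{O}(sd + d^4 L)$, and the Clarkson-based protocol, giving $\widetilde{O}(sd^3 L)$; taking the minimum yields $\widetilde{O}(\min\{sd + d^4 L,\, sd^3 L\})$ as stated.

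The only substantive thing to check beyond this black-box reduction is that the LP solvers return an \emph{exact} optimum rather than an additive approximation, since $\ell_\infty$ regression here is stated exactly. This is not an obstacle: both solvers in Sections \ref{sec:clarkson} and \ref{sec:cog} solve LPs with $L$-bit integer coefficients exactly (the center-of-gravity algorithm reaches exact optimality once the remaining feasible region has volume below $2^{-\Theta(dL)}$, at which point rounding to the nearest vertex of the arrangement recovers an exact optimum, as in the standard bit-complexity analysis of LP). Thus the reduction is entirely lossless, and no additional arguments are required beyond invoking the preceding protocols with $d$ replaced by $d+1$.
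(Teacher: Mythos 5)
Your reduction is exactly the paper's: write $\min_x\|Ax-b\|_\infty$ as an LP with $2n$ constraints and $d+1$ variables, note that the row partition, the $O(L)$ coefficient size, and the public objective $e_{d+1}$ are all preserved, and invoke the distributed LP solvers of Sections \ref{sec:clarkson} and \ref{sec:cog}. However, you have swapped which solver yields which bound, and one of the resulting claims is a genuine gap. The deterministic $\widetilde{O}(sd^3L)$ bound in the coordinator model comes from the center-of-gravity cutting-plane protocol of Section \ref{sec:cog}, which is deterministic and costs $O(sd^3L\log^2 d)$ in \emph{both} models; it does not come from Clarkson. Clarkson's algorithm is inherently randomized (it samples $9d^2$ constraints per round, and its $O(d\log n)$ round bound holds only in expectation), its coordinator cost is $\widetilde{O}(sd^3L + d^4L)$ rather than $\widetilde{O}(sd^3L)$, and the paper provides no deterministic variant; your parenthetical assertion that "its deterministic variant yields the same $\widetilde{O}$ bound" is an unsupported claim that would itself require a nontrivial derandomization argument, so as written the deterministic half of the theorem is not established by your proof.

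Symmetrically, in the blackboard model the $\widetilde{O}(sd + d^4L)$ term is the blackboard cost of Clarkson's algorithm, while the $\widetilde{O}(sd^3L)$ term is the (deterministic, hence also randomized) center-of-gravity bound; you attributed each to the other. Once these attributions are corrected, the remainder of your argument --- the block structure of the constraints across servers, dimension $d+1=\Theta(d)$, bit complexity $O(L)$ per coefficient, and exactness of both LP solvers via the standard bit-complexity/diophantine-rounding analysis --- matches the paper's one-line proof, and no further work is needed.
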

\subsection{Communication Protocols for $\ell_p$ Regression When $p > 2$}\label{sec:lpregression}
In this section, we introduce an approach that reduces $(1 + \varepsilon)$-approximate $\ell_p$ regression to linear programs with $\widetilde{O}(d / \varepsilon^2)$ variables.
Our main idea is to use the max-stability of exponential random variables \cite{andoni2017high} to embed $\ell_p$ into $\ell_{\infty}$.
Such idea was previously used to construct subspace embeddings for the $\ell_p$ norm~\cite{wz13b}.
However, since our goal here is to solve linear regression instead of providing an embedding for the whole subspace, we can achieve a much better approximation ratio than previous work~\cite{wz13b}.


\begin{theorem}\label{thm:p_red}
For any matrix $A \in \mathbb{R}^{n \times d}$ and constant $p > 2$, let $D^{(1)}, D^{(2)}, \ldots, D^{(R)}$ be $n \times n$ random diagonal matrices, whose diagonal entries are i.i.d. random variables with the same distribution as $E^{-1/p}$, where $E$ is an exponential random variable. 
If $R = O(d \log(d / \varepsilon) / \varepsilon^2)$, then with constant probability, the following holds:
\begin{enumerate}
\item \label{item:dilation}
\[
\sum_{i=1}^R \|D^{(i)} (Ax^* - b)\|_{\infty} \le (1 + \varepsilon) C_p R \|Ax^* - b\|_p;
\]
\item \label{item:contraction}For all $x \in \mathbb{R}^d$,
\[
\sum_{i=1}^R \|D^{(i)} (Ax - b)\|_{\infty} \ge (1 - \varepsilon) C_p R \|Ax - b\|_p.
\]
\end{enumerate}
Here $x^* \in \mathbb{R}^d$ is the optimal solution to the $\ell_p$ regression problem $\min_x \|Ax - b\|_p$ and $C_p$ is a constant which is the expectation of $E^{-1 / p}$ for an exponential random variable $E$.
\end{theorem}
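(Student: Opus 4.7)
The plan is to exploit the max-stability of $E^{-1/p}$: for any fixed $y \in \Rbb^n$, the random variable $\|Dy\|_{\infty} = \max_j |y_j|/E_j^{1/p}$ satisfies
$$\Pr[\|Dy\|_{\infty} \le t] = \prod_j \exp(-(|y_j|/t)^p) = \exp(-\|y\|_p^p/t^p),$$
which is exactly the distribution of $\|y\|_p \cdot E^{-1/p}$ for a single exponential $E$. Consequently $\sum_i \|D^{(i)}y\|_{\infty}$ has the same distribution as $\|y\|_p \cdot \sum_i Z_i$ where $Z_i$ are i.i.d.\ copies of $E^{-1/p}$ with $\Ebb[Z_i] = C_p$, and crucially $\Ebb[Z_i^2] = \Gamma(1-2/p) < \infty$ because $p > 2$. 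This places the problem in a regime amenable to Bernstein-type concentration after a mild truncation.

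Item~\ref{item:dilation} is the easier direction: since $y^* = Ax^* - b$ is a single vector independent of $\{D^{(i)}\}$, it suffices to show $\sum_i Z_i \le (1+\epsilon)R C_p$ with constant probability, which is immediate from Chebyshev's inequality once $R = \Omega(1/\epsilon^2)$, using the finite second moment.

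For item~\ref{item:contraction} I would reduce to a uniform statement over the $(d{+}1)$-dimensional linear subspace $V = \spa([A~ {-}b]) \subseteq \Rbb^n$ in which every vector $Ax-b$ lies. By positive homogeneity it suffices to establish the lower bound for all $y$ on the $\ell_p$-unit sphere $S_V = \{y \in V : \|y\|_p = 1\}$. The argument is a standard net-plus-extension: construct a $\gamma$-net $\mathcal{N} \subseteq S_V$ with $|\mathcal{N}| \le (3/\gamma)^{d+1}$; prove pointwise concentration on each $y \in \mathcal{N}$ via a truncated Bernstein inequality, where each $Z_i$ is truncated at a level $T$ chosen so that $R|\mathcal{N}|T^{-p}$ is small (so no sample ever exceeds $T$ across the union) while still giving a failure probability of order $\exp(-c R\epsilon^2)$ per net point; and take a union bound which is absorbed once $R = \Theta(d\log(d/\epsilon)/\epsilon^2)$. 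To pass from $\mathcal{N}$ to all of $S_V$, observe that $f(y) := \sum_i \|D^{(i)}y\|_{\infty}$ is a seminorm (subadditive and positively $1$-homogeneous), so any $y \in S_V$ decomposes as $y_0 + \Delta$ with $y_0 \in \mathcal{N}$ and $\|\Delta\|_p \le \gamma$, giving $f(y) \ge f(y_0) - f(\Delta)$. A self-bounding bootstrap using the two-sided concentration at every $y_0 \in \mathcal{N}$ yields $\sup_{\|y\|_p \le 1,\, y \in V} f(y) \le (1+O(\gamma))C_p R$, so taking $\gamma = \Theta(\epsilon)$ pushes the error $f(\Delta)$ below $O(\epsilon) C_p R$ and delivers $f(y) \ge (1-O(\epsilon))C_p R$ on all of $S_V$; rescaling $\epsilon$ gives the stated bound.

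The main obstacle I expect is calibrating the truncation level $T$ against the heavy-tail price: $E^{-1/p}$ has only finite $q$-th moment for $q < p$, so the Bernstein deviation (which degrades linearly in $T$) and the tail union bound (which improves as $T^{-p}$) compete in the choice of $T$, and matching the claimed $R = O(d\log(d/\epsilon)/\epsilon^2)$ scaling will likely require a Fuk--Nagaev-style truncated moment bound rather than vanilla sub-Gaussian Bernstein, or alternatively a two-scale chaining that exploits the polynomial rather than exponential tail of $Z_i$. The net-to-sphere extension and the seminorm bootstrap are, by comparison, routine once the pointwise concentration is in hand.
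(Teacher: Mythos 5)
Your skeleton matches the paper's: max-stability reduces everything to sums of i.i.d. copies of $E^{-1/p}$, the dilation bound is a single-vector concentration statement (your Chebyshev argument suffices; the paper uses a truncation-plus-Bernstein bound, but either way this part is fine), and the contraction is handled by a net on the unit $\ell_p$-sphere of $\spa([A~b])$ plus the seminorm inequality $f(y)\ge f(y_0)-f(\Delta)$. The gap is in how you control the two probabilistic ingredients for the contraction. First, your truncation calibration is off: demanding that no sample exceed $T$ across all $R|\mathcal{N}|$ pairs forces $T\gtrsim (R|\mathcal{N}|)^{1/p}=(1/\eps)^{\Theta(d)}$, and Bernstein with summands bounded by such a $T$ gives failure probability roughly $\exp(-c\eps R/T)$, nowhere near the $\exp(-cR\eps^2)$ you need at $R=O(d\log(d/\eps)/\eps^2)$. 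This particular issue is fixable: for a lower tail you never need the no-exceedance event, since $\sum_i\min(Z_i,T)\le\sum_i Z_i$, so truncating at $T=O(1/\eps)$ works, and the paper avoids truncation entirely by invoking Maurer's lower-tail inequality, which needs only $\E[Z_i^2]<\infty$ and directly gives $\exp(-\Theta(\eps^2R))$ per net point.

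Second, and more seriously, your self-bounding bootstrap needs an upper-tail bound $f(y_0)\le(1+O(\gamma))C_pR$ simultaneously at all $(1/\gamma)^{\Theta(d)}$ net points, and this is impossible: the upper tail of $\sum_i Z_i$ is heavy, since already the event that a single $Z_i$ exceeds $\eps C_p R$ has probability about $R(\eps R)^{-p}$, which is only polynomially small, so no union bound over an exponentially large net can hold. The paper sidesteps this by never asking for a tight uniform upper bound: conditioning on the constant-probability event that $\|D^{(i)}U_j\|_\infty\le C(Rd)^{1/p}$ for every $i$ and every column $U_j$ of an Auerbach basis of $\spa([A~b])$, one gets the crude bound $\sup_{\|y\|_p=1}f(y)\le\poly(d/\eps)$, and then a $\poly(\eps/d)$-fine net together with the geometric decomposition $y=\sum_{j\ge0}y^j$ with $\|y^j\|_p\le\poly(\eps/d)^j$ makes the correction terms $\sum_{j\ge1}f(y^j)\le\poly(\eps/d)\cdot\poly(d/\eps)\le\eps C_pR$. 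Since the net size enters only logarithmically in $R$, the much finer net costs nothing. Replacing your $\gamma=\Theta(\eps)$ net and tight two-sided bootstrap with this crude-upper-bound-plus-fine-net scheme (and Maurer for the lower tail) closes the gap; no Fuk--Nagaev or chaining machinery is needed.
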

The proof of Theorem \ref{thm:p_red} can be found in Section~\ref{sec:proof_p_red}.

Now we prove with constant probability, the optimal solution to the optimization problem
$$
\hat{x} = \argmin_x \sum_{i=1}^R \|D^{(i)} (Ax - b)\|_{\infty} 
$$
satisfies 
$$
\|A\hat{x} - b\|_p \le (1 + 3\varepsilon) \|Ax^* - b\|_p,
$$
where $x^* \in \mathbb{R}^d$ is the optimal solution to the $\ell_p$ regression problem $\min_x \|Ax - b\|_p$.

Notice that with constant probability, 
\begin{align*}
&\|A\hat{x} - b\|_p \le \frac{\sum_{i=1}^R \|D^{(i)} (A\hat{x} - b)\|_{\infty}}{ (1 - \varepsilon) C_p R} \le  \frac{\sum_{i=1}^R \|D^{(i)} (Ax^* - b)\|_{\infty}}{ (1 - \varepsilon) C_p R} \\
\le& \frac{1 + \varepsilon}{1 - \varepsilon} \|Ax^* - b\|_p   \le (1 + 3\varepsilon) \|Ax^* - b\|_p.
\end{align*}

Thus, we have reduced $(1 + \varepsilon)$-approxiamte $\ell_p$ regression to 
\begin{equation} \label{equ:red_p}
\argmin_x \sum_{i=1}^R \|D^{(i)} (Ax - b)\|_{\infty} .
\end{equation}

The optimzation problem in \eqref{equ:red_p} can be written as a linear program with $R + d = \widetilde{O}(d / \varepsilon^2)$ variables.
For each $i \in [R]$, we use $v_i$ to represent the value of $\|D^{(i)} (Ax - b)\|_{\infty}$ as in Section \ref{sec:l_infty}, and the goal is to minimize
$
\sum_{i = 1}^R v_i.
$
Furthermore this reduction can be easily implemented in the distributed setting since each server can independently generate random variables in $D^{(i)}$ associated with its own input rows in $[A~b]$.
We can round each entry in $D^{(i)}$ to its nearest integer mutiple of $\poly(\varepsilon / d)$, which is enough for the correctness of Theorem \ref{thm:p_red}, but increases the bit complexity of each entry by at most an $O(\log(d / \varepsilon))$ factor. 

Using the linear program solvers in Section \ref{sec:clarkson} and Section \ref{sec:cog}, we have the following theorem.
\begin{theorem}\label{thm:l_p}
$(1 + \varepsilon)$-approximate $\ell_{p}$ regression can be solved by a randomized protocol with communication complexity $\widetilde{O}(sd^3L / \varepsilon^6)$ in the coordinator model, or  by a randomized protocol with communication complexity  $\widetilde{O}(\min\{sd^3L / \varepsilon^6, sd / \varepsilon^2 + d^4L / \varepsilon^8\})$ in the blackboard model. 
\end{theorem}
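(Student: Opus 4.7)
The plan is to combine the exponential-embedding reduction of Theorem~\ref{thm:p_red} with the linear programming protocols from Section~\ref{sec:clarkson} and Section~\ref{sec:cog}, carefully tracking the blow-up in dimension and in bit complexity.

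First, I would apply Theorem~\ref{thm:p_red} with $R = O(d\log(d/\varepsilon)/\varepsilon^2)$ diagonal matrices $D^{(1)},\ldots,D^{(R)}$. As argued immediately after that theorem, with constant probability any minimizer $\hat x$ of
\[
F(x) \;=\; \sum_{i=1}^R \|D^{(i)}(Ax-b)\|_\infty
\]
satisfies $\|A\hat x - b\|_p \le (1+3\varepsilon)\|Ax^*-b\|_p$, so after rescaling $\varepsilon$ by a constant it suffices to minimize $F(x)$ exactly. Each $D^{(i)}$ is diagonal, and its $j$-th diagonal entry depends only on row $j$ of $[A~b]$; hence server $P_i$ can generate locally (using private randomness, since $\hat x$ is only required to be a valid solution with constant probability) the diagonal entries corresponding to its own rows, and no communication is needed to set up the embedding.

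Next, I would write $\min_x F(x)$ as a linear program with $R + d = \widetilde O(d/\varepsilon^2)$ variables: introduce an auxiliary variable $v_i$ for each $i\in[R]$ playing the role of $\|D^{(i)}(Ax-b)\|_\infty$, enforce $-v_i \le D^{(i)}_{jj}(\langle A^j,x\rangle - b^j) \le v_i$ for each row $j$ (so $2nR$ constraints total, distributed in the natural way across servers), and minimize $\sum_i v_i$. To control bit complexity I round each diagonal entry of each $D^{(i)}$ to the nearest integer multiple of $\operatorname{poly}(\varepsilon/d)$; this perturbs $F$ multiplicatively by $1\pm\varepsilon$ (which can be absorbed by Theorem~\ref{thm:p_red}) while keeping all LP coefficients at bit length $\widetilde O(L)$.

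Finally, I would feed this linear program of dimension $d' = \widetilde O(d/\varepsilon^2)$ and bit complexity $L' = \widetilde O(L)$ into the LP solvers of Sections~\ref{sec:clarkson} and~\ref{sec:cog}. The Clarkson-based coordinator protocol gives $\widetilde O(s(d')^3 L') = \widetilde O(sd^3 L/\varepsilon^6)$ bits, yielding the coordinator bound. In the blackboard model, combining the Clarkson bound $\widetilde O(sd^3L/\varepsilon^6)$ with the center-of-gravity bound $\widetilde O(sd' + (d')^4 L') = \widetilde O(sd/\varepsilon^2 + d^4L/\varepsilon^8)$ and taking the minimum gives the stated complexity.

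The main subtlety, as opposed to a routine plug-in, is the bit complexity of the random embedding: exponential variates are continuous, and a naive encoding is infinite. The rounding argument in the previous paragraph is what makes the reduction implementable with bounded communication, and one must verify that Theorem~\ref{thm:p_red}'s dilation/contraction guarantees survive a $\operatorname{poly}(\varepsilon/d)$-precision perturbation; this is straightforward because both $\|Ax^*-b\|_p$ and the $\|D^{(i)}(Ax-b)\|_\infty$ terms are $L$-Lipschitz in the diagonal entries within the relevant range, so a relative error of $\operatorname{poly}(\varepsilon/d)$ only rescales $F$ by $1\pm\varepsilon$.
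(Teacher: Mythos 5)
Your proposal follows essentially the same route as the paper: apply Theorem~\ref{thm:p_red}, let each server generate the diagonal entries of the $D^{(i)}$ locally for its own rows, round them to integer multiples of $\poly(\varepsilon/d)$ to bound the bit complexity, write the resulting problem as a linear program with $R+d=\widetilde{O}(d/\varepsilon^2)$ variables, and invoke the distributed LP protocols of Sections~\ref{sec:clarkson} and~\ref{sec:cog}. The only correction is your attribution of the two bounds: Clarkson's protocol costs $\widetilde{O}(sd^3L+d^4L)$ in the coordinator model and $\widetilde{O}(sd+d^4L)$ in the blackboard model, while the center-of-gravity protocol costs $\widetilde{O}(sd^3L)$ in both, so the coordinator bound $\widetilde{O}(sd^3L/\varepsilon^6)$ comes from the center-of-gravity method (Clarkson would incur an extra additive $\widetilde{O}(d^4L/\varepsilon^8)$ term), and in the blackboard model the $\widetilde{O}(sd/\varepsilon^2+d^4L/\varepsilon^8)$ term is Clarkson's while the $\widetilde{O}(sd^3L/\varepsilon^6)$ term is the center-of-gravity's---the reverse of what you wrote; with that swap the argument is exactly the paper's.
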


\subsection{Proof of Theorem \ref{thm:p_red}} \label{sec:proof_p_red}
We need the following Bernstein-type lower tail inequality which is due to Maurer \cite{maurer2003bound}.
\begin{lemma}[\cite{maurer2003bound}]\label{lem:maurer}
Suppose $X_1, X_2, \ldots, X_n$ are independent positive random variables that satisfy $\E[X_i^2] < \infty$.
Let $X = \sum_{i=1}^n X_i$. For any $t > 0$ we have
$$
\Pr[X \le \E[X] - t] \le \exp\left(-\frac{t^2}{2 \sum_{i=1}^n \E[X_i^2]}\right).
$$
\end{lemma}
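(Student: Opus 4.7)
The plan is to prove this one-sided Bernstein inequality by the standard Cramér–Chernoff (exponential moment) method, with the key observation that positivity of the $X_i$ lets us obtain a clean one-sided quadratic bound on the log moment generating function of $-X_i$, without requiring any uniform boundedness assumption. Specifically, for any $\lambda > 0$, Markov's inequality applied to $e^{-\lambda X}$ gives
\[
\Pr[X \le \E[X] - t] \;=\; \Pr\bigl[e^{-\lambda X} \ge e^{-\lambda(\E[X]-t)}\bigr] \;\le\; e^{\lambda(\E[X]-t)} \prod_{i=1}^n \E\bigl[e^{-\lambda X_i}\bigr],
\]
using independence of the $X_i$ in the last step.

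The main estimate I would then prove is the elementary inequality $e^{-y} \le 1 - y + y^2/2$ for all $y \ge 0$. This follows by setting $f(y) = 1 - y + y^2/2 - e^{-y}$ and checking $f(0) = 0$, $f'(0) = 0$, $f''(y) = 1 - e^{-y} \ge 0$ for $y \ge 0$. Applying this with $y = \lambda X_i \ge 0$ (here is where positivity of $X_i$ is used in an essential way), taking expectations, and then using $1 + z \le e^z$, one obtains
\[
\E\bigl[e^{-\lambda X_i}\bigr] \;\le\; 1 - \lambda\,\E[X_i] + \tfrac{\lambda^2}{2}\E[X_i^2] \;\le\; \exp\!\Bigl(-\lambda\,\E[X_i] + \tfrac{\lambda^2}{2}\E[X_i^2]\Bigr).
\]
Note that finiteness of $\E[X_i^2]$ is exactly what is needed to make this meaningful.

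Plugging this back into the Chernoff bound and using $\sum_i \E[X_i] = \E[X]$, the factor $e^{\lambda \E[X]}$ cancels against $\prod_i e^{-\lambda \E[X_i]}$, yielding
\[
\Pr[X \le \E[X] - t] \;\le\; \exp\!\Bigl(-\lambda t + \tfrac{\lambda^2}{2}\sum_{i=1}^n \E[X_i^2]\Bigr).
\]
The final step is to optimize over $\lambda > 0$; the right-hand side is a quadratic in $\lambda$ minimized at $\lambda^\star = t / \sum_{i=1}^n \E[X_i^2]$ (which is positive since $t > 0$), giving the claimed bound $\exp\bigl(-t^2/(2\sum_i \E[X_i^2])\bigr)$.

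I do not expect any serious obstacle: the only subtlety is to be clear about why positivity of the $X_i$ matters. In the standard two-sided Bernstein proof one needs either boundedness or a moment condition to control $\E[e^{\lambda X_i}]$, but here we only need $\E[e^{-\lambda X_i}]$, and for $X_i \ge 0$ the function $y \mapsto e^{-y}$ is bounded above by the quadratic $1 - y + y^2/2$ on the entire positive half-line, which is what avoids any extra hypotheses beyond $\E[X_i^2] < \infty$.
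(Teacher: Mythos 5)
Your proof is correct, and there is nothing to compare it against in the paper itself: the lemma is imported from \cite{maurer2003bound} without proof. Your argument — Markov applied to $e^{-\lambda X}$, the pointwise bound $e^{-y}\le 1-y+\tfrac{y^2}{2}$ for $y\ge 0$ (which is where nonnegativity enters and why no boundedness hypothesis is needed), then $1+z\le e^z$ and optimization at $\lambda^{\star}=t/\sum_{i}\E[X_i^2]$ — is exactly the standard derivation, and indeed essentially Maurer's original one.
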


We use the standard $\varepsilon$-net construction of a subspace  in \cite{bourgain1989approximation}.
\begin{definition}
For any $p \ge 1$, for a given $A \in \mathbb{R}^{n \times d}$, let $B = \{Ax \mid x \in \mathbb{R}^d, \|Ax\|_p = 1\}$. We say $\mathcal{N} \subseteq B$ is an $\varepsilon$-net of $B$ if for any $y \in B$, there exists a $\hat{y} \in \mathcal{N}$ such that $\|y - \hat{y}\|_p \le \varepsilon$.
\end{definition}
\begin{lemma}[\cite{bourgain1989approximation}]\label{lem:eps_net}
For a given $A \in \mathbb{R}^{n \times d}$, there exists an $\varepsilon$-net $\mathcal{N} \subseteq B = \{Ax \mid x \in \mathbb{R}^d, \|Ax\|_p = 1\}$ with size $|\mathcal{N}| \le (3 / \varepsilon)^d$.
\end{lemma}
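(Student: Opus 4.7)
The idea is to apply the standard volume-packing (maximal separated set) argument inside the $d$-dimensional subspace $V := \spa(A) \subseteq \mathbb{R}^n$, equipped with the inherited $\ell_p$-norm $\|\cdot\|_V$. If $\rank(A) < d$, the same argument carried out in the true dimension only shrinks the bound, so we may assume $\rank(A) = d$. Under this identification $B$ is the unit $\ell_p$-sphere of the finite-dimensional normed space $(V, \|\cdot\|_V)$.

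First I would take $\mathcal{N} \subseteq B$ to be a \emph{maximal} $\varepsilon$-separated subset of $B$: a collection of points on the unit sphere with pairwise $\ell_p$-distance strictly greater than $\varepsilon$, which cannot be extended while preserving this property. By maximality, for every $y \in B$ there is some $\hat{y} \in \mathcal{N}$ with $\|y - \hat{y}\|_p \le \varepsilon$ (otherwise $y$ could be added to $\mathcal{N}$). Hence $\mathcal{N}$ is automatically an $\varepsilon$-net, and it only remains to bound $|\mathcal{N}|$.

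Second, I would bound $|\mathcal{N}|$ by a volume comparison inside $V$. Let $U = \{z \in V : \|z\|_V \le 1\}$ be the unit ball of $(V, \|\cdot\|_V)$, and fix Lebesgue measure on $V$ via any orthonormal basis. By $\varepsilon$-separation, the open balls $y + (\varepsilon/2)U$ for $y \in \mathcal{N}$ are pairwise disjoint; since each $y \in \mathcal{N}$ satisfies $\|y\|_V = 1$, each such ball is contained in $(1 + \varepsilon/2)U$. Comparing $d$-dimensional volumes, which scale as the $d$-th power of the dilation factor, gives
\[
|\mathcal{N}| \cdot (\varepsilon/2)^d \cdot \vol(U) \;\le\; (1 + \varepsilon/2)^d \cdot \vol(U),
\]
so $|\mathcal{N}| \le (1 + 2/\varepsilon)^d \le (3/\varepsilon)^d$ in the relevant regime $\varepsilon \le 1$ (for $\varepsilon > 1$ a single point already constitutes an $\varepsilon$-net, so the bound is trivial).

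There is no real obstacle here; the only point that requires a moment of care is that the volume argument must be performed in the $d$-dimensional subspace $V$, where $U$ has positive finite Lebesgue measure, rather than in the ambient $\mathbb{R}^n$, where $B$ has measure zero. Everything else is a direct consequence of the norm structure and the scaling of Lebesgue measure.
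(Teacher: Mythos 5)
Your argument is correct: the maximal $\varepsilon$-separated set plus volume-comparison argument in the $d$-dimensional subspace $\spa(A)$ is the standard proof of this bound, and your execution of it (disjointness of the $(\varepsilon/2)$-balls, containment in $(1+\varepsilon/2)U$, and the scaling of Lebesgue measure restricted to the subspace) is sound. The paper itself does not prove this lemma; it simply cites \cite{bourgain1989approximation}, so there is no in-paper argument to compare against, and your proof is the natural self-contained one. One small quibble: your closing remark that for $\varepsilon > 1$ a single point is an $\varepsilon$-net is not quite right, since the unit sphere has diameter $2$ in the norm; however, this edge case is irrelevant here because the lemma is only invoked with $\varepsilon = \poly(\varepsilon/d) \ll 1$, and in the regime $\varepsilon \le 1$ your bound $(1+2/\varepsilon)^d \le (3/\varepsilon)^d$ holds as stated.
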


\begin{lemma}[Auerbach basis \cite{auerbach1930area}]
For any matrix $A \in \mathbb{R}^{n \times d}$ and $p \ge 1$, there exists a basis matrix $U$ of the column space of $A$, such that $\|U_i\|_p = 1$ for all $i \in [d]$, and for any vector $x \in \mathbb{R}^d$,
$$
\|x\|_{\infty} \le \left\| Ux\right\|_p.
$$
\end{lemma}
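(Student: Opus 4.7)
The plan is to obtain $U$ as the maximizer of a compactly-supported determinant, then use multilinearity to read off the desired inequality. Assume $A$ has rank $d$ (otherwise restrict to a rank-$d$ column subset), and let $V = \spa(A) \subseteq \R^n$ be its $d$-dimensional column space. Fix an arbitrary reference basis $B = (b_1,\dots,b_d)$ of $V$ so that every $d$-tuple $(u_1,\dots,u_d)$ of vectors in $V$ has well-defined coordinates $[u_j]_B \in \R^d$; write $\det_B(u_1,\dots,u_d)$ for the determinant of the $d\times d$ matrix whose $j$-th column is $[u_j]_B$.

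First I would consider the set
\[
K = \{(u_1,\dots,u_d) \in V^d : \|u_j\|_p \le 1 \text{ for all } j\in[d]\},
\]
which is compact by equivalence of norms on finite-dimensional spaces, and the continuous map $f(u_1,\dots,u_d) = |\det_B(u_1,\dots,u_d)|$. Let $(u_1^*,\dots,u_d^*) \in K$ be a maximizer of $f$; it is nonzero because any scaled basis of $V$ lies in $K$ with nonzero determinant. By multilinearity, $f$ is homogeneous of degree $1$ in each argument, so scaling any $u_j^*$ with $\|u_j^*\|_p < 1$ up to unit norm would strictly increase $f$; hence $\|u_j^*\|_p = 1$ for every $j$. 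Define $U$ to be the $n \times d$ matrix with columns $u_1^*,\dots,u_d^*$.

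Next I would prove $\|x\|_\infty \le \|Ux\|_p$ for every $x \in \R^d$. Set $y = Ux = \sum_j x_j u_j^*$. If $y = 0$ then $x = 0$ by linear independence of the $u_j^*$, and the inequality is trivial. Otherwise, fix any $i \in [d]$ and let $U^{(i)}$ be the matrix obtained from $U$ by replacing its $i$-th column with $y$. Expanding $y = \sum_j x_j u_j^*$ and using multilinearity of $\det_B$ together with the fact that, for $j \ne i$, replacing column $i$ by $u_j^*$ produces a repeated column and hence zero determinant, we get
\[
\det_B(U^{(i)}) = x_i \cdot \det_B(U).
\]
Now $y/\|y\|_p$ has unit $\ell_p$-norm, so the tuple obtained from $(u_1^*,\dots,u_d^*)$ by replacing $u_i^*$ with $y/\|y\|_p$ lies in $K$. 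Maximality of $f$ at $(u_1^*,\dots,u_d^*)$ therefore gives
\[
\frac{|x_i|}{\|y\|_p} \,|\det_B(U)| \;=\; \frac{|\det_B(U^{(i)})|}{\|y\|_p} \;\le\; |\det_B(U)|,
\]
and since $|\det_B(U)| > 0$, canceling yields $|x_i| \le \|y\|_p = \|Ux\|_p$. Taking the maximum over $i$ gives $\|x\|_\infty \le \|Ux\|_p$, as required.

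The only real subtlety is the coordinate-freeness issue: "determinant" on a $d$-dimensional subspace of $\R^n$ is only defined up to a nonzero scalar, so one must fix a reference basis $B$ before taking the determinant. This scalar drops out of the final inequality because it cancels on both sides of the optimality comparison. Everything else is a standard compactness-plus-multilinearity argument, so I do not expect any step to be a genuine obstacle.
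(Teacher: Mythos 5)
Your proposal is correct: the paper itself gives no proof of this lemma, citing it directly from Auerbach, and your determinant-maximization argument is precisely the classical proof of Auerbach's lemma (maximize $|\det|$ over the product of $\ell_p$-unit balls of the column space, then use multilinearity and optimality to bound each coordinate functional). All steps, including the normalization of the maximizer and the column-replacement identity $\det_B(U^{(i)}) = x_i\det_B(U)$, check out.
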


Now we give the proof of Theorem~\ref{thm:p_red}.
Notice that for any fixed vector $y \in \mathbb{R}^n$, 
\[
\|D^{(i)} y\|_{\infty} \sim \|y\|_p E^{-1/ p}
\]
where $E$ is an exponential random variable. 
Moreover, when $p > 2$, both $\E[(E^{-1/ p})^2]$ and $\Var[E^{-1 / p}]$ are bounded by a constant. 

We have $\E[\|D^{(i)} y\|_{\infty}] = C_p \|y\|_p$.
By linearity of expectation, we also have
\[
\E \left[ \sum_{i=1}^R \|D^{(i)} y\|_{\infty} \right] = C_p R \|y\|_p.
\]

We use $U \in \mathbb{R}^{n \times (d + 1)}$ to denote an Auerbach basis of the column space of $\tilde{A} = [A~b]$. 
We create three events $\mathcal{E}_1$, $\mathcal{E}_2$, $\mathcal{E}_3$.
Here $C$ is an absolute constant. 
\begin{itemize}
\item $\mathcal{E}_1$: $\|D^{(i)} (Ax^* - b)\|_{\infty} \le C \cdot R^{1 / p} \cdot \|Ax^* - b\|_p$ for all $i \in [R]$.
\item $\mathcal{E}_2$: $\|D^{(i)} U_j\|_{\infty} \le C (R \cdot d)^{1 / p}$ for all $i \in [R]$ and $j \in [d + 1]$.
\item $\mathcal{E}_3$: for all $y \in \mathcal{N}$, 
\[
\sum_{i=1}^R \|D^iy\|_{\infty} \ge (1 - \varepsilon / 3)C_p R \|y\|_p,
\]
where $\mathcal{N}$ is a $(\poly(\varepsilon / d))$-net of $\{\tilde{A} x \mid x \in \mathbb{R}^{d + 1}, \| \tilde{A} x\|_p = 1\}$.
By Lemma \ref{lem:eps_net} we have $|\mathcal{N}| \le (d / \varepsilon)^{O(d)}$.
\end{itemize}

According to the cumulative density function of $E^{-1 / p}$ for an exponential random variable $E$, and a union bound over $i \in [R]$, $\mathcal{E}_1$ holds with constant probability.
Similarly, $\mathcal{E}_2$ also holds with constant probability.
For each $y \in \mathcal{N}$, using Maurer's inequality in Lemma \ref{lem:maurer}, we have
\[
\Pr \left[ \sum_{i=1}^R \|D^iy\|_{\infty} < (1 - \varepsilon / 3)C_p R \|y\|_p  \right] \le \exp\left( -\frac{(\varepsilon \cdot C_p R)^2}{O(R) }\right).
\]
Thus for $R = O(d \log(d / \varepsilon) / \varepsilon^2)$, by using a union bound for all $y \in \mathcal{N}$, with constant probability $\mathcal{E}_3$ holds.

Conditioned on $\mathcal{E}_1$, using Bernstein's inequality, we have
\[
\Pr \left[ \sum_{i=1}^R \|D^{(i)} (Ax^* - b)\|_{\infty} > (1 + \varepsilon)C_p R \|Ax^* - b\|_p  \right] \le \exp\left( -\frac{(\varepsilon \cdot C_p R)^2}{O(R) + O( \varepsilon \cdot C_p R^{1 + 1 / p}) }\right).
\]
Thus, for $R = O(d \log(d / \varepsilon) / \varepsilon^2)$ and $p > 2$, 
\[
\sum_{i=1}^R \|D^{(i)} (Ax^* - b)\|_{\infty} \le (1 + \varepsilon)C_p R \|Ax^* - b\|_p
\]
holds with constant probability, which implies Part 1 of Theorem \ref{thm:p_red}.

Now for any $y = U x$ with $\|y\|_p = 1$, by definition of the Auerbach basis we have $\|x\|_{\infty} \le \|y\|_p \le 1$.
Conditioned on $\mathcal{E}_2$, we have, 
$$
\sum_{i=1}^R \|D^{(i)} y\|_{\infty} \le \sum_{i=1}^R \sum_{j = 1}^{d + 1} \|D^{(i)} U_j \cdot x_j\|_{\infty} \le \sum_{i=1}^R \sum_{j = 1}^{d + 1} \|D^{(i)} U_j\|_{\infty} \le \poly(d / \varepsilon).
$$

Consider any $y = \tilde{A} x$ with $\|y\|_p = 1$. We claim $y$ can be written as
$$
y = \sum_{j = 0}^{\infty} y^j,
$$
where for any $j \ge 0$ we have (i) $\frac{y^j}{\|y_j\|_p} \in \mathcal{N}$ and (ii) $\|y^j\|_p \le \poly(\varepsilon / d)^j$.

According to the definition of a $(\poly(\varepsilon / d))$-net, there exists a vector $y^0 \in \mathcal{N}$ for which $\|y - y^0\|_p \le \poly(\varepsilon / d)$ and $\|y^0\|_p = 1$.
If $y = y_0$ then we stop. 
Otherwise we consider the vector $\frac{y - y^0}{\|y-y^0\|_p}$. 
Again we can find a vector $\hat{y}^1 \in \mathcal{N}$ such that $\left\| \frac{y - y^0}{\|y-y^0\|_p} - \hat{y}^1\right\|_p \le \poly(\varepsilon / d)$ and $\|\hat{y}^1\|_p = 1$.
Here we set $y^1 = \|y-y^0\|_p \cdot \hat{y}^1$ and continue this process inductively.

Thus, conditioned on $\mathcal{E}_2$ and $\mathcal{E}_3$, we have for any  $y = \tilde{A} x$ with $\|y\|_p = 1$, 
$$
\sum_{i = 1}^R \|D^{(i)}y\|_{\infty} \ge \sum_{i=1}^R \|D^{(i)}y^{0}\|_{\infty} - \sum_{i = 1}^R \sum_{j = 1}^{\infty} \|D^{(i)}y^j\| \ge (1 - \varepsilon)C_p R.
$$
For any $y = Ax - b$, by homogeneity, we still have
$$
\sum_{i=1}^R \|D^{(i)} y\|_{\infty} \ge (1 - \varepsilon) C_p R \|Ax - b\|_p,
$$
which implies Part 2 of Theorem \ref{thm:p_red}.

\section{Communication Complexity Lower Bound for Linear Programming}
In this section, we prove a communication complexity lower bound for testing feasibility of linear programs.

We need the following lemma to construct our hard instance. 
\begin{lemma}\label{lem:dot}
Let $L$ be a sufficiently large integer. We use $m_i \in \R^2$ to denote the vector 
$$m_i = \left(\frac{i}{2^L}, 1- \frac{i^2}{2 \cdot 4^L}\right).$$
For any $1 \le i, j \le 2^{L / 100}$, we have
\begin{enumerate}
\item $\|m_i\|_2^2 \ge  1 + \frac{1}{2^{4L+ 2}}$;
\item For any $i \neq j$, $\langle m_i, m_j \rangle \le 1$.
\end{enumerate}
\end{lemma}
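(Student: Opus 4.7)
The proof is a direct computation for both parts, with the only subtlety being to confirm that the cross-term in the inner product really is swamped by the squared-difference term in the given range of $i$.

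For part 1, the plan is to expand directly. Writing $\|m_i\|_2^2 = i^2/4^L + (1 - i^2/(2 \cdot 4^L))^2$, the cross term in the square of the second coordinate cancels the first coordinate exactly:
\begin{equation*}
\|m_i\|_2^2 = \frac{i^2}{4^L} + 1 - \frac{i^2}{4^L} + \frac{i^4}{4 \cdot 16^L} = 1 + \frac{i^4}{4 \cdot 16^L}.
\end{equation*}
Since $i \ge 1$, this is at least $1 + 1/(4 \cdot 16^L) = 1 + 1/2^{4L+2}$, as required.

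For part 2, I would expand similarly. Writing
\begin{equation*}
\langle m_i, m_j \rangle = \frac{ij}{4^L} + \left(1 - \frac{i^2}{2\cdot 4^L}\right)\left(1 - \frac{j^2}{2\cdot 4^L}\right),
\end{equation*}
and grouping the linear-in-coordinate terms via the identity $2ij - i^2 - j^2 = -(i-j)^2$ yields
\begin{equation*}
\langle m_i, m_j \rangle = 1 - \frac{(i-j)^2}{2 \cdot 4^L} + \frac{i^2 j^2}{4 \cdot 16^L}.
\end{equation*}
So it suffices to show the last term is at most the middle one in absolute value, i.e., $2 \cdot 4^L (i-j)^2 \ge i^2 j^2$. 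Because $i \ne j$, the left side is at least $2 \cdot 4^L = 2^{2L+1}$. Because $i, j \le 2^{L/100}$, the right side is at most $2^{4L/100} = 2^{L/25}$, which is much smaller than $2^{2L+1}$ for any $L \ge 1$. Hence $\langle m_i, m_j\rangle \le 1$.

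The only potential obstacle is a bookkeeping slip in the algebra of the cross terms, but there is no conceptual difficulty: both statements reduce to a clean closed form and a comparison of two powers of $2$. In the final writeup I would state the two identities as displays and then note the elementary inequality $2 \cdot 4^L \ge 2^{L/25}$ to finish.
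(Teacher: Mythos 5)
Your proof is correct and follows essentially the same route as the paper's: expand both quantities, observe the exact cancellation giving $\|m_i\|_2^2 = 1 + i^4/2^{4L+2}$ and $\langle m_i,m_j\rangle = 1 - (i-j)^2/(2\cdot 4^L) + i^2j^2/4^{2L+1}$, then use $(i-j)^2 \ge 1$ and $i,j \le 2^{L/100}$ to see the positive term is dominated. The only cosmetic difference is that you compare the two correction terms directly, while the paper first lower-bounds $(i-j)^2$ by $1$ and then drops the remainder; the content is identical.
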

\begin{proof}
For any $1 \le i \le 2^{L / 100}$, we have
$$
\|m_i\|_2^2  = \frac{i^2}{4^L} + \left(1- \frac{i^2}{2 \cdot 4^L} \right)^2 = 1 + \frac{i^4}{2^{4L+ 2}}\ge 1 + \frac{1}{2^{4L+ 2}}.
$$

For any $1 \le i, j \le 2^{L / 100}$ and $i \neq j$, we have
$$
\langle m_i, m_j \rangle = \frac{ij}{4^L} +  \left(1- \frac{i^2}{2 \cdot 4^L}\right) \cdot \left(1- \frac{j^2}{2 \cdot 4^L}\right) = 1- \frac{(i - j)^2}{2 \cdot 4^L} + \frac{i^2j^2}{4^{2L + 1}} \le 1- \frac{1}{2 \cdot 4^L} + \frac{i^2j^2}{4^{2L + 1}} \le 1.
$$
\end{proof}

Now we reduce the {\em lopsided set disjiontness} problem to testing feasibility of linear programs.
In this problem, for a choice of universe size $U$, the last server $P_s$ receives an element $u \in [U]$, and for each $i < s$, server $P_i$ receives a set $S_i \subseteq [U]$.
The goal is to test whether there exists $i$ such that $u \in S_i$. 
We reduce  this problem with $U = 2^{L / 100}$ to testing the feasibility of linear programs for $d = 2$, where $L$ is the bit complexity of the linear program. 

For the reduction, server $P_s$ adds a constraint $x = m_u$, for the element $u \in [U]$ that $P_s$ receives. I.e., server $P_s$ forces the solution $x$ to be $m_u$.
For each $i < s$, for each $v \in S_i$, server $P_i$ adds a constraint $\langle m_v, x \rangle \le 1$.
Here $m_u$ and $m_v$ are as defined in Lemma \ref{lem:dot}.
By Lemma \ref{lem:dot}, this linear program is feasible if and only if $u \notin \bigcup_{i < s} S_i$.

In the remaining part of this section, we show the lopsided set disjointness problem has an $\Omega(s \log U /  \log s)$ randomized communication complexity lower bound in the coordinator model, which implies an $\Omega(s \log L /  \log s)$ lower bound for testing feasiblity of linear programming, even for $d = 2$.
An $\Omega(s + L)$ lower bound also holds in the blackboard model, since when $s = 2$ the coordinator model is equivalent to the blackboard model, up to a constant factor in the communication complexity. 

We first consider the two-player case, in which Alice receives an element $u \in [U]$ and Bob receives a set $S \subseteq [U]$.
The goal is to test whether $u \in S$ or not. 
Let $\mu$ be the distribution where $u$ is chosen uniformly at random from $[U]$, and $S$ is a subset of $[U]$ such that each element $u \in U$ is included independently with probability $1 / 2$.
Let $\mu_y$ be the conditional distribution of $\mu$ given $u \in S$, and $\mu_n$ be the conditional distribution of $\mu$ given $u \notin S$.
In \cite[Section 2.2]{andoni2006optimality}, it has been shown that any communication protocol that succeeds with probability at least $2/3$ on the distribution $\mu$ requires $\Omega(\log U)$ bits of communication in the worst case. 
By applying Markov's inequality and stopping the protocol early once the communication complexity is too large, this implies any randomized protocol that succeeds with probability at least $3 / 4$ on the distribution $\mu$ requires $\Omega(\log U)$ bits of communication in expectation.
In fact, this implies a stronger hardness result, that for any protocol that succeeds with probability at least $3/4$ on $\mu$, its expected communication complexity is $\Omega(\log U)$ on both $\mu_y$ and $\mu_n$.

Consider a new distribution $\mu'$ which is $\mu_y$ with probability $1/ s^2$ and $\mu_n$ with probability $1 - 1 / s^2$.
Suppose a protocol $\mathcal{P}$ succeeds with probability at least $1 - 1 / 100s^2$. Then by averaging $\mathcal{P}$ succeeds with probaility at least $4 / 5$ on both $\mu_n$ and $\mu_y$, which implies the expected communication complexity of $\mathcal{P}$ is $\Omega(\log U)$ on both $\mu_y$ and $\mu_n$.
Now by linearity of expectation, the expected communication complexity of $\mathcal{P}$ on $\mu'$ is lower bounded by $\Omega(\log U)$.
This, in particular, implies any protocol that succeeds with probability at least $1 - 1 / 100s^2$ on $\mu'$ should have expected communication complexity $\Omega(\log U)$.
At this point, Theorem 1.1 in \cite{WZ13} implies that for the $s$-player case, any communication protocol that succeeds with probability at least $1 - 1 / s^3$ has worst case communication complexity at least $\Omega(s \log U)$.
By standard repitition arguments this implies an $\Omega(s \log U / \log s)$ lower bound for protocols that succeed with constant probability.

Formally, we have the following theorem.
\begin{theorem}\label{thm:rand_lp_lb}
Any randomized protocol that succeeds with probability at least $0.99$ for testing feasibility of linear programs requires $\Omega(s \log L /  \log s)$ bits of communication in the coordinator model and $\Omega(s + L)$ bits of communication in the blackboard model. 
The lower bound holds even when $d = 2$.
\end{theorem}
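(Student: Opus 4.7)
The plan is to reduce the $s$-player lopsided set disjointness problem to testing feasibility of a two-dimensional linear program, using the explicit vector family from Lemma \ref{lem:dot} as a geometric ``error-correcting code.'' First, set $U = 2^{L/100}$ and consider the vectors $m_1,\dots,m_U \in \R^2$ of Lemma \ref{lem:dot}; they have squared norm strictly greater than $1$ but all pairwise inner products at most $1$. Server $P_s$, holding an element $u \in [U]$, encodes its input by the equality constraint $x = m_u$ (two scalar equations), while each server $P_i$ with set $S_i$ contributes the halfspace constraints $\langle m_v, x\rangle \le 1$ for $v \in S_i$. By Lemma \ref{lem:dot}, after $P_s$ pins $x = m_u$, only the constraint indexed by $u$ itself is violated, so the LP is feasible iff $u \notin \bigcup_{i<s} S_i$. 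All coefficients have denominators dividing $2 \cdot 4^L$, so clearing denominators keeps the bit complexity within $O(L)$.

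Next, I would establish an $\widetilde{\Omega}(s \log U)$ communication lower bound for $s$-player lopsided set disjointness in the coordinator model. For the two-player version (Alice holds $u \in [U]$, Bob holds $S \subseteq [U]$), the $\Omega(\log U)$ worst-case randomized lower bound of Andoni-Indyk-P\u{a}tra\c{s}cu holds under the product distribution $\mu$ where $u$ is uniform and $S$ is a uniform subset. The central trick will be to sharpen this to a high-confidence, expected-communication bound on a very skewed distribution. Specifically, consider $\mu' = \tfrac{1}{s^2}\mu_y + (1 - \tfrac{1}{s^2})\mu_n$, where $\mu_y$ and $\mu_n$ are the conditionings of $\mu$ on $u \in S$ and $u \notin S$ respectively. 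Any two-player protocol correct with probability $1 - 1/(100s^2)$ on $\mu'$ must by averaging be correct with probability at least $4/5$ on each of $\mu_y$ and $\mu_n$ individually, and hence (via Markov's inequality applied to the two-player lower bound in expectation, obtained by halting over-long protocols) has expected communication $\Omega(\log U)$ under both $\mu_y$ and $\mu_n$, and therefore under $\mu'$.

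The main obstacle is amplifying this two-player bound to an $s$-player bound without losing a factor of $s$. I would invoke the symmetrization framework of Woodruff-Zhang (Theorem 1.1 in \cite{WZ13}), which lifts a high-confidence two-player expected-communication lower bound of $\Omega(\log U)$ to a worst-case $s$-player lower bound of $\Omega(s \log U)$ for protocols succeeding with probability $1 - 1/s^3$; standard success-probability amplification then costs a $\log s$ factor, yielding $\Omega(s \log U / \log s)$ for protocols with constant error. Plugging $U = 2^{L/100}$ back through the reduction yields the desired $\Omega(s \log L / \log s)$ coordinator-model bound, which already holds for $d = 2$.

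Finally, for the blackboard model I would combine two trivial observations: when $s = 2$ the blackboard and coordinator models coincide up to constants, so the two-player bound above gives $\Omega(\log U) = \Omega(L)$; and separately every server must transmit at least one bit, giving $\Omega(s)$. Adding these yields the claimed $\Omega(s + L)$ blackboard lower bound, completing Theorem \ref{thm:rand_lp_lb}.
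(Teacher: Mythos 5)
Your proposal is correct and follows essentially the same route as the paper: the identical reduction from lopsided set disjointness via the vectors of Lemma \ref{lem:dot} with $U = 2^{L/100}$, the same two-player hard distribution skewed to $\mu' = \tfrac{1}{s^2}\mu_y + (1-\tfrac{1}{s^2})\mu_n$ with the averaging/Markov argument, the same lift to $s$ players via Theorem 1.1 of \cite{WZ13} followed by repetition (losing the $\log s$ factor), and the same two trivial observations for the blackboard bound. No substantive differences from the paper's argument.
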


Notice that by Theorem \ref{thm:rand_feasibility}, testing feasibility of linear systems for $d = 2$ requires only $O(s \log L)$ randomized communication complexity.
This shows an exponential separation between testing feasibility of linear systems and linear programs, in the communication model. 
 
\newcommand{\trunc}{\mathsf{trunc}}
\section{Clarkson's Algorithm} \label{sec:clarkson}
\subsection{The Communication Complexity}
In this section, we discuss how to implement Clarkson's algorithm to solve linear programs in the distributed setting. 
The protocol is described in Figure \ref{alg:clarkson}.
During the protocol, each server $P_i$ maintains a multi-set $H_i$ of constraints (i.e., each constraint can appear more than once in $H_i$).
Initially, $H_i$ is the set of constraints stored on $P_i$.
Furthermore, the coordinator maintains $|H_i|$, which is initially set to be the number of constraints stored on each server.

\begin{figure}[!htb]
\begin{framed}
\begin{enumerate}
\item \label{step:lp_sample} The coordinator obtains $9d^2$ constraints $R$, by sampling uniformly at random from $H_1 \cup H_2 \cup \ldots \cup H_s$.
\item \label{step:lp_solve}The coordinator calculates the optimal solution $x_R$, which is the optimal solution to the linear program satisfying all constraints in $R$. The coordinator sends $x_R$ to each server.
\item Each server $P_i$ calculates the total number of constraints that are stored on $P_i$ and violated by $x_R$, i.e., $|V_i|$ where $V_i = \{h \in H_i \mid x_R \text{ violates } h\}$.
\item \label{step:clarkson_test}The coordinator calculates $|V| = \sum_{i=1}^s |V_i|$ where $V = \{h \in H \setminus R \mid x_R \text{ violates } h\}$ and sends $|V|$ to each server.
\begin{enumerate}
\item If $|V| = 0$ then $x_R$ is a feasible solution and the protocol terminates.
\item If $|V| \le \frac{2}{9d - 1}|H| $, then each server updates $H_i \gets H_i \cup V_i$ and the coordinator updates $|H_i| \gets |H_i| + |V_i|$.
\item Goto Step \ref{step:lp_sample}.
\end{enumerate} 
\end{enumerate}
\end{framed}
\caption{Clarkson's Algorithm}
\label{alg:clarkson}
\end{figure}

The protocol in Figure \ref{alg:clarkson} is basically Clarkson's algorithm \cite{clarkson1995vegas}, implemented in the distributed setting. 
Using the analysis in \cite{clarkson1995vegas}, the expected number of iterations is $O(d \log n)$.
The correctness also directly follows from the analysis in \cite{clarkson1995vegas}.
Now we analyze the communication complexity for each iteration.

To implement the sampling process in Step \ref{step:lp_sample}, the coordinator first determines the number of constraints to be sampled from each server $P_i$ and sends this number to $P_i$.
The total communication complexity for this step is $O(s \log n)$ in both the coordinator model and the blackboard model.
Then each server $P_i$ samples accordingly and sends these constraints to the coordinator. 
The total communication for this step is $O(d^3L)$ in both models.

To implement Step \ref{step:lp_solve}, we first verify the bit complexity of the optimal solution $x_R$.
One of the optimal solutions $x_R$ is a vertex of the polyhedron $Ax \le b$.
From polyhedral theory we know that there exists a non-singular subsystem of $Ax \le b$, say $Bx \le c$, such that $x_R$ is the unique solution of $Bx = c$.
Thus, by Cramer's rule, each entry of $x$ is a fraction whose numerator and denominator are integers between $-d!2^{dL}$ and $d!2^{dL}$, and thus can be represented by using at most $O(dL + d \log d)$ bits.
This implies the bit complexity of all entries in the vector $x$ calculated at Step \ref{step:lp_solve} is upper bounded by $\widetilde{O}(d^2 L)$. Thus the communication complexity for Step \ref{step:lp_solve} is upper bounded by $\widetilde{O}(sd^2 L)$ in the coordinator model and $\widetilde{O}(d^2 L)$ in the blackboard model.
The communication complexity of the last two steps of the protocol is upper bounded by $O(s \log n)$ in both models.
Thus, the expected communication complexity is $\widetilde{O}(sd^3L + d^4L)$ in the coordinator model and $\widetilde{O}(sd + d^4L)$ in the blackboard model. 

\begin{theorem}
The expected communication complexity of the protocol in Figure \ref{alg:clarkson} is $\widetilde{O}(sd^3L + d^4L)$ in the coordinator model and $\widetilde{O}(sd + d^4L)$ in the blackboard model
\end{theorem}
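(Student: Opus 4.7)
The plan is to multiply a bound on the expected number of outer iterations by a worst-case per-iteration communication bound, then absorb $\polylog(sndL)$ factors into $\widetilde{O}(\cdot)$. For the iteration count I would quote the classical analysis of Clarkson's Las Vegas algorithm~\cite{clarkson1995vegas}: the expected number of rounds in which $|V|\le \frac{2}{9d-1}|H|$ (equivalently, rounds in which some multiset $H_i$ is actually updated) is $O(d\log n)$, and the protocol terminates the first time $|V|=0$. Since the per-iteration communication I bound below is deterministic and independent of the random sample $R$, linearity of expectation over the random stopping time lets me simply multiply the two quantities.

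For a single iteration I would audit four communication events. First, at Step~\ref{step:lp_sample} the coordinator maintains $|H_i|$ for every server, so it can locally draw a multinomial $(k_1,\ldots,k_s)$ with $\sum_i k_i=9d^2$ and send $k_i$ to $P_i$ in $O(s\log n)$ total bits; each $P_i$ then returns $k_i$ uniform samples from $H_i$, shipping $9d^2$ constraints of description length $O(dL)$ for a total of $O(d^3 L)$ in either model. Second, collecting the counts $|V_i|$ from the servers and disseminating $|V|$ at Step~\ref{step:clarkson_test} costs $O(s\log n)$ bits. The third and crucial event is broadcasting the vertex $x_R$ computed at Step~\ref{step:lp_solve}; here I would use that $x_R$ is the unique solution of a non-singular $d\times d$ integer subsystem $Bx=c$ with $|B_{ij}|,|c_i|\le 2^L$, so by Cramer's rule combined with Hadamard's bound each coordinate is a ratio of integers of magnitude at most $d!\,2^{dL}$, giving an encoding of $\widetilde{O}(d^2 L)$ bits for the whole vector. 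Broadcasting $x_R$ therefore costs $\widetilde{O}(sd^2 L)$ in the coordinator model (one copy per server) and $\widetilde{O}(d^2 L)$ in the blackboard model (a single broadcast). Computing $|V_i|$ itself is purely local and contributes no communication.

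Adding the four contributions gives a per-iteration total of $\widetilde{O}(sd^2 L + d^3 L)$ in the coordinator model and $\widetilde{O}(s + d^3 L)$ in the blackboard model. Multiplying by the $O(d\log n)$ expected iteration count and folding the extra $\log n$ into $\widetilde{O}(\cdot)$ yields the claimed bounds $\widetilde{O}(sd^3 L + d^4 L)$ and $\widetilde{O}(sd + d^4 L)$, respectively. The only nontrivial technical step is the bit-complexity bound on $x_R$ via Cramer's rule, which ensures that every message sent through the coordinator/blackboard actually has a controllable length; the rest of the argument is straightforward bookkeeping on top of the iteration count from~\cite{clarkson1995vegas}.
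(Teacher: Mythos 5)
Your proposal is correct and follows essentially the same route as the paper: the $O(d\log n)$ expected iteration bound from Clarkson's analysis, the $O(s\log n)+O(d^3L)$ cost of coordinating and shipping the $9d^2$ sampled constraints, the Cramer's-rule bound of $\widetilde{O}(dL)$ bits per coordinate of the vertex $x_R$ (hence $\widetilde{O}(sd^2L)$ to distribute it in the coordinator model and $\widetilde{O}(d^2L)$ in the blackboard model), and the $O(s\log n)$ cost for the violation counts, multiplied together. The only cosmetic difference is that you make the Hadamard/Cramer bit-complexity bound and the per-iteration-cost-times-expected-rounds argument slightly more explicit than the paper does.
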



\subsection{Running Time of Clarkson's Algorithm in Unit Cost RAM}\label{apx:clarkson}

In this section, we show how to implement Clarkson's algorithm in the unit cost RAM model on words of size $O(\log(nd))$ so that the running time is upper bounded by $\tilde{O}(nd^\omega L + \poly(dL))$, and prove Theorem \ref{thm:LPbit}.

A description of Clarkson's algorithm can be found in Figure \ref{alg:clarkson}. 
This algorithm runs in $O(d \log n)$ rounds in expectation.
In each round, it samples $O(d^2)$ constraints $R$, and calculates an optimal solution $x_R$ that satisfies all constraints in $R$.
This optimal solution $x_R$ can be calculated using any polynomial time linear programming algorithm, which always has running time $\poly(dL)$.
The bottleneck in the unit cost RAM model is Step \ref{step:clarkson_test} of the algorithm in Figure \ref{alg:clarkson}, i.e., for each of the $n$ constraints, testing whether $x_R$ satisfies the constraint or not. 
Formally, we just need to output $Ax_R$, and then compare each entry with $b$. 
In the remaining part of this section we show how to caculate $Ax_R$ in $\tilde{O}(nd^\omega L)$ time

Since each entry of $x_R$ has bit complexity $\widetilde{O}(dL)$, we first calculate a $d \times d$ matrix $X$, where each entry of $X$ has bit complexity $\widetilde{O}(L)$, and the entry $X_{i, 1}, X_{i, 2}, \ldots, X_{i, d}$ consists of the first $\widetilde{O}(L)$ bits of $(x_R)_i$, the second $\widetilde{O}(L)$ bits of $(x_R)_i, \ldots,$ the last $\widetilde{O}(L)$ bits of $(x_R)_i$. 
Now we calculate $A \cdot X$. Since all entries in $A$ and $X$ have bit complexity $\widetilde{O}(L)$, and caculating the matrix mutilplication of two $d \times d$ matrices with bit complexity $\widetilde{O}(L)$ requires only $\widetilde{O}(d^{\omega}L)$ time \cite{harvey2018complexity}, $A \cdot X$ can therefore be calculated in $\widetilde{O}((n / d) \cdot d^{\omega} \cdot L) = \widetilde{O}(nd^{\omega - 1} L)$ time. 
Given $AX$, one can then easily calculate $Ax_R$ in $\widetilde{O}(ndL)$ time. 
Thus, the total expected running time is upper bounded by $O(d \log n (\widetilde{O}(nd^{\omega - 1} L) + \widetilde{O}(ndL) + \poly(dL))) = \tilde{O}(nd^\omega L + \poly(dL))$.

\subsection{Smoothed Analysis of Communication Complexity}
In this section we define our model for smoothed analysis of communication complexity of communication protocols for solving linear programming. 

For a randomized communication protocol $\mathcal{P}$, we use $C_{\mathcal{P}}(A, b, c)$ to denote its communication complexity on the linear programming instance
\begin{equation}\label{equ:lp}
\min_{Ax \le b} c^T x,
\end{equation}
where $A \in \mathbb{R}^{n \times d}$, $b \in \mathbb{R}^n$ and $c \in \mathbb{R}^d$.
The standard definition \cite{spielman2001smoothed} of smoothed analysis assumes that each entry of $A$ is perturbed by i.i.d. Gaussian noise with zero mean and standard deviation $\sigma$.
However, since we are measuring the communication complexity in terms of bit complexity, we cannot allow the noise to be arbitrary real numbers. 
Instead, in our model, we use discrete Gaussian random variables as the noise. 

Formally, we use $ \trunc_t: \R \to \R$ to denote the function that rounds a real number to its nearest integer multiple of $2^{-t}$.
For notational convenience, we define $\trunc_{\infty}(x) = x$.
We say a communication protocol solves the linear program instance \eqref{equ:lp} with  {\em smoothed communication complexity} $SC_{\mathcal{P}, \sigma, t}(A, b, c)$ if with probability at least $0.99$, 
the protocol correctly solves the instance 
\begin{equation}\label{equ:smooth_lp}
\min_{(A + G_{t, \sigma})x \le b} c^T x
\end{equation}
with communication complexity $C_{\mathcal{P}}(A + G_{t, \sigma}, b, c) \le SC_{\mathcal{P}, \sigma, t}(A, b, c)$, where all entries of $G_{t, \sigma}$ are i.i.d. copies of $\trunc_t(g)$ and $g$ is a Gaussian random variable with zero mean and $\sigma \le 1$ standard deviation. 
Here the probability is defined over the randomness of the protocol and the noise $G_{t, \sigma}$.
Notice that when $t = \infty$, $G_{\infty, \sigma}$ is a matrix whose all entries are i.i.d. Gaussian random variables with standard deviation $\sigma$.
\subsection{Smoothed Analysis of Clarkson's Algorithm}
\begin{figure}[!htb]
\begin{framed}
\begin{enumerate}
\item \label{step:slp_sample} The coordinator obtains $9d^2$ constraints $R$, by sampling uniformly at random from $H_1 \cup H_2 \cup \ldots \cup H_s$.
\item \label{step:slp_solve}The coordinator calculates the optimal solution $x_R$, which is the optimal solution to the linear program satisfying all constraints in $R$. 
{\bf The coordinator rounds each entry of $x_R$ to its nearest integer multiple of $\delta = O(1  / \poly(nd \cdot 2^L / \sigma))$ to obtain $\hat{x}_R$, and sends $\hat{x}_R$ to each server.}
\item Each server $P_i$ calculates the total number of constrains that are stored on $P_i$ and violated by $\hat{x}_R$, i.e., $|V_i|$ where $V_i = \{h \in H_i \mid \hat{x}_R \text{ violates } h\}$.
\item The coordinator calculates $|V| = \sum_{i=1}^s |V_i|$ where $V = \{h \in H \ \setminus R \mid \hat{x}_R \text{ violates } h\}$ and sends $|V|$ to each server.
\begin{enumerate}
\item If $|V| = 0$ then $x_R$ is a feasible solution and the protocol terminates.
\item if $|V| \le \frac{2}{9d - 1}|H| $, then each server updates $H_i \gets H_i \cup V_i$ and the coordinator updates $|H_i| \gets |H_i| + |V_i|$.
\item Goto Step \ref{step:lp_sample}.
\end{enumerate} 
\end{enumerate}
\end{framed}
\caption{Smoothed Clarkson's Algorithm}
\label{alg:smooth_clarkson}
\end{figure}

In this section, we present our variant of Clarkson's algorithm for solving smoothed linear programming instances. 
The protocol is described in Figure \ref{alg:smooth_clarkson}.
The main difference is in Step \ref{step:slp_solve}, where the coordinator rounds each entry of the solution $x_R$ before sending it to other servers. 

\subsubsection{Correctness of the Protocol}
We first prove the correctness of the protocol.
Our plan is to show if $t = \Omega(\log(nd / \sigma) + L)$, then our modified Clarkson's algorithm follows the computation path of the original Clarkson's algorithm in Figure \ref{alg:clarkson} when executing on the perturbed instance, with high probability, and thus prove the correctness of the protocol. 

 We need the following bound on the condition number of a matrix.
\begin{lemma}\label{lem:slp_cond}
For a matrix $B \in \R^{d \times d}$ with all entries in $[0, 2^L - 1]$, for any integer $n > 0$, $\sigma \le 1$ and $t \ge \Omega(\log(nd / \sigma) + L)$, we have
$$
\Pr[\|B + G_{\sigma, t}\|_2 \ge \poly(nd \cdot 2^L / \sigma)] \le 1 - 1 /  \poly(nd)
$$
and
$$
\Pr[\|(B + G_{\sigma, t})^{-1}\|_2 \ge \poly(nd \cdot 2^L/  \sigma)] \le 1 - 1 /  \poly(nd).
$$
\end{lemma}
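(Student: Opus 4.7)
The plan is to handle the two inequalities separately, since the first is a routine concentration calculation while the second is the smoothed-analysis heart of the lemma. For $\|B + G_{\sigma,t}\|_2$, we apply the triangle inequality:
$\|B + G_{\sigma,t}\|_2 \le \|B\|_F + \|G_{\sigma,t}\|_F \le d \cdot 2^L + \|G_{\sigma,t}\|_F.$
Each entry of $G_{\sigma,t}$ is a rounded $\mathcal{N}(0,\sigma^2)$ sample with $\sigma \le 1$ (plus a deterministic error of magnitude at most $2^{-t} \le 1$), so by a standard sub-Gaussian tail bound and a union bound over the $d^2$ entries, every entry is bounded by $O(\sqrt{\log(nd)})$ with probability at least $1 - 1/\poly(nd)$. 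Hence $\|G_{\sigma,t}\|_F \le O(d\sqrt{\log(nd)})$, and the first bound $\poly(nd \cdot 2^L/\sigma)$ follows.

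For $\|(B + G_{\sigma,t})^{-1}\|_2 = 1/\sigma_{\min}(B + G_{\sigma,t})$, the strategy is to first establish a lower bound on $\sigma_{\min}(B + G_\sigma)$ for the \emph{untruncated} Gaussian matrix $G_\sigma$ with i.i.d.\ $\mathcal{N}(0,\sigma^2)$ entries, and then transfer it to the discrete version via Weyl's inequality. Specifically, the Sankar--Spielman--Teng smoothed-analysis bound for the smallest singular value of a Gaussian perturbation of a square matrix yields, for every fixed $B$,
$\Pr[\sigma_{\min}(B + G_\sigma) \le \eps \sigma / \sqrt{d}] = O(\eps).$
Setting $\eps = 1/\poly(nd)$ gives $\sigma_{\min}(B + G_\sigma) \ge \sigma/(\sqrt{d}\cdot \poly(nd))$ with probability $1 - 1/\poly(nd)$. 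We then couple $G_{\sigma,t}$ to $G_\sigma$ by declaring $G_{\sigma,t} = \trunc_t(G_\sigma)$ coordinate-wise; this matches the marginal distribution in the definition of our smoothed model. The truncation error satisfies $\|G_\sigma - G_{\sigma,t}\|_2 \le \|G_\sigma - G_{\sigma,t}\|_F \le d \cdot 2^{-t}$, so by Weyl's inequality
$\sigma_{\min}(B + G_{\sigma,t}) \ge \sigma_{\min}(B + G_\sigma) - d \cdot 2^{-t}.$
For $t \ge \Omega(\log(nd/\sigma))$ the truncation term is at most half of the smoothed lower bound, so $\sigma_{\min}(B + G_{\sigma,t}) \ge 1/\poly(nd \cdot 2^L/\sigma)$, and inverting gives the second bound. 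The extra $L$ in the hypothesis $t \ge \Omega(\log(nd/\sigma)+L)$ is slack that provides additional margin used elsewhere in the protocol's analysis (e.g.\ when comparing $\hat x_R$ to $x_R$ across all constraints).

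The main obstacle is the clean transfer from the continuous smoothed-analysis bound to the discretized noise model; the Sankar--Spielman--Teng estimate itself uses the density of the continuous Gaussian, and there is no direct analogue for the rounded distribution. The coupling $G_{\sigma,t} = \trunc_t(G_\sigma)$ reduces this to a standard Weyl perturbation step and is the key conceptual move. A secondary nuisance is absorbing all polylogarithmic factors into the $\poly(nd \cdot 2^L/\sigma)$ expression, but this is harmless once the polynomial degree is chosen large enough at the end of the argument.
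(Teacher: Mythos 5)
Your proposal is correct and follows essentially the same route as the paper: the first bound via triangle inequality and Gaussian tail estimates on $\|G_{\sigma,t}\|_F$, and the second by invoking the Sankar--Spielman--Teng smoothed bound on the smallest singular value of $B$ plus the \emph{untruncated} Gaussian matrix, then transferring to the truncated noise through the coupling $G_{\sigma,t}=\trunc_t(G_{\sigma})$ and a perturbation step (the paper's $\inf_{\|x\|_2=1}$ calculation is exactly your Weyl-inequality argument). Your observation that the $+L$ in the hypothesis on $t$ is slack for this particular lemma is also consistent with the paper's proof.
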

\begin{proof}
To prove the first inequality, notice that
$$
\|B + G_{\sigma, t}\|_2 \le \|B + G_{\sigma, t}\|_F \le \|B\|_F + \|G_{\sigma, t}\|_F \le \poly(d2^L) + \|G_{\sigma, t}\|_F.
$$
Thus, the first inequality just follows from tail inequalities of the Guassian distribution. 

To analyze $\|(B + G_{\sigma, t})^{-1}\|_2$, we write $G_{\sigma, \infty}$ to denote a matrix whose entries are the Gaussian random variables of $G_{\sigma, t}$ before applying the truncation operation. 
Notice that this implies $\|G_{\sigma, t} - G_{\sigma, \infty}\|_2 \le \poly(d) \cdot 2^{-t} \le 1 / \poly(nd / \sigma)$.
We invoke Theorem 3.3 in \cite{sankar2006smoothed}, which states that with probability $1 - 1 / \poly(nd)$,
$$
\|(B + G_{\sigma, \infty})^{-1}\|_2 \le \poly(nd / \sigma),
$$  
which implies with probability $1 - 1 / \poly(nd)$,
\begin{align*}
&\|(B + G_{\sigma, t})^{-1}\|_2
=  \left(\inf_{\|x\|_2 = 1} \|(B + G_{\sigma, t})x\|_2 \right)^{-1}\\
 = &  \left(\inf_{\|x\|_2 = 1} \|(B + G_{\sigma, \infty} + (G_{\sigma, t} - G_{\sigma, \infty}))x\|_2 \right)^{-1} \le \left(\inf_{\|x\|_2 = 1} \|(B + G_{\sigma, \infty})x\|_2 -\|G_{\sigma, t} - G_{\sigma, \infty}\|_2  \right)^{-1} \\
\le & \poly(nd\cdot 2^L / \sigma).
\end{align*}
\end{proof}

\begin{lemma}\label{lem:x_norm}
During the execution of the protocol in Figure \ref{alg:smooth_clarkson}, each time Step \ref{step:slp_solve} is executed,
if $x_R \neq 0$,
with probability at least $1 - 1 / \poly(nd)$, $x_R$ satisfies
$$
1 / \poly(nd \cdot 2^L / \sigma) \le \|x_R\|_2 \le \poly(nd \cdot 2^L / \sigma).
$$
\end{lemma}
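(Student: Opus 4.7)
}
The plan is to reduce the claim to bounds on the norm and inverse norm of a $d \times d$ submatrix of the perturbed constraint matrix, and then apply Lemma \ref{lem:slp_cond}. Since $x_R$ is an optimal vertex of the linear program defined by the constraints in $R$ (applied to the perturbed matrix $A + G_{\sigma,t}$), there is a $d$-element subset of these constraints that is tight at $x_R$. Let $B \in \mathbb{R}^{d \times d}$ be the corresponding submatrix of $A + G_{\sigma,t}$, and $c \in \mathbb{R}^d$ the corresponding subvector of $b$, so that $Bx_R = c$. By standard polyhedral theory, once the tight constraints are chosen, $B$ is nonsingular and $x_R = B^{-1}c$.

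First I would pin down the size of $c$. The vector $c$ is a subvector of $b$, which has integer entries in $[-2^L, 2^L]$, so $\|c\|_2 \le d \cdot 2^L$. Moreover, since we are assuming $x_R \neq 0$ and $Bx_R = c$, we must have $c \neq 0$; because $c$ is integral this forces $\|c\|_2 \ge 1$. With this, I would bound
\[
\|x_R\|_2 = \|B^{-1} c\|_2 \le \|B^{-1}\|_2 \cdot \|c\|_2 \le d \cdot 2^L \cdot \|B^{-1}\|_2,
\]
and in the other direction
\[
\|x_R\|_2 \ge \frac{\|c\|_2}{\|B\|_2} \ge \frac{1}{\|B\|_2}.
\]
Both bounds are then reduced to controlling $\|B\|_2$ and $\|B^{-1}\|_2$ by $\poly(nd \cdot 2^L/\sigma)$, which is exactly the content of Lemma \ref{lem:slp_cond}.

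The main obstacle is that $B$ is not fixed in advance: which subset of $d$ constraints becomes tight depends on the sample $R$ and on the perturbation $G_{\sigma,t}$ itself. To handle this, I would union-bound over all possible bases. Since the protocol runs for $O(d\log n)$ iterations in expectation (by the analysis of the non-smoothed version), and in each iteration the optimal vertex is determined by one of at most $\binom{n}{d} \le n^d$ $d$-subsets of constraints, it suffices to invoke Lemma \ref{lem:slp_cond} with success probability at least $1 - n^{-\Omega(d)} \cdot (nd)^{-\Omega(1)}$. Inspecting the proof of Lemma \ref{lem:slp_cond}, the failure probability is $\exp(-\Omega(\mathrm{polylog}))$-type coming from Gaussian tail bounds and the Sankar–Spielman–Teng condition number estimate, so by choosing the polynomial in the statement of Lemma \ref{lem:slp_cond} large enough (equivalently, taking $t$ a sufficiently large polynomial in $\log(nd\cdot 2^L/\sigma)$), one can drive the per-subset failure probability below $n^{-cd}$ for any constant $c$. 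A union bound then gives the desired bound on $\|B\|_2$ and $\|B^{-1}\|_2$ simultaneously for every $d$-subset, which yields the claim.

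Combining these steps gives $\|x_R\|_2 \le d \cdot 2^L \cdot \poly(nd\cdot 2^L/\sigma) = \poly(nd \cdot 2^L/\sigma)$ and $\|x_R\|_2 \ge 1/\poly(nd\cdot 2^L/\sigma)$, as required. I do not expect any computational difficulty; the only delicate point is the union-bound strengthening of Lemma \ref{lem:slp_cond}, which is straightforward once one is careful about what parameters the hidden polynomial in its statement is allowed to depend on.
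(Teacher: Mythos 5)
Your core argument is exactly the paper's: take a non-singular tight subsystem $Bx_R = c$ of the sampled constraints, use integrality of $c$ (which is an unperturbed subvector of $b$) to get $1 \le \|c\|_2 \le \poly(d)2^L$ when $c \neq 0$ (and note $c=0$ forces $x_R=0$), and then sandwich $\|x_R\|_2$ between $\|c\|_2/\|B\|_2$ and $\|B^{-1}\|_2\|c\|_2$, invoking Lemma \ref{lem:slp_cond} for the norm and inverse-norm bounds. Up to this point you match the paper's proof.

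The gap is in the step you yourself flag as the delicate one: the claimed strengthening of Lemma \ref{lem:slp_cond} to per-basis failure probability $n^{-cd}$, which you need for a union bound over the $\binom{n}{d}$ possible bases. The failure probability in Lemma \ref{lem:slp_cond} is \emph{not} of exponential type for the inverse norm: the bound on $\|(B+G_{\sigma,t})^{-1}\|_2$ comes from Theorem 3.3 of \cite{sankar2006smoothed}, whose tail is polynomial, roughly $\Pr[\|(B+G)^{-1}\|_2 \ge x] = O(\sqrt{d}/(x\sigma))$ (only the bound on $\|B+G\|_2$ enjoys Gaussian-type concentration). Consequently, to drive the per-basis failure probability down to $n^{-cd}$ you would have to raise the threshold on $\|B^{-1}\|_2$ to roughly $n^{\Theta(d)}/\sigma$, i.e., $2^{\Theta(d\log n)}$, which is no longer $\poly(nd\cdot 2^L/\sigma)$; this would weaken the lemma to $\|x_R\|_2 \le (nd\cdot 2^L/\sigma)^{\Theta(d)}$ and, downstream, inflate the bit complexity of the rounded solution $\hat{x}_R$ by a factor of $d$, defeating the purpose of the smoothed analysis. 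So the union-bound route as you describe it does not go through with the stated bounds. The paper instead applies Lemma \ref{lem:slp_cond} directly to the realized subsystem with failure probability $1/\poly(nd)$ per application, union-bounding only over the $O(d\log n)$ iterations (and over constraints in the subsequent lemma), and does not attempt a union bound over all bases; if you want to be more careful than the paper about the data-dependence of $B$, you need a different device (e.g., conditioning on the sample $R$ and union-bounding only over the $\binom{9d^2}{d}$ bases inside $R$ is still too many for a polynomial tail, so some argument exploiting independence of the unsampled noise or a net/leave-one-out style argument is required), not a blanket strengthening of the per-basis probability.
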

\begin{proof}
From polyhedral theory we know that there exists a non-singular subsystem of the sampled $9d^2$ constraints $R$, say $Bx \le c$, such that $x_R$ is the unique solution of $Bx = c$.

If $c \neq 0$,
since each entry of $B$ was pertubed by a discrete Gaussian noise, and all entries of $c$ are integers in the range $[0, 2^{L} - 1]$, 
by Lemma \ref{lem:slp_cond} we have
$$
\|x_R\|_2 \le \|B^{-1}\|_2 \|c\|_2 \le \poly(nd \cdot 2^L / \sigma).
$$
Furthermore, since $\|B\|_2 \le \poly(nd \cdot 2^L / \sigma)$,
$$
\|x_R\|_2 \ge 1 /  \poly(nd \cdot 2^L / \sigma).
$$

If $c = 0$, then we must have $x_R = 0$, since $Bx = c$ is non-singular and thus $x_R = 0$ is the unique solution.
\end{proof}

Now we create a family of events $\{ \mathcal{E}_i\}_{i= 1}^{ \infty}$.
We use $\mathcal{E}_i$ to denote the event that, during the $i$-th loop of the execution of the protocol in Figure \ref{alg:smooth_clarkson}, for each constraint $h \notin R$, the constraint $h$ can be satisfied by $x_R$ if and only if it can be satisfied by $\hat{x}_R$.
Notice that for those constraints in $R$, $x_R$ can always satisfy them, by definition of $x_R$.

Now we show that for each $\mathcal{E}_i$, the probability that $\mathcal{E}_i$ holds is at least $1 - 1 / \poly(nd)$. 
By showing this, we have actually shown our algorithm follows the computation path of the original Clarkson's algorithm in Figure \ref{alg:clarkson} when executing on the perturbed instance, with high probability. 
Since the original Clarkson's algorithm in Figure \ref{alg:clarkson} terminates in $O(d \log n)$ rounds with probability at least $0.999$, the correcntess of our algorithm follows by applying a union bound over all events $\{\mathcal{E}_i\}_{i = 1}^{O(d \log n)}$.

To show that for each $\mathcal{E}_i$, the probability that $\mathcal{E}_i$ holds is at least $1 - 1 / \poly(nd)$, by applying a union bound over all constraints, it suffces to show that for each constaint $h \notin R$, $x_R$ can satisfy $h$ if and only if $\hat{x}_R$ can satisfy $h$, with probability $1 - 1 / \poly(nd)$.

\begin{lemma}
For each constraint $h \notin R$, 
with probability $1 - 1 / \poly(nd)$,
$x_R$ can satisfy $h$ if and only if $\hat{x}_R$ can satisfy $h$.
\end{lemma}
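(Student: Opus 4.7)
The plan is to reduce the claim to a one-dimensional anti-concentration bound on the truncated Gaussian noise attached to row $h$. First I would dispose of the trivial case $x_R = 0$, in which $\hat{x}_R = x_R$ and the two verdicts automatically agree; otherwise Lemma~\ref{lem:x_norm} gives $1/\poly(nd\cdot 2^L/\sigma) \le \|x_R\|_2 \le \poly(nd\cdot 2^L/\sigma)$ with failure probability $1/\poly(nd)$. Write the $h$-th constraint as $\tilde{a}^T x \le b_h$ with $\tilde{a} = a + g_h$, where $g_h$ is a row of i.i.d.\ truncated Gaussians; a standard Gaussian tail bound (analogous to the first half of Lemma~\ref{lem:slp_cond}) gives $\|\tilde{a}\|_2 \le \poly(nd\cdot 2^L/\sigma)$ with high probability, so
\[
\bigl|\tilde{a}^T \hat{x}_R - \tilde{a}^T x_R\bigr| \le \|\tilde{a}\|_2\cdot\sqrt{d}\,\delta \le \eta := \tfrac{1}{\poly(nd\cdot 2^L/\sigma)}
\]
by choosing $\delta$ polynomially smaller. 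The verdicts on $h$ therefore differ only if $\tilde{a}^T x_R$ lies in the window $[b_h - \eta,\, b_h + \eta]$.

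The core step is to bound the probability of this window event by $1/\poly(nd)$. Because $h \notin R$, the row of $h$ was not used in computing $x_R$ in the current iteration, so after conditioning on the state of the algorithm entering this iteration together with every noise vector $g_i$ for $i \in R$, the vector $g_h$ retains its truncated-Gaussian distribution. Writing $g_h = \trunc_t(g'_h)$ for a genuine $\mathcal{N}(0,\sigma^2 I_d)$ vector $g'_h$, the discrepancy $|g_h^T x_R - (g'_h)^T x_R|$ is at most $\sqrt{d}\,2^{-t}\|x_R\|_2 \le 1/\poly(nd)$ once $t = \Omega(\log(nd/\sigma) + L)$, while $(g'_h)^T x_R$ is a one-dimensional Gaussian of standard deviation $\sigma\|x_R\|_2 \ge 1/\poly(nd\cdot 2^L/\sigma)$. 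Standard Gaussian anti-concentration then yields
\[
\Pr\!\left[\,(g'_h)^T x_R \in \bigl[c - 2\eta,\, c + 2\eta\bigr]\right] \le O\!\left(\frac{\eta}{\sigma\|x_R\|_2}\right) \le \frac{1}{\poly(nd)},
\]
where $c := b_h - a^T x_R$, provided $\delta$ is taken sufficiently small. Combining this with the earlier bounds via a union bound proves the lemma.

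The main subtlety I expect to navigate is justifying the independence of $g_h$ from $x_R$ rigorously: the algorithm's state entering the current iteration was itself shaped by verdicts on $h$ in earlier iterations, which depend on $g_h$ through $\tilde{a}$. I would address this by invoking the inductive structure of the outer proof---conditioning on $\mathcal{E}_1,\dots,\mathcal{E}_{i-1}$ so that the algorithm's trajectory so far coincides with that of the exact (unrounded) Clarkson algorithm---at which point the current $R$ and the noises on the rows in $R$ determine $x_R$ deterministically, and $g_h$ remains a fresh truncated Gaussian independent of this conditioning, so the anti-concentration estimate applies cleanly.
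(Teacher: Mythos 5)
Your proposal is correct and follows essentially the same route as the paper's proof: dispose of $x_R=0$, use Lemma \ref{lem:x_norm} for the norm bounds, bound the rounding perturbation $|\langle a_h+g_{\sigma,t},\,x_R-\hat{x}_R\rangle|$ via Cauchy--Schwarz, exploit that $g_{\sigma,t}$ is independent of $x_R$ because $h\notin R$, compare the truncated noise to a true Gaussian (error $\poly(d)2^{-t}$), and finish with Gaussian anti-concentration, choosing $\delta$ polynomially small. Your explicit treatment of the shifted window around $c=b_h-a^Tx_R$ and your flagging of the conditioning subtlety are, if anything, slightly more careful than the paper's write-up, but the argument is the same.
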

\begin{proof}
If $x_R = 0$, then $\hat{x}_R = x_R = 0$, in which case the lemma follows trivially. Thus we assume $x_R \neq 0$ in the remaining part of this proof.

The constraint $h$ can be written as $(a_h + g_{\sigma, t}) x \le b_h$, for some vector $a_h \in \R^d$ and some $b_h \in \R$, and all entries of $g_{\sigma, t} \in \R^d$ are i.i.d. copies of $\trunc_t(g)$ and $g$ is a Gaussian random variable with zero mean and $\sigma$ standard deviation. 
Notice that since $h \notin R$, the vector $g_{\sigma, t}$ and the vector $x_R$ are independent. 
By Lemma \ref{lem:x_norm}, with probability at least $1 - 1 / \poly(nd)$, 
$$
1 / \poly(nd \cdot 2^L / \sigma) \le \|x_R\|_2 \le \poly(nd \cdot 2^L / \sigma).
$$
Furthermore, the probability that $x_R$ can satisfy $h$ but $\hat{x}_R$ cannot satisfy $h$, or $x_R$ cannot satisfy $h$ but $\hat{x}_R$ can satisfy $h$, is at most
$$
\Pr[|\langle  g_{\sigma, t}, x_R\rangle | \le  |\langle  a_h + g_{\sigma, t}, x_R - \hat{x}_R\rangle |].
$$
We first analyze the right hand side of the inequality. Notice that $\|a_h\|_2 \le \poly(d \cdot 2^L)$, and $\|g_{\sigma, t}\|_2 \le \poly(nd)$ with probability at least $1 - 1 / \poly(nd)$ by tail inequalities of the Gaussian distribution and $\sigma \le 1$. Moreover, $\|x_R - \hat{x}_R\|_2 \le \poly(d) \cdot \delta \le 1 / \poly(dn \cdot 2^L / \sigma)$.
Thus by Cauchy-Schwarz, 
$$
\Pr[ |\langle  a_h + g_{\sigma, t}, x_R - \hat{x}_R\rangle | \le 1 / \poly(dn \cdot 2^L / \sigma)] \ge 1 - 1 / \poly(nd).
$$

On the other hand,  if we write $g_{\sigma, \infty}$ to denote a vector whose entries are the Gaussian random variables of $g_{\sigma, t}$ before applying the truncation operation, then $\|g_{\sigma, \infty} - g_{\sigma, t}\|_2 \le \poly(d)2^{-t}$.

Thus, by taking $t = \Omega(\log(nd / \sigma) + L)$, we have
$$
|\langle  g_{\sigma, t}, x_R\rangle| \ge | \langle  g_{\sigma, \infty}, x_R\rangle| - \poly(d) 2^{-t} \|x_R\|_2 \ge| \langle  g_{\sigma, \infty}, x_R\rangle| - 1 / \poly(nd \cdot 2^L / \sigma).
$$
By the lower tail inequality of the Gaussian distribution and the fact that $\|x_R\|_2 \ge 1 / \poly(nd \cdot 2^L / \sigma)$, we have with probability at least $1 - 1 / \poly(nd)$,
$$
|\langle  g_{\sigma, t}, x_R\rangle| \ge 1 / \poly(nd \cdot 2^L / \sigma).
$$
Thus, the lemma follows by appropriately adjusting the constant in $O(1  / \poly(nd \cdot 2^L / \sigma)) = \delta$.
\end{proof}
\subsubsection{Communication Complexity of the Algorithm}
The analysis in the preceding section shows that with high probability, our modified Clarkson's algorithm follows the computation path of the original Clarkson's algorithm, and thus also terminates within $O(d \log n)$ rounds with probability at least $0.999$.
Furthermore, with high probability, the discrete Gaussian noise of all entries is upper bounded by $O(nd)$. Thus, the bit complexity of sending each constraint will be $\widetilde{O}(d(L + t))$, with high probability.

The sampling process in Step \ref{step:slp_sample} requires $\wt{O}(d^3(L + t) + s)$ bits of communication to sample $O(d^2)$ constraints. 
To implement Step \ref{step:slp_solve}, we need to verify the bit complexity of $\hat{x}_R$.
Since we round each entry of $x_R$ to its nearest integer multiple of $\delta$, and by Lemma \ref{lem:x_norm}, with high probability, $\|x_R\|_2 \le \poly(nd \cdot 2^L / \sigma)$, the communication compleixty for sending $\hat{x}_R$ is upper bounded by $\wt{O}(sd(L + \log(1 / \sigma)))$.
The communication complexity of the last two steps of the protocol is still upper bounded by $O(s \log n)$. 
Thus, the smoothed communication complexity is $\widetilde{O}(sd^2(L + \log(1 / \sigma)) + d^4(L + t))$ in the coordinator model. 

\begin{theorem}
For $t = \Omega(\log(nd / \sigma) + L)$, the protocol in Figure \ref{alg:smooth_clarkson} correctly solves smoothed linear programming with probability at least $0.99$, and the smoothed communication compleixty is
$$
SC_{\mathcal{P}, \sigma, t} \le \wt{O}(sd^2(L + \log(1 / \sigma)) + d^4(L + t))
$$
in the coordinator model. 
\end{theorem}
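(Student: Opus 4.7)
The plan is to split the proof into correctness and communication, coupling the execution of Figure~\ref{alg:smooth_clarkson} to a virtual execution of the unmodified Clarkson's algorithm (Figure~\ref{alg:clarkson}) on the same perturbed matrix $A + G_{\sigma,t}$. Let $\mathcal{E}_i$ denote the event that in iteration $i$, every unsampled constraint $h \notin R$ is satisfied by $\hat{x}_R$ if and only if it is satisfied by the exact optimizer $x_R$. Conditioned on $\bigcap_{i \le K} \mathcal{E}_i$ with $K = O(d \log n)$, the two executions visit identical states, so the $O(d\log n)$-iteration expected termination guarantee of Clarkson's algorithm transfers to ours and the returned solution is correct for the perturbed LP.

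The main work is to establish $\Pr[\mathcal{E}_i] \geq 1 - 1/\poly(nd)$. Fix any unsampled constraint $(a_h + g_{\sigma,t})^T x \leq b_h$; since the noise vector on this row is independent of the sample $R$ (and hence of $x_R$), a disagreement between $x_R$ and $\hat{x}_R$ on this constraint would force
\[
|\langle g_{\sigma,t}, x_R\rangle| \;\leq\; |\langle a_h + g_{\sigma,t},\, x_R - \hat{x}_R\rangle|.
\]
For the right-hand side, I use $\|a_h\|_2 \leq \poly(d \cdot 2^L)$, Gaussian tail bounds on $\|g_{\sigma,t}\|_2$, and the rounding precision $\|x_R - \hat{x}_R\|_2 \leq \poly(d)\,\delta$; setting $\delta = 1/\poly(nd \cdot 2^L / \sigma)$ pushes this upper bound down to $1/\poly(nd \cdot 2^L / \sigma)$. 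For the left-hand side, write $x_R = B^{-1}c$ for the non-singular subsystem $Bx = c$ selected from $R$, and invoke the smoothed condition-number bound (Lemma~\ref{lem:slp_cond}, via Sankar--Spielman--Teng) to get $\|B^{-1}\|_2 \leq \poly(nd \cdot 2^L/\sigma)$ with high probability; by Lemma~\ref{lem:x_norm} this yields $\|x_R\|_2 \geq 1/\poly(nd \cdot 2^L/\sigma)$ whenever $x_R \neq 0$. Apply Gaussian anti-concentration to $\langle g_{\sigma,\infty}, x_R\rangle$ and absorb the truncation gap $\|g_{\sigma,t} - g_{\sigma,\infty}\|_2 \leq \poly(d)\, 2^{-t}$, which is negligible as soon as $t = \Omega(\log(nd/\sigma) + L)$, to conclude $|\langle g_{\sigma,t}, x_R\rangle| \geq 1/\poly(nd \cdot 2^L/\sigma)$ with the same probability. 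A union bound over the $n$ constraints in each iteration and the $O(d \log n)$ iterations closes the correctness argument.

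For communication, I condition on the same high-probability event and count per round. Each truncated Gaussian entry uses $\widetilde{O}(L+t)$ bits, so sampling $O(d^2)$ constraints through the coordinator costs $\widetilde{O}(d^3(L+t) + s)$. By Lemma~\ref{lem:x_norm} one has $\|x_R\|_2 \leq \poly(nd \cdot 2^L/\sigma)$, and after rounding to multiples of $\delta$, each coordinate of $\hat{x}_R$ fits in $\widetilde{O}(L + \log(1/\sigma))$ bits, so broadcasting $\hat{x}_R$ to all $s$ servers costs $\widetilde{O}(sd(L + \log(1/\sigma)))$. The remaining feasibility-count exchange adds only $O(s \log n)$ per iteration. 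Multiplying by the $O(d \log n)$ iteration bound delivers the claimed $\widetilde{O}(sd^2(L + \log(1/\sigma)) + d^4(L+t))$ total.

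The main obstacle is the lower bound on $|\langle g_{\sigma,t}, x_R\rangle|$: I need a quantitative anti-concentration statement for the inner product of a \emph{discrete} Gaussian vector with the data-dependent direction $x_R$. Independence between $g_{\sigma,t}$ and $x_R$ (because $h \notin R$) handles the conditioning, but one still has to check that rounding the noise to multiples of $2^{-t}$ does not destroy anti-concentration. This is exactly where the hypothesis $t = \Omega(\log(nd/\sigma)+L)$ enters: it keeps the discretization error $\poly(d)\,2^{-t}$ well below the anti-concentration scale $1/\poly(nd\cdot 2^L/\sigma)$, so the continuous Gaussian lower-tail estimate transfers to the truncated model.
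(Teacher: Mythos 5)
Your proposal matches the paper's own argument essentially step for step: the same coupling to the unperturbed Clarkson execution via the events $\mathcal{E}_i$, the same use of the smoothed condition-number bound and the norm bounds on $x_R$ to set up the comparison $|\langle g_{\sigma,t}, x_R\rangle| \le |\langle a_h + g_{\sigma,t}, x_R - \hat{x}_R\rangle|$, the same Gaussian anti-concentration plus truncation-gap absorption using $t = \Omega(\log(nd/\sigma)+L)$, and the same per-round communication accounting. It is correct; the only cosmetic difference is that the paper also notes explicitly that the degenerate case $x_R = 0$ is trivial since then $\hat{x}_R = x_R$.
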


\section{The Center of Gravity Method}\label{sec:cog}

In this section, we discuss how to implement the center-of-gravity cutting-plane method \cite{Grunbaum1960} in the distributed setting.
The description of the protocol can be found in Figure \ref{alg:cog}.

The servers each maintain a polytope $P$ (the same one for all servers), adding a constraint in each iteration. Each server also maintains the center of the polytope $z$ and its covariance $C$. 

For any vector $a \in \R^d$, its $\eps$-rounding $\tilde{a}$ w.r.t. to $C$ is defined as follows: Let $B = C^{1/2}$. We take the unit vector $B^Ta/\|B^Ta\|_2$, round it down to the nearest multiple of $\eps$ in each coordinate. 
So we have $\|\tilde{a}-B^Ta / \|B^Ta\|_2\|_2 \le \eps\sqrt{d}$.


\begin{figure}
\begin{framed}
\begin{enumerate}
\item Set $P=[-R,R]^n, z=0, C = I$.

\item Repeat, while no feasible solution found, and at most $T$ times:
\begin{enumerate}
\item Set $z$ to be the centroid of $P$ and 
$C$ to be the covariance matrix of the uniform distribution over $P$, i.e., $C = \E_{x \sim P} (x-z)(x-z)^T$.
\item Each server $P_i$ checks $z$ against their subset of constraints. If a violated constraint $a \cdot z > b_i$ is found, and no violation has been reported by other servers so far, then it rounds $a$ to $\tilde{a}$ and broadcasts. If a violation has been reported, it uses the $\tilde{a}$ broadcast by the server that first found the violation.
\item Set $P= P \cap \{x: C^{-1/2}\tilde{a} \cdot x \le C^{-1/2}\tilde{a}\cdot z + \eps d^{3/2}\|C^{-1/2}\tilde{a}\|_2\}$. 
\end{enumerate}
\end{enumerate}
\end{framed}
\caption{Center-of-Gravity Algorithm}
\label{alg:cog}
\end{figure}

If each server were to report the exact violated constraint, the volume of $P$ would drop by a constant factor in each iteration. To reduce the communication, we round the constraint and shift it away a bit to make sure that the rounded constraint (1) is still valid for the target LP and (2) it is close enough that the volume still drops by a constant factor.

\begin{lemma}[\cite{BV04}]\label{lem:cut}
Let $z$ be the center of gravity of an isotropic convex body $K$ in $\R^d$. Then, for any halfspace $H$ within distance $t$ of $z$, we have
\[
\vol(K \cap H) \ge \left(\frac{1}{e}-t\right)\vol(K).
\] 
\end{lemma}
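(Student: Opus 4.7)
The plan is to reduce to Grünbaum's classical inequality for the centered halfspace, and then to bound the thin ``strip'' lost by shifting the cut away from $z$, using the fact that an isotropic log-concave density on $\R$ has uniformly bounded density.

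First, I would normalize by translating so that $z = 0$ and rotating so that $H = \{x \in \R^d : x_1 \ge s\}$ for a signed scalar $s$ with $|s| \le t$. If $s \le 0$ then $0 = z \in H$, so Grünbaum's inequality applied to the halfspace $\{x_1 \ge s\} \supseteq \{x_1 \ge 0\}$ already yields
\[
\vol(K \cap H) \ge \vol(K \cap \{x_1 \ge 0\}) \ge \tfrac{1}{e}\vol(K) \ge (\tfrac{1}{e} - t)\vol(K),
\]
and we are done. So the interesting case is $0 < s \le t$, where I decompose
\[
\vol(K \cap H) = \vol(K \cap \{x_1 \ge 0\}) - \vol(K \cap \{0 \le x_1 < s\}).
\]

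Second, I would apply Grünbaum's inequality to the first term: since $z = 0$ is the centroid of $K$, $\vol(K \cap \{x_1 \ge 0\}) \ge (d/(d+1))^d \vol(K) \ge \tfrac{1}{e} \vol(K)$.

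Third, I would bound the strip by
\[
\vol(K \cap \{0 \le x_1 < s\}) \le s \cdot \sup_{u \in \R} \vol(K \cap \{x_1 = u\}) = s \cdot \vol(K) \cdot \sup_{u} g(u),
\]
where $g(u) = \vol(K \cap \{x_1 = u\})/\vol(K)$ is the density of the first coordinate under the uniform distribution on $K$. By the Brunn–Minkowski inequality, $g^{1/(d-1)}$ (and hence $g$ itself) is log-concave, and because $K$ is isotropic, $g$ has mean $0$ and variance $1$. A standard bound for one-dimensional isotropic log-concave densities then gives $\sup_u g(u) \le 1$. Combining, $\vol(K \cap H) \ge \tfrac{1}{e}\vol(K) - s\vol(K) \ge (\tfrac{1}{e} - t)\vol(K)$, as desired.

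The main obstacle is pinning down the constant in the marginal density bound: everything else (reduction by translation and rotation, decomposition into Grünbaum's term plus a strip, and the classical Grünbaum inequality) is routine, but the clean form $\tfrac{1}{e} - t$ relies on the sharp estimate $\sup g \le 1$ for an isotropic log-concave density on $\R$ rather than a larger universal constant; if one were willing to weaken the statement to $\tfrac{1}{e} - Ct$, a weaker and more elementary bound on $\sup g$ would suffice, without affecting the subsequent use of the lemma in Figure~\ref{alg:cog} (where $t$ is chosen to be a small fraction of $1/e$).
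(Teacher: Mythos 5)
Your proof is correct: reducing to the cut through the centroid via Gr\"unbaum's inequality and then bounding the lost strip by the width $t$ times the supremum of the one-dimensional marginal, which is log-concave by Brunn--Minkowski and has $\sup g \le 1$ in isotropic position (sharp, attained by the exponential density), gives exactly $\left(\frac{1}{e}-t\right)\vol(K)$. The paper itself gives no proof of this lemma but cites \cite{BV04}, and your argument is essentially the standard derivation from that source, so there is nothing further to reconcile.
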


\begin{lemma}
For $\eps < 0.1/d\sqrt{d}$, the volume of the polytope $P$ maintained by each server drops by a constant factor in each iteration.
\end{lemma}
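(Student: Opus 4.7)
The plan is to change variables so that $P$ becomes an isotropic convex body centered at the origin, then apply Grünbaum's lemma (Lemma~\ref{lem:cut}) in the isotropic coordinates. Concretely, let $B = C^{1/2}$ (the symmetric PSD square root) and set $y = B^{-1}(x-z)$; then $P_y := B^{-1}(P-z)$ is isotropic with centroid at the origin, and $\vol(P_y)$ differs from $\vol(P)$ only by the multiplicative constant $|\det B^{-1}|$, so a constant-factor drop in $\vol(P_y)$ yields the same drop in $\vol(P)$. A direct substitution using $Bw = \tilde a$ (where $w := C^{-1/2}\tilde a = B^{-1}\tilde a$ and $B = B^{T}$) shows that the newly added halfspace becomes, in $y$-coordinates, $H = \{y : \tilde a\cdot y \le \eps d^{3/2}\|B^{-1}\tilde a\|_2\}$, with the origin lying on the kept side.

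The heart of the argument is to bound the $y$-space distance $t$ from the origin to the bounding hyperplane of $H$. Because $\tilde a$ is obtained by coordinate-wise rounding of the unit vector $u = Ba/\|Ba\|_2$, we have $\|\tilde a - u\|_2 \le \eps\sqrt d$, and hence $\|\tilde a\|_2 \ge 1 - \eps\sqrt d \ge 1/2$, so
\[
t \;=\; \frac{\eps d^{3/2}\,\|B^{-1}\tilde a\|_2}{\|\tilde a\|_2}.
\]
Using the isotropy of $P_y$ (which gives $\max_{y\in P_y}\|y\|_2 = O(d)$) together with the validity of the cut as a relaxation of the violated LP constraint — i.e., the shift $\eps d^{3/2}\|B^{-1}\tilde a\|_2$ in the offset is exactly what dominates the rounding error $\eps\sqrt d \cdot \|y\|_2$ across LP-feasible $y \in P_y$ satisfying $u\cdot y \le 0$ — one argues that $\|B^{-1}\tilde a\|_2/\|\tilde a\|_2$ is $O(1)$ in isotropic units, so that $t = O(\eps d^{3/2})$. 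For $\eps < 0.1/(d\sqrt d)$ this forces $t$ to be an absolute constant strictly less than $1/e$.

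Finally, apply Lemma~\ref{lem:cut} to the isotropic body $P_y$ with the complementary halfspace $H^c$, whose bounding hyperplane is within $y$-distance $t$ of the centroid: $\vol(P_y \cap H^c) \ge (1/e - t)\vol(P_y)$, and therefore $\vol(P\cap H) \le (1 - 1/e + t)\vol(P)$, which is a constant strictly smaller than $1$ times $\vol(P)$. The main obstacle is the middle step: tightly controlling $\|B^{-1}\tilde a\|_2$ against the isotropic geometry of $P_y$ so as to turn the $\eps d^{3/2}\|B^{-1}\tilde a\|_2$ offset back into an $O(\eps d^{3/2})$ quantity measured in $y$-space. This is precisely where the choice $\eps < 0.1/(d\sqrt d)$ is used, since it leaves enough slack after the $d^{3/2}$ blow-up to land comfortably inside the regime where Grünbaum's lemma gives a nontrivial bound.
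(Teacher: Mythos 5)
Your overall strategy (pass to the isotropic frame via $B=C^{1/2}$, bound the rounding error by $\eps\sqrt d$ times the circumradius $O(d)$ of an isotropic body, then invoke Lemma \ref{lem:cut}) is the same as the paper's, but the crucial middle step is asserted rather than proved, and as stated it is false. You keep the factor $\|B^{-1}\tilde a\|_2$ in the offset of the cut and then claim that $\|B^{-1}\tilde a\|_2/\|\tilde a\|_2 = O(1)$ ``in isotropic units.'' Nothing forces this: up to the rounding, $\tilde a$ is the unit vector $Ba/\|Ba\|_2$, so $B^{-1}\tilde a \approx a/\|Ba\|_2$ and $\|B^{-1}\tilde a\|_2 \approx \|a\|_2/\|Ba\|_2$, which is as large as $1/\lambda_{\min}(C^{1/2})$ when the violated constraint direction $a$ is aligned with the thinnest direction of the current polytope --- exactly the typical situation after several cuts, where $\lambda_{\min}(C^{1/2})$ can be exponentially small in $dL$. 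Shrinking $\eps$ cannot repair this, since $\|B^{-1}\tilde a\|_2$ does not depend on $\eps$; the hypothesis $\eps < 0.1/d^{3/2}$ is only there to make $\eps d^{3/2}<0.1$, not to absorb a norm of $C^{-1/2}$. Indeed, under your literal reading the lemma's conclusion would not even hold: the offset $\eps d^{3/2}\|B^{-1}\tilde a\|_2$ could exceed the circumradius of the isotropic body, so the ``cut'' would remove nothing.

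The paper sidesteps this entirely because both the rounding and the $\eps d^{3/2}$ shift are carried out in the isotropic frame: assuming WLOG that $P$ is isotropic (so $C=I$, $z=0$), the feasible region lies in $\{x: a\cdot x\le 0\}$ with $\|a\|_2=1$, the imposed cut is $\tilde a\cdot x\le \eps d^{3/2}$, validity follows from $\tilde a\cdot x \le a\cdot x+\|\tilde a-a\|_2\|x\|_2\le \eps\sqrt d\cdot d$, and since $\|\tilde a\|_2\ge 1-\eps\sqrt d$ the bounding hyperplane is within distance roughly $\eps d^{3/2}<0.1$ of the centroid, so Lemma \ref{lem:cut} applies directly; no quantity like $\|C^{-1/2}\tilde a\|_2$ ever needs to be controlled. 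To fix your argument, interpret the shift so that the cut's distance from $z$ is $\eps d^{3/2}$ measured in the norm induced by $C^{-1}$ (equivalently, offset $\eps d^{3/2}\|\tilde a\|_2$ in the $y$-coordinates); with that reading your remaining steps --- validity via the $\eps\sqrt d\cdot d$ estimate and Gr\"unbaum with $t\le 0.1<1/e$ --- go through verbatim.
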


\begin{proof}
Assume without loss of generality that $P$ is isotropic. If the centroid $z=0$ is not feasible, we get a violated constraint such that the entire feasible region lies in the halfspace $a\cdot x \le 0$ with $\|a\|_2=1$. Now we replace $a$ by $\tilde{a}$. As a result,
\[
\tilde{a}\cdot x \le a \cdot x + \|\tilde{a}-a\|_2 \|x\|_2 \le \eps d^{3/2}.
\]
Here we used the fact that in isotropic position any convex body is contained in a ball of radius $d$, so $\|x\|_2\le d$ for all of $P$ and therefore for the feasible region. Thus the constraint imposed by the algorithm is valid.

Next, we note that the distance of the constraint from the origin is at most $\eps d^{3/2}$, so for $\eps < 0.1/d^{3/2}$, it is less than $0.1$ (in isotropic position). By Lemma \ref{lem:cut}, with $t=0.1$, the volume of $P$ drops by a constant factor.
\end{proof}

\begin{theorem}
The protocol in Figure \ref{alg:cog} is a deterministic protocol for solving linear programming with communication complexity $O(sd^3L\log^2 d)$ in both the coordinator model and the blackboard model. 
\end{theorem}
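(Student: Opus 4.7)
The plan is to establish three things: (i) every rounded cutting plane added is a valid inequality for the true feasible region, which is already handled by the preceding lemma via the additive shift $\epsilon d^{3/2}\|C^{-1/2}\tilde a\|_2$ together with the fact that $\|x\|_2 \le d$ throughout isotropic $P$; (ii) the number of iterations $T$ is $\tilde O(d^2L)$; and (iii) each iteration uses $\tilde O(sd)$ bits of communication.

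For (ii), I would use a standard LP bit-complexity argument. Choose $R = 2^{\Theta(dL)}$ so that $[-R,R]^d$ contains every vertex of the LP; then $\vol([-R,R]^d) = 2^{\Theta(d^2 L)}$. If the LP is feasible and full-dimensional, Cramer's rule bounds the coordinates of its vertices by $2^{\Theta(dL)}$ and shows that the feasible region contains a Euclidean ball of radius at least $2^{-\Theta(dL)}$, so its volume is at least $2^{-\Theta(d^2 L)}$. Because the preceding lemma guarantees a constant-factor volume drop per iteration (recall $\epsilon < 0.1/d^{3/2}$), the algorithm must find a feasible $z$ within $T = O(d^2L)$ iterations. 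Lower-dimensional or infeasible LPs are reduced to this case by perturbing $b$ by an inverse-polynomial (in $2^{dL}$) amount, which either makes the LP full-dimensional feasible or certifies infeasibility, without changing the asymptotic bounds.

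For (iii), in each iteration at most one server needs to broadcast the rounded direction $\tilde a$, whose $d$ coordinates are integer multiples of $\epsilon = \Theta(1/d^{3/2})$ lying in $[-1-\epsilon, 1+\epsilon]$, so $\tilde a$ fits in $O(d\log d)$ bits. In the blackboard model this costs $O(d\log d)$ bits per iteration; in the coordinator model the broadcasting server sends $\tilde a$ to the coordinator who relays it to the other $s-1$ servers, costing $O(sd\log d)$ per iteration. An additional $O(s)$ bits per iteration suffice to poll for which server found a violation, or to announce termination when no server finds one. Multiplying by $T$ gives total communication $O(sd^3L\log^2 d)$, matching the claim in both models.

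The main obstacle will be point (ii): the volume-shrinkage lemma is already in hand, but cleanly handling degenerate or infeasible instances via a perturbation argument---while keeping the bit complexity of all coefficients under $\poly(dL)$ so that encoding costs remain as claimed---requires the usual LP bit-complexity machinery and is the most technically delicate step of the proof.
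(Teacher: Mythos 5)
Your core argument is the same as the paper's: validity of the shifted, rounded cut comes from the preceding lemma; the iteration count comes from the standard vertex bit-complexity volume argument (initial box of side $2^{O(dL)}$ versus a $2^{-O(d^2L)}$ terminal volume, with a constant-factor volume drop per round); and each round costs one broadcast of $\tilde a$, which is $O(d\log d)$ bits, i.e.\ $O(sd\log d)$ in the coordinator model, plus $O(s)$ bits of coordination. Your perturbation of $b$ to handle degenerate or infeasible instances is a legitimate alternative to what the paper does (the paper lower-bounds the terminal volume by the volume of a simplex of vertices and invokes the GLS machinery), and the quantitative quibbles are harmless: vertex coordinates are ratios of determinants of size up to $d!\,2^{dL}=d^{O(dL)}$ rather than $2^{\Theta(dL)}$, which puts a $\log d$ into $T$, but the claimed bound already carries $\log^2 d$, so this only affects logarithmic factors.

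The genuine gap is that your proof only establishes the \emph{feasibility} version: your item (ii) ends with ``the algorithm must find a feasible $z$,'' and nowhere do you handle the objective $c^Tx$ or the requirement that the coordinator output an (exact) optimal solution, which is what ``solving linear programming'' means in this paper. The paper closes this with two short but necessary ingredients that you omit: it views the objective as an additional constraint (a sliding-objective/cutting-plane treatment) so that the same volume bounds apply to the optimization version, and at the end it uses simultaneous diophantine approximation (as in GLS) to round the localized point to the exact optimal vertex, which is also what keeps the bit complexity of the output under control. Without these steps your protocol certifies feasibility and produces some feasible $z$, but it does not optimize $c^Tx$ nor recover an exact solution, so it does not prove the stated theorem as written; adding the sliding objective and the final diophantine rounding (and checking that your perturbation of $b$ can be undone when reporting the exact optimum) would complete the argument.
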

\begin{proof}
The algorithm runs for $T = O(d^2L\log d)$ rounds. To see this we note that the each vertex of the feasible region is the solution of a subset of the linear equalities taken to be equalities. Thus, each coordinate of each vertex is a ratio of two determinants of matrices whose entries are $L$-bit numbers and so the maximum distance of a vertex from the origin is $R=d^{O(dL)}$, which upper bounds the volume by $d^{O(d^2L)}$. The smallest any coordinate can be is similarly $d^{-O(dL)}$. The minimum volume we need to go to is the volume spanned by a simplex of vertices, which itself is a determinant with entries of this size. Thus, the volume is at least $d^{-O(d^2L)}$. Since the volume of the polytope maintained drops by a constant factor in each iteration\footnote{The ellipsoid method uses the same argument, except that each round reduces the volume by only $(1-1/d)$ \cite{GLS}.}, the number of rounds is $O(d^2L\log d)$. Each round includes a broadcast of a single vector, with $O(d\log d)$ bits. This is because the size of the $\eps$-net used is $d^{O(d)}$. By viewing the objective function as a constraint, we note that the volume bounds used above apply to the optimization version as well. At the end, we use diophantine approximation to get an exact solution \cite{GLS}.
\end{proof}

A similar argument applies to general convex programming.
\begin{theorem}
The communication complexity of the protocol in Figure \ref{alg:cog} for solving convex programming is $O(sd^2\log d\log(Rd / \eps))$.
\end{theorem}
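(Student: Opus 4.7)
The plan is to mimic the analysis of the preceding LP theorem, replacing the polyhedral volume argument (which bounded $T$ by $O(d^2L\log d)$) with an $\eps$-accuracy volume argument (which bounds $T$ by $O(d\log(Rd/\eps))$). The per-iteration communication cost is unchanged: each iteration broadcasts a single rounded unit normal from an $\eps$-net on $S^{d-1}$ of size $d^{O(d)}$, using $O(d\log d)$ bits, and in the coordinator model this broadcast costs $O(sd\log d)$ bits total.

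First I would initialize $P=[-R,R]^d$, which contains $\bigcap_i K_i$ since each $K_i \subseteq RB^n$. In each round, each server locally tests the centroid $z$ against its own convex body $K_i$ (using the membership/separation oracle). If $z \notin K_i$ for some $i$, the server with smallest index reporting a violation computes a separating hyperplane with unit normal $a$, rounds $a$ to the nearest element $\tilde a$ of an $\eps$-net on $S^{d-1}$, and broadcasts $\tilde a$. Otherwise $z$ is feasible for $\bigcap_i K_i$ and we add the objective cut $c^T x \le c^T z$ to make progress on the optimum. Exactly as in the LP case, the rounded cut $\{x: C^{-1/2}\tilde a \cdot x \le C^{-1/2}\tilde a \cdot z + \eps d^{3/2}\|C^{-1/2}\tilde a\|_2\}$ is valid (it does not exclude any point of the true feasible region, because rounding a unit normal by $\eps$ and applying the standard $\|x\|_2 \le d$ bound in isotropic position introduces at most an $\eps d^{3/2}$ slack) and it still passes within distance $O(\eps d^{3/2})$ of the centroid, so by Lemma \ref{lem:cut} the volume of $P$ drops by a constant factor per iteration.

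Next I would bound the number of iterations. Starting from $\vol(P_0) \le (2R)^d$, after $T$ iterations we have $\vol(P_T) \le (2R)^d \cdot c^T$ for some absolute constant $c < 1$. We stop as soon as either (i) the centroid is feasible and the objective cut shows further improvement is at most $\eps$, or (ii) the diameter of $P$ is at most $\eps$, which by John's theorem follows once $\vol(P) \le (\eps/d)^{\Theta(d)}$. Setting $(2R)^d c^T \le (\eps/d)^{\Theta(d)}$ gives $T = O(d\log(Rd/\eps))$ rounds. In case (ii) any point of $P$ is within $\eps$ in each $K_i$ (since $P$ contains $\bigcap_i K_i$ and has diameter $\le \eps$), giving the desired $y \in \bigcap_i K_i + \eps B^n$; case (i) gives $c^Ty \le OPT + \eps$. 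Multiplying $T$ by the per-round cost $O(sd\log d)$ yields the claimed total $O(sd^2\log d \log(Rd/\eps))$.

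The main obstacle is the rounding step: one has to be careful that the rounded cut is simultaneously (a) not too conservative, so that the feasible region is preserved and the algorithm returns a point that is $\eps$-close to $\bigcap_i K_i$, and (b) not too aggressive, so that the volume still drops by a constant factor by Lemma \ref{lem:cut} (which requires the cut to pass within an absolute constant distance of the centroid in isotropic coordinates). Using the locally maintained covariance matrix $C$ to work in isotropic coordinates, combined with the slack $\eps d^{3/2}$ and the choice $\eps < 0.1/d^{3/2}$ from the LP proof, handles both constraints simultaneously, so this is essentially the same calculation already performed above for LPs. All other steps --- the number of bits per broadcast, the broadcast cost of $s$ servers, and the $T = O(d\log(Rd/\eps))$ bound --- are routine.
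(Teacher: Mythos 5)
Your proposal follows the paper's proof almost exactly: the per-round cost is $O(sd\log d)$ bits (one rounded unit normal from a net of size $d^{O(d)}$, broadcast to $s$ servers), the volume of $P$ drops by a constant factor per round by the same rounded-cut argument already established for LPs, the initial volume is $R^{O(d)}$ and the terminal volume is $(\eps/d)^{\Theta(d)}$, giving $T = O(d\log(Rd/\eps))$ rounds and the claimed total. The paper's own proof is exactly this two-line accounting.

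The one step in your write-up that does not hold as stated is the termination justification (ii): $\vol(P)\le(\eps/d)^{\Theta(d)}$ does \emph{not} imply $\mathrm{diam}(P)\le\eps$ — a thin slab such as $[0,R]\times[0,\delta]^{d-1}$ has arbitrarily small volume but diameter $\ge R$, and John's theorem gives no volume-to-diameter implication. The paper sidesteps this by simply asserting the algorithm ``can stop'' once the volume reaches $(d/\eps)^{-O(d)}$; the standard way to make that rigorous (and what is implicit here, as in the LP case where the role is played by the minimum vertex-simplex volume) is to exhibit a witness set of near-optimal, $\eps$-nearly-feasible points — e.g., a shrinkage of the feasible region toward an optimal point, or a small ball around it — that is never removed by any valid (rounded) cut and has volume at least $(\eps/(Rd))^{O(d)}$; since $\vol(P)$ can never drop below the volume of this never-cut set, the algorithm must have produced a centroid meeting the $\eps$-guarantees within $O(d\log(Rd/\eps))$ rounds. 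Replacing your diameter argument with this witness-set argument closes the gap; everything else in your proof matches the paper.
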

\begin{proof}
The initial volume is at most $R^d$ and the algorithm can stop when the volume is $(d/\eps)^{-O(d)}$. Therefore the number of rounds is $O(d \log(Rd / \eps))$. Each round uses $O(sd\log d)$ bits giving the final bound.  
\end{proof}

%



\section{Seidel's Algorithm}\label{sec:seidel}
We give an alternative constant dimensional linear programming algorithms in the blackboard model, based on Seidel's classical algorithm \cite{seidel1990linear}.
Here we additionally assume that each constraint in the linear program is placed on a random server.
This assumption is essential to get rid of the $\log n$ dependence in the communication complexity. 
Here we also assume that the linear program is bounded. 

To implement Seidel's algorithm in the blackboard model, we go through all servers $P_1, P_2, \ldots, P_s$, and for each server $P_i$, we go through all constraints stored on $P_i$ in a random order. 
We maintain the optimal solution $x^*$ to the set of constraints that we have already went through.
For a new constraint $\langle a, x \rangle \le b$, the current server first checks whether the constraint is satisfied or not.
The current server proceeds to the next constraint if it is indeed satisfied.
If it is not satisfied, then the current constraint $\langle a, x \rangle \le b$ must be one of the $d$ constraints that determines the current optimal solution.
In this case, the current server broadcasts the current constraint $\langle a, x \rangle \le b$ to all other servers, and makes a recursive call to figure out the optimal solution, by adding an equality constraint $\langle a, x \rangle = b$ to the set of constraints. 
Notice that if the first server $P_1$ finds a violated constraint, $P_1$ does not need to broadcast the violated constraint, since $P_1$ can simply add the equality constraint to the beginning of all constraints owned by $P_1$.

One major difference between the classical Seidel's algorithm and our implementation is that each time we make a recursive call, we {\em do not} randomly permute the constraints again in the recursive calls.
Instead, the order that we go through the servers is fixed, in different recursive calls.
Due to this difference, there will be subtle dependence between the communication complexity of different recursive calls.

We use $\mathcal{E}$ to denote the event that the number of constraints stored on $P_1$ is at least $\Omega(n / s)$.
By a Chernoff bound, $\mathcal{E}$ holds with probability at least $0.99$.
For the $i$-th constraint (in the order that we go through all constraints), we let the random variable $V_i$ be $1$ if the $i$-th constraint is one the $d$ constraints that determines the optimal solution among the first $i$ constrains, and $0$ otherwise.

Since each constraint in the linear program is placed on a random server, by standard backward analysis, $\E[V_i] = d/i$.
Furthermore, $\E[V_i \mid \mathcal{E}] \ge \E[V_i] / \Pr[\mathcal{E}] = O(d / i)$.
However, conditioned on $\mathcal{E}$, the first $\Omega(n / s)$ constraints will not be broadcasted and there will be no recursive calls associated with them, since they are stored on the first server $P_1$.
Thus, conditioned on $\mathcal{E}$, the expected number of broadcasts (and thus recursive calls) is upper bounded by 
$$
\sum_{i = \Omega(n/s)}^n O(d / i) = O(d \log s).
$$

We use $\mathcal{F}_d$ to denote the event that there are at most $O(d^2 \log s)$ recursive calls made at the top layer of the recursive tree corresponding to Seidel's algorithm.
Conditioned on $\mathcal{E}$, by Markov's inequality, $\mathcal{F}_d$ holds with probability at least $1 - 1 / (100d)$.
Similarly, we use $\mathcal{F}_{d - 1}$ to denote the event that there are at at most $O((d^2 \log s)^2)$ recursive calls made at the second layer of the recursive tree corresponding to Seidel's algorithm.
Conditioned on $\mathcal{F}_d$ and $\mathcal{E}$, again by Markov's inequality, $\mathcal{F}_{d - 1}$ holds with probability at least $1 - 1 / (100d)$.
We similarly define $\mathcal{F}_{d - 2}, \mathcal{F}_{d - 3}, \ldots, \mathcal{F}_{1}$.
Thus, conditioned on $\mathcal{E}$, with probability at least $0.99$, $\mathcal{F}_i$ holds for all $i \in [d]$.
Which implies the total number of broadcasts is upper bounded by $O(\log^ds)$.

In each broadcast, the current server needs to broadcast the current constraint, which has bit complexity $O(dL)$.
Moreover, after the recursive call, the current server broadcasts the current solution vector $x^*$.
By polyhedral theory, $x^*$ can be achieved by setting $d$ inequality constraints to be equality constraints.
Thus, by Cramer's rule, each entry of $x^*$ has bit complexity $O(d \log d + dL)$.
The total communication complexity is hence upper bounded by $O(s + L \cdot \log^ds)$, with probability at least $0.9$.
Here the randomness is over the initial random assignment of each constraint and the random coins tossed by the algorithm.

\begin{theorem}\label{thm:seidel}
Seidel's algorithm can be used to solve linear programs in constant dimension with $O(s+L\log^d s)$ communication in the blackboard model, if each constraint in the linear program is placed on a random server.
Here the randomness is over the initial random assignment of each constraint and the random coins tossed by the algorithm.

\end{theorem}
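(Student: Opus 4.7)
The plan is to port Seidel's classical randomized incremental LP algorithm into the blackboard model. I would have the servers process constraints in the global order $P_1, P_2, \ldots, P_s$, each server going through its own constraints in a uniformly random local order, and maintain a shared current optimum $x^\ast$ on the blackboard. Whenever a server detects a constraint violated by $x^\ast$, it broadcasts that constraint and all servers enter a recursive subproblem in which the violated constraint is forced to hold with equality; after the recursion returns the new optimum is broadcast and processing resumes. The one pivotal optimization is that the first server $P_1$ never needs to broadcast when it encounters a violation: it can simply prepend the tight constraint to its remaining list and proceed locally, deferring all communication until it hands off to $P_2$.

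The communication analysis rests on backward analysis together with the random placement of constraints. Since constraints are assigned to a uniformly random server, the induced global processing order is a uniformly random permutation, so the probability that the $i$-th constraint is one of the $\le d$ constraints defining the optimum of the first $i$ is at most $d/i$. Conditioned on the event $\mathcal{E}$ that $P_1$ receives $\Omega(n/s)$ constraints (which holds with high probability by a Chernoff bound), the first $\Omega(n/s)$ constraints generate no broadcasts, leaving an expected
\[
\sum_{i=\Omega(n/s)}^{n} d/i = O(d\log s)
\]
broadcasts at the outermost recursion level. Iterating this estimate through the $d$ recursion levels, with Markov's inequality converting each expected bound into a high-probability bound, yields $O(\log^d s)$ broadcasts overall, and a union bound over the $d$ levels keeps the failure probability below a constant. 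Each broadcast transmits one constraint and one subsequent $x^\ast$, both of bit complexity $O(dL) = O(L)$ for constant $d$ (using Cramer's rule on the $d\times d$ tight subsystem defining $x^\ast$), and adding the $O(s)$ baseline bits needed just to have each server announce handoff gives the claimed $O(s+L\log^d s)$ total.

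The main obstacle is the recursive analysis, because unlike in classical Seidel the order of constraints inside a recursive subproblem is \emph{not} freshly randomized: we reuse the original random placement. I plan to argue that this is harmless because within any recursive call the still unseen constraints retain their uniformly random relative order, so the backward-analysis bound $d/i$ continues to hold conditionally on the prior broadcasts. Care is also needed when combining the per-level Markov bounds, since the events $\mathcal{F}_k$ at successive depths are not independent; a sequential conditioning argument (bound $\mathcal{F}_d$, then $\mathcal{F}_{d-1}\mid \mathcal{F}_d$, and so on) handles this at the cost of only $d=O(1)$ in the overall failure probability.
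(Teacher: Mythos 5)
Your proposal is correct and follows essentially the same route as the paper: the same distributed implementation of Seidel's algorithm with the key optimization that $P_1$ never broadcasts, the same conditioning on $P_1$ holding $\Omega(n/s)$ constraints to get $\sum_{i=\Omega(n/s)}^n d/i = O(d\log s)$ expected broadcasts via backward analysis, the same level-by-level Markov/sequential-conditioning argument giving $O(\log^d s)$ broadcasts total, and the same Cramer's-rule bound of $O(L)$ bits per broadcast. The subtleties you flag (non-refreshed randomness in recursive calls and the dependence between levels) are exactly the ones the paper addresses, and in the same way.
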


\section{Singularity Probability}\label{sec:singularity}
The goal of this section is to prove Theorem \ref{thm:singularity_main}. We restate it here for convenience.

\noindent\textbf{Theorem~\ref{thm:singularity_main}.} (restated)
\textit{
Let $M_n$ be a matrix whose entries are i.i.d. random variables with the same distribution as $\sumsign_t$, for sufficiently large $t$,
$$
\Pr\left[M_n \text{ is singular}\right] \le t^{-C n},
$$
where $C > 0$ is an absolute constant. }

Our proof of Theorem \ref{thm:singularity_main} follows very closely the proof of Theorem 1.5 in \cite{tao2006random}.
Throughout this section we use $\lambda$ to denote $t^{-1 / 2}$.
We use $X_i$ to denote the $i$-th row of $M_n$.

We need the following lemma on generalized binomial distributions.
\begin{lemma}\label{lem:binomial_zero}
We have
\begin{equation}\label{equ:B_1}
\Pr[\sumsign_t = 0] \le c_1 t^{- 1 / 2}
\end{equation}
and
\begin{equation}\label{equ:B_mu}
\Pr[\sumsign_t^{(\lambda e^{-\lambda})} = 0] \le  c_2 t^{- 1 / 4}.
\end{equation}
Here $c_1$ and $c_2$ are absolute constants.
\end{lemma}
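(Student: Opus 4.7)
The plan is to prove the two bounds by separate, standard arguments: Stirling's approximation for \eqref{equ:B_1} and Fourier inversion for \eqref{equ:B_mu}.

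For \eqref{equ:B_1}, I would first observe that $\sumsign_t$ is an integer-valued random variable with the distribution of a symmetric $\pm 1$ walk of length $t$. If $t$ is odd then $\Pr[\sumsign_t = 0] = 0$ and the bound is trivial. If $t = 2m$ is even, then
\[
\Pr[\sumsign_t = 0] = \binom{2m}{m} 2^{-2m},
\]
and Stirling's approximation gives $\binom{2m}{m}2^{-2m} = (1 + o(1))/\sqrt{\pi m} = O(t^{-1/2})$. This handles \eqref{equ:B_1} with a small explicit $c_1$.

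For \eqref{equ:B_mu}, I would use the characteristic function of $\sumsign^{(\mu)}$, namely
\[
\varphi_\mu(\theta) \;=\; \E\bigl[e^{i\theta\,\sumsign^{(\mu)}}\bigr] \;=\; 1 - \mu + \mu\cos\theta \;=\; 1 - \mu(1-\cos\theta).
\]
Since $\sumsign_t^{(\mu)}$ is integer-valued, Fourier inversion yields
\[
\Pr\!\bigl[\sumsign_t^{(\mu)} = 0\bigr] \;=\; \frac{1}{2\pi}\int_{-\pi}^{\pi}\varphi_\mu(\theta)^t\,d\theta.
\]
Using the elementary inequality $1-\cos\theta \ge (2/\pi^2)\,\theta^2$ on $[-\pi,\pi]$, one gets $\varphi_\mu(\theta) \le \exp(-c\mu\theta^2)$ for an absolute constant $c > 0$, and hence
\[
\Pr\!\bigl[\sumsign_t^{(\mu)} = 0\bigr] \;\le\; \frac{1}{2\pi}\int_{-\infty}^{\infty} e^{-ct\mu\theta^2}\,d\theta \;=\; \frac{1}{2\sqrt{\pi c\,t\mu}}.
\]
Finally I would substitute $\mu = \lambda e^{-\lambda}$ with $\lambda = t^{-1/2}$: for sufficiently large $t$, $e^{-\lambda} \ge 1/2$, so $\mu \ge t^{-1/2}/2$ and thus $t\mu \ge t^{1/2}/2$. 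Plugging this in gives $\Pr[\sumsign_t^{(\mu)} = 0] \le c_2\,t^{-1/4}$ for an absolute constant $c_2$, which is \eqref{equ:B_mu}.

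Neither step has a serious obstacle; the only mild care needed is to justify the lower bound $1-\cos\theta \ge (2/\pi^2)\theta^2$ uniformly on $[-\pi,\pi]$ (this follows from concavity of $\cos$ on $[-\pi/2,\pi/2]$ and the monotonicity of $1-\cos\theta$ on $[\pi/2,\pi]$), and to handle the boundary case $t$ odd in \eqref{equ:B_1}. Everything else is bookkeeping with absolute constants.
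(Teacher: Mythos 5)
Your proof is correct. For \eqref{equ:B_1} you do exactly what the paper does (Stirling on the central binomial coefficient), with the added care of the odd-$t$ case, which the paper silently ignores since the probability is then zero. For \eqref{equ:B_mu}, however, you take a genuinely different route: the paper conditions on the number of nonzero summands, argues via a Chernoff bound that this count is $\Theta(t\lambda e^{-\lambda}) = \Theta(t^{1/2})$ with probability $1-\exp(-\Omega(t^{1/2}))$, and then reapplies the Stirling estimate from \eqref{equ:B_1} to the conditional $\pm 1$ walk of that length; you instead compute the characteristic function $\varphi_\mu(\theta)=1-\mu(1-\cos\theta)$, use Fourier inversion for integer-valued variables, and bound the integral via $1-\cos\theta \ge (2/\pi^2)\theta^2$ and a Gaussian integral, obtaining $\Pr[\sumsign_t^{(\mu)}=0] = O((t\mu)^{-1/2})$ before specializing $\mu=\lambda e^{-\lambda}$, $\lambda=t^{-1/2}$. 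Both arguments are sound; the paper's is more elementary and reuses the first estimate, while yours is cleaner in that it avoids the conditioning step, yields explicit constants, and gives the general $O((t\mu)^{-1/2})$ bound for any $\mu$ in one shot. The only points worth making explicit in your write-up are the two you already flag (the lower bound on $1-\cos\theta$, which follows from the monotonicity of $(1-\cos\theta)/\theta^2$ on $(0,\pi]$, and the odd-$t$ case), plus the trivial observation that for the relevant small $\mu$ one has $0\le\varphi_\mu(\theta)\le 1$, so raising the bound to the $t$-th power is legitimate.
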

\begin{proof}
By Stirling's approximation we have
$$
\Pr[\sumsign_t = 0] = \binom{t}{t / 2} / 2^t = \Theta(t^{-1 / 2}),
$$
which proves (\ref{equ:B_1}).
To prove (\ref{equ:B_mu}), by a Chernoff bound swe have that the number of non-zero terms in the summation of $\sumsign_t^{(\lambda e^{-\lambda})}$ is $\Theta(t \lambda e^{-\lambda}) = \Theta(t^{1 / 2})$ with probability $1 - \exp(-\Omega(t^{1 / 2}))$.
Conditioned on this event, we can then prove (\ref{equ:B_mu}) by using the same estimation as (\ref{equ:B_1}).
\end{proof}

The following lemma is a direct implication of Lemma \ref{lem:binomial_zero} and Odlyzko's results in \cite{odlyzko1988subspaces}. See also Lemma 2.1 in \cite{tao2006random} and Section 3.2 in \cite{kahn1995probability}.
\begin{lemma} \label{lem:vector_in_hyperplane}
Let $W \subseteq \mathbb{R}^n$ be an arbitrary subspace and $X^{(\mu)} \in \mathbb{R}^n$ whose entries are i.i.d. random variables with the same distribution as $\sumsign_t^{(\mu)}$.
We have 
$$
\Pr[X \in W] = \left(c_1 t^{- 1 / 2}\right)^{n - \mathrm{dim}(W)}
$$
and
$$
\Pr[X^{(\lambda e^{-\lambda})} \in W] \le \left(c_2 t^{- 1 / 4}\right)^{n - \mathrm{dim}(W)}.
$$
\end{lemma}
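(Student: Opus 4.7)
The plan is to apply Odlyzko's classical projection argument together with the pointwise probability bounds from Lemma \ref{lem:binomial_zero}. Write $k = \dim(W)$. Since $W$ has dimension $k$, there is a set $S \subseteq [n]$ with $|S| = k$ such that the coordinate projection $\pi_S : W \to \mathbb{R}^S$, $v \mapsto (v_j)_{j \in S}$, is a linear isomorphism. In particular, once the coordinates $(X_j)_{j \in S}$ are fixed, there is at most one vector $v \in W$ with $v_S = X_S$, so for each $j \notin S$ the event $X \in W$ forces $X_j$ to equal a specific value $f_j(X_S)$ determined by $X_S$.

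Using independence of the coordinates of $X$, I would then write
\[
\Pr[X \in W] \;\le\; \E_{X_S}\Big[\prod_{j \notin S} \Pr[X_j = f_j(X_S) \mid X_S]\Big] \;=\; \E_{X_S}\Big[\prod_{j \notin S}\Pr[X_j = f_j(X_S)]\Big].
\]
Assume, for each entry, a pointwise bound of the form $\Pr[X_j = v] \le c\, t^{-\alpha}$ valid for every $v \in \mathbb{R}$, with $(c,\alpha) = (c_1,1/2)$ in the case $\mu = 1$ and $(c_2,1/4)$ in the case $\mu = \lambda e^{-\lambda}$. Plugging into the display and multiplying over the $n-k$ indices $j \notin S$ yields $\Pr[X \in W] \le (c_1 t^{-1/2})^{n-k}$ and $\Pr[X^{(\lambda e^{-\lambda})} \in W] \le (c_2 t^{-1/4})^{n-k}$, as claimed.

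The only step that needs a line of justification is upgrading Lemma \ref{lem:binomial_zero} from the bound at the single value $v=0$ to a uniform bound over all $v$. This is where the mild obstacle sits, and it is handled by symmetry/unimodality: the distribution of $\sumsign_t^{(\mu)}$ is invariant under negation, and as a sum of i.i.d.\ symmetric $\{-1,0,+1\}$-valued steps it is unimodal with mode at $0$, so $\Pr[\sumsign_t^{(\mu)} = v] \le \Pr[\sumsign_t^{(\mu)} = 0]$ for every $v$. If a slicker route is wanted, one can instead condition on $m$, the random number of nonzero summands in $\sumsign_t^{(\mu)}$: the conditional law is that of a centered simple $\pm 1$ random walk of length $m$, whose mass is maximized at $0$ by an elementary binomial-coefficient comparison; averaging over $m$ then yields the same uniform bound, and the two estimates of Lemma \ref{lem:binomial_zero} supply the numerical constants. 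This completes the proof sketch; the remainder is bookkeeping.
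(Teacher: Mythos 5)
Your proof is correct and is essentially the argument the paper relies on: the paper proves this lemma by citing Odlyzko's projection argument together with the point-mass bounds of Lemma \ref{lem:binomial_zero} (following Kahn--Koml\'os--Szemer\'edi and Tao--Vu), and your write-up simply spells that argument out, including the needed upgrade from the atom at $0$ to a uniform bound $\max_v \Pr[\sumsign_t^{(\mu)} = v] \le c\, t^{-\alpha}$. The only nit is a parity point: when $t$ (or the conditioned number of nonzero summands) is odd, the mode sits at $\pm 1$ rather than $0$ and $\Pr[\sumsign_t = 0] = 0$, so one should phrase the step as bounding the maximal atom (which the same Stirling estimate controls) rather than comparing to the mass at $0$; this does not affect the conclusion.
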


By Lemma 5.1 in \cite{tao2006random}, we have
$$
\Pr[X_1, X_2, \ldots, X_n \text{ are linearly independent}] \le t^{o(n)} \Pr[X_1, X_2, \ldots, X_n \text{ span a hyperplane}],
$$
which implies we only need to consider the case when $X_1, X_2, \ldots, X_n$ span a hyperplane.

We say a hyperplane $V$ is non-trivial if $V$ is spanned by its intersection with $\{-t, -(t - 1), \ldots, -1, 0, 1, \ldots, t - 1, t\}^n$.
Notice that a hyperplane $V$ has 
$$
\Pr[X_1, X_2, \ldots, X_n \text{ span } V] > 0
$$
only when $V$ is non-trivial. Thus, we focus only on non-trivial hyperplanes in the remaining part of the proof.

\begin{definition}
Let $X \in \mathbb{R}^n$ whose entries are i.i.d. random variables with the same distribution as $\sumsign_t$.
For a hyperplane $V \subseteq \mathbb{R}^n$, define the {\em discrete codimension} $d(V)$ of $V$ to be the unique integer multiple of $1 / nt$ such that
$$
\left( c_1 t^{-1 / 2} \right)^{-d(V) - 1 / nt} \le \Pr[X \in V] \le \left( c_1 t^{-1 / 2}\right)^{-d(V)}.
$$
\end{definition}
According to the definition, it is clear from Lemma \ref{lem:vector_in_hyperplane} that $1 \le d(V) \le O(n)$.

We first dispose hyperplanes with high discrete codimension using the following lemma, which is a direct corollary of Lemma 1 in \cite{kahn1995probability}.
\begin{lemma}\label{lem:high_codimension}
Suppose $X \in \mathbb{R}^n$ whose entries are i.i.d. random variables with the same distribution as $\sumsign_t$, then
$$
\Pr\left[\bigcup_{V \mid d(V) \ge d_0} X_1, X_2, \ldots, X_n \text{ span } V\right] \le n\cdot  \left( c_1 t^{-1 / 2}\right)^{-d_0}.
$$
\end{lemma}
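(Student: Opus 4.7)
The plan is to use Odlyzko's minimal-spanning-subset trick to decouple one random row from the others, and then apply the definition of $d(V)$ directly. The key geometric observation is this: if $X_1,\ldots,X_n$ span a hyperplane $V$ (which has real dimension $n-1$), then any minimal spanning subset among the $X_i$'s consists of exactly $n-1$ vectors, so there must exist an index $i$ for which $X_1,\ldots,\widehat{X_i},\ldots,X_n$ still spans $V$, and in particular $X_i\in V$. This means the event of interest is contained in the union over $i\in[n]$ of the event $\mathcal{E}_i$ that $(X_j)_{j\ne i}$ spans some hyperplane $V$ with $d(V)\ge d_0$ and $X_i\in V$.

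Next, I would bound each $\Pr[\mathcal{E}_i]$ by first conditioning on $(X_j)_{j\ne i}$. After conditioning, the hyperplane $V$ spanned by these $n-1$ vectors is determined (on the event that they do span a hyperplane at all); the remaining randomness is then only in $X_i$, which is independent of the conditioning. On the event $d(V)\ge d_0$, the definition of the discrete codimension gives
\[
\Pr[X_i\in V \mid (X_j)_{j\ne i}] \le \bigl(c_1 t^{-1/2}\bigr)^{d(V)} \le \bigl(c_1 t^{-1/2}\bigr)^{d_0}
\]
(up to the sign convention used in the statement). Hence $\Pr[\mathcal{E}_i]\le (c_1 t^{-1/2})^{d_0}$, and a union bound over the $n$ choices of $i$ yields the claim.

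The one subtlety I would need to handle carefully is the possibility that $(X_j)_{j\ne i}$ does \emph{not} span a hyperplane, so that the definition of $d(V)$ does not directly apply to the subspace it spans. This is precisely what the minimal-spanning-subset observation takes care of: on the target event, at least one choice of $i$ yields an $(X_j)_{j\ne i}$ that does span the hyperplane $V$, so we only need the bound to apply for that $i$; for the other choices of $i$, we can simply upper bound $\Pr[\mathcal{E}_i]$ trivially, or observe that the union bound over all $i$ is only a factor of $n$ loss. The main obstacle, such as it is, is bookkeeping this conditioning correctly and matching the sign conventions with the definition of $d(V)$; no new probabilistic input beyond Lemma~\ref{lem:vector_in_hyperplane} (via the definition of $d(V)$) is required.
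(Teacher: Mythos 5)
Your argument is correct and is essentially the paper's: the paper establishes this lemma by directly citing Lemma 1 of \cite{kahn1995probability}, whose proof is exactly your leave-one-out decoupling (on the target event some $n-1$ of the rows already span the hyperplane $V$, and the remaining, independent row lies in $V$ with probability at most $(c_1 t^{-1/2})^{d_0}$ by the definition of $d(V)$, after which one takes a union bound over the $n$ choices of the removed row). You are also right to flag the sign convention: the exponents $-d(V)$ and $-d_0$ in the paper's definition and in the lemma statement are typos for $+d(V)$ and $+d_0$, as the later uses (e.g., the bound $n\,(c_1 t^{-1/2})^{(\varepsilon-o(1))n}$) make clear, and with that convention your conditioning step is exactly what is needed.
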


Let $1 /2 \ge \varepsilon > 0$ be a constant to be determined.
Using Lemma \ref{lem:high_codimension}, we have
$$
\Pr\left[\bigcup_{V \mid d(V) \ge (\varepsilon - o(1))n} X_1, X_2, \ldots, X_n \text{ span } V\right] \le n\cdot  \left( c_1 t^{-1 / 2}\right)^{(\varepsilon - o(1))n} = t^{-\Omega(n)}.
$$
Thus, in the remaining part of the proof we will focus only on the case when $d(V) \le (\varepsilon - o(1))n$ .

We say a hyperplane $V$ to be non-degenerate if its normal vector $n(V)$ satisfies $\|n(V)\|_0 \ge \ceil{\log \log n / \log t}$.
Here we use $\|n(V)\|_0$ to denote the number of non-zero entries in the normal vector $n(V)$.
The following lemma, which is a simple adaption of Lemma 5.3 in \cite{tao2006random}, provides a crude estimation of the number of degenerate hyperplanes.
\begin{lemma}\label{lem:number_degenerate}
The number of degenerate non-trivial hyperplanes is at most $t^{o(n)}$.
\end{lemma}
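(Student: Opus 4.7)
The plan is to enumerate degenerate non-trivial hyperplanes by parametrizing each one by the primitive integer representative of its normal direction. Since a degenerate hyperplane $V$ has a normal vector $n(V)$ with strictly fewer than $k_0 := \ceil{\log\log n / \log t}$ non-zero coordinates, I would first choose the support $S \subseteq [n]$ of this normal, and then count how many integer vectors supported on $S$ can serve as the normal of a non-trivial hyperplane.

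For the support count, the number of subsets $S \subseteq [n]$ of size at most $k_0 - 1$ is at most $k_0 \binom{n}{k_0} \le n^{k_0 + 1}$. For the count on a fixed support $S$ of size $k$, I would use non-triviality: $V$ is spanned by $n-1$ linearly independent vectors $v_1, \ldots, v_{n-1} \in V \cap \{-t, \ldots, t\}^n$, and their projections to $S$ span a hyperplane of $\mathbb{R}^S$ orthogonal to $n(V)|_S$. Picking $k - 1$ linearly independent projections and applying Cramer's rule shows that the primitive integer representative of $n(V)|_S$ has entries of magnitude at most $(k-1)!\,t^{k-1}$, so there are at most $(O(kt))^{k^2}$ possibilities on a fixed support.

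Multiplying the two counts and summing over $k < k_0$ gives a total bound of roughly $n^{k_0 + 1} \cdot (O(k_0 t))^{k_0^2}$. Taking $\log_t$ of this quantity, the first factor contributes $O(\log n \cdot \log \log n / \log^2 t)$, while the second contributes $O(k_0^2 \cdot (1 + \log_t k_0)) = O((\log\log n)^2 / \log^2 t)$ up to a mild additive $\log_t(\log\log n)$ correction. Both terms are $o(n)$ whenever $t$ is at least some absolute constant, so the total count is $t^{o(n)}$, as claimed.

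The main obstacle, as the paper hints, is adapting the $\pm 1$ argument of Tao--Vu (their Lemma 5.3) to the generalized setting $\sumsign_t$. The heart of their proof is essentially the support-counting step, which is independent of $t$; what changes is the Cramer's-rule bound on the magnitude of the primitive normal, where $t^{k-1}$ replaces $1$. I would need to verify that this magnitude blow-up is absorbed by the $k_0 = O(\log\log n / \log t)$ threshold (rather than a cruder $O(\log\log n)$ threshold that would have worked for $t=1$): this is precisely why the threshold in the definition of non-degeneracy is calibrated as $\log\log n / \log t$ and not a constant, so the choice of $k_0$ is the delicate part that makes the argument close cleanly.
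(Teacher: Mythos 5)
Your proposal is correct and follows essentially the same route as the paper, which simply defers to Lemma 5.3 of Tao--Vu: enumerate the support of the sparse normal and then bound, via Cramer's rule applied to $k-1$ linearly independent spanning vectors from $\{-t,\ldots,t\}^n$, the number of primitive integer normals on that support, giving $n^{O(k_0)}\cdot(O(k_0 t))^{k_0^2} = t^{o(n)}$. (Your closing remark is a slight overstatement — even a cruder threshold like $\log\log n$ would make this counting close, the $\log t$ calibration mattering elsewhere in the argument — but this does not affect the validity of your proof.)
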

Combining Lemma \ref{lem:vector_in_hyperplane} and Lemma \ref{lem:number_degenerate}, we then have
$$
\sum_{V \mid \text{$V$ is degenerate}}  \Pr[X_1, X_2, \ldots, X_n \text{ span } V] \le \sum_{V \mid \text{$V$ is degenerate}}  \Pr[X_1, X_2, \ldots, X_n \in V]  \le t^{o(n)} \left(  c_1 t^{- 1 / 2}\right)^n = t^{-\Omega(n)}.
$$
Thus, we can just focus on non-degenerate hyperplanes.

The following theorem, which first appeared in \cite{kahn1995probability} as Theorem 2 (see also Section 7 in \cite{tao2006random}), is based on Fourier-analytic arguments by Hal\'asz \cite{halasz1975distribution, halasz1977estimates}.
\begin{theorem}\label{thm:halasz}
Suppose $V \subseteq \mathbb{R}^n$ is a non-trivial hyperplane.
Let $Y^{(\mu)} \in \mathbb{R}^n$ whose entries are i.i.d. random variables with the same distribution as $\sumsign^{(\mu)}$,
$\lambda < 1$ be a positive number and $k$ be a positive integer such that $4\lambda k^2 < 1$.
We have
$$
\Pr[Y \in V] \le \left(\frac{1}{k(1 - 4 \lambda k^2)} + \frac{1}{1 - 4 \lambda} e^{-(1 - 4 \lambda)\|n(V)\|_0 / (4k^2)}\right)\Pr[Y^{(\lambda e^{-\lambda})} \in V],
$$
where we use $n(V)$ to denote the normal vector of $V$ and $\|n(V)\|_0$ to denote the number of non-zero entries of $n(V)$.
\end{theorem}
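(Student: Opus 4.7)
My plan is to prove this Hal\'asz-type inequality via Fourier analysis on $\mathbb{R}/\mathbb{Z}$, following the strategy of Hal\'asz~\cite{halasz1977estimates} as adapted by Tao--Vu~\cite{tao2006random}. Since $V$ is a non-trivial hyperplane, I may take its normal vector $a = n(V)$ to be integral, with exactly $m := \|n(V)\|_0$ non-zero coordinates. The event $Y \in V$ is then $\langle a, Y\rangle = 0$, and Fourier inversion on $\mathbb{Z}$ gives
\[
\Pr[Y \in V] = \int_0^1 F(\xi)\, d\xi, \qquad \Pr[Y^{(\lambda e^{-\lambda})} \in V] = \int_0^1 G(\xi)\, d\xi,
\]
where $F(\xi) := \prod_{j:\, a_j \neq 0} \cos(2\pi a_j \xi)$ and $G(\xi) := \prod_{j:\, a_j \neq 0} \bigl(1 - \mu + \mu\cos(2\pi a_j \xi)\bigr)$ with $\mu = \lambda e^{-\lambda} < e^{-1}$. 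In particular $0 \leq G \leq 1$ on $[0,1]$ since each factor lies in $[1-2\mu,1]$.

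Next I would establish a pointwise comparison between $|F|$ and a fractional power of $G$. Using the elementary inequalities $1 - \mu + \mu\cos\theta = 1 - 2\mu\sin^2(\theta/2) \geq \exp\!\bigl(-\tfrac{2\mu\sin^2(\theta/2)}{1-2\mu}\bigr)$ and $|\cos\theta| \leq \exp(-2\sin^2(\theta/2))$, one derives, factor-by-factor, an inequality of the form $|\cos(2\pi a_j \xi)|^{(1-2\mu)/\mu} \leq 1 - \mu + \mu\cos(2\pi a_j\xi)$, and hence $|F(\xi)|^{(1-2\mu)/\mu} \leq G(\xi)$ pointwise. Since $(1-2\mu)/\mu$ is comparable to $1/\lambda$ for small $\lambda$, this already bounds $|F|$ by $G^\lambda$. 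To sharpen this by the prefactor $1/k$, I would apply a $k$-fold self-convolution argument in the frequency domain: interpret $\int F$ as an inner product with the constant $1$, replace it by an inner product with a kernel $\psi_k$ that averages over translates of size $\sim 1/k$ (so that $k$ independent copies of $Y$ combine to form a sum supported on $\{-k,\ldots,k\}$), and bound the resulting expression against $\int G$. This produces the $1/k$ factor, while the cross-correlations among the $k$ copies generate multiplicative errors of size $\sim \lambda k^2$ each; summing them as a geometric series produces the $(1-4\lambda k^2)^{-1}$ correction, provided $4\lambda k^2<1$.

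Finally, I would split $[0,1] = \Omega_{\rm big} \cup \Omega_{\rm big}^c$ at the threshold $\Omega_{\rm big} := \{\xi : G(\xi) \geq e^{-(1-4\lambda)m/(4k^2)}\}$. On $\Omega_{\rm big}^c$, the trivial bound $|F(\xi)| \leq 1$ combined with the smallness of $G$ and a lower bound $\int G \gtrsim e^{-(1-4\lambda)m/(4k^2)}$ (coming from the trivial contribution of the neighborhood of $\xi=0$) yields
\[
\int_{\Omega_{\rm big}^c} |F(\xi)|\, d\xi \;\leq\; \frac{1}{1-4\lambda}\, e^{-(1-4\lambda) m/(4k^2)}\, \Pr[Y^{(\mu)} \in V],
\]
which is the second term of the bound. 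On $\Omega_{\rm big}$, the pointwise comparison $|F| \leq G^{(1-2\mu)/\mu}$ combined with the $k$-fold convolution argument yields the main term $\tfrac{1}{k(1-4\lambda k^2)} \Pr[Y^{(\mu)} \in V]$. Summing the two contributions gives exactly the stated inequality.

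The main obstacle is executing the $k$-fold convolution argument cleanly with the right constants: the smoothed substitute for $F$ must buy a $1/k$ factor while losing only a $(1-4\lambda k^2)^{-1}$ multiplicative factor and without degrading the exponential decay governing $\Omega_{\rm big}^c$. The condition $4\lambda k^2 < 1$ is precisely what is needed for the resulting geometric series of cross-term errors to converge, and any weaker constraint would destroy the bound. A secondary technicality is verifying the factor-by-factor inequality $|\cos\theta|^{(1-2\mu)/\mu} \leq 1 - \mu + \mu\cos\theta$ uniformly in $\theta \in [0,2\pi]$ under the specific parametrization $\mu = \lambda e^{-\lambda}$; this is where the $e^{-\lambda}$ rescaling in the statement plays its role, ensuring that the exponents on both sides of the exponential comparison align up to the correction controlled by $(1-2\mu)$.
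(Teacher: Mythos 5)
The paper does not actually prove Theorem \ref{thm:halasz}: it is quoted from \cite{kahn1995probability} (Theorem 2; see also Section 7 of \cite{tao2006random}), so the benchmark is Hal\'asz's Fourier-analytic argument used there. Your outline begins correctly (Fourier inversion gives $\Pr[Y\in V]=\int_0^1 F(\xi)\,d\xi$ and $\Pr[Y^{(\mu)}\in V]=\int_0^1 G(\xi)\,d\xi$ with integral normal vector), but the first substantive step is false. The inequality $|\cos\theta|\le\exp(-2\sin^2(\theta/2))$, and hence the factor-by-factor bound $|\cos\theta|^{(1-2\mu)/\mu}\le 1-\mu+\mu\cos\theta$, fails at $\theta=\pi$: the left sides equal $1$ while the right sides equal $e^{-2}$ and $1-2\mu$. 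Consequently the asserted pointwise domination $|F(\xi)|^{(1-2\mu)/\mu}\le G(\xi)$ is badly false; e.g., for $a=(1,\dots,1)$ (a non-trivial hyperplane) and $\xi=1/2$ one has $|F(1/2)|=1$ while $G(1/2)=(1-2\mu)^m$ is exponentially small. The correct elementary bound is $|\cos\theta|\le\exp\bigl(-(1-\cos 2\theta)/4\bigr)$, so $|F(\xi)|$ must be compared with the lazy product at the doubled frequency $2\xi$; this frequency doubling is an essential structural feature of the Hal\'asz/KKS proof, not a detail one can elide.

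Moreover, even if some pointwise comparison held, it would only give $\Pr[Y\in V]\le\Pr[Y^{(\mu)}\in V]$; the entire content of the theorem is the gain $\frac{1}{k(1-4\lambda k^2)}+\frac{1}{1-4\lambda}e^{-(1-4\lambda)m/(4k^2)}$, and your sketch supplies no verifiable mechanism for it. In Hal\'asz's method the $1/k$ factor comes from a measure-growth (sumset) argument on sublevel sets of $f(\xi)=\sum_j\bigl(1-\cos 2\pi a_j\xi\bigr)$, using the approximate subadditivity $\sqrt{1-\cos(x+y)}\le\sqrt{1-\cos x}+\sqrt{1-\cos y}$ together with a layer-cake integration; your ``$k$-fold self-convolution against a kernel $\psi_k$ whose cross-correlation errors sum to $(1-4\lambda k^2)^{-1}$'' names no concrete estimate and cannot be checked against the stated constants. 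Finally, your derivation of the second term rests on $\int_0^1 G\gtrsim e^{-(1-4\lambda)m/(4k^2)}$, which does not follow from the hypotheses: the contribution of a neighborhood of $\xi=0$ is only of order the width of that neighborhood, which depends on the magnitudes of the $a_j$ and can be arbitrarily small, and the atom bound $\Pr[Y^{(\mu)}=0]=(1-\mu)^m=e^{-\Theta(\lambda m)}$ is, for instance with $\lambda=0.2$ and $k=1$ (so $4\lambda k^2<1$), far smaller than $e^{-(1-4\lambda)m/(4k^2)}=e^{-m/20}$. As written the proposal therefore does not establish the theorem; repairing it would essentially amount to reproducing the level-set argument of \cite{kahn1995probability}.
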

\begin{corollary}\label{coro:halasz}
Suppose $W \subseteq \mathbb{R}^n$ is a non-degenerate non-trivial hyperplane.
Let $X^{(\mu)} \in \mathbb{R}^n$ whose entries are i.i.d. random variables with the same distribution as $\sumsign_t^{(\mu)}$.
For sufficiently large $t$, we have
$$
\Pr[X \in W] \le O(t^{- 1/ 4})\Pr[X^{(\lambda e^{-\lambda})} \in W] \le t^{- 1/ 5} \Pr[X^{(\lambda e^{-\lambda})} \in W].
$$
\end{corollary}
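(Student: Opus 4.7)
The plan is to reduce Corollary~\ref{coro:halasz} to an application of Theorem~\ref{thm:halasz} (Hal\'asz's inequality) by ``unfolding'' each $\sumsign_t^{(\mu)}$-coordinate of $X$ into $t$ i.i.d.\ $\sumsign^{(\mu)}$-coordinates. First, I would write $X = \sum_{j=1}^t Y^{(j)}$ where $Y^{(1)},\ldots,Y^{(t)}\in\R^n$ are independent vectors with i.i.d.\ $\sumsign^{(1)}$ entries, concatenate them into $\tilde Y=(Y^{(1)},\ldots,Y^{(t)})\in\R^{tn}$, and set $\tilde n=(n(W),\ldots,n(W))\in\R^{tn}$ (the normal $n(W)$ of $W$ repeated $t$ times) with $\tilde W=\{y\in\R^{tn}:\langle\tilde n,y\rangle=0\}$. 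Observe that $\langle n(W),X\rangle=\langle\tilde n,\tilde Y\rangle$, so $\Pr[X\in W]=\Pr[\tilde Y\in\tilde W]$, and the analogous identity $\Pr[X^{(\lambda e^{-\lambda})}\in W]=\Pr[\tilde Y^{(\lambda e^{-\lambda})}\in\tilde W]$ holds with $\tilde Y^{(\lambda e^{-\lambda})}\in\R^{tn}$ having i.i.d.\ $\sumsign^{(\lambda e^{-\lambda})}$ entries. The key structural parameter transfers cleanly: $\|\tilde n\|_0=t\,\|n(W)\|_0$.

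Next, I would invoke Theorem~\ref{thm:halasz} applied to $\tilde Y$ and $\tilde W$, with $\lambda=t^{-1/2}$ and $k=\lfloor t^{1/4}/4\rfloor$, which is well-defined for sufficiently large $t$. For this choice, $4\lambda k^2\le 1/4$, so the first term of the Hal\'asz bound satisfies
\[
\frac{1}{k(1-4\lambda k^2)}\le\frac{4/3}{\lfloor t^{1/4}/4\rfloor}=O(t^{-1/4}).
\]
The non-degeneracy assumption $\|n(W)\|_0\ge\lceil\log\log n/\log t\rceil\ge 1$ gives $\|\tilde n\|_0\ge t$, so the exponent in the second term is
\[
\frac{(1-4\lambda)\|\tilde n\|_0}{4k^2}\ge\frac{(1-4t^{-1/2})\,t}{t^{1/2}/4}=\Omega(t^{1/2}),
\]
and the second term is thus $e^{-\Omega(t^{1/2})}$, which is much smaller than $t^{-1/4}$ for large $t$. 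Summing the two contributions and combining with the identifications above yields $\Pr[X\in W]\le O(t^{-1/4})\Pr[X^{(\lambda e^{-\lambda})}\in W]$; the second inequality $O(t^{-1/4})\le t^{-1/5}$ then holds for all sufficiently large $t$, since $t^{-1/4}/t^{-1/5}=t^{-1/20}\to 0$.

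The main obstacle I expect is a small piece of bookkeeping: verifying that the lifted hyperplane $\tilde W\subseteq\R^{tn}$ is admissible for Theorem~\ref{thm:halasz} (any hyperplane with integer normal contains a full-rank integer sublattice inside it, so this is immediate) and that the parameter choices satisfy the side conditions $k\ge 1$ and $4\lambda k^2<1$ simultaneously. Once these are in place, the proof is essentially a computation: the $1/k$ contribution dominates and gives the $O(t^{-1/4})$ savings, while the exponential term is negligible thanks precisely to the non-degeneracy assumption on $n(W)$.
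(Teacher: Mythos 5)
Your proposal is correct and follows essentially the same route as the paper: unfold each $\sumsign_t^{(\mu)}$ coordinate into $t$ i.i.d.\ $\sumsign^{(\mu)}$ variables so that $\Pr[X^{(\mu)}\in W]$ becomes the probability that a $tn$-dimensional $\sumsign^{(\mu)}$ vector lies in a hyperplane with $\|n(V)\|_0 = t\|n(W)\|_0$, then apply Theorem \ref{thm:halasz} with $\lambda = t^{-1/2}$ and $k = \Theta(t^{1/4})$, using non-degeneracy to make the exponential term negligible and absorbing the $O(t^{-1/4})$ into $t^{-1/5}$ for large $t$. The only difference is cosmetic (explicit constants for $k$ and the $e^{-\Omega(t^{1/2})}$ estimate versus the paper's $o(1)$), so no further comment is needed.
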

\begin{proof}
We note that 
\begin{equation}\label{equ:X_in_W}
\Pr[X^{(\mu)} \in W] = \Pr[\langle X^{(\mu)}, n(W) \rangle = 0] = \Pr\left[\sum_{i=1}^n \sum_{j=1}^t Y^{(\mu)}_{i, j} n_i(W) = 0\right],
\end{equation}
where $Y^{(\mu)}_{i, j}$ are i.i.d. random variables with the same distribution as $\sumsign^{(\mu)}$ and $n_i(W)$ is the $i$-th coordinate of the normal vector $n(W)$.
This enables one to apply Theorem \ref{thm:halasz}.
Notice that when applying Theorem \ref{thm:halasz} we have $\|n(V)\|_0 = t\|n(W)\|_0$, since each non-zero entry of $n(W)$ appears $t$ times in the summation of (\ref{equ:X_in_W}).
Recall that $\lambda = t^{- 1 / 2}$.
We set $k$ to be an integer which is at least $\Omega(t^{1 / 4})$.
Since $V$ is non-degenerate, we have $\|n(V)\|_0 = t\|n(W)\|_0 \ge t \cdot \ceil{\log \log n / \log t}$, which implies
$$
\frac{1}{1 - 4 \lambda} e^{-(1 - 4 \lambda)\|n(V)\|_0 / (4k^2)} = o(1).
$$
The correctness of the corollary thus follows from our choice of $k$.
\end{proof}

For a non-degenerate non-trivial hyperplane $V$ which satisfies $1 \le d(V) \le (\varepsilon - o(1))n$, define $A_V$ to be the event that
$$
X_1^{(\lambda e^{-\lambda})}, X_2^{(\lambda e^{-\lambda})}, \ldots, X_{(1 - \eta)n}^{(\lambda e^{-\lambda})}, X_1',X_2', \ldots, X_{(\eta - \varepsilon')n}' \text{ are linearly independent in $V$},
$$
where $X_i^{(\lambda e^{-\lambda})}$ are independent random vectors in $\mathbb{R}^n$ whose entries are i.i.d. random variables with the same distribution as $\sumsign_{t}^{(\lambda e^{-\lambda})}$ and $X_i'$ are random vectors in $\mathbb{R}^n$ whose entries are i.i.d. random variables with the same distribution as $\sumsign_{t}$.
Here $\eta = 3d(V) / n$ and $\varepsilon' = \min\{\eta, \varepsilon\}$ where $1 / 2 \ge \varepsilon > 0$ is a constant to be determined.

We first prove that
$$
\Pr[A_V] \ge t^{(1 - \eta)n / 5}  \left(c_1 t^{-1 / 2}\right)^{(1 - \varepsilon')n \cdot d(V) + o(n)}.
$$
To prove this, we define $A_V'$ to be the event that
$$
X_1^{(\lambda e^{-\lambda})}, X_2^{(\lambda e^{-\lambda})}, \ldots, X_{(1 - \eta)n}^{(\lambda e^{-\lambda})}, X_1',X_2', \ldots, X_{(\eta - \varepsilon')n}' \in V.
$$

By Corollary \ref{coro:halasz},
$$
\Pr[A_V'] \ge t^{(1 - \eta)n / 5}  \left(c_1 t^{-1 / 2}\right)^{(1 - \varepsilon')n \cdot d(V) + o(n)}.
$$

Now we show that 
$$
\Pr[A_V \mid A_{V}'] = t^{-o(n)}.
$$
According to the definition of discrete codimension $d(V)$,
we have
$$
\Pr[X \in V] \ge (1 - O(1 / n)) \left(c_1 t^{-1 / 2}\right)^{d(V)}.
$$
By Corollary \ref{coro:halasz} we have
$$
\Pr[X^{(\lambda e^{-\lambda})} \in V] \ge t^{1 / 5}(1 - O(1 / n)) \left(c_1t^{-1 / 2}\right)^{d(V)}.
$$
On the other hand, by Lemma \ref{lem:vector_in_hyperplane}, we have
$$
\Pr[X^{(\lambda e^{-\lambda})} \in W] \le \left(c_2 t^{- 1/ 4}\right)^{n - \mathrm{dim}(W)}.
$$
Thus,
$$
\Pr[X^{(\lambda e^{-\lambda})} \in W \mid X^{(\lambda e^{-\lambda})} \in V] \le (t^{1 / 5}(1 - O(1 / n)))^{-1} \left(\sqrt{t} / c_1\right)^{d(V)} \cdot \left(c_2 t^{-1 / 4}\right)^{n - \mathrm{dim}(W)},
$$
which implies
\begin{align*}
&\Pr\left[X_1^{(\lambda e^{-\lambda})}, \ldots, X_{k + 1}^{(\lambda e^{-\lambda})} \text{ are independent} \mid X_1^{(\lambda e^{-\lambda})}, \ldots, X_{k}^{(\lambda e^{-\lambda})} \text{ are independent} \land A_V' \right]\\
 \ge &1 - (t^{1 / 5}(1 - O(1 / n)))^{-1}  \left(\sqrt{t} / c_1\right)^{d(V)} \cdot \left(c_2 t^{-1 / 4}\right)^{n - k}.
\end{align*}
Using the estimation given above, for sufficiently large $t$, we have
$$
\Pr\left[X_1^{(\lambda e^{-\lambda})}, \ldots, X_{(1 - \eta)n}^{(\lambda e^{-\lambda})} \text{ are independent} \mid A_V'\right] \ge t^{-o(n)}
$$
since $(1 - \eta)n = n - 3d(V)$.

Similarly, when $\varepsilon' < \eta$, i.e., $\varepsilon' = \varepsilon$, we have
\begin{align*}
&\Pr \big[X_1^{(\lambda e^{-\lambda})}, \ldots, X_{(1 - \eta)n}^{(\lambda e^{-\lambda})}, X_1', \ldots, X_{k + 1}' \text{ are independent} \\
& \quad \mid X_1^{(\lambda e^{-\lambda})}, \ldots, X_{(1 - \eta)n}^{(\lambda e^{-\lambda})}, X_1', \ldots, X_{k}' \text{ are independent} \land A_V \big]\\
 \ge &1 - (1 - O(1 / n))^{-1} \left(\sqrt{t} / c_1\right)^{d(V)} \cdot \left(c_1 t^{-1 / 2}\right)^{n - k - (1 - \eta)n}\\
 \ge &1 - \frac{1}{100}\left(\sqrt{t} / c_1\right)^{k + (\varepsilon - \eta)n} \tag{$d(V) \le (\varepsilon - o(1))n$}.
\end{align*}
Again we have
$$
\Pr[X_1^{(\lambda e^{-\lambda})}, X_2^{(\lambda e^{-\lambda})}, \ldots, X_{(1 - \eta)n}^{(\lambda e^{-\lambda})}, X_1',X_2', \ldots, X_{(\eta - \varepsilon')n}' \text{ are independent} \mid A_V'] 
=\Pr[A_V \mid A_V'] 
= t^{-o(n)}.
$$
We define $B_V$ to be the event that $X_1, X_2, \ldots, X_n$ span the hyperplane $V$. 
Since $A_V$ and $B_V$ are independent, we have
\begin{equation}\label{equ:B_V}
\Pr[B_V] = \Pr[A_V \land B_V] / \Pr[A_V] \le \Pr[A_V \land B_V] t^{-(1 - \eta)n / 5}  \left(\sqrt{t} / c_1\right)^{(1 - \varepsilon')n \cdot d(V) + o(n)}.
\end{equation}
Consider a set
$$
X_1^{(\lambda e^{-\lambda})}, X_2^{(\lambda e^{-\lambda})}, \ldots, X_{(1 - \eta)n}^{(\lambda e^{-\lambda})}, X_1',X_2', \ldots, X_{(\eta - \varepsilon')n}', X_1, X_2, \ldots, X_n
$$
which satisfies $A_V \land B_V$.
There exist $\varepsilon'n - 1$ vectors
$$
X_{j_1}, X_{j_2}, \ldots, X_{j_{\varepsilon'n - 1}}
$$
such that
$$
X_1^{(\lambda e^{-\lambda})}, X_2^{(\lambda e^{-\lambda})}, \ldots, X_{(1 - \eta)n}^{(\lambda e^{-\lambda})}, X_1',X_2', \ldots, X_{(\eta - \varepsilon')n}', X_{j_1}, X_{j_2}, \ldots, X_{j_{\varepsilon'n - 1}} \text{ span $V$}.
$$
By using a union bound of size $\binom{n}{\varepsilon'n - 1} = 2^{nh(\varepsilon') + o(n)}$, we can just assume $j_i = i$.
Here we use $h(\varepsilon')$ to denote the binary entropy function. 
Thus, 
\begin{align*}
\Pr[A_V \land B_V] &\le 
2^{nh(\varepsilon') + o(n)} \Pr \left[ X_1^{(\lambda e^{-\lambda})}, X_2^{(\lambda e^{-\lambda})}, \ldots, X_{(1 - \eta)n}^{(\lambda e^{-\lambda})}, X_1',X_2', \ldots, X_{(\eta - \varepsilon')n}', X_{1}, X_{2}, \ldots, X_{{\varepsilon'n - 1}} \text{ span $V$}\right]\\
& \cdot \Pr[X_{\varepsilon'n}, X_{\varepsilon'n + 1}, \ldots, X_n \in V].
\end{align*}
Thus, by using (\ref{equ:B_V}) and Lemma \ref{lem:vector_in_hyperplane} we have
\begin{align*}
\Pr[B_V] &\le t^{-(1 - \eta)n / 5}  \left(\sqrt{t} / c_1\right)^{(1 - \varepsilon')n  d(V) + o(n)} \cdot 2^{nh(\varepsilon') + o(n)}  \left(c_1 t^{-1 / 2}\right)^{((1 - \varepsilon')n + 1)d(V)}\\
& \cdot \Pr \left[ X_1^{(\lambda e^{-\lambda})}, X_2^{(\lambda e^{-\lambda})}, \ldots, X_{(1 - \eta)n}^{(\lambda e^{-\lambda})}, X_1',X_2', \ldots, X_{(\eta - \varepsilon')n}', X_{1}, X_{2}, \ldots, X_{{\varepsilon'n - 1}} \text{ span $V$}\right].
\end{align*}
Notice that
$$
\sum_{V} \Pr \left[ X_1^{(\lambda e^{-\lambda})}, X_2^{(\lambda e^{-\lambda})}, \ldots, X_{(1 - \eta)n}^{(\lambda e^{-\lambda})}, X_1',X_2', \ldots, X_{(\eta - \varepsilon')n}', X_{1}, X_{2}, \ldots, X_{{\varepsilon'n - 1}} \text{ span $V$}\right] \le 1.
$$
Thus, for any $1 \le d_0 \le (\varepsilon - o(1))n$ and sufficiently large $t$, we have
\begin{align*}
\sum_{V \mid d(V) = d_0} \Pr[B_V] 
\le &t^{-(1 - \eta)n / 5}  \left(\sqrt{t} / c_1\right)^{(1 - \varepsilon')n  d_0 + o(n)} \cdot 2^{nh(\varepsilon') + o(n)}  \left(c_1 t^{-1 / 2}\right)^{((1 - \varepsilon')n + 1)d_0}\\
\le & t^{-(1 - \eta)n / 5 + nh(\varepsilon') / 5 - d_0 / 5 + o(n)}\\
\le & t^{(-1 + 3d_0 / n  + h(\varepsilon)  - d_0 / n ) n/ 5 + o(n)} \\
\le & t^{(-1 + 2 \varepsilon  + h(\varepsilon) ) n/ 5 + o(n)}\\
\le & t^{-\varepsilon n/ 5 + o(n)}.\\
\end{align*}
Here the second inequality follows since $2^{nh(\varepsilon')} \le t^{nh(\varepsilon') / 5}$ for sufficiently large $t$.
The third inequality is due to the monotonicity of the binary entropy function $h(\cdot)$ on $[0, 1 / 2]$ and the fact that $0 < \varepsilon' \le \varepsilon \le 1 / 2$.
The fourth inequality follows from the fact that $d_0 / n \le \varepsilon$.
The last inequality follows by setting $\varepsilon$ to be the solution of $h(\varepsilon) + 3\varepsilon = 1$. 
A numerical calculation shows that $\varepsilon > 0.177$.
Theorem \ref{thm:singularity_main} thus follows by using a union bound for all possible $d_0$, which has at most $O(n^2t) = t^{o(n)}$ different valid values and setting $C = \varepsilon / 5$.

We remark that the choice of parameters here is mainly for simplicity and not optimized.

\section{Discussion} 
The lens of communication complexity reveals surprising structure about well-known optimization problems. A very interesting open question is to fully resolve the randomized communication complexity of linear programming as a function of $s, d,$ and $L$. Another interesting direction is to design more efficient linear programming algorithms in the RAM model with unit cost operations on words of size $O(\log(nd))$ bits; such algorithms while being inherently useful may also give rise to improved communication protocols. While our regression algorithms illustrated various shortcomings of previous techniques, there are still interesting gaps in our bounds to be resolved. 

\bibliography{main}

\begin{thebibliography}{10}

\bibitem{ACDL14}
Alekh Agarwal, Olivier Chapelle, Miroslav Dud{\'{\i}}k, and John Langford.
\newblock A reliable effective terascale linear learning system.
\newblock {\em Journal of Machine Learning Research}, 15(1):1111--1133, 2014.

\bibitem{allen2016optimal}
Zeyuan Allen-Zhu and Elad Hazan.
\newblock Optimal black-box reductions between optimization objectives.
\newblock In {\em Advances in Neural Information Processing Systems}, pages
  1614--1622, 2016.

\bibitem{andoni2017high}
Alexandr Andoni.
\newblock High frequency moments via max-stability.
\newblock In {\em Acoustics, Speech and Signal Processing (ICASSP), 2017 IEEE
  International Conference on}, pages 6364--6368. IEEE, 2017.

\bibitem{andoni2013eigenvalues}
Alexandr Andoni et~al.
\newblock Eigenvalues of a matrix in the streaming model.
\newblock In {\em Proceedings of the twenty-fourth annual ACM-SIAM symposium on
  Discrete algorithms}, pages 1729--1737. Society for Industrial and Applied
  Mathematics, 2013.

\bibitem{andoni2006optimality}
Alexandr Andoni, Piotr Indyk, and Mihai Patrascu.
\newblock On the optimality of the dimensionality reduction method.
\newblock In {\em Foundations of Computer Science, 2006. FOCS'06. 47th Annual
  IEEE Symposium on}, pages 449--458. IEEE, 2006.

\bibitem{as15}
Yossi Arjevani and Ohad Shamir.
\newblock Communication complexity of distributed convex learning and
  optimization.
\newblock In {\em Advances in Neural Information Processing Systems 28: Annual
  Conference on Neural Information Processing Systems 2015, December 7-12,
  2015, Montreal, Quebec, Canada}, pages 1756--1764, 2015.

\bibitem{assadi2019distributed}
Sepehr Assadi, Nikolai Karpov, and Qin Zhang.
\newblock Distributed and streaming linear programming in low dimensions.
\newblock In {\em Proceedings of the 38th ACM SIGMOD-SIGACT-SIGAI Symposium on
  Principles of Database Systems}, pages 236--253. ACM, 2019.

\bibitem{ak17}
Sepehr Assadi and Sanjeev Khanna.
\newblock Randomized composable coresets for matching and vertex cover.
\newblock In {\em Proceedings of the 29th {ACM} Symposium on Parallelism in
  Algorithms and Architectures, {SPAA} 2017, Washington DC, USA, July 24-26,
  2017}, pages 3--12, 2017.

\bibitem{auerbach1930area}
Herman Auerbach.
\newblock {\em On the area of convex curves with conjugate diameters}.
\newblock PhD thesis, PhD thesis, University of Lw{\'o}w, 1930.

\bibitem{bbfm12}
Maria{-}Florina Balcan, Avrim Blum, Shai Fine, and Yishay Mansour.
\newblock Distributed learning, communication complexity and privacy.
\newblock In {\em {COLT} 2012 - The 25th Annual Conference on Learning Theory,
  June 25-27, 2012, Edinburgh, Scotland}, pages 26.1--26.22, 2012.

\bibitem{BLSW016}
Maria{-}Florina Balcan, Yingyu Liang, Le~Song, David~P. Woodruff, and Bo~Xie.
\newblock Communication efficient distributed kernel principal component
  analysis.
\newblock In {\em Proceedings of the 22nd {ACM} {SIGKDD} International
  Conference on Knowledge Discovery and Data Mining, San Francisco, CA, USA,
  August 13-17, 2016}, pages 725--734, 2016.

\bibitem{BV04}
D.~Bertsimas and S.~Vempala.
\newblock Solving convex programs by random walks.
\newblock {\em J. ACM}, 51(4):540--556, 2004.

\bibitem{d06}
P~Bickel, P~Diggle, S~Fienberg, U~Gather, I~Olkin, and S~Zeger.
\newblock Springer series in statistics.
\newblock 2009.

\bibitem{b02}
Avrim Blum and John Dunagan.
\newblock Smoothed analysis of the perceptron algorithm for linear programming.
\newblock In {\em Proceedings of the thirteenth annual ACM-SIAM symposium on
  Discrete algorithms}, pages 905--914. Society for Industrial and Applied
  Mathematics, 2002.

\bibitem{bourgain1989approximation}
J.~Bourgain, J.~Lindenstrauss, and V.~Milman.
\newblock Approximation of zonoids by zonotopes.
\newblock {\em Acta mathematica}, 162(1):73--141, 1989.

\bibitem{bourgain2010singularity}
Jean Bourgain, Van~H Vu, and Philip~Matchett Wood.
\newblock On the singularity probability of discrete random matrices.
\newblock {\em Journal of Functional Analysis}, 258(2):559--603, 2010.

\bibitem{BWZ16}
Christos Boutsidis, David~P. Woodruff, and Peilin Zhong.
\newblock Optimal principal component analysis in distributed and streaming
  models.
\newblock In {\em Proceedings of the 48th Annual {ACM} {SIGACT} Symposium on
  Theory of Computing, {STOC} 2016, Cambridge, MA, USA, June 18-21, 2016},
  pages 236--249, 2016.

\bibitem{BPCPE11}
Stephen~P. Boyd, Neal Parikh, Eric Chu, Borja Peleato, and Jonathan Eckstein.
\newblock Distributed optimization and statistical learning via the alternating
  direction method of multipliers.
\newblock {\em Foundations and Trends in Machine Learning}, 3(1):1--122, 2011.

\bibitem{BGMNW16}
Mark Braverman, Ankit Garg, Tengyu Ma, Huy~L. Nguyen, and David~P. Woodruff.
\newblock Communication lower bounds for statistical estimation problems via a
  distributed data processing inequality.
\newblock In {\em Proceedings of the 48th Annual {ACM} {SIGACT} Symposium on
  Theory of Computing, {STOC} 2016, Cambridge, MA, USA, June 18-21, 2016},
  pages 1011--1020, 2016.

\bibitem{candes2005decoding}
Emmanuel~J Cand{\`e}s and Terence Tao.
\newblock Decoding by linear programming.
\newblock {\em IEEE Transactions on Information Theory}, 51(12):4203--4215,
  2005.

\bibitem{CSWZ16}
Jiecao Chen, He~Sun, David~P. Woodruff, and Qin Zhang.
\newblock Communication-optimal distributed clustering.
\newblock In {\em Advances in Neural Information Processing Systems 29: Annual
  Conference on Neural Information Processing Systems 2016, December 5-10,
  2016, Barcelona, Spain}, pages 3720--3728, 2016.

\bibitem{clarkson1995vegas}
Kenneth~L Clarkson.
\newblock Las vegas algorithms for linear and integer programming when the
  dimension is small.
\newblock {\em Journal of the ACM (JACM)}, 42(2):488--499, 1995.

\bibitem{cw09}
Kenneth~L. Clarkson and David~P. Woodruff.
\newblock Numerical linear algebra in the streaming model.
\newblock In {\em Proceedings of the 41st Annual {ACM} Symposium on Theory of
  Computing, {STOC} 2009, Bethesda, MD, USA, May 31 - June 2, 2009}, pages
  205--214, 2009.

\bibitem{clarkson2013low}
Kenneth~L Clarkson and David~P Woodruff.
\newblock Low rank approximation and regression in input sparsity time.
\newblock In {\em Proceedings of the forty-fifth annual ACM symposium on Theory
  of computing}, pages 81--90. ACM, 2013.

\bibitem{cts18}
M.~B. {Cohen}, Y.~{Tat Lee}, and Z.~{Song}.
\newblock {Solving Linear Programs in the Current Matrix Multiplication Time}.
\newblock In {\em STOC}, 2019.

\bibitem{cohen2015uniform}
Michael~B Cohen, Yin~Tat Lee, Cameron Musco, Christopher Musco, Richard Peng,
  and Aaron Sidford.
\newblock Uniform sampling for matrix approximation.
\newblock In {\em Proceedings of the 2015 Conference on Innovations in
  Theoretical Computer Science}, pages 181--190. ACM, 2015.

\bibitem{cohen2015p}
Michael~B Cohen and Richard Peng.
\newblock {$\ell_p$} row sampling by lewis weights.
\newblock In {\em Proceedings of the forty-seventh annual ACM symposium on
  Theory of computing}, pages 183--192. ACM, 2015.

\bibitem{cdw18}
Graham Cormode, Charlie Dickens, and David~P. Woodruff.
\newblock Leveraging well-conditioned bases: Streaming and distributed
  summaries in minkowski p-norms.
\newblock In {\em Proceedings of the 35th International Conference on Machine
  Learning, {ICML} 2018, Stockholmsm{\"{a}}ssan, Stockholm, Sweden, July 10-15,
  2018}, pages 1048--1056, 2018.

\bibitem{c11}
Andrew Cotter, Ohad Shamir, Nati Srebro, and Karthik Sridharan.
\newblock Better mini-batch algorithms via accelerated gradient methods.
\newblock In {\em Advances in neural information processing systems}, pages
  1647--1655, 2011.

\bibitem{d12b}
Ofer Dekel, Ran Gilad-Bachrach, Ohad Shamir, and Lin Xiao.
\newblock Optimal distributed online prediction using mini-batches.
\newblock {\em Journal of Machine Learning Research}, 13(Jan):165--202, 2012.

\bibitem{d12}
John~C Duchi, Alekh Agarwal, and Martin~J Wainwright.
\newblock Dual averaging for distributed optimization: Convergence analysis and
  network scaling.
\newblock {\em IEEE Transactions on Automatic control}, 57(3):592--606, 2012.

\bibitem{durfee2017ell_1}
David Durfee, Kevin~A Lai, and Saurabh Sawlani.
\newblock $\ell_1 $ regression using lewis weights preconditioning and
  stochastic gradient descent.
\newblock In {\em Conference On Learning Theory}, pages 1626--1656, 2018.

\bibitem{GMN14}
Ankit Garg, Tengyu Ma, and Huy~L. Nguyen.
\newblock On communication cost of distributed statistical estimation and
  dimensionality.
\newblock In {\em Advances in Neural Information Processing Systems 27: Annual
  Conference on Neural Information Processing Systems 2014, December 8-13 2014,
  Montreal, Quebec, Canada}, pages 2726--2734, 2014.

\bibitem{GLS}
M.~Grotschel, L.~Lovász, and A.~Schrijver.
\newblock {\em Geometric algorithms and combinatorial optimization}.
\newblock Springer, 1988.

\bibitem{Grunbaum1960}
B.~Grunbaum.
\newblock Partitions of mass-distributions and convex bodies by hyperplanes.
\newblock {\em Pacific J. Math.}, 10:1257--1261, 1960.

\bibitem{gwwz15}
Dirk~Van Gucht, Ryan Williams, David~P. Woodruff, and Qin Zhang.
\newblock The communication complexity of distributed set-joins with
  applications to matrix multiplication.
\newblock In {\em Proceedings of the 34th {ACM} Symposium on Principles of
  Database Systems, {PODS} 2015, Melbourne, Victoria, Australia, May 31 - June
  4, 2015}, pages 199--212, 2015.

\bibitem{halasz1977estimates}
G~Hal{\'a}sz.
\newblock Estimates for the concentration function of combinatorial number
  theory and probability.
\newblock {\em Periodica Mathematica Hungarica}, 8(3-4):197--211, 1977.

\bibitem{halasz1975distribution}
G{\'a}bor Hal{\'a}sz.
\newblock On the distribution of additive arithmetic functions.
\newblock {\em Acta Arithmetica}, 1(27):143--152, 1975.

\bibitem{harvey2018complexity}
David Harvey and Joris van~der Hoeven.
\newblock On the complexity of integer matrix multiplication.
\newblock {\em Journal of Symbolic Computation}, 89:1--8, 2018.

\bibitem{j14}
Martin Jaggi, Virginia Smith, Martin Tak{\'a}c, Jonathan Terhorst, Sanjay
  Krishnan, Thomas Hofmann, and Michael~I Jordan.
\newblock Communication-efficient distributed dual coordinate ascent.
\newblock In {\em Advances in neural information processing systems}, pages
  3068--3076, 2014.

\bibitem{kahn1995probability}
Jeff Kahn, J{\'a}nos Koml{\'o}s, and Endre Szemer{\'e}di.
\newblock On the probability that a random$\pm$1-matrix is singular.
\newblock {\em Journal of the American Mathematical Society}, 8(1):223--240,
  1995.

\bibitem{kane2017communication}
Daniel~M Kane, Roi Livni, Shay Moran, and Amir Yehudayoff.
\newblock On communication complexity of classification problems.
\newblock {\em arXiv preprint arXiv:1711.05893}, 2017.

\bibitem{kvw14}
Ravi Kannan, Santosh Vempala, and David~P. Woodruff.
\newblock Principal component analysis and higher correlations for distributed
  data.
\newblock In {\em Proceedings of The 27th Conference on Learning Theory, {COLT}
  2014, Barcelona, Spain, June 13-15, 2014}, pages 1040--1057, 2014.

\bibitem{CC}
Eyal Kushilevitz and Noam Nisan.
\newblock {\em Communication Complexity}.
\newblock Cambridge University Press, New York, NY, USA, 1997.

\bibitem{ls14}
Yin~Tat Lee and Aaron Sidford.
\newblock Path finding methods for linear programming: Solving linear programs
  in {\~{o}}(vrank) iterations and faster algorithms for maximum flow.
\newblock In {\em 55th {IEEE} Annual Symposium on Foundations of Computer
  Science, {FOCS} 2014, Philadelphia, PA, USA, October 18-21, 2014}, pages
  424--433, 2014.

\bibitem{ls15}
Yin~Tat Lee and Aaron Sidford.
\newblock Efficient inverse maintenance and faster algorithms for linear
  programming.
\newblock In {\em {IEEE} 56th Annual Symposium on Foundations of Computer
  Science, {FOCS} 2015, Berkeley, CA, USA, 17-20 October, 2015}, pages
  230--249, 2015.

\bibitem{LBKW14}
Yingyu Liang, Maria{-}Florina Balcan, Vandana Kanchanapally, and David~P.
  Woodruff.
\newblock Improved distributed principal component analysis.
\newblock In {\em Advances in Neural Information Processing Systems 27: Annual
  Conference on Neural Information Processing Systems 2014, December 8-13 2014,
  Montreal, Quebec, Canada}, pages 3113--3121, 2014.

\bibitem{m13}
Dhruv Mahajan, S~Sathiya Keerthi, S~Sundararajan, and L{\'e}on Bottou.
\newblock A parallel sgd method with strong convergence.
\newblock {\em arXiv preprint arXiv:1311.0636}, 2013.

\bibitem{maurer2003bound}
Andreas Maurer.
\newblock A bound on the deviation probability for sums of non-negative random
  variables.
\newblock {\em J. Inequalities in Pure and Applied Mathematics}, 4(1):15, 2003.

\bibitem{mm13}
Xiangrui Meng and Michael~W. Mahoney.
\newblock Low-distortion subspace embeddings in input-sparsity time and
  applications to robust linear regression.
\newblock In {\em Symposium on Theory of Computing Conference, STOC'13, Palo
  Alto, CA, USA, June 1-4, 2013}, pages 91--100, 2013.

\bibitem{nesterov2005smooth}
Yu~Nesterov.
\newblock Smooth minimization of non-smooth functions.
\newblock {\em Mathematical programming}, 103(1):127--152, 2005.

\bibitem{nesterov1983method}
Yurii Nesterov.
\newblock A method of solving a convex programming problem with convergence
  rate o (1/k2).
\newblock In {\em Soviet Mathematics Doklady}, volume~27, pages 372--376, 1983.

\bibitem{odlyzko1988subspaces}
Andrew~M Odlyzko.
\newblock On subspaces spanned by random selections of$\pm$1 vectors.
\newblock {\em journal of combinatorial theory, Series A}, 47(1):124--133,
  1988.

\bibitem{phillips2012lower}
Jeff~M Phillips, Elad Verbin, and Qin Zhang.
\newblock Lower bounds for number-in-hand multiparty communication complexity,
  made easy.
\newblock In {\em Proceedings of the twenty-third annual ACM-SIAM symposium on
  Discrete Algorithms}, pages 486--501. SIAM, 2012.

\bibitem{r16}
Peter Richt{\'a}rik and Martin Tak{\'a}{\v{c}}.
\newblock Distributed coordinate descent method for learning with big data.
\newblock {\em The Journal of Machine Learning Research}, 17(1):2657--2681,
  2016.

\bibitem{sankar2006smoothed}
Arvind Sankar, Daniel~A Spielman, and Shang-Hua Teng.
\newblock Smoothed analysis of the condition numbers and growth factors of
  matrices.
\newblock {\em SIAM Journal on Matrix Analysis and Applications},
  28(2):446--476, 2006.

\bibitem{sarlos2006improved}
Tamas Sarlos.
\newblock Improved approximation algorithms for large matrices via random
  projections.
\newblock In {\em Foundations of Computer Science, 2006. FOCS'06. 47th Annual
  IEEE Symposium on}, pages 143--152. IEEE, 2006.

\bibitem{seidel1990linear}
Raimund Seidel.
\newblock Linear programming and convex hulls made easy.
\newblock In {\em Proceedings of the sixth annual symposium on Computational
  geometry}, pages 211--215. ACM, 1990.

\bibitem{s14b}
Ohad Shamir and Nathan Srebro.
\newblock Distributed stochastic optimization and learning.
\newblock In {\em Communication, Control, and Computing (Allerton), 2014 52nd
  Annual Allerton Conference on}, pages 850--857. IEEE, 2014.

\bibitem{s14}
Ohad Shamir, Nati Srebro, and Tong Zhang.
\newblock Communication-efficient distributed optimization using an approximate
  newton-type method.
\newblock In {\em International conference on machine learning}, pages
  1000--1008, 2014.

\bibitem{spielman2001smoothed}
Daniel Spielman and Shang-Hua Teng.
\newblock Smoothed analysis of algorithms: Why the simplex algorithm usually
  takes polynomial time.
\newblock In {\em Proceedings of the thirty-third annual ACM symposium on
  Theory of computing}, pages 296--305. ACM, 2001.

\bibitem{s03}
Daniel~A Spielman and Shang-Hua Teng.
\newblock Smoothed analysis of termination of linear programming algorithms.
\newblock {\em Mathematical Programming}, 97(1-2):375--404, 2003.

\bibitem{tao2006random}
Terence Tao and Van Vu.
\newblock On random$\pm$1 matrices: singularity and determinant.
\newblock {\em Random Structures \& Algorithms}, 28(1):1--23, 2006.

\bibitem{tao2007singularity}
Terence Tao and Van Vu.
\newblock On the singularity probability of random bernoulli matrices.
\newblock {\em Journal of the American Mathematical Society}, 20(3):603--628,
  2007.

\bibitem{t87}
John~N Tsitsiklis and Zhi-Quan Luo.
\newblock Communication complexity of convex optimization.
\newblock {\em Journal of Complexity}, 3(3):231--243, 1987.

\bibitem{woodruff2014sketching}
David~P Woodruff et~al.
\newblock Sketching as a tool for numerical linear algebra.
\newblock {\em Foundations and Trends{\textregistered} in Theoretical Computer
  Science}, 10(1--2):1--157, 2014.

\bibitem{wz13b}
David~P. Woodruff and Qin Zhang.
\newblock Subspace embeddings and {$\ell_p$}-regression using exponential
  random variables.
\newblock In {\em {COLT} 2013 - The 26th Annual Conference on Learning Theory,
  June 12-14, 2013, Princeton University, NJ, {USA}}, pages 546--567, 2013.

\bibitem{WZ13}
David~P. Woodruff and Qin Zhang.
\newblock When distributed computation is communication expensive.
\newblock In {\em Distributed Computing - 27th International Symposium, {DISC}
  2013, Jerusalem, Israel, October 14-18, 2013. Proceedings}, pages 16--30,
  2013.

\bibitem{woodruff2014optimal}
David~P Woodruff and Qin Zhang.
\newblock An optimal lower bound for distinct elements in the message passing
  model.
\newblock In {\em Proceedings of the twenty-fifth annual ACM-SIAM symposium on
  Discrete algorithms}, pages 718--733. Society for Industrial and Applied
  Mathematics, 2014.

\bibitem{wz18}
David~P. Woodruff and Qin Zhang.
\newblock Distributed statistical estimation of matrix products with
  applications.
\newblock In {\em Proceedings of the 37th {ACM} {SIGMOD-SIGACT-SIGAI} Symposium
  on Principles of Database Systems, Houston, TX, USA, June 10-15, 2018}, pages
  383--394, 2018.

\bibitem{wz16}
David~P. Woodruff and Peilin Zhong.
\newblock Distributed low rank approximation of implicit functions of a matrix.
\newblock In {\em 32nd {IEEE} International Conference on Data Engineering,
  {ICDE} 2016, Helsinki, Finland, May 16-20, 2016}, pages 847--858, 2016.

\bibitem{y13}
Tianbao Yang.
\newblock Trading computation for communication: Distributed stochastic dual
  coordinate ascent.
\newblock In {\em Advances in Neural Information Processing Systems}, pages
  629--637, 2013.

\bibitem{ZDW13}
Yuchen Zhang, John~C. Duchi, and Martin~J. Wainwright.
\newblock Communication-efficient algorithms for statistical optimization.
\newblock {\em Journal of Machine Learning Research}, 14(1):3321--3363, 2013.

\bibitem{z15}
Yuchen Zhang and Xiao Lin.
\newblock Disco: Distributed optimization for self-concordant empirical loss.
\newblock In {\em International conference on machine learning}, pages
  362--370, 2015.

\bibitem{ZWSL10}
Martin Zinkevich, Markus Weimer, Alexander~J. Smola, and Lihong Li.
\newblock Parallelized stochastic gradient descent.
\newblock In {\em Advances in Neural Information Processing Systems 23: 24th
  Annual Conference on Neural Information Processing Systems 2010. Proceedings
  of a meeting held 6-9 December 2010, Vancouver, British Columbia, Canada.},
  pages 2595--2603, 2010.

\end{thebibliography}

\end{document}